\def\dn{\mathrm{dn}}
\def\res{\mathrm{res}}
\def\sech{\mathrm{sech}}
\definecolor{dgreen}{RGB}{166,33,127}
\newcommand{\be }{\begin{equation}}
	\newcommand{\ee }{\end{equation}}
\newcommand{\R}{\mathbb R}
\newcommand{\C}{\mathbb C}
\newcommand{\br}[1]{   [ \cdot,    \cdot  ]   }
\newtheorem{assumption}{Assumption}
\renewcommand {\Im}{\operatorname{Im}}
\DeclareMathOperator{\re}{Re}
\declaretheoremstyle[spaceabove=0.23cm,spacebelow=0.23cm,notefont=\normalfont\bfseries, notebraces={(}{)}]{theorem}
\declaretheoremstyle[spaceabove=0.23cm,spacebelow=0.23cm,bodyfont=\normalfont,notefont=\normalfont\bfseries, notebraces={(}{)}]{noital}
\declaretheoremstyle[spaceabove=0.23cm,spacebelow=0.23cm,bodyfont=\normalfont\color{darkgreen},notefont=\normalfont\bfseries, notebraces={(}{)}]{green}
\declaretheoremstyle[spaceabove=0.23cm,spacebelow=0.23cm,bodyfont=\normalfont,notefont=\normalfont\bfseries,qed=$\qedsymbol$,notebraces={(}{)}]{proofstyle}
\declaretheorem[name=Theorem,numberwithin=section,style=theorem]{thm}
\declaretheorem[name=Conjecture,sibling=thm,style=theorem]{conj}
\declaretheorem[name=Lemma,sibling=thm,style=theorem]{lem}
\declaretheorem[name=Remark,sibling=thm,style=theorem]{rmk}
\definecolor{lightblue}{rgb}{0.68, 0.85, 0.9}
\definecolor{lightred}{rgb}{1.0, 0.8, 0.8}
\definecolor{lightgreen}{rgb}{0.8, 1.0, 0.8}
\definecolor{darkgreen}{rgb}{0.0, 0.5, 0.0}
\definecolor{color12lines}{rgb}{0.5, 0.0, 0.0}
\definecolor{color13lines}{rgb}{0.0, 0.5, 0.0}
\definecolor{color23lines}{rgb}{0.0, 0.0, 0.6}
\definecolor{darkgreen}{rgb}{0.0, 0.39, 0.0}
\definecolor{lightgray}{rgb}{0.75, 0.75, 0.75}
\tikzset{
	branchpoint/.style={orange, thick, mark=x, mark options={orange, line width=1.25pt}}
}
\tikzset{
	singularpoint/.style={blue, mark=*, mark options={blue, mark size=1.25pt}}
}
\tikzset{
	stokeslabel/.style={gray!95,font=\tiny}
}
\tikzset{
	cutlabel/.style={orange,font=\tiny}
}
\tikzset{
	sheetlabel/.style={font=\small}
}
\tikzset{
	branchcut/.style={orange,dashed,semithick}
}
\tikzset{
	disccolor/.style={gray!12}
}
\tikzset{
	wall/.style={black,thick}
}
\tikzset{
	path/.style={thick,rounded corners}
}
\tikzset{
	witharrow/.style={
		decoration={markings, mark=at position #1 with {\arrow[sloped]{Latex}}},
		postaction={decorate}
	}
}
\tikzset{
	antistokesmark/.style={semithick, black, radius=1.5pt, fill=yellow}
}
\tikzset{
	withbackgroundrectangle/.style={show background rectangle, background rectangle/.style={fill=gray!7}}
}
\numberwithin{equation}{subsection}
\title{\bf {Genus two KdV soliton gases and their long-time asymptotics}}
\author[1]{Deng-Shan Wang}
\author[1]{Dinghao Zhu}
\author[1,2]{Xiaodong Zhu\thanks{Email: \texttt{xdzbnu@mail.bnu.edu.cn}}}
\affil[1]{Laboratory of Mathematics and Complex Systems (Ministry of Education), School of Mathematical Sciences, Beijing Normal University, Beijing 100875, China}
\affil[2]{{SISSA, via Bonomea 265, 34136 Trieste, Italy, INFN Sezione di Trieste}}
\date{}
\begin{document}

	\maketitle
\vspace{-2em}
	\begin{abstract}
		This paper employs the Riemann-Hilbert problem and nonlinear steepest
		descent method of Deift-Zhou to provide a comprehensive analysis of the asymptotic behavior of the genus two Korteweg-de Vries soliton gases. It is demonstrated that the genus two soliton gas is related to the two-phase Riemann-Theta function as \(x \to +\infty\), and approaches to zero as \(x \to -\infty\). Additionally, the long-time asymptotic behavior of this genus two soliton gas can be categorized into five distinct regions in the \(x\)-\(t\) plane, which from left to right are {quiescent region}, modulated one-phase wave, unmodulated one-phase wave, modulated two-phase wave, and unmodulated two-phase wave. Moreover, an innovative method is introduced to solve the model problem associated with the high-genus Riemann surface, leading to the determination of the leading terms, which is also related with the multi-phase Riemann-Theta function. A general discussion on the case of arbitrary genus $N$ soliton gas is also presented.\\
		
		\par
		
		{\bf Key words:} Riemann-Theta function, Riemann-Hilbert problem, soliton gas\\
		
		{\bf AMS subjectclassi cations:} 35Q15,35Q51,35Q53
	\end{abstract}
	
		\tableofcontents
	\section{Introduction}
	It is well known that the Korteweg-de Vries (KdV) equation
	\begin{equation}\label{KdV}
		u_t-6uu_x+u_{xxx}=0
	\end{equation}
	can be presented as the compatibility condition of the Lax pair \cite{Lax1968}
	\begin{equation}\label{Lax pair}
		\begin{aligned}
			&(-\partial_{xx}+u)\varphi(x,t)=E\varphi(x,t),\\
			&\varphi_t(x,t)=(4\partial_{x}-6u\partial_{x}-3u_x)\varphi(x,t),
		\end{aligned}
	\end{equation}
	where $E$ is the spectral parameter. Based on the Lax pair formulation (\ref{Lax pair}), extensive researches have been conducted on the KdV equation (\ref{KdV}) by using the inverse scattering transform \cite{Ablowitz-Clarkson,Bilman2020,Miller Clarke} and Riemann-Hilbert formulation \cite{Deift1994,Teschl2013}. One of the most notable results is the existence of a special class of localized wave solutions known as solitons. The simplest example of a single soliton solution is given by
	\begin{equation}\label{single soliton}
		u(x, t) = -2 \eta^2 \operatorname{sech}^2 ( \eta(x - 4 \eta^2 t - x_0)),
	\end{equation}
	where the spectral parameter is \( E = \eta^2 \), and \( x_0 \) is the phase parameter that determines the initial position of the soliton. In this context, the position is defined by the location of the maximum of the soliton profile. On the other hand, the KdV equation admits the periodic traveling wave solution of the form \cite{Grava Witham,Ruizhi}
	\begin{equation}\label{periodic-solution}
		{u(x, t)=k_3+\left(k_1-k_3\right) \mathrm{dn}^2\left(\frac{\sqrt{k_1-k_3}}{\sqrt{2} }\left(x-2(k_1+k_2+k_3) t+\frac{\phi_0}{k}\right)-K(m) ; m\right),}
	\end{equation}
	where $k_1>k_2>k_3$, {${\dn}(s;m)$ is the Jacobi elliptic function} and $K(m)$ is a complete elliptic integral of the first kind, i.e., $K(m):=\int_0^{\frac{\pi}{2}} \frac{d \vartheta}{\sqrt{1-m^2 \sin ^2 \vartheta}}$ with $m=\frac{k_1-k_2}{k_1-k_3}$ and $k=\pi \frac{\sqrt{k_1-k_3}}{\sqrt{2} K(m)}$. Especially, as $k_2\to k_3$, the periodic solution (\ref{periodic-solution}) degenerates into the soliton solution by the identity $\dn(\bullet;1)=\sech(\bullet)$.
	\par	
	In 1971, Zakharov \cite{Zakharov 1971} first introduced the concept of ``soliton gas'' and derived an integro-differential kinetic equation for the soliton gas by evaluating the efficient modification of the soliton velocity within a rarefied gas. Specifically, he treated solitons as ``particles'', and a soliton gas can be understood as a collection of randomly distributed solitons, resembling the behavior of a gas \cite{Pelinovsky 2016}. Forty-five years later, in 2016, Zakharov and his collaborators \cite{DZZ16} revisited the soliton gas for the KdV equation by using the dressing method and proposed an alternate construction
	of the Bargmann potentials. In particular, they formulated a Riemann-Hilbert problem (RH problem) associated with the soliton gas, given as follows:
	\begin{equation}\label{DZZ16 RHP}
		\begin{aligned}
			&\Xi^{+}(i \kappa)=M(x,\kappa) \Xi^{-}(i \kappa), \quad \Xi^{+}(-i \kappa)=M^T(x,\kappa) \Xi^{-}(-i \kappa),\\
			&M(x, \kappa)=\frac{1}{1+R_1 R_2}\left[\begin{array}{ll}
				1-R_1 R_2 & 2 i R_1 e^{-2 \kappa x} \\
				2 i R_2 e^{2 \kappa x} & 1-R_1 R_2
			\end{array}\right],
		\end{aligned}
	\end{equation}
	where $\Xi:\mathbb{C}\to\mathbb{C}^2$ is a vector-valued function, and $\kappa \in [k_1, k_2]$ with $0<k_1<k_2$. Although research on soliton gas began many years ago, the understanding of the properties of an interacting ensemble of large solitons and their dynamic behavior, even in the absence of randomness, remains incomplete from a mathematically precise perspective. In 2003, El \cite{El2003,El KdV} proposed a unified extension of Zakharov's kinetic equation for the KdV dense soliton gas by considering the thermodynamic limit of KdV–Whitham equations. Subsequently, the kinetic equation for soliton gas was examined for its diverse and complex mathematical characteristics \cite{Ablowitz KdV,El2011,Ferapontov2022}. In addition, Bertola et al. derived the kinetic equation for the KdV equation by using the method of genus degeneration in \cite{Bertola Nonlinearlarity}. Recently, Girotti and her collaborators \cite{Girotti CMP} investigated the genus one KdV soliton gas and established an asymptotic description of soliton gas dynamics for large time by using the Deift-Zhou nonlinear steepest descent method \cite{Deift 1993}. They \cite{Grava CPAM} also investigated the behaviors of a trial soliton travelling through a mKdV soliton gas and built the kinetic theory for soliton gas. For a concise relationship between the mKdV equation and KdV equation, please refer to \cite{Lenells}. {In 2020, Nabelek gave an insightful investigation of the algebro-geometric finite gap solutions to the KdV equation \cite{Nabelek 2020 PhysD} utilizing the primitive solution framework \cite{Nabelek 2020 TMP,Zakharov 2016 LMP} for the general case of ``$N$ bands'' and $R_1R_2\neq 0$ in (\ref{DZZ16 RHP}).} In fact, the results presented in \cite{Girotti CMP,Grava CPAM} represent a particular case of (\ref{DZZ16 RHP}) for $R_2=0$, involving only two disjoint stability zones. Furthermore, the study of soliton gas for the NLS equation was examined in \cite{Bertola PRL, Grava FNLS, Biondini Breather gas, Fudong JPA}, and the relationship between periodic potentials was explored in \cite{Nabelek IMRN}. {In particular, in \cite{GravaDuke}, the authors investigated random configurations of soliton gases for the focusing NLS equation and established both a law of large numbers and a central limit theorem for random sets of solitons.}
	\par
	This paper investigates the high-genus soliton gas  for the KdV equation (\ref{KdV}), focusing specifically on the genus two soliton gas potential and its long-time asymptotics. More precisely, we consider the special case of (\ref{DZZ16 RHP}) with $R_1 = 0$ and $R_2 =r_2(\lambda)$, which involves four disjoint stability zones. Suppose $0 < \eta_1 < \eta_2 < \eta_3 < \eta_4$ and let $\Sigma_1 := (\eta_1, \eta_2)$, $\Sigma_2 := (-\eta_2, -\eta_1)$, $\Sigma_3 := (\eta_3, \eta_4)$, and $\Sigma_4 := (-\eta_4, -\eta_3)$, {see Figure \ref{jumpforY}}. Additionally, denote $\Sigma_{i,\cdots,k} = \Sigma_i \cup \cdots \cup \Sigma_k$. Let {$\theta(x,t;\lambda):=x\lambda +4 t\lambda^3$}, and  then construct the following RH problem for the vector-valued function $X(\lambda)$ as
	\begin{equation}\label{RHP X jumps}
		\begin{aligned}
			X_+(\lambda)=X_-(\lambda)
			\begin{cases}
				\begin{aligned}
					&\begin{pmatrix}
						1 & -2ir_2(\lambda)e^{-2 i\theta(x,t;\lambda)}\\
						0 & 1
					\end{pmatrix},&&\lambda\in i\Sigma_{1,3},\\
					&\begin{pmatrix}
						1 & 0\\
						2ir_2(\lambda)e^{2 i \theta(x,t;\lambda)} & 1
					\end{pmatrix},&&\lambda\in i\Sigma_{2,4},			
				\end{aligned}
			\end{cases}
		\end{aligned}
	\end{equation}
	\begin{equation}\label{RHP X asymp}
		X(\lambda)\to \begin{pmatrix}
			1 & 1
		\end{pmatrix},\quad \lambda\to\infty,
	\end{equation}
	\begin{equation}\label{RHP X sym}
		X(-\lambda)=X(\lambda)\begin{pmatrix}
			0 & 1\\
			1 & 0
		\end{pmatrix}.
	\end{equation}
	Then the genus two soliton gas potential of the KdV equation (\ref{KdV}) is given by the reconstruction formula
	\begin{equation}\label{two-genus-soliton-gas-potential}
		u(x) = 2 \frac{\mathrm{d}}{\mathrm{d} x} \left( \lim_{\lambda \to \infty} \frac{\lambda}{i} \left(X_1(\lambda) - 1\right) \right),
	\end{equation}
	where \(X_1(\lambda)\) is the first component of \(X(\lambda)\).
	\par
	In what follows, we propose the main results of this work.
	
	\subsection{Statement of the main results}
	
	Firstly, a genus two KdV soliton gas potential (\ref{two-genus-soliton-gas-potential}) is constructed by formulating the Riemann-Hilbert problem (\ref{RHP X jumps})-(\ref{RHP X sym}) from the pure \(N\)-soliton Riemann-Hilbert problem in Section \ref{Soliton gas RH problem} for \(N \to +\infty\), where the initial positions of the \(N\) solitons are located on the positive real axis. Then, in Section \ref{potential behavior}, we establish the large $x$ behaviors of this soliton gas potential in Theorem \ref{2genus potential}.
	
	\begin{thm}\label{Spatial behavior}
		The potential function \( u(x) \), which satisfies the reconstruction formula (\ref{two-genus-soliton-gas-potential}) and the Riemann-Hilbert problem (\ref{RHP X jumps})-(\ref{RHP X sym}), exhibits the following asymptotic behaviors:
		\begin{equation}\label{2genus potential}
			u(x)=\begin{cases}
				\begin{aligned}
					&-\left(2\alpha+{\sum_{j=1}^4\eta_j^2}+2\partial_x^2\log\left(\Theta\left(\frac{\Omega}{2\pi i};\hat\tau\right)\right)\right)+\mathcal{O}\left(\frac{1}{x}\right),&& x\to+\infty,\\
					&\mathcal{O}(e^{-c|x|}),&& x\to-\infty.
				\end{aligned}
			\end{cases}
		\end{equation}
		Here, \(\Theta(\bullet;\hat{\tau})\) denotes the two-phase Riemann-Theta function defined by (\ref{rsf}), $\Omega$ is a two-dimensional column vector given by (\ref{Omee}) and the imaginary part of the period matrix \(\hat{\tau}\), as defined in (\ref{period matrix of hat tau}), is positive definite. Furthermore, the parameter \(\alpha\) is defined in Remark \ref{alpha}, and \(c\) is a fixed positive constant.
	\end{thm}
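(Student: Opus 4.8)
The plan is to treat the two tails by completely different mechanisms, and I would organise the argument as follows.

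\textbf{The tail $x\to-\infty$.} Here the exponentials in (\ref{RHP X jumps}) are uniformly exponentially small: writing $\lambda=is$ on the contour, the off-diagonal entry of the jump equals $-2ir_2(is)e^{2sx}$ for $s\in\Sigma_{1,3}$ and $2ir_2(is)e^{-2sx}$ for $s\in\Sigma_{2,4}$ (where $s<0$), and in both cases its modulus is $O(e^{-2\eta_1|x|})$ uniformly on $i\Sigma_{1,2,3,4}$. Thus $\|J_X-I\|_{L^1\cap L^\infty}=O(e^{-c|x|})$ for any $c<2\eta_1$; the standard small-norm theory for (vector) Riemann--Hilbert problems --- using the symmetry (\ref{RHP X sym}) for uniqueness --- gives $X(\lambda)=\begin{pmatrix}1 & 1\end{pmatrix}+O\!\big(e^{-c|x|}/(1+|\lambda|)\big)$ as $\lambda\to\infty$, the estimate being differentiable in $x$, and substituting into (\ref{two-genus-soliton-gas-potential}) produces $u(x)=O(e^{-c|x|})$. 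Intuitively, the gas is supported to the right because all soliton positions were placed on the positive axis.

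\textbf{The tail $x\to+\infty$: the $g$-function.} Now the exponentials grow, and the right tool is the Deift--Zhou nonlinear steepest descent method with a genus-two $g$-function. Let $R(\lambda)$ be the branch of $\big(\prod_{j=1}^{4}(\lambda^{2}+\eta_j^{2})\big)^{1/2}$ that is analytic on $\mathbb{C}\setminus i\Sigma_{1,2,3,4}$ with $R(\lambda)\sim\lambda^{4}$ at infinity; this is the object attached to the genus-two Riemann surface behind the Theta function $\Theta(\bullet;\hat\tau)$ of (\ref{rsf}). I would introduce two scalar functions. First, a Szegő-type function $\delta(\lambda)$, analytic off the bands, normalised to $1$ at infinity, that absorbs $r_2$ (with boundary relation $\delta_+\delta_-=r_2$ on $i\Sigma_{1,3}$, and the relation forced by (\ref{RHP X sym}) on $i\Sigma_{2,4}$), written as a Cauchy transform of $R(\lambda)^{-1}\log r_2$ over the bands corrected by an explicit finite-gap factor that restores single-valuedness across the gaps. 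Second, the $g$-function: the unique odd function $g(\lambda)$, analytic off $i\Sigma_{1,2,3,4}$, vanishing at infinity, with $g_+(\lambda)+g_-(\lambda)=-2i\lambda x+2\Omega_\ell$ on the band indexed by $\ell$ and $g_+(\lambda)-g_-(\lambda)$ constant on each complementary gap; it is again an explicit Cauchy transform against $R(\lambda)^{-1}$, and the moment conditions forcing decay at infinity pin down the constants $\Omega_\ell$ as functions linear in $x$ --- the two independent ones, after the odd symmetry, forming the vector $\Omega$ of (\ref{Omee}).

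\textbf{Deformation, parametrices and error.} Setting $Y(\lambda)=X(\lambda)\,\delta(\lambda)^{\sigma_3}e^{-g(\lambda)\sigma_3}$ converts the growing band jumps of $Y$ into $x$-independent constant matrices and the gap contributions into $x$-independent unimodular diagonal factors, up to terms that are $O(e^{-cx})$. One then factorises each constant band jump into a constant anti-diagonal ``model'' factor sandwiched by triangular ones, opens lenses carrying the triangular factors, and uses the strict sign of $\re g$ off the bands (negative on one side, positive on the other) to make the lens-boundary jumps $O(e^{-cx})$; since $r_2$ is only given on the bands, it is first replaced by an analytic (e.g.\ rational) approximant whose error is carried into the final estimate. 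The resulting model problem --- constant anti-diagonal jumps on the bands, constant unimodular diagonal jumps on the gaps, identity at infinity --- is the standard genus-two outer parametrix, solved in closed form by Cauchy kernels on the Riemann surface and ratios of $\Theta(\bullet;\hat\tau)$ whose arguments are translated by $\Omega/(2\pi i)$ and by the Abel map of $\lambda$, with period matrix $\hat\tau$ as in (\ref{period matrix of hat tau}), whose imaginary part is positive definite by Riemann's bilinear relations. Near each of the eight band endpoints $\pm i\eta_j$ one installs the usual local parametrix built from special functions, matched to the outer parametrix on a fixed-size disc to relative order $1/x$. The ratio of $Y$ to the global parametrix then solves a small-norm problem with jump $I+O(1/x)$, hence equals $I+O(1/x)$ with a full asymptotic expansion at infinity.

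\textbf{Reconstruction and the main obstacle.} Tracing the transformations back, $\lim_{\lambda\to\infty}(\lambda/i)(X_1(\lambda)-1)$ collects the $\lambda^{-1}$-coefficients of $g$, of $\log\delta$, and of the Theta quotient in the parametrix. The first two contribute terms linear in $x$ whose slopes, once the expansion of $R$ at infinity is inserted, combine into $-\alpha-\tfrac12\sum_{j=1}^{4}\eta_j^{2}$ (this is how the constant $\alpha$ of Remark \ref{alpha} enters), while the Theta term contributes $-\partial_x\log\Theta(\Omega/(2\pi i);\hat\tau)$ by the classical finite-gap trace identity; applying $2\,\mathrm{d}/\mathrm{d}x$ as in (\ref{two-genus-soliton-gas-potential}) reproduces the first line of (\ref{2genus potential}), with the $O(1/x)$ error inherited from the parametrix. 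The main obstacle is the middle step: proving that the odd, normalised $g$-function with the prescribed constant band and gap values exists (a genus-two ``fixed-band'' problem) and, above all, verifying the sign inequalities for $\re g$ in a neighbourhood of the imaginary axis that legitimise the lens opening and the $e^{-cx}$ decay of the auxiliary jumps; once those are in hand, the bookkeeping that assembles the constant terms into exactly $2\alpha+\sum_j\eta_j^{2}$ is lengthy but routine.
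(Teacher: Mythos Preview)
Your overall architecture is right --- small-norm for $x\to-\infty$, and $g$-function/$f$-function, lens opening, outer and local parametrices for $x\to+\infty$ --- and this is exactly the scheme the paper follows. There is, however, a genuine gap at the step you label ``standard genus-two outer parametrix''.

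The function $R(\lambda)=\big(\prod_{j=1}^{4}(\lambda^{2}+\eta_j^{2})\big)^{1/2}$ that you introduce is the square root of a degree-$8$ polynomial, so the associated hyperelliptic curve has genus \emph{three}, not two. Your model problem has four bands and three gaps in the $\lambda$-plane; the naive outer-parametrix construction would use a three-dimensional Abel map and a genus-three Riemann--Theta function, and the paper explicitly notes (in a remark following the model problem) that this \emph{fails}: the genus-three Theta function $\Theta(J(\lambda)-d)$ has three zeros on the surface, whereas the scalar prefactor $\check\gamma(\lambda)=\big((\lambda^2-\eta_1^2)(\lambda^2-\eta_3^2)/((\lambda^2-\eta_2^2)(\lambda^2-\eta_4^2))\big)^{1/4}$ has four, so one cannot arrange the required quarter-singularity behaviour at all eight endpoints $\pm\eta_j$.

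The paper's resolution --- which is its main technical novelty --- is to exploit the $\lambda\mapsto-\lambda$ symmetry via the holomorphic map $z=-\lambda^{2}$. This takes the genus-three surface $\mathcal S$ to a genus-two surface $\hat{\mathcal S}=\{y^{2}=z\prod_{j}(z+\eta_j^{2})\}$ (a Riemann--Hurwitz count), on which the model problem becomes a genuine two-band/two-gap problem solvable with the two-phase $\Theta(\bullet;\hat\tau)$. Equivalently, on the $\lambda$-plane one builds a \emph{two}-dimensional Jacobi map $\check J(\lambda)=\int(\omega_1+\omega_3,\,2\omega_2)^{T}$ out of the symmetric combinations of the genus-three normalised differentials, checks that the resulting $2\times2$ ``period matrix'' $\check\tau$ has positive-definite imaginary part (it is congruent to a principal minor of the genus-three $\tau$), and proves separately (Lemma~\ref{Lemma-4p} in the paper) that $\Theta(\check J(\lambda)-\check d;\check\tau)$ has exactly four zeros on $\mathcal S$, matching the four zeros of $\check\gamma$. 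Your proposal skips this reduction entirely; without it, the sentence ``solved in closed form by \ldots\ ratios of $\Theta(\bullet;\hat\tau)$'' does not go through.

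A secondary point: you propose to replace $r_2$ by a rational approximant before opening lenses. In the paper's setting $r_2$ is already assumed analytic in a neighbourhood of the bands, so no approximation is needed and the lens jumps are genuinely analytic; you should not introduce an extra error here.
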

    {Indeed, the initial configuration can be regarded as a Riemann problem for the KdV condensate. In~\cite{Congy JNS}, Congy et al. studied the case involving a transition between genus~$0$ and genus~$1$. In our case, the problem can be interpreted as a generalized rarefaction scenario.} Figure \ref{2genusGas} presents a direct numerical simulation of the KdV equation (\ref{KdV}) with initial potential (\ref{two-genus-soliton-gas-potential}) behaving the asymptotics in equation (\ref{2genus potential}) with parameters \(\eta_1 = 0.8\), \(\eta_2 = 1.2\), \(\eta_3 = 1.6\), \(\eta_4 = 2\), and \( r_2(\lambda) = 1\). The Figure \ref{2genusGas} clearly shows that the plane is divided into five distinct regions, which from left to right are {quiescent} region, modulated one-phase wave region, unmodulated one-phase wave region, modulated two-phase wave region and unmodulated two-phase wave region.
	
	\begin{figure}
		\centering
		\includegraphics[width=15cm]{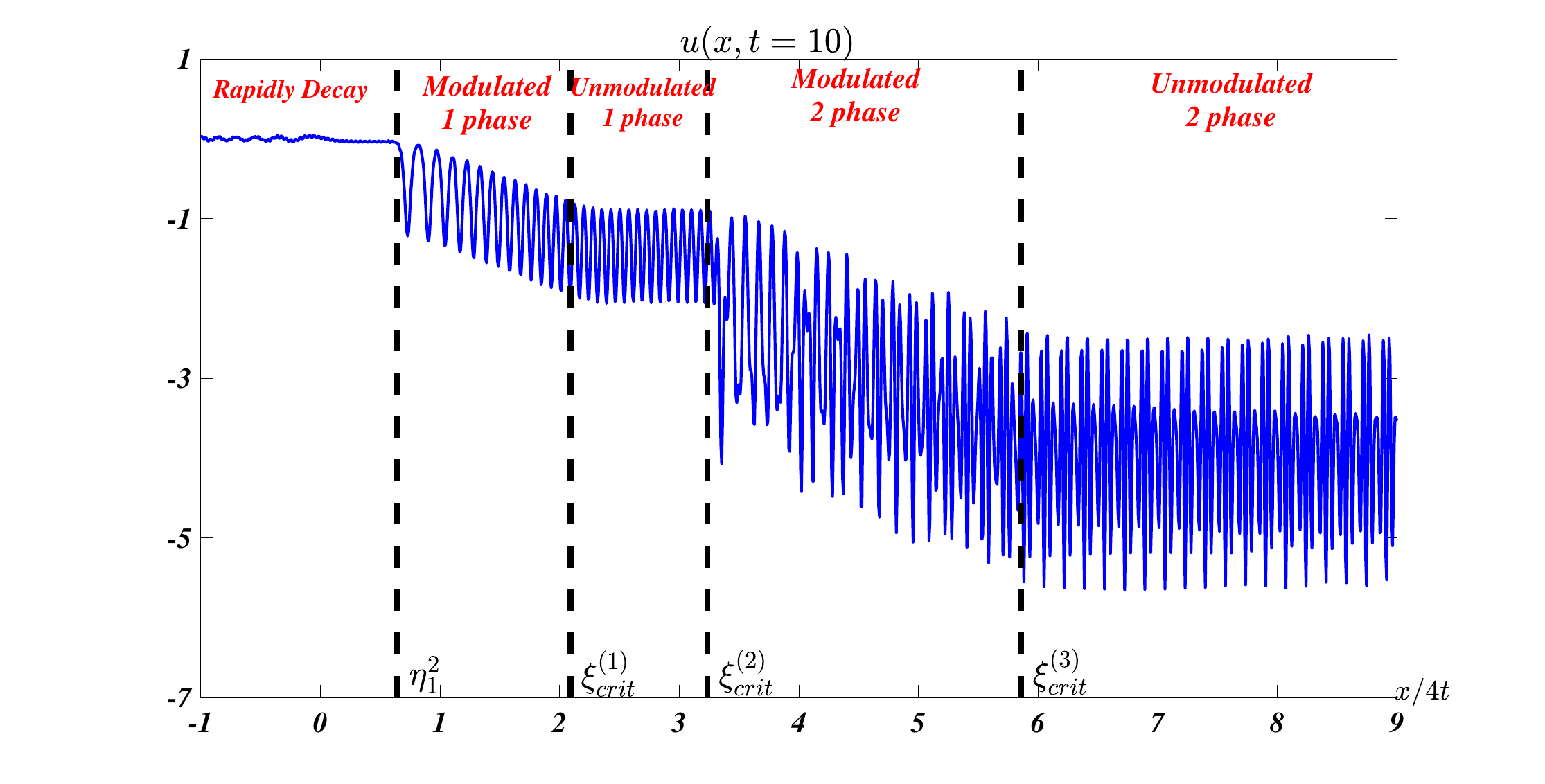}
		\caption{{\protect\small The evolution of the genus two soliton gas potential of the KdV equation at \( t = 10 \) for parameters \(\eta_1 = 0.8\), \(\eta_2 = 1.2\), \(\eta_3 = 1.6\), \(\eta_4 = 2\), and \( r_2(\lambda) = 1 \). The horizontal axis represents \(\frac{x}{4t}\), and the critical points $\eta_1^2$ and \(\xi_{\text{crit}}^{(j)}\) for \( j = 1, 2, 3 \) partition the plane into five distinct regions. These critical values, \(\xi_{\text{crit}}^{(j)}\), can be calculated by using equations (\ref{xi crit}) and (\ref{xi critical}) approximately. For the given parameters, the approximate values are \(\xi_{\text{crit}}^{(1)} \approx 2.0905\), \(\xi_{\text{crit}}^{(2)} \approx 3.2338\), and \(\xi_{\text{crit}}^{(3)} \approx 5.8561\).
		}}
		\label{2genusGas}
	\end{figure}
	
	More precisely, the long-time asymptotics of \( u(x,t) \) for the genus two KdV soliton gas potential depends on the parameter \(\xi := \frac{x}{4t}\). There are four critical values, i.e., $\eta_1^2$ and \(\xi_{\text{crit}}^{(j)}\) for \( j = 1, 2, 3 \), defined in equations (\ref{xi critical}) and (\ref{xi crit}), which serve as the boundaries between different regions, as illustrated in Figure \ref{x-t plane} and Theorem \ref{long time asymp} below. The proof of Theorem \ref{long time asymp} will be provided in detail in Section \ref{long time section}.
	\begin{figure}[h]
		\centering
		\begin{tikzpicture}[>=latex]
			\draw[->,black,very thick] (-5,0) to (5,0) node[black,below=1mm]  {\small $x$};
			\draw[->,dashed,black,very thick] (-3,0) to (-3,5) node[black,right=1mm]  {\small $t$};
			\draw[-,black,very thick] (-3,0) to (5.0,2.8) node[black,above=0.5mm]  {\small $\xi=\xi_{crit}^{(3)}$};
			\draw[-,black,very thick] (-3,0) to (-2,4.8) node[black,above=0.5mm]  {\small $\xi=\eta_{1}^2$};
			\draw[-,black,very thick] (-3,0) to (0.5,4.8) node[black,above=0.5mm]  {\small $\xi=\xi_{crit}^{(1)}$};
			\draw[-,black,very thick] (-3,0) to (3.1,4.5) node[black,above=0.1mm]  {\small $\xi=\xi_{crit}^{(2)}$} ;
			\node at (3.5,1.5) {\small Unmodulated};
			\node at (3.5,1.0) {\small 2 phase };
			\node at (3.2,3.6) {\small Modulated };
			\node at (3.2,3.1) {\small 2 phase };
			\node at (1,4) {\small Unmodulated };
			\node at (1,3.5) {\small 1 phase };
			\node at (-1.2,4) {\small Modulated };
			\node at (-1.2,3.5) {\small 1 phase };
			\node at (-3.1,3.5) {\small {Quiescent}};
			\node at (-3.1,3) {\small {Region}};
			\node at (-3,-0.2) {\small 0 };
		\end{tikzpicture}
		\caption{{\protect\small
				Five asymptotic regions of the genus two KdV soliton gas potential in the $x$-$t$ half plane.}}
		\label{x-t plane}
	\end{figure}
	
	\begin{thm}\label{long time asymp}
		As \( t \to +\infty \), the global long-time asymptotic behaviors of \( u(x,t) \) for the KdV equation with initial potential (\ref{two-genus-soliton-gas-potential}) behaving the asymptotics in equation (\ref{2genus potential}) can be described as follows:
		\begin{enumerate}
			\item For fixed \(\xi < \eta_1^2\), there exists a positive constant \( c \) such that
			\[
			u(x, t) = \mathcal{O}\left(e^{-c t}\right).
			\]
			\item For \(\eta_1^2 < \xi < \xi_{\text{crit}}^{(1)}\), the long-time asymptotics of $u(x,t)$ can be described by a Jacobi elliptic function ``$\mathrm{dn}$'' with modulated parameter {\(\alpha_1\in(\eta_1,\eta_2)\)} and modulated modulus \(m_{\alpha_1} = \frac{\eta_1}{\alpha_1}\) as
			\[
			u(x, t) = \alpha_1^2 - \eta_1^2 - 2\alpha_1^2 \mathrm{dn}^2\left(\alpha_1 \left(x - 2(\alpha_1^2 + \eta_1^2)t + \phi_{\alpha_1}\right) + K(m_{\alpha_1}); m_{\alpha_1} \right) + \mathcal{O}\left(\frac{1}{t}\right),
			\]
			where the parameter \(\alpha_1\) is determined by equation (\ref{alpha1 formular}), and
			\[
			\phi_{\alpha_1} = \int_{\alpha_1}^{\eta_1} \frac{\log r(\zeta)}{R_{\alpha_1,+}(\zeta)} \frac{d\zeta}{\pi i},
			\]
			with \( R_{\alpha_1}(\lambda) := \sqrt{(\lambda^2 - \eta_1^2)(\lambda^2 - \alpha_1^2)} \), where \( R_{\alpha_1,+}(\lambda) \) denotes the left boundary of $ R_{\alpha_1}(\lambda)$ along the branch cuts $(\eta_1,\eta_2)$ and $(-\eta_2,-\eta_1)$.
  \( K(m_{\alpha_1}) \) is the complete elliptic integral of the first kind, defined as \( K(m_{\alpha_1}) = \int_0^{\frac{\pi}{2}} \frac{d\vartheta}{\sqrt{1 - m_{\alpha_1}^2 \sin^2 \vartheta}} \).
			
			\item For \(\xi_{\text{crit}}^{(1)} < \xi < \xi_{\text{crit}}^{(2)}\), the long-time asymptotics of $u(x,t)$ can be described by a Jacobi elliptic function ``$\mathrm{dn}$'' with constant coefficients below
			$$
			u(x,t)=\eta_{2}^2-\eta_{1}^2-2\eta_{2}^2 \dn^2\left(\eta_{2}(x-2(\eta_{2}^2+\eta_{1}^2)t+\phi_{\eta_{2}})+K(m_{\eta_{2}}); m_{\eta_{2}}\right)
			+\mathcal{O}\left(\frac{1}{t}\right),
			$$
			where $m_{\eta_{2}}=\frac{\eta_{1}}{\eta_{2}}$, and  
$$
\phi_{\eta_{2}}=\int_{\eta_{2}}^{\eta_{1}}\frac{\log r(\zeta)}{R_{\eta_{2},+}(\zeta)}\frac{d\zeta}{\pi i},
$$  
{where \( R_{\eta_2}(\lambda) \) is obtained by replacing \( \alpha_1 \) with \( \eta_2 \) in \( R_{\alpha_1}(\lambda) \), while all other conventions remain consistent.}  
	
			\item For \(\xi_{\text{crit}}^{(2)} < \xi < \xi_{\text{crit}}^{(3)}\), the long-time asymptotics of $u(x,t)$ can be described by the modulated two-phase wave
			\begin{equation}\label{modulated-two-phase-wave}
				u(x, t) = -\left(2b_{\alpha_2,1} + \sum_{j=1}^3 \eta_j^2 + \alpha_2^2 + 2 \partial_x^2 \log\left(\Theta\left(\frac{\Omega_{\alpha_2}}{2\pi i}; \hat{\tau}_{\alpha_2}\right)\right)\right) + \mathcal{O}\left(\frac{1}{t}\right),
			\end{equation}
			where the constant \( b_{\alpha_2,1} \) is determined by equation (\ref{Q alpha2 condition}),
			parameter \(\alpha_2\) is determined by equation (\ref{alpha2 formular}) and the period matrix $\hat{\tau}_{\alpha_2}$ is given in equation (\ref{period matrix of hat tau alpha2}).
			
			\item For fixed \(\xi_{\text{crit}}^{(3)} < \xi\), the long-time asymptotics of $u(x,t)$ can be described by the unmodulated two-phase wave
			$$			u(x,t)=-\left(2b_{\eta_{4},1}+{\sum_{j=1}^4\eta_j^2}+2\partial_x^2\log\left(\Theta\left(\frac{\Omega_{\eta_{4}}}{2\pi i};\hat \tau_{\eta_{4}}\right)\right)\right)+\mathcal{O}\left(\frac{1}{t}\right),
			$$
			where $\hat \tau_{\eta_{4}}=\hat{\tau}$ in (\ref{period matrix of hat tau}) and $b_{\eta_{4}}, \Omega_{\eta_{4}}=\begin{pmatrix}
				{t\Omega_{\eta_{4},1}+{\Delta_{\eta_{4},1}}}&{t\Omega_{\eta_{4},0}+{\Delta_{\eta_{4},0}}}
			\end{pmatrix}^T$ are defined by (\ref{Q alpha2 condition}) and (\ref{Omega alpha2}), respectively.
			
		\end{enumerate}
	\end{thm}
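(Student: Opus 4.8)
The plan is to prove Theorem \ref{long time asymp} by a Deift--Zhou nonlinear steepest-descent analysis of a time-dependent deformation of the Riemann--Hilbert problem (\ref{RHP X jumps})--(\ref{RHP X sym}), carried out separately in each of the five $\xi$-windows, $\xi=\frac{x}{4t}$. Since the soliton-gas potential (\ref{two-genus-soliton-gas-potential}) is obtained as the $N\to\infty$ limit of $N$-soliton solutions, inserting the KdV time evolution into those finite RH problems and passing to the limit replaces the exponents $e^{\mp 2i\lambda x}$ in (\ref{RHP X jumps}) by $e^{\mp 2i\theta(\lambda;x,t)}$ with $\theta(\lambda;x,t)=\lambda x+4\lambda^{3}t=4t(\lambda^{3}+\xi\lambda)$, leaving (\ref{RHP X asymp}), (\ref{RHP X sym}) and the reconstruction formula (\ref{two-genus-soliton-gas-potential}) unchanged; existence and uniqueness follow from a Zhou vanishing-lemma argument using the symmetry (\ref{RHP X sym}), and the reconstructed $u(x,t)$ solves (\ref{KdV}) with the prescribed initial datum, as in \cite{Girotti CMP}. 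Everything is then governed by $\theta$: restricted to a band $\lambda=i\kappa$ one computes $\re(-2i\theta)=8t\kappa(\xi-\kappa^{2})$, so the position of $\sqrt{\xi}$ (and of the saddle $\sqrt{\xi/3}$) relative to $\eta_{1}<\eta_{2}<\eta_{3}<\eta_{4}$ decides which jump entries decay and which oscillate; in particular, for $\xi<\eta_{1}^{2}$ every jump entry decays exponentially in $t$, so $X=\begin{pmatrix}1&1\end{pmatrix}+\mathcal{O}(e^{-ct})$ and (\ref{two-genus-soliton-gas-potential}) gives case 1 at once.

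In each of the remaining windows I would construct a scalar $g$-function (equivalently a Szeg\H{o}-type function assembled from $\log r_{2}$) that renormalizes $X$ at infinity, turns the oscillatory jumps into exponentially small ones off a finite union of bands, and has purely imaginary, piecewise-constant boundary values on those bands. The number of bands is the effective genus: one band (genus $1$) for $\eta_{1}^{2}<\xi<\xi_{\text{crit}}^{(2)}$ and two bands (genus $2$) for $\xi>\xi_{\text{crit}}^{(2)}$. A band endpoint is either frozen at some $\eta_{j}$ --- the \emph{unmodulated} cases 3 and 5 --- or sits at a moving point $\alpha_{1}(\xi)\in(\eta_{1},\eta_{2})$, resp.\ $\alpha_{2}(\xi)\in(\eta_{3},\eta_{4})$, pinned by the requirement that $g$ have at worst square-root branching there; this requirement is exactly the modulation equation (\ref{alpha1 formular}), resp.\ (\ref{alpha2 formular}), and it simultaneously determines the phases $\phi_{\alpha_{1}}$, the constant $\alpha$, and $b_{\alpha_{2},1}$. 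The thresholds are then identified as: $\alpha_{1}(\xi)=\eta_{2}$ at $\xi=\xi_{\text{crit}}^{(1)}$; the sign of $\re(2i\theta-2g)$ on $i\Sigma_{3}$ first forcing a new band at $\xi=\xi_{\text{crit}}^{(2)}$; and $\alpha_{2}(\xi)=\eta_{4}$ at $\xi=\xi_{\text{crit}}^{(3)}$ --- all subject to the checks that (\ref{alpha1 formular}) and (\ref{alpha2 formular}) are uniquely solvable in the stated intervals, that the critical values are ordered as in Figure \ref{x-t plane}, and that $\re(2i\theta-2g)$ has the correct sign on every complementary arc so that lens-opening is legitimate.

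After opening lenses the problem localizes: around each band endpoint the local parametrix is built from Airy functions (generic $z^{1/2}$-type branching), while along the bands the limiting outer model problem is solved on the genus-$g$ hyperelliptic Riemann surface branched at the band endpoints, by ratios of Riemann--Theta functions for the period matrices $\hat\tau$, $\hat\tau_{\alpha_{1}}$, $\hat\tau_{\alpha_{2}}$, $\hat\tau_{\eta_{4}}$ --- this is where the paper's construction of the high-genus model problem is invoked, and one must verify $\Im\hat\tau>0$ and nonvanishing of the relevant $\Theta$-value so that the parametrix is regular. Gluing the outer and local parametrices produces an error RH problem with jumps $I+\mathcal{O}(t^{-1})$ supported on circles and on exponentially small contours, so the small-norm theory gives a solution $I+\mathcal{O}(t^{-1})$ uniformly on compact $\xi$-subsets of each open window; substituting the large-$\lambda$ expansion of (outer parametrix)$\times$(error) into (\ref{two-genus-soliton-gas-potential}) and using the standard identities for second logarithmic derivatives of $\Theta$ yields the $\dn^{2}$ formulas in cases 2--3 (genus $1$) and the $\partial_{x}^{2}\log\Theta$ formulas in cases 4--5 (genus $2$).

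\textbf{Main obstacle.} The crux is the global control of the $g$-function geometry over all five windows at once: proving that (\ref{alpha1 formular}) and (\ref{alpha2 formular}) have unique solutions with $\eta_{1}<\alpha_{1}<\eta_{2}$ and $\eta_{3}<\alpha_{2}<\eta_{4}$, that the genus really jumps from $1$ to $2$ precisely at $\xi_{\text{crit}}^{(2)}$, and that $\re(2i\theta-2g)<0$ on exactly the arcs needed --- all of which reduces to a delicate sign analysis of period integrals of a meromorphic differential on a varying hyperelliptic curve. A secondary, more computational difficulty is the genus-$2$ outer parametrix itself: exhibiting an explicit $2\times2$ solution with the required symmetry and normalization and confirming that its theta-function ingredients do not vanish away from a discrete exceptional set disjoint from the relevant line. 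The Airy parametrices, the small-norm estimate, and the extraction of $u$ from (\ref{two-genus-soliton-gas-potential}) are then routine.
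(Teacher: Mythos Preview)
Your proposal is essentially correct and follows the same Deift--Zhou route as the paper: time-evolved RH problem, region-by-region $g$-function construction with a moving endpoint $\alpha_{j}(\xi)$ fixed by the modulation relations (\ref{alpha1 formular}) and (\ref{alpha2 formular}), lens opening, outer theta-function parametrix, local parametrices, small-norm error. Two technical points deserve correction.

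First, the local parametrices are not all Airy. At the moving endpoints $\pm\alpha_{1}$ (resp.\ $\pm\alpha_{2}$) the phase $g_{\alpha_j}(\lambda)+4\lambda^{3}-4\xi\lambda$ vanishes to order $3/2$ and the Airy model applies, but at the \emph{fixed} hard edges $\pm\eta_{j}$ the phase vanishes only to order $1/2$ while $r(\lambda)$ stays smooth and nonzero, so the correct local model is the modified Bessel parametrix (as in \cite{Girotti CMP}). Both types still give $\mathcal{O}(t^{-1})$ errors, so the final statements are unaffected, but the construction you sketch would fail to match at those points.

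Second, and more substantively, the genus-two outer model cannot be built ``directly on the genus-$g$ hyperelliptic Riemann surface branched at the band endpoints'' in the $\lambda$-plane. The symmetry $S^{\infty}(-\lambda)=S^{\infty}(\lambda)\bigl(\begin{smallmatrix}0&1\\1&0\end{smallmatrix}\bigr)$ forces four symmetric cuts, i.e.\ a genus-\emph{three} curve $\mathcal{S}_{\alpha_{2}}$, and a naive three-phase theta construction cannot meet the quarter-singularity constraints at the eight branch points (the denominator $\Theta(J(\lambda)-d)$ would have three zeros while $\check{\gamma}$ has four). The paper resolves this either by the quadratic map $z=-\lambda^{2}$, collapsing $\mathcal{S}_{\alpha_{2}}$ to a genuine genus-two curve $\hat{\mathcal{S}}$, or equivalently by the symmetric Abel map $\check{J}(\lambda)=\int(\omega_{1}+\omega_{3},\,2\omega_{2})^{T}$ with reduced $2\times2$ period matrix $\hat{\tau}_{\alpha_{2}}$ in (\ref{period matrix of hat tau alpha2}); Lemma~\ref{Lemma-4p} then pins the four zeros exactly at $\pm\eta_{1},\pm\eta_{3}$. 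You correctly flag this as the ``secondary difficulty,'' but it is in fact the paper's principal technical novelty and cannot simply be invoked---your outline should build it in explicitly.
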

	\par
	In fact, the method for genus two KdV soliton gas can be generalized to investigate the soliton gases of arbitrary genus. In Section \ref{Ngenus sector}, the construction of KdV soliton gases of general genus \( \mathcal{N} \) is discussed, along with a preliminary analysis of their evolutionary properties.
	
	\subsection{Some remarks on Theorem \ref{Spatial behavior} and Theorem \ref{long time asymp}}

	It will be seen that the studies of asymptotic behaviors of the genus two KdV soliton gas potential (\ref{two-genus-soliton-gas-potential}) are not the trivial generalization of that in the genus one KdV soliton gas potential in \cite{Girotti CMP}. The leading-order term in asymptotic expression (\ref{modulated-two-phase-wave}) also arises from the context of the small-dispersion limit of the KdV equation, as discussed in \cite{Claeys Grava,Deift 1997}. Although the initial RH problem (\ref{RHP X jumps})-(\ref{RHP X sym}) involves four jump bands, corresponding to a genus three scenario, after applying the holomorphic map \( z= -\lambda^2 \) and by using the Riemann-Hurwitz formula, the corresponding Riemann surface is indeed of genus two. Moreover, the model problem associated with this new Riemann surface resembles the one in \cite{Deift 1997}, and this approach can be extended to higher-genus cases.
	
	\section{Construction of genus two KdV soliton gas potential}\label{Soliton gas RH problem}
	
	It is known that a pure $N$-soliton solution of the KdV equation (\ref{KdV}) associates with a vector-valued RH problem. To be specific, let $M(\lambda)$ be a $1\times2$ vector satisfying:
	
	(i) $M(\lambda)$ is meromorphic in the whole complex plane, with simple poles at $\left\{\lambda_j\right\}_{j=1}^N$ in $i \mathbb{R}_{+}$ and the corresponding conjugate points $\left\{\bar{\lambda}_j\right\}_{j=1}^N$ in $i \mathbb{R}_{-}$;
	
	(ii) The following residue conditions for $M(\lambda)$ hold
	$$
	\underset{\lambda=\lambda_j}{\operatorname{res}} M(\lambda)=\lim _{\lambda \rightarrow \lambda_j} M(\lambda)
	\begin{pmatrix}
		0 & \frac{c_j e^{-2 i \theta(x,t;\lambda)}}{N} \\
		0 & 0
	\end{pmatrix}
	, \quad \underset{\lambda=\bar\lambda_j}{\operatorname{res}} M(\lambda)=\lim _{\lambda \rightarrow \bar{\lambda}_j} M(\lambda)\begin{pmatrix}
		0 &  0\\
		\frac{-c_j e^{2 i \theta(x,t;\lambda)}}{N} & 0
	\end{pmatrix},
	$$
	where $c_j \in i \mathbb{R}_{+},$ $j=1,\cdots,N$ and {$\theta(x,t;\lambda)=x\lambda+4t\lambda^3$}.
	
	(iii) $M(\lambda)$ satisfies the asymptotics $M(\lambda)= \begin{pmatrix} 			1 & 1 		\end{pmatrix}+\mathcal{O}\left(\frac{1}{\lambda}\right)$ for $\lambda \rightarrow \infty$;
	
	(iv) $M$ admits the symmetry
	$$
	M(-\lambda)=M(\lambda)\left(\begin{array}{ll}
		0 & 1 \\
		1 & 0
	\end{array}\right).
	$$
	\par
	The stationary $N$-soliton solution of the KdV equation (\ref{KdV}) is constructed by
	{$$
	u(x,t)=2 \frac{\mathrm{d}}{\mathrm{d} x}\left(\lim _{\lambda \rightarrow \infty} \frac{\lambda}{i}\left(M_1(\lambda)-1\right)\right),
	$$}
	where $M_1(\lambda)$ is the first element of row vector $M(\lambda)$. Specially, for $N=1$ and taking $\lambda_1=i\eta, c_1=ic,c>0$, the stationary one-soliton solution of the KdV equation (\ref{KdV}) is derived as
	{$$
	u(x,t)=-2 \eta^2 \operatorname{sech}^2\left( \eta\left(x-4\eta^2t-x_0\right)\right),
	$$}
	which is a time-independent version of the single soliton solution (\ref{single soliton}) with initial position
	$$
	x_0=\frac{1}{2\eta}\log\frac{2\eta}{c}\in\R.
	$$
	\par
	Reminding the notations $\Sigma_j~(j=1,2,3,4)$ above, for the sake of simplicity, when considering $N\to+\infty$ restricted on the four bands $(i\eta_1,i\eta_2), (i\eta_3,i\eta_4), (-i\eta_2, -i\eta_1)$ and $(-i\eta_4, -i\eta_3)$ respectively, we take the following assumptions.
	\begin{assumption} Assume the next three items hold:\
		\begin{enumerate}
			\item Divide $\{\lambda_j\}_{j=1}^N$ into two parts, that are $\{\lambda_l\}_{l=1}^{N_1}$ and $\{\lambda_k\}_{k=1}^{N_2}$, with $N_1+N_2=N$. Suppose the $N_1$ poles are uniformly distributed on $(i \eta_1,i\eta_2)$, while the $N_2$ poles are uniformly distributed on $(i \eta_3,i\eta_4)$. More explicitly, let $|\lambda_{l+1}-\lambda_{l}|=\frac{\eta_2-\eta_1}{N_1}$ and $|\lambda_{k+1}-\lambda_{k}|=\frac{\eta_4-\eta_3}{N_2}$.
			
			\item Similarly, the coefficients $c_j$ are also divided into two groups, i.e., $\{c_l\}_{l=1}^{N_1},\{c_k\}_{k=1}^{N_2}$, and are purely imaginary. Moreover, assume that
			$$
			\begin{aligned}				
				c_l=\frac{i(\eta_2-\eta_1)r_2(\lambda_l)}{\pi},~l=1,2,\cdots,N_1,\\
				c_k=\frac{i(\eta_4-\eta_3)r_2(\lambda_k)}{\pi},~k=1,2,\cdots,N_2,\\
			\end{aligned}
			$$
			where $r_2(\lambda)$ is an analytic function for $\lambda$ near $i\Sigma_{1,2,3,4}$, with symmetry $r_2(\bar \lambda)=r_2(\lambda)$. Moreover, $r_2(\lambda)$ is assumed to be a real-valued positive nonvanishing function for $\lambda$ in the closure of $\Sigma_{1,2,3,4}$.
			
			\item Indeed, we only consider the case that $N_1\to+\infty$ and $N_2\to+\infty$, simultaneously.
		\end{enumerate}
	\end{assumption}
	Note that the corresponding conjugate points $\left\{\bar{\lambda}_j\right\}_{j=1}^N$ are considered in the same way on $(-i\eta_2, -i\eta_1)$ and $(-i\eta_4, -i\eta_3)$. Now, remove the poles in $M(\lambda)$ by taking the transformations
	$$
	Z(\lambda)=\begin{cases}
		\begin{aligned}
			&M(\lambda)
			\begin{pmatrix}
				1 & -\frac{1}{N}\sum_{j=1}^{N}\frac{c_j e^{-2 i {\theta(x,t;\lambda)}}}{\lambda-\lambda_j} \\
				0 & 1
			\end{pmatrix},&&\lambda\ \text{within}\ \gamma_+,\\
			&M(\lambda)
			\begin{pmatrix}
				1 & 0  \\
				\frac{1}{N}\sum_{j=1}^{N}\frac{c_j e^{{2 i \theta(x,t;\lambda)}}}{\lambda+\lambda_j} & 1
			\end{pmatrix},&&\lambda\ \text{within}\ \gamma_-,
		\end{aligned}
	\end{cases}
	$$
	where $\gamma_+$ is a counter clockwise contour,  which surrounds the interval $(i\eta_1,i\eta_2)$ and $(i\eta_3,i\eta_4)$  in the upper half plane, and $\gamma_-$is a clockwise contour, which surrounds the interval $(-i\eta_2,-i\eta_1)$ and $(-i\eta_4,-i\eta_3)$ in the lower half plane. Consequently, the jump conditions for function $Z(\lambda)$ are converted into
	$$
	Z_+(\lambda)=Z_-(\lambda)\begin{cases}
		\begin{aligned}
			&\begin{pmatrix}
				1 & -\frac{1}{N}\sum_{j=1}^{N}\frac{c_j e^{-2 i{ \theta(x,t;\lambda)}}}{\lambda-\lambda_j} \\
				0 & 1
			\end{pmatrix}, &&\lambda\in\gamma_+,\\
			&\begin{pmatrix}
				1 & 0  \\
				-\frac{1}{N}\sum_{j=1}^{N}\frac{c_j e^{2 i { \theta(x,t;\lambda)}}}{\lambda+\lambda_j} & 1
			\end{pmatrix},	 &&\lambda\in\gamma_-.
		\end{aligned}
	\end{cases}
	$$
	\par	
	As $N_1,N_2\to+\infty$, one has $N=N_1+N_2\to+\infty$. For any open set $U_{1}$ containing $i\Sigma_{1}\cup i\Sigma_{3}$, the series converges uniformly for all $\lambda\in\C\setminus U_1$, that is
	$$
	\begin{aligned}
		\mathop{{\rm lim}}\limits_{N\to+\infty}\frac{1}{N}\sum_{j=1}^{N}\frac{c_j }{\lambda-\lambda_j}&=	\mathop{{\rm lim}}\limits_{N_1\to+\infty}\frac{1}{N_1}\sum_{l=1}^{N_1}\frac{1}{\lambda-\lambda_l}\frac{(\eta_2-\eta_1)
			ir_2(\lambda_l)}{\pi}+\mathop{{\rm lim}}\limits_{N_2\to+\infty}\frac{1}{N_2}\sum_{k=1}^{N_2}\frac{1}{\lambda-\lambda_k}\frac{(\eta_4-\eta_3)ir_2(\lambda_k)}{\pi}\\
		&=\int_{i\eta_1}^{i\eta_2}\frac{2ir_2(\zeta)}{\lambda-\zeta}\frac{d\zeta}{2\pi i}+\int_{i\eta_3}^{i\eta_4}\frac{2ir_2(\zeta)}{\lambda-\zeta}\frac{d\zeta}{2\pi i}.
	\end{aligned}
	$$
	Similarly, for any open set $U_{2}$ containing $i\Sigma_{2}\cup i\Sigma_{4}$, the series converges uniformly for all $\lambda\in\C\setminus U_2$, that is
	$$
	\mathop{{\rm lim}}\limits_{N\to+\infty}\frac{1}{N}\sum_{j=1}^{N}\frac{c_j }{\lambda+\lambda_j}=\int_{-i\eta_2}^{-i\eta_1}\frac{2ir_2(\zeta)}{\zeta-\lambda}\frac{d\zeta}{2\pi i}+\int_{-i\eta_4}^{-i\eta_3}\frac{2ir_2(\zeta)}{\zeta-\lambda}\frac{d\zeta}{2\pi i}.
	$$
	\par
	As result, a limiting RH problem for $Z(\lambda)$ is obtained below
	$$
	Z_+(\lambda)=Z_-(\lambda)\begin{cases}
		\begin{aligned}
			&\begin{pmatrix}
				1 & e^{-2i{ \theta(x,t;\lambda)}}(\int_{i\eta_1}^{i\eta_2}\frac{2ir_2(\zeta)}{\zeta-\lambda}\frac{d\zeta}{2\pi i}+\int_{i\eta_3}^{i\eta_4}\frac{2ir_2(\zeta)}{\zeta-\lambda}\frac{d\zeta}{2\pi i})\\
				0 & 1
			\end{pmatrix},&&\lambda\in\gamma_+,\\
			&\begin{pmatrix}
				1 & 0\\
				e^{2i{ \theta(x,t;\lambda)}}(\int_{-i\eta_2}^{-i\eta_1}\frac{2ir_2(\zeta)}{\zeta-\lambda}\frac{d\zeta}{2\pi i}+\int_{-i\eta_4}^{-i\eta_3}\frac{2ir_2(\zeta)}{\zeta-\lambda}\frac{d\zeta}{2\pi i}) & 1
			\end{pmatrix},&&\lambda\in\gamma_-,\\
		\end{aligned}
	\end{cases}
	$$
	$$Z(\lambda)= \begin{pmatrix} 			1 & 1 		\end{pmatrix}+\mathcal{O}\left(\frac{1}{\lambda}\right),\quad  \text{for}\quad  \lambda \to \infty.$$
	\par
	Now, {comparing} the jump conditions of $Z(\lambda)$ on the contour $\gamma_{\pm}$ into jumps on $\Sigma_{1,2,3,4}$ by defining
	$$
	X(\lambda)=
	\begin{cases}
		\begin{aligned}
			&Z(\lambda)\begin{pmatrix}
				1 & -e^{-2i{ \theta(x,t;\lambda)}}(\int_{i\eta_1}^{i\eta_2}\frac{2ir_2(\zeta)}{\zeta-\lambda}\frac{d\zeta}{2\pi i}+\int_{i\eta_3}^{i\eta_4}\frac{2ir_2(\zeta)}{\zeta-\lambda}\frac{d\zeta}{2\pi i})\\
				0 & 1
			\end{pmatrix},&& \lambda\ \text{within}\ \gamma_+,\\
			&Z(\lambda)\begin{pmatrix}
				1 & 0\\
				e^{2i{ \theta(x,t;\lambda)}}(\int_{-i\eta_2}^{-i\eta_1}\frac{2ir_2(\zeta)}{\zeta-\lambda}\frac{d\zeta}{2\pi i}+\int_{-i\eta_4}^{-i\eta_3}\frac{2ir_2(\zeta)}{\zeta-\lambda}\frac{d\zeta}{2\pi i}) & 1
			\end{pmatrix},&& \lambda\ \text{within}\ \gamma_-,\\
			&Z(\lambda),  && \lambda\ \text{outside}\ \gamma_{\pm}.
		\end{aligned}
	\end{cases}
	$$
	\par
	By using the Plemelj formula, the RH problem for function $X(\lambda)$ in (\ref{RHP X jumps})-(\ref{RHP X sym}) is derived immediately. Finally, transform the RH problem on contours $i\Sigma_{1,2,3,4}$ into that on contours $\Sigma_{1,2,3,4}$ by defining $Y(\lambda)=X(i\lambda),\ r(\lambda)=2r_2(i\lambda)$, then we arrive at the RH problem for the soliton gas potential of the KdV equation as follows.
	\par
	The function $Y(\lambda)$ is analytic for $\lambda \in \mathbb{C} \setminus \Sigma_{1,2,3,4}$ with $\Sigma_{1,2,3,4}:=\Sigma_1\cup \Sigma_2\cup \Sigma_3\cup\Sigma_4$, {see Figure \ref{jumpforY}}, and has the properties:	
	$$
	Y_{+}( \lambda)=Y_{-}( \lambda) \begin{cases}{\left(\begin{array}{cc}
				1 &  -i r( \lambda) e^{{2 \lambda x-8\lambda^3t}} \\
				0 & 1
			\end{array}\right)}, & \lambda \in \Sigma_{1,3}, \\
		{\left(\begin{array}{cc}
				1 & 0 \\
				i r( \lambda) e^{{-2 \lambda x+8\lambda^3t}} & 1
			\end{array}\right)}, & \lambda \in \Sigma_{2,4},\end{cases}
	$$	
	$$
	Y(\lambda)=(1\quad 1)+\mathcal{O}\left(\frac{1}{\lambda}\right),
	$$	
	$$
	Y(-\lambda)=Y(\lambda)\left(\begin{matrix}
		0&1\\
		1&0
	\end{matrix}\right).
	$$	
\begin{figure}[h]
		\centering
		\begin{tikzpicture}[>=latex]
			\draw[lightgray,very thick,dashed] (-8.5,0) to (-7.5,0);
			\draw[lightgray,very thick,dashed] (-7.5,0) to (-6.5,0) ;
			\draw[-,very thick,lightgray,dashed] (-7.5,0) to (-7,0);
			\filldraw[black] (-6.5,0) node[black,below=1mm]{$-\eta_{4}$} circle (1.5pt);
			\draw[-,very thick] (-6.5,0) to (-4.5,0);
			\draw[->,very thick] (-6.5,0) to (-5.5,0)node[black,above=1mm]{\small $\begin{pmatrix}
				1 & 0 \\
				i r( \lambda) e^{-2 \lambda x} & 1
				\end{pmatrix}$}
                node[black,below=1mm]{\small $\Sigma_4$};
			\draw[-,,dashed,lightgray,very thick] (-4.5,0) to (-3.5,0);
			
			\filldraw[black] (-4.5,0) node[black,below=1mm]{$-\eta_{3}$} circle (1.5pt);
			\filldraw[black] (-3.5,0) node[black,below=1mm]{$-\eta_2$} circle (1.5pt);

			\draw[-,very thick,black] (-3.5,0) to (-0.5,0);
			\draw[->,very thick,black] (-3.5,0) to (-2.0,0) node[black,above=1mm]{\small $\begin{pmatrix}
	            1 & 0 \\
				i r( \lambda) e^{-2 \lambda x} & 1
				\end{pmatrix}$}
                node[black,below=1mm]{\small $\Sigma_2$};
			\filldraw[black] (-0.5,0) node[black,below=1mm]{$-\eta_1$} circle (1.5pt);

			\draw [lightgray,dashed,very thick] (-0.5,0) to (0.5,0);

);
			\draw[lightgray,very thick,dashed] (7.5,0) to (8.5,0);
			\draw[lightgray,very thick,dashed] (6.5,0) to (7.5,0) ;
			\draw[-,very thick,lightgray,dashed] (6.5,0) to (7.2,0);

			\filldraw[black] (6.5,0) node[black,below=1mm]{$\eta_{4}$} circle (1.5pt);
			\draw[-,very thick,lightgray,dashed] (3.5,0) to (4.5,0);
	
			\filldraw[black] (4.5,0) node[black,below=1mm]{$\eta_{3}$} circle (1.5pt);
			\filldraw[black] (3.5,0) node[black,below=1mm]{$\eta_{2}$} circle (1.5pt);
			\draw[-,very thick,black] (0.5,0) to (3.5,0);
			\draw[->,very thick,black] (0.5,0) to (2.0,0) node[black,above=1mm]{\small $\begin{pmatrix}
				1 &  -i r( \lambda) e^{2 \lambda x} \\
				0 & 1
				\end{pmatrix}$}node[black,below=1mm]{\small $\Sigma_1$};
			\filldraw[black] (0.5,0) node[black,below=1mm]{$\eta_1$} circle (1.5pt);
			\draw[-,very thick,black] (4.5,0) to (6.5,0);
			\draw[->,very thick,black] (4.5,0) to (5.5,0)node[black,above=1mm]{\small $\begin{pmatrix}
				1 &  -i r( \lambda) e^{2 \lambda x} \\
				0 & 1
				\end{pmatrix}$}node[black,below=1mm]{\small $\Sigma_3$};

		\end{tikzpicture}
		\caption{{\protect\small
				{The jump contour for \( Y(\lambda) \) and the associated jump matrices.}}}
		\label{jumpforY}
	\end{figure}
    
	\par
	So the KdV soliton gas potential can be reformulated by
	{$$
	u(x,t)=2 \frac{\mathrm{d}}{\mathrm{d} x}\left(\lim _{\lambda \rightarrow \infty} {\lambda}\left(Y_1(\lambda )-1\right)\right),
	$$}
	where \(Y_1(\lambda)\) is the first component of vector-valued function \(Y(\lambda)\).

	\begin{lem}
		The solution to the RH problem concerning row vector $Y(\lambda)$ stated above exists and is unique.
	\end{lem}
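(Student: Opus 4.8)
The plan is to use the symmetry $Y(-\lambda)=Y(\lambda)\left(\begin{smallmatrix}0&1\\1&0\end{smallmatrix}\right)$ to collapse the $1\times 2$ problem onto its first component, obtaining a single scalar integral equation of the second kind whose operator is a positive-definite compact perturbation of the identity; existence and uniqueness then follow simultaneously, without invoking the more general Fredholm-index theory for vector RH problems (which would also apply here).

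First I would record the elementary structure of the jumps: the jump matrix on $\Sigma_{1,3}$ has first column $(1,0)^{T}$ and the jump matrix on $\Sigma_{2,4}$ has second column $(0,1)^{T}$. Hence $Y_1$ is in fact analytic across $\Sigma_{1,3}$, so $Y_1$ is holomorphic on $\mathbb{C}\setminus\Sigma_{2,4}$ and $Y_2$ is holomorphic on $\mathbb{C}\setminus\Sigma_{1,3}$, while the symmetry gives $Y_2(\lambda)=Y_1(-\lambda)$. The only surviving relation is then $(Y_1)_+(\lambda)-(Y_1)_-(\lambda)=i\,r(\lambda)\,e^{-2\lambda x}\,Y_1(-\lambda)$ for $\lambda\in\Sigma_{2,4}$, where $Y_1(-\lambda)$ is unambiguous because $-\lambda\in\Sigma_{1,3}$, a set of analyticity of $Y_1$. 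Since $Y_1-1$ is holomorphic off $\Sigma_{2,4}$, decays at infinity, and has at worst the logarithmic endpoint singularities at $\pm\eta_j$ that are inherent in the construction, the Plemelj formula together with Liouville's theorem yields the representation $Y_1(\lambda)=1+\frac{1}{2\pi}\int_{\Sigma_{2,4}}\frac{r(s)\,e^{-2sx}\,Y_1(-s)}{s-\lambda}\,ds$.

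Next I would restrict this representation to $\lambda=\mu\in\Sigma_{1,3}$ and change variables $s=-\sigma$; using that $r$ is even on $\mathbb{R}$ (which follows from $r(\lambda)=2r_2(i\lambda)$ and $r_2(\bar\mu)=r_2(\mu)$), this produces the closed scalar equation $(I+\mathcal{K})\,Y_1=1$ on $L^2(\Sigma_{1,3})$, with $(\mathcal{K}f)(\mu)=\frac{1}{2\pi}\int_{\Sigma_{1,3}}\frac{r(\sigma)\,e^{2\sigma x}}{\sigma+\mu}\,f(\sigma)\,d\sigma$. Because $\Sigma_{1,3}\subset(0,\infty)$ this kernel is continuous, so $\mathcal{K}$ is Hilbert--Schmidt, hence compact; and conjugating $\mathcal{K}$ by the strictly positive weight $w=\sqrt{r\,e^{2(\cdot)x}}$ turns it into the self-adjoint operator with kernel $\tfrac{1}{2\pi}\,w(\mu)w(\sigma)/(\sigma+\mu)$, which is nonnegative thanks to the identity $\tfrac{1}{\sigma+\mu}=\int_0^\infty e^{-t(\sigma+\mu)}\,dt$ and is strictly positive on nonzero elements by injectivity of the Laplace transform. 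Consequently $I+\mathcal{K}$ is boundedly invertible on $L^2(\Sigma_{1,3})$. Uniqueness is now immediate: a solution of the homogeneous problem ($Y\to(0,0)$) gives, by the reduction above, a vector in $\ker(I+\mathcal{K})=\{0\}$, forcing $Y_1\equiv 0$ on $\Sigma_{1,3}$, hence $Y_1\equiv 0$ on $\mathbb{C}$ by the Cauchy representation, hence $Y_2=Y_1(-\cdot)\equiv 0$. For existence I would run the construction backwards: solve $(I+\mathcal{K})\phi=1$ on $\Sigma_{1,3}$, define $Y_1$ on $\mathbb{C}\setminus\Sigma_{2,4}$ by the Cauchy integral above with $Y_1(-s)$ replaced by $\phi(-s)$, and set $Y_2(\lambda)=Y_1(-\lambda)$, which makes the symmetry relation automatic; the normalization and the jump on $\Sigma_{2,4}$ hold by Plemelj, the restriction of $Y_1$ to $\Sigma_{1,3}$ reproduces $\phi$ by construction (this is exactly the equation just solved), and the jump of $Y_2$ on $\Sigma_{1,3}$ follows from that of $Y_1$ on $\Sigma_{2,4}$ using the evenness of $r$.

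The main obstacle is not conceptual but a question of hygiene. First, one must pin down the precise analytic class in which the RH problem is posed, i.e.\ the admissible behaviour of $Y$ at the eight branch points $\pm\eta_j$, so that the Cauchy integrals converge and the Plemelj/Liouville step is legitimate; the limiting construction delivers only logarithmic singularities there, which suffices, but this needs to be stated. Second, one must be scrupulous about orientations and signs in the substitution $s=-\sigma$: an error there would replace $I+\mathcal{K}$ by $I-\mathcal{K}$, and the positivity argument — which is the whole point — would break down.
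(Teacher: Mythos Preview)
Your proposal is correct and follows essentially the same route as the paper: both reduce the vector problem, via the symmetry $Y_2(\lambda)=Y_1(-\lambda)$ and a Cauchy-integral representation, to a scalar Fredholm equation $(I+\mathcal{K})\phi=g$ with a compact kernel of the form $\tfrac{w(\mu)w(\sigma)}{2\pi(\sigma+\mu)}$, whose positivity forces invertibility. The only cosmetic differences are that the paper works on $\Sigma_{2,4}$ rather than $\Sigma_{1,3}$, cites \cite{Girotti CMP} for the positivity of $T$ where you supply the Laplace-transform argument $\tfrac{1}{\sigma+\mu}=\int_0^\infty e^{-t(\sigma+\mu)}\,dt$ explicitly, and phrases the conclusion through Fredholm index zero plus injectivity rather than your direct spectral statement that $\mathrm{spec}(\mathcal{K})\subset[0,\infty)$.
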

	\begin{proof}
		
		Rewrite row vector $Y(\lambda)$ as {$(y^{(1)}(\lambda),y^{(2)}(\lambda))$}. Combining the jump conditions on $\Sigma_{2,4}$, it is deduced that
		{$$
		y^{(1)}_+(\lambda)=y^{(1)}_-(\lambda)-ir(\lambda)y^{(2)}_+(\lambda),\quad
		y^{(2)}_+(\lambda)=y^{(2)}_-(\lambda).
		$$}
		It is evident that {$y^{(2)}(\lambda)$} is holomorphic across $\Sigma_{2,4}$, while {$y^{(1)}(\lambda)$} satisfies an inhomogeneous scalar RH problem. For convenience, denote $f(\lambda):=-i\sqrt{r(\lambda)}y^{(2)}(\lambda)$. As a result, the solution for {$y^{(1)}(\lambda)$} can be represented as
		$$
		y^{(1)}(\lambda)=1+\frac{1}{2\pi i}\int_{\Sigma_{2,4}}\frac{\sqrt{r(s)}f(s)}{s-\lambda}ds.
		$$
		Moreover, the symmetry of $Y(\lambda)$ implies that {$y^{(1)}(-\lambda)=y^{(2)}(\lambda)$}, which shows that
		$$
		{y^{(2)}(\lambda)=1+\frac{1}{2\pi i}\int_{\Sigma_{2,4}}\frac{\sqrt{r(s)}f(s)}{s+\lambda}ds.}
		$$
		Multiplying both sides of the above equation by $-i\sqrt{r(x,t;\lambda)}$, an integral equation for $f(\lambda)$ is obtained as
		$$	f(\lambda)+\frac{\sqrt{r(\lambda)}}{2\pi}\int_{\Sigma_{2,4}}\frac{\sqrt{r(s)}f(s)}{s+\lambda}ds
		=-i\sqrt{r(\lambda)},
		$$
		which is equivalent to $(I+T)f=b$, where $T=\frac{\sqrt{r(x,t;\lambda)}}{2\pi}\int_{\Sigma_{2,4}}\frac{\sqrt{r(x,t;s)}}{s+\lambda}ds$ and $b=-i\sqrt{r(\lambda)}$. Due to the fact that the finite interval integral can be treated as a Riemann integral, it follows that the operator $T$ is compact. Moreover, the index of $I+T$ is zero, i.e., $\mathrm{Ind}(I+T)=\text{dim } N_{I+T}-\text{Codim } R_{I+T}=0$, {where \( N_{I+T} \) and \( R_{I+T} \) denote the kernel and range of the operator \( I+T \), respectively.} It implies that $I+T$ is an injective if and only if it is a surjective. It suffices to show that $T$ is a positive operator, i.e., $(Tx,x)\geq 0$. This has been proven in the appendix of Ref. \cite{Girotti CMP}.
	\end{proof}
	
	\section{The large $x$ behaviors of the genus two KdV soliton gas potential}\label{potential behavior}
	
	This section proves the Theorem \ref{Spatial behavior}, which is to examine the large $x$ behaviors of the genus two KdV soliton gas potential constructed in Section \ref{Soliton gas RH problem}.
	\par
	Firstly, let $t=0$ and consider the case of $x\to +\infty$. To deform the RH problem associated with the genus two KdV soliton gas potential, suppose that $g(\lambda)$ satisfies the following scalar RH problem:
	\par	
	The function $g(\lambda)$ is analytic for $\lambda\in \mathbb{C}\setminus[-\eta_4,\eta_4]$, and
	\[
	\begin{aligned}
		& g_+(\lambda)+g_-(\lambda)=2\lambda, && \lambda\in\Sigma_{1,2,3,4}, \\
		& g_+(\lambda)-g_-(\lambda)=\Omega_0, && \lambda\in [-\eta_1,\eta_1], \\
		& g_+(\lambda)-g_-(\lambda)=\Omega_1, && \lambda\in [\eta_2,\eta_3], \\
		& g_+(\lambda)-g_-(\lambda)=\Omega_2, && \lambda\in [-\eta_3,-\eta_2], \\
		& g(\lambda)={\mathcal{O}\left(\frac{1}{\lambda}\right),} && \lambda\to\infty,
	\end{aligned}
	\]
	where $\Omega_{0,1,2}$ are independent of $x$.
	Moreover, the derivative of the function $g(\lambda)$ also satisfies a scalar RH problem of the form
	\[
	\begin{aligned}
		& g'_+(\lambda)+g'_-(\lambda)=2, && \lambda\in\Sigma_{1,2,3,4}, \\
		& g'_+(\lambda)-g'_-(\lambda)=0, && \lambda\in [-\eta_4,\eta_4]\setminus\Sigma_{1,2,3,4}, \\
		& g'(\lambda)=\mathcal{O}\left(\frac{1}{\lambda^2}\right), && \lambda\to\infty.
	\end{aligned}
	\]
	By the uniqueness of solution to the RH problem, it can be checked that $g'(\lambda)$ is an even function.
	\par	
	Introduce
	\[
	R(\lambda)=\sqrt{(\lambda^2-\eta_1^2)(\lambda^2-\eta_2^2)(\lambda^2-\eta_3^2)(\lambda^2-\eta_4^2)},
	\]
	and assume $R_+(\lambda)$ as the upper sheet of $R(\lambda)$, with $R(\lambda)\to+\infty$ as $\lambda\to+\infty$, for the sake of the subsequent discussion. Define
	\[
	\begin{aligned} g'({\lambda})=1-\frac{\lambda^4+\alpha\lambda^2+\beta}{R(\lambda)},
	\end{aligned}
	\]
	and
	\begin{equation}\label{gfunction} g(\lambda)=\lambda-\int_{\eta_4}^\lambda\frac{\zeta^4+\alpha\zeta^2+\beta}{R(\zeta)}d\zeta.
	\end{equation}
	Moreover, introduce a two-sheeted Riemann surface of genus three as follows:
	$$ \mathcal{S}=\{(\lambda,\eta)|\eta^2=(\lambda^2-\eta_1^2)(\lambda^2-\eta_2^2)(\lambda^2-\eta_3^2)(\lambda^2-\eta_4^2)\},
	$$
	which includes two infinite points $\infty_{\pm}$ to ensure the compactness of the Riemann surface. Subsequently, define the basis of cycles for Riemann surface $\mathcal{S}$ shown in Figure \ref{Contour-2-Cavitation}.
	\begin{figure}[h!]
		\centering
		\includegraphics[width=11cm]{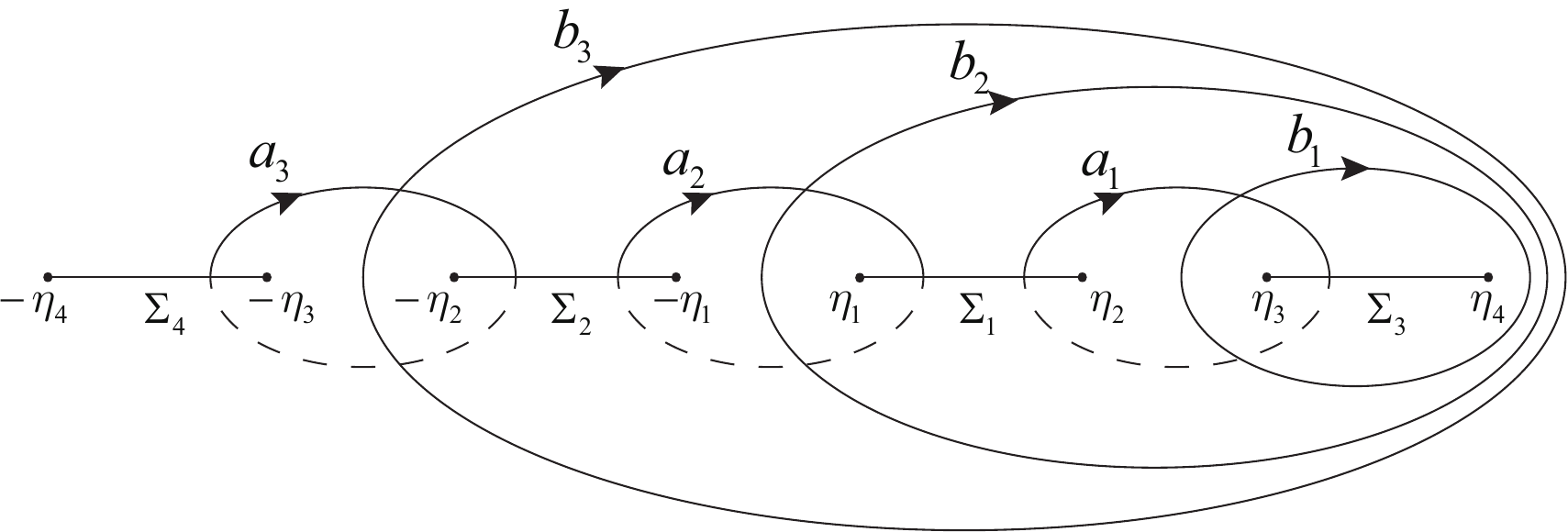}
		\caption{{\protect\small The Riemann surface $\mathcal{S}$ of genus three and its basis of circles.}}
		\label{Contour-2-Cavitation}
	\end{figure}	
	
	The jump conditions for $g(\lambda)$ implies that for $j=1,2,3$
	\begin{equation}\label{jump_condition}
		\begin{aligned}
			& \oint_{a_j}\frac{\zeta^4+\alpha\zeta^2+\beta}{R_+(\zeta)}d\zeta=0, \ \oint_{b_1}\frac{\zeta^4+\alpha\zeta^2+\beta}{R_+(\zeta)}d\zeta=\Omega_1, \\
			& \oint_{b_2}\frac{\zeta^4+\alpha\zeta^2+\beta} {R_+(\zeta)}d\zeta=\Omega_0, \
			\oint_{b_3}\frac{\zeta^4+\alpha\zeta^2+\beta} {R_+(\zeta)}d\zeta=\Omega_2. \\
		\end{aligned}
	\end{equation}
	Furthermore, it should be noted that $\frac{\zeta^4+\alpha\zeta^2+\beta}{R(\zeta)}d\zeta$, denoted as $\eta$, is a second kind Abelian differential on $\mathcal{S}$, with poles only at $\infty_{\pm}$. Introduce a basis of holomorphic differential as
	$$
	{\tilde \omega}_j=\frac{\zeta^{j-1}}{R(\zeta)}d\zeta,\quad j=1,2,3,
	$$
	and denote $A:=(\oint_{a_j}\tilde \omega_i)_{3\times3}$ which is a non-degenerated matrix.
	According to the {Riemann Bilinear relations} \cite{Bertola Riemmansurface}, one can obtain that
	$$
	\sum_{j=1}^3\oint_{a_j}\tilde \omega^i\oint_{b_j}\eta=-{2i\pi}\sum_{p=\infty_{\pm}}\mathrm{res}_{p}\frac{\tilde \omega_i}{\lambda},i=1,2,3,
	$$
	that is,
	\begin{equation}\label{sysofOmega}
		\begin{aligned}	 &\Omega_1\int_{a_1}\tilde\omega_1+\Omega_0\int_{a_2}\tilde\omega_1+\Omega_2\int_{a_3}\tilde\omega_1=0,\\		&\Omega_1\int_{a_1}\tilde\omega_2+\Omega_0\int_{a_2}\tilde\omega_2+\Omega_2\int_{a_3}\tilde\omega_2=0,\\	&\Omega_1\int_{a_1}\tilde\omega_3+\Omega_0\int_{a_2}\tilde\omega_3+\Omega_2\int_{a_3}\tilde\omega_3=4\pi i.\\
		\end{aligned}		
	\end{equation}
	Consequently,  the quantities $\Omega_0$, $\Omega_1$ and $\Omega_2$ can be expressed by
	$$
	\Omega_1=4\pi i(A^{-1})_{13},\ \Omega_0=4\pi i(A^{-1})_{23},\ \Omega_2=4\pi i(A^{-1})_{33}.
	$$
	
	\begin{rmk}\label{alpha}
		Since $\zeta/{R(\zeta)}$ is odd, and $1/{R(\zeta)}$ and $\zeta^2/{R(\zeta)}$ are even, it follows that $A_{13}=A_{11}$ and $A_{22}=0$, which implies that $(A^{-1})_{13}=(A^{-1})_{33}$, i.e., $\Omega_1=\Omega_2$. Alternatively, by using the equalities for $j=1,2$ below
		$$
		\oint_{a_j}\frac{\zeta^4+\alpha\zeta^2+\beta}{R(\zeta)}d\zeta=0,
		$$
		the parameters $\alpha$ and $\beta$ can be determined immediately. 	
	\end{rmk}
	\par	
	Now, {we are} ready to deform the RH problem. To do so, take the transformation
	$$
	T(\lambda)=Y(\lambda)e^{xg(\lambda)\sigma_3}f(\lambda)^{\sigma_3},
	$$
	where $f(\lambda)$ is a function to be determined and $T(\lambda)$ satisfies the RH problem:
	$$
	\begin{aligned}
		& T_{+}(\lambda)=T_{-}(\lambda) V(\lambda) \\
		& V(\lambda)=
		\begin{cases}
			\begin{aligned}
				&\left(\begin{array}{cc}
					e^{x(g_{+}(\lambda)-g_{-}(\lambda))} \frac{f_{+}(\lambda)}{f_{-}(\lambda)} & \frac{-i r(\lambda)}{f_{+}(\lambda) f_{-}(\lambda)} \\
					0 & e^{-x(g_{+}(\lambda)-g_{-}(\lambda))} \frac{f_{-}(\lambda)}{f_{+}(\lambda)}
				\end{array}\right), & \lambda \in \Sigma_{1,3},\\
				& \left(\begin{array}{cc}
					e^{x(g_{+}(\lambda)-g_{-}(\lambda))} \frac{f_{+}(\lambda)}{f_{-}(\lambda)} & 0 \\
					i r(\lambda) f_{+}(\lambda) f_{-}(\lambda) & e^{-x(g_{+}(\lambda)-g_{-}(\lambda))} \frac{f_{-}(\lambda)}{f_{+}(\lambda)}
				\end{array}\right), & \lambda \in \Sigma_{2,4},   \\
				& \left(\begin{array}{cc}
					e^{x \Omega_0} \frac{f_{+}(\lambda)}{f_{-}(\lambda)} & 0 \\
					0 & e^{-x \Omega_0} \frac{f_{-}(\lambda)}{f_{+}(\lambda)}
				\end{array}\right), & \lambda \in\left[-\eta_1, \eta_1\right], \\
				& \left(\begin{array}{cc}
					e^{x \Omega_1} \frac{f_{+}(\lambda)}{f_{-}(\lambda)} & 0 \\
					0 & e^{-x \Omega_1} \frac{f_{+}(\lambda)}{f_{-}(\lambda)}
				\end{array}\right), & \lambda \in[\eta_2,\eta_3], \\
				& \left(\begin{array}{cc}
					e^{x \Omega_1} \frac{f_{+}(\lambda)}{f_{-}(\lambda)} & 0 \\
					0 & e^{-x \Omega_1} \frac{f_{+}(\lambda)}{f_{-}(\lambda)}
				\end{array}\right), & \lambda \in[-\eta_3,-\eta_2],
			\end{aligned}
		\end{cases} \\
		& T(\lambda)=\left(\begin{array}{cc}
			1 & 1
		\end{array}\right) +\mathcal{O}\left(\frac{1}{\lambda}\right),\   \lambda \rightarrow \infty.
	\end{aligned}
	$$
	\par
	Moreover, the function $f(\lambda)$ can be established by the scalar RH problem:
	\begin{align*}
		&f_+(\lambda)f_-(\lambda)={r(\lambda)},&&\lambda\in\Sigma_{1,3},\\
		&f_+(\lambda)f_-(\lambda)=\frac{1}{r(\lambda)},&&\lambda\in\Sigma_{2,4},\\
		&\frac{f_+(\lambda)}{f_-(\lambda)}=e^{\Delta_0},&&\lambda\in [-\eta_1,\eta_1],\\
		&\frac{f_+(\lambda)}{f_-(\lambda)}=e^{\Delta_1},&&\lambda\in [\eta_2,\eta_3],\\
		&\frac{f_+(\lambda)}{f_-(\lambda)}=e^{\Delta_2},&&\lambda\in [-\eta_3,-\eta_2],\\
		&f(\lambda)=1+\mathcal{O}\left(\frac{1}{\lambda}\right),&&\lambda\to\infty,
	\end{align*}
	where $\Delta_0$, $\Delta_1$ and $\Delta_2$ are determined below. The Plemelj formula gives the solution of $f(\lambda)$ as
	\begin{equation}\label{f formular}
		\begin{aligned}
			f(\lambda)=&\exp\left(\frac{R(\lambda)}{2\pi i}\left[
			\int_{\Sigma_{1,3}}\frac{\log{r(\zeta)}}{R_+(\zeta)(\zeta-\lambda)}d\zeta
			+\int_{\Sigma_{2,4}}\frac{\log\frac{1}{r(\zeta)}}{R_+(\zeta)(\zeta-\lambda)}d\zeta
			+\int_{-\eta_1}^{\eta_1}\frac{\Delta_0}{R(\zeta)(\zeta-\lambda)}d\zeta\right.\right.\\
			&\left.\left.+\int_{\eta_2}^{\eta_3}\frac{\Delta_1}{R(\zeta)(\zeta-\lambda)}d\zeta	+\int_{-\eta_3}^{-\eta_2}\frac{\Delta_2}{R(\zeta)(\zeta-\lambda)}d\zeta\right]\right).
		\end{aligned}
	\end{equation}	
	\par	
	Based on the boundary values of $f(\lambda)$, one can determine $\Delta_0$, $\Delta_1$ and $\Delta_2$ through the following system of linear algebraic equations:
	\begin{gather} \int_{\Sigma_{1,3}}\frac{\log{r(\zeta)}}{R_+(\zeta)}d\zeta+\int_{\Sigma_{2,4}}\frac{\log\frac{1}
			{r(\zeta)}}{R_+(\zeta)}d\zeta+\int_{-\eta_1}^{\eta_1}\frac{\Delta_0}{R(\zeta)}d\zeta+\int_{\eta_2}^{\eta_3}
		\frac{\Delta_1}{R(\zeta)}d\zeta+\int_{-\eta_3}^{-\eta_2}\frac{\Delta_2}{R(\zeta)}d\zeta=0,\\
		\int_{\Sigma_{1,3}}\frac{\log{r(\zeta)}}{R_+(\zeta)}\zeta d\zeta+\int_{\Sigma_{2,4}}\frac{\log\frac{1}{r(\zeta)}}{R_+(\zeta)}\zeta d\zeta+\int_{-\eta_1}^{\eta_1}\frac{\Delta_0}{R(\zeta)}\zeta d\zeta+\int_{\eta_2}^{\eta_3}\frac{\Delta_1}{R(\zeta)}\zeta d\zeta+\int_{-\eta_3}^{-\eta_2}\frac{\Delta_2}{R(\zeta)}\zeta d\zeta=0,\label{algebraic-equations-2}\\
		\int_{\Sigma_{1,3}}\frac{\log{r(\zeta)}}{R_+(\zeta)}\zeta^2 d\zeta+\int_{\Sigma_{2,4}}\frac{\log\frac{1}{r(\zeta)}}{R_+(\zeta)}\zeta^2 d\zeta+\int_{-\eta_1}^{\eta_1}\frac{\Delta_0}{R(\zeta)}\zeta^2 d\zeta+\int_{\eta_2}^{\eta_3}\frac{\Delta_1}{R(\zeta)}\zeta^2 d\zeta+\int_{-\eta_3}^{-\eta_2}\frac{\Delta_2}{R(\zeta)}\zeta^2 d\zeta=0.
	\end{gather}
	\par	
	Notice that $r(\zeta)$ is an even function, and $R_{+}(\zeta)$ has the opposite sign for $\zeta\in\Sigma_{1,3}$ compared to $\Sigma_{2,4}$. Thus, from the equation (\ref{algebraic-equations-2}), it is deduced that $\Delta_1=\Delta_2$. Moreover, if expand the function $f(\lambda)$ for large $\lambda$, all involved terms are odd functions, which implies that $f(\lambda)$ tends to one as $\lambda$ approaches infinity.
	\par
	Thus the jump matrix $V(\lambda)$ for $T(\lambda)$ is given by
	\begin{equation}
		V(\lambda)=
		\begin{cases}
			\begin{pmatrix}
				e^{x(g_+(\lambda)-g_-(\lambda))}\frac{f_+(\lambda)}{f_-(\lambda)} & -i \\
				0 & e^{-x(g_+(\lambda)-g_-(\lambda))}\frac{f_-(\lambda)}{f_+(\lambda)}
			\end{pmatrix}, & \lambda\in\Sigma_{1,3},
			\\
			\begin{pmatrix}
				e^{x(g_+(\lambda)-g_-(\lambda))}\frac{f_+(\lambda)}{f_-(\lambda)} & 0 \\
				i & e^{-x(g_+(\lambda)-g_-(\lambda))}\frac{f_-(\lambda)}{f_+(\lambda)}
			\end{pmatrix}, & \lambda\in\Sigma_{2,4},
			
			\\
			\begin{pmatrix}
				e^{x\Omega_0+\Delta_0} & 0 \\
				0 & e^{-(x\Omega_0+\Delta_0)}
			\end{pmatrix}, & \lambda\in[-\eta_1,\eta_1],
			\\
			\begin{pmatrix}
				e^{x\Omega_1+\Delta_1} & 0 \\
				0 & e^{-(x\Omega_1+\Delta_1)}
			\end{pmatrix}, & \lambda\in [\eta_2,\eta_3]\cup [-\eta_3,-\eta_2],  	
		\end{cases}
	\end{equation}
	{where $\Sigma_{1,3}:=(\eta_1,\eta_2)\cup(\eta_3,\eta_4)$ and $\Sigma_{2,4}:=(-\eta_2,-\eta_1)\cup(-\eta_4,-\eta_3)$}.
	
	Now, define the analytic continuation $\hat r(\lambda)$ of $r(\lambda)$ off the interval $\Sigma_{1,2,3,4}$ with $\hat r_{\pm}(\lambda)=\pm r(\lambda)$ for $\lambda\in \Sigma_{1,2,3,4}$, and open lenses as follows
	\begin{equation}
		S(\lambda)=
		\begin{cases}
			T(\lambda)\begin{pmatrix}
				1 & 0 \\
				\frac{if^2(\lambda)}{\hat r(\lambda)}e^{2x(g(\lambda)-\lambda)} & 1
			\end{pmatrix}, & {\rm in~the~upper~lens,~above}~\Sigma_{1,3},
			\\
			T(\lambda)\begin{pmatrix}
				1 & 0 \\
				\frac{if^2(\lambda)}{\hat r(\lambda)}e^{2x(g(\lambda)-\lambda)} & 1
			\end{pmatrix}, & {\rm in~the~lower~lens,~below}~\Sigma_{1,3},
			\\
			T(\lambda)\begin{pmatrix}
				1 & \frac{-i}{\hat r(\lambda)f^2(\lambda)}e^{-2x(g(\lambda)-\lambda)} \\
				0 & 1
			\end{pmatrix}, & {\rm in~the~upper~lens,~above}~\Sigma_{2,4},
			\\
			T(\lambda)\begin{pmatrix}
				1 & \frac{-i}{\hat r(\lambda)f^2(\lambda)}e^{-2x(g(\lambda)-\lambda)} \\
				0 & 1
			\end{pmatrix}, & {\rm in~the~lower~lens,~below}~\Sigma_{2,4},
			\\
			T(\lambda), & {\rm outside~the~lenses}.
		\end{cases}
	\end{equation}

	The vector-valued function $S(\lambda)$ satisfies the RH problem
	\begin{equation}
		\begin{aligned}
			S_+(\lambda)&=S_-(\lambda)V_S(\lambda), \\
			S(\lambda)&=\begin{pmatrix}
				1 & 1
			\end{pmatrix} +O\left(\frac{1}{\lambda}\right), \quad \lambda\to\infty,
		\end{aligned}
	\end{equation}
	where the jump matrices $V_S(\lambda)$ are depicted in the Figure \ref{VS}.
	
	\begin{lem}\label{lem23}
		For $\lambda$ near $\Sigma_{1,3}\setminus\{\eta_j\}$ for $j=1,2,3,4$, the inequality  $\re(g(\lambda)-\lambda)<0$ holds. Conversely, for $\lambda$ near $\Sigma_{2,4}\setminus\{-\eta_j\}$, one has $\re(g(\lambda)-\lambda)>0$.
	\end{lem}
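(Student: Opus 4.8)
\emph{Proof strategy.} I would follow the standard route in the Deift--Zhou analysis: first pin down $\re\big(g(\lambda)-\lambda\big)$ exactly on the bands, then estimate it to first order off them. Since $\alpha,\beta$ are real (they solve the real linear system of Remark~\ref{alpha}), the explicit formula \eqref{gfunction} shows $g$ is real on $(\eta_4,+\infty)$, hence $g(\bar\lambda)=\overline{g(\lambda)}$ on $\mathbb{C}\setminus[-\eta_4,\eta_4]$; combined with $g_+(\lambda)+g_-(\lambda)=2\lambda$ on the bands this gives $g_-(\lambda)=\overline{g_+(\lambda)}$ and therefore $\re\big(g_\pm(\lambda)-\lambda\big)=0$ for $\lambda\in\Sigma_{1,2,3,4}$. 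Then, for $\lambda_0$ interior to a band and $\lambda=\lambda_0+iy$ with $|y|$ small, a Taylor expansion together with the Cauchy--Riemann equations gives
\[
\re\big(g(\lambda)-\lambda\big)=-|y|\,\partial_u\!\big(\im g_+\big)(\lambda_0)+O(y^2),\qquad
\partial_u\!\big(\im g_+\big)(\lambda_0)=\tfrac{1}{2i}\big(g'_+(\lambda_0)-g'_-(\lambda_0)\big)=\frac{i\big(\lambda_0^4+\alpha\lambda_0^2+\beta\big)}{R_+(\lambda_0)},
\]
using $g'=1-\tfrac{\lambda^4+\alpha\lambda^2+\beta}{R}$ and $R_-=-R_+$ on the bands; the lens just below a band is governed by the same quantity. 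So the lemma reduces to showing $i\big(\lambda^4+\alpha\lambda^2+\beta\big)/R_+(\lambda)>0$ on $\Sigma_{1,3}$ and $<0$ on $\Sigma_{2,4}$.

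For this I would exploit the structure of the surface obtained after the reduction $z=-\lambda^2$ of the introduction. On each band $\prod_{k=1}^4(\lambda^2-\eta_k^2)<0$, so $R_+$ is purely imaginary there, the displayed quantity is real, and only the signs of $\lambda^4+\alpha\lambda^2+\beta$ and of $R_+/i$ matter. Now $\lambda^4+\alpha\lambda^2+\beta$ is a monic quadratic in $\lambda^2$, and, transporting the normalization $\oint_{a_j}\tfrac{\zeta^4+\alpha\zeta^2+\beta}{R_+(\zeta)}\,d\zeta=0$ to that reduced surface, the two conditions amount to the vanishing of the integral of $\lambda^4+\alpha\lambda^2+\beta$ against a \emph{real}, fixed-sign density over each of the two gap intervals $\lambda^2\in(0,\eta_1^2)$ and $\lambda^2\in(\eta_2^2,\eta_3^2)$. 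Hence $\lambda^4+\alpha\lambda^2+\beta$ changes sign inside each of these two intervals, so both of its roots lie in the gaps and it has no zero on the closed bands: it is negative on $\Sigma_1\cup\Sigma_2$ and positive on $\Sigma_3\cup\Sigma_4$. A short tracking of $\arg R$ along a path coming in from $\lambda\to+\infty$ then gives $R_+/i<0$ on $\Sigma_1$ but $R_+/i>0$ on $\Sigma_3$, so $i\big(\lambda^4+\alpha\lambda^2+\beta\big)/R_+>0$ on $\Sigma_{1,3}$; the involution $\lambda\mapsto-\lambda$, which fixes $\lambda^4+\alpha\lambda^2+\beta$ and flips $R_+$, turns this into $<0$ on $\Sigma_{2,4}$.

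Finally I would promote the interior estimate to a full neighborhood of $\Sigma_{1,3}\setminus\{\eta_j\}$ (resp.\ $\Sigma_{2,4}\setminus\{-\eta_j\}$): on each side of an open band $\re\big(g(\lambda)-\lambda\big)$ is harmonic, vanishes on the band, and has a strictly signed one-sided normal derivative there by the above, so the Hopf boundary estimate applied on a thin tube over any compact subinterval yields the strict inequality on both lenses. The endpoints must be excluded because $g(\lambda)-\lambda=\text{const}+c\,(\lambda-\eta_j)^{1/2}+\dots$ near $\eta_j$, which produces the usual sign-changing spokes. The step I expect to be the real obstacle is the sign analysis of $\lambda^4+\alpha\lambda^2+\beta$ on the bands — locating its roots inside the gaps from the period conditions and, above all, getting the combined sign with the branch of $R_+$ right — since this is where the admissibility of the chosen $g$-function is genuinely used; the rest is Cauchy--Riemann and argument-counting bookkeeping.
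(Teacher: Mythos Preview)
Your proposal is correct and follows essentially the same route the paper indicates (it defers to Lemma~\ref{Properties-1}): locate the two roots of $\lambda^4+\alpha\lambda^2+\beta$ in the gaps $(0,\eta_1^2)$ and $(\eta_2^2,\eta_3^2)$ via the $a$-cycle normalizations, determine the sign of $R_+$ on each band, and then use Cauchy--Riemann to push the resulting sign of $\im g'_+$ into a sign of $\re(g-\lambda)$ on the adjacent lenses. Your extra bookkeeping (Schwarz reflection to nail $\re(g_\pm-\lambda)=0$ on the bands, the Hopf boundary estimate, and the $(\lambda-\eta_j)^{1/2}$ endpoint remark) is sound and slightly more explicit than the paper, but not a different method.
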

    \begin{proof}
      	{  It is noted that Lemma \ref{lem23} is quite similar to the Lemma \ref{Properties-1} and can be proven by the same way. So we omit the proof here for simplicity.}
    \end{proof}
    \begin{figure}[h]
		\centering
		\begin{tikzpicture}[>=latex]
			\draw[lightgray,very thick,dashed] (-8.5,0) to (-7.5,0);
			\draw[lightgray,very thick] (-7.5,0) to (-6.5,0) ;
			\draw[->,very thick,lightgray] (-7.5,0) to (-7,0);
			\filldraw[lightgray] (-6.5,0) node[black,below=1mm]{$-\eta_{4}$} circle (1.5pt);
			\draw[-,very thick] (-6.5,0) to (-4.5,0);
			\draw[->,very thick] (-6.5,0) to (-6.0,0)node[black,right=-1mm]{\small $\begin{pmatrix}
					0 & i \\
					i & 0
				\end{pmatrix}$};
			\draw[->,very thick] (-4.5,0) to (-4.0,0);
			\draw[-,very thick] (-4.5,0) to (-3.8,0)node[black,above=12mm]{ $e^{(t{\Omega}_{1}+{\Delta}_{1})\sigma_3}$};
			\draw[-,very thick] (-4.5,0) to (-3.3,0)node[black,below=8mm]{\small $\begin{pmatrix}
					1 & \textcolor{gray}{\frac{-i}{\hat r(\lambda)f^2(\lambda)}e^{-2x(g(\lambda)-\lambda)}} \\
					0 & 1
				\end{pmatrix}$};
			\draw[->,dashed,very thick] (-4.0,1.2) to (-4.0,0.2);
			\draw[-,very thick] (-4.5,0) to (-3.5,0);
			\filldraw[black] (-4.5,0) node[black,below=1mm]{$-\eta_{3}$} circle (1.5pt);
			\filldraw[black] (-3.5,0) node[black,below=1mm]{$-\eta_2$} circle (1.5pt);
			\draw[-,very thick,lightgray] (-6.5,0) .. controls (-6,1.0) and (-5,1.0).. (-4.5,0);
			\draw[-,very thick,lightgray,rotate around x=180] (-6.5,0) .. controls (-6,1.0) and (-5,1.0).. (-4.5,0);
			\draw[->,very thick,lightgray] (-5.6,0.75) to (-5.4,0.75);
			\draw[<-,very thick,lightgray,rotate around x=180] (-5.6,0.75) to (-5.4,0.75);
			\draw[-,very thick,black] (-3.5,0) to (-0.5,0);
			\draw[->,very thick,black] (-3.5,0) to (-2.5,0) node[black,right=-1mm]{\small $\begin{pmatrix}
					0 & i \\
					i & 0
				\end{pmatrix}$};
			\filldraw[black] (-0.5,0) node[black,below=1mm]{$-\eta_1$} circle (1.5pt);
			\draw[-,very thick,lightgray] (-3.5,0) .. controls (-2.5,1.15) and (-1.5,1.15).. (-0.5,0);
			\draw[->,very thick,lightgray] (-2.1,0.85) to (-2.0,0.85);
			\draw[-,very thick,lightgray,rotate around x=180] (-3.5,0) .. controls (-2.5,1.15) and (-1.5,1.15).. (-0.5,0) ;
			\draw[<-,very thick,lightgray,rotate around x=180] (-2.3,0.85) to (-2.2,0.85) ;
			\draw [very thick] (-0.5,0) to (0.5,0);
			\draw [->,very thick] (-0.5,0) to (0,0)node[black,above=12mm]{ $e^{(t{\Omega}_{0}+{\Delta}_{0})\sigma_3}$};
			\draw[->,dashed,very thick] (0,1.2) to (0,0.2);
			\draw[lightgray,very thick,dashed] (7.5,0) to (8.5,0);
			\draw[lightgray,very thick] (6.5,0) to (7.5,0) ;
			\draw[->,very thick,lightgray] (6.5,0) to (7.2,0);
			\draw[-,very thick](3.5,0) to (4.0,0) node [black,above=12mm]{ $e^{(t{\Omega}_{1}+{\Delta}_{1})\sigma_3}$};
			\draw[->,dashed,very thick] (4.0,1.2) to (4.0,0.2);
			\filldraw[black] (6.5,0) node[black,below=1mm]{$\eta_{4}$} circle (1.5pt);
			\draw[->,very thick] (3.5,0) to (4.0,0);
			\draw[-,very thick] (3.5,0) to (4.5,0);
			\filldraw[black] (4.5,0) node[black,below=1mm]{$\eta_{3}$} circle (1.5pt);
			\filldraw[black] (3.5,0) node[black,below=1mm]{$\eta_{2}$} circle (1.5pt);
			\draw[-,very thick,black] (0.5,0) to (3.5,0);
			\draw[->,very thick,black] (0.5,0) to (1.5,0) node[black,right=-1mm]{\small $\begin{pmatrix}
					0 & -i \\
					-i & 0
				\end{pmatrix}$};
			\filldraw[black] (0.5,0) node[black,below=1mm]{$\eta_1$} circle (1.5pt);
			\draw[-,very thick,black] (4.5,0) to (6.5,0);
			\draw[->,very thick,black] (4.5,0) to (4.9,0)node[black,right=-2.5mm]{\small $\begin{pmatrix}
					0 & -i \\
					-i & 0
				\end{pmatrix}$};
			\draw[-,very thick,lightgray] (4.5,0) .. controls (5,1) and (6,1).. (6.5,0);
			\draw[-,very thick,lightgray,rotate around x=180] (4.5,0) .. controls (5,1) and (6,1).. (6.5,0);
			\draw[->,very thick,lightgray] (5.5,0.75) to (5.6,0.75);
			\draw[<-,very thick,lightgray,rotate around x=180] (5.4,0.75) to (5.5,0.75);
			\draw[-,very thick,lightgray] (0.5,0) .. controls (1.5,1.15) and (2.5,1.15).. (3.5,0);
			\draw[->,very thick,lightgray] (2,0.855) to (2.1,0.855);
			\draw[-,very thick,lightgray,rotate around x=180] (0.5,0) .. controls (1.5,1.15) and (2.5,1.15).. (3.5,0);
			\draw[->,very thick,lightgray,rotate around x=180] (2.1,0.85) to (2.0,0.85)node[black,anchor=south west]at(1,2.1){\small $\begin{pmatrix}
					1 & 0 \\
					\textcolor{gray}{\frac{if^2(\lambda)}{\hat r(\lambda)}e^{2x(g(\lambda)-\lambda)}} & 1
				\end{pmatrix}$} ;
		\end{tikzpicture}
		\caption{{\protect\small
				The jump contours for \( S(\lambda) \) and the associated jump matrices: the gray terms in the matrices vanish exponentially as \( x \to +\infty \), and the gray contours also vanish as \( x\to +\infty \). }}
		\label{VS}
	\end{figure}
	Then for $x\to +\infty$, we arrive at the model problem $S^{\infty}(\lambda)$ as
	\begin{equation}\label{Sinf}
		S^{\infty}_+(\lambda)=S^{\infty}_-(\lambda) \begin{cases}
			\begin{pmatrix}
				0 & -i \\
				-i & 0
			\end{pmatrix}, & \lambda\in\Sigma_{1,3},
			\\	
			\begin{pmatrix}
				0 & i \\
				i & 0
			\end{pmatrix},& \lambda\in\Sigma_{2,4},
			\\
			\begin{pmatrix}
				e^{x\Omega_0+\Delta_0} & 0 \\
				0 & e^{-x\Omega_0-\Delta_0}
			\end{pmatrix}, & \lambda\in[-\eta_1,\eta_1],
			\\
			\begin{pmatrix}
				e^{x\Omega_1+\Delta_1} & 0 \\
				0 & e^{-x\Omega_1-\Delta_1}
			\end{pmatrix}, & \lambda\in[\eta_2,\eta_3]\cup[-\eta_3,-\eta_2],
		\end{cases}
	\end{equation}
	where $S^{\infty}(\lambda)$ satisfies the boundary condition $S^{\infty}(\lambda)\to (1,1)$ as $\lambda\to\infty$ and the symmetry $S^{\infty}(-\lambda)=S^{\infty}(\lambda)
	\begin{pmatrix}
		0&1\\
		1&0	
	\end{pmatrix}$, which serves as a generalization of the model problem discussed in \cite{Teschl2013}. Moreover, the model problem admits at most fourth root singularities at {$\eta_j,~j=1,\cdots,4$}. In particular, the local parametrix near \(\lambda = \pm \eta_j~(j = 1,2, \dots, 4)\) is expressed in terms of the modified Bessel functions under suitable conformal map \cite{Girotti CMP}.
	
	\subsection{The solution of the model RH problem on the $z$-plane}
	
	Now, transform the model RH problem (\ref{Sinf}) from the $\lambda$ plane into $z$ plane for $z=-\lambda^2$ by taking the lower half $\lambda$-plane onto $\C\setminus(-\infty,0)$. As a result, the model RH problem $S^{\infty}(\lambda)$ is converted into the RH problem on the $z$-plane, denoted as $S^{\infty}(z)$ which satisfies the following jump conditions~ {(see also Figure \ref{jumpforSinfty})}
    
	\begin{equation}\label{jump-conditions-S-z}
		S^{\infty}_+(z)=S^{\infty}_-(z)\begin{cases}
			\begin{aligned}
				&\begin{pmatrix}
					0 & e^{x\Omega_0+\Delta_0}\\
					e^{-x\Omega_0-\Delta_0}& 0
				\end{pmatrix}, &&z\in[-\eta_1^2,0],\\			
				&\begin{pmatrix}
					i & 0\\
					0& i
				\end{pmatrix}, &&z\in[-\eta_2^2,-\eta_1^2]\cup[-\eta_4^2,-\eta_3^2],\\
				&\begin{pmatrix}
					0 & e^{x\Omega_1+\Delta_1}\\
					e^{-x\Omega_1-\Delta_1}& 0
				\end{pmatrix}, &&z\in[-\eta_3^2,-\eta_2^2],\\
				&\begin{pmatrix}
					0 & 1\\
					1 & 0
				\end{pmatrix}, &&z\in[-\infty,-\eta_4^2].\\
			\end{aligned}	
		\end{cases}
	\end{equation}
	Furthermore, $S^{\infty}(z)\to\begin{pmatrix}
	    1&1
	\end{pmatrix}$ as $z\to\infty$. The last jump matrix in the jump condition (\ref{jump-conditions-S-z}) is generated by the symmetry $S^{\infty}(-\lambda)=S^{\infty}(\lambda)
	\begin{pmatrix}
		0&1\\
		1&0	
	\end{pmatrix}$, which also changes the diagonal jump matrices in the $\lambda$-plane into off-diagonal ones in the $z$-plane, see \cite{Zhao PhysicaD}.
    \begin{figure}[h]
		\centering
		\begin{tikzpicture}[>=latex]
			\draw[-,very thick] (-8.5,0) to (-6.5,0);
			\draw[-,very thick] (-6.5,0) to (-4.5,0);
			\draw[->,very thick] (-6.5,0) to (-5.5,0)node[black,above=1mm]{\small $\begin{pmatrix}
				0 & 1 \\
				1 & 0
				\end{pmatrix}$};
			\draw[red,-,very thick] (-4.5,0) to (-2.5,0);
               \draw[red,->,very thick](-4.5,0) to (-3.5,0)node[red,above=1mm]{\small $\begin{pmatrix}
				i & 0 \\
				0 & i
				\end{pmatrix}$};

			\filldraw[black] (-4.5,0) node[black,below=1mm]{$-\eta_{4}^2$} circle (1.5pt);
			\filldraw[black] (-2.5,0) node[black,below=1mm]{$-\eta_3^2$} circle (1.5pt);

			\draw[-,very thick,black] (-2.5,0) to (-0.5,0);
			\draw[->,very thick,black] (-2.5,0) to (-1.5,0) node[black,above=1mm]{\scalebox{0.65}{$\begin{pmatrix}
					0 & e^{x\Omega_1+\Delta_1}\\
					e^{-x\Omega_1-\Delta_1}& 0
				\end{pmatrix}$}};
			\filldraw[black] (-0.5,0) node[black,below=1mm]{$-\eta_2^2$} circle (1.5pt);

			\draw [very thick,red] (-0.5,0) to (1.5,0);

            \draw [->,very thick,red] (-0.5,0) to (0.5,0)node[red,above=1mm]{\small $\begin{pmatrix}
				i & 0 \\
				0 & i
				\end{pmatrix}$};

			\draw[-,very thick,lightgray,dashed] (3.5,0) to (5.5,0);

			\filldraw[black] (3.5,0) node[black,below=1mm]{$0$} circle (1.5pt);
			\draw[-,very thick,black] (1.5,0) to (3.5,0);
			\draw[->,very thick,black] (1.5,0) to (2.5,0) node[black,above=1mm]{\scalebox{0.65}{ $\begin{pmatrix}
					0 & e^{x\Omega_0+\Delta_0}\\
					e^{-x\Omega_0-\Delta_0}& 0
				\end{pmatrix}$}};
			\filldraw[black] (1.5,0) node[black,below=1mm]{$-\eta_1^2$} circle (1.5pt);

		\end{tikzpicture}
		\caption{{\protect\small
				{The jump contour for \( S^{\infty}(z) \) and the associated jump matrices.}}}
		\label{jumpforSinfty}
	\end{figure}
	\par
	\begin{figure}
		\centering
		\includegraphics[width=11cm]{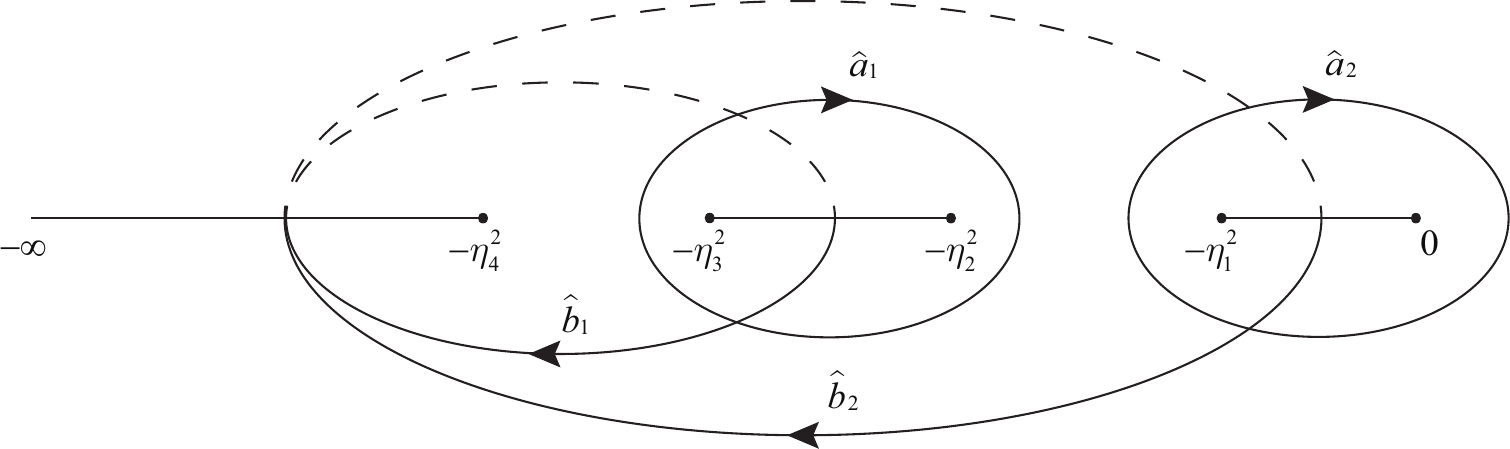}
		\caption{{\protect\small The  Riemann surface $\hat{\mathcal{S}}$ and its basis $\{\hat{a}_j ,\hat{b}_j\},~j=1,2$ of circles.}}
		\label{hat mathcal S}
	\end{figure}
	\par
	Introduce the Riemann surface of genus two (see Figure \ref{hat mathcal S}) as
	$$
	\hat{\mathcal{S}}:=\{(z,y)|y^2=z(z+\eta_1^2)(z+\eta_2^2)(z+\eta_3^2)(z+\eta_4^2)\},
	$$
	and define the cycle basis $\{\hat a_j,\hat b_j\}$ for $j=1,2$. Further, denote $\hat R(z)=\sqrt{z(z+\eta_1^2)(z+\eta_2^2)(z+\eta_3^2)(z+\eta_4^2)}$ and define the normalized holomorphic differential vector $\hat\omega=(\hat\omega_1, \hat\omega_2)^{T}$ such that $\oint_{a_j}\hat{\omega}_k=\delta_{jk}$ for $j=1,2$ where $\delta_{jk}=0$ and $1$ for $j\neq k$ and $j=k$ respectively. The associated period matrix $\hat\tau_{ij}=\oint_{\hat b_j}\hat \omega_i~(i,j=1,2)$ and $\hat{\tau}_j=\hat{\tau} e_j$, where $e_j$ is the $j$-th column of $2\times 2$ unit matrix for $j=1,2$. Thus the Abel map $\hat{J}(z)=\int_{\infty}^z \hat\omega$ has the following properties
	\begin{equation}
		\begin{aligned}
			&\hat{J}_+(z)+\hat{J}_-(z)=0,&&z\in(-\infty,\eta_4^2],\\
			&\hat{J}_+(z)-\hat{J}_-(z)=-e_1-e_2,&&z\in[-\eta_4^2,-\eta_3^2],\\
			&\hat{J}_+(z)+\hat{J}_-(z)=-\hat{\tau}_1,&&z\in[-\eta_3^2,-\eta_2^2],\\
			&\hat{J}_+(z)-\hat{J}_-(z)=-e_2,&&z\in[-\eta_2^2,-\eta_1^2],\\
			&\hat{J}_+(z)+\hat{J}_-(z)=-\hat{\tau}_2,&&z\in[-\eta_1^2,0].\\
		\end{aligned}
	\end{equation}
	Introduce the Riemann-Theta function:
	\begin{equation}\label{rsf}
		\Theta(z;\tau)=\sum_{\vec{n}\in\mathbb{Z}^2}e^{2\pi i\left(\vec{n}^T z+\frac{1}{2}\vec{n}^T\tau \vec{n}\right)},\quad z\in\mathbb{C}^2,
	\end{equation}
	which satisfies the following properties:
	\begin{equation}
		\Theta(z+e_j;\hat{\tau})=\Theta(z-e_j;\hat{\tau})=\Theta(z;\hat{\tau}),\quad
		\Theta(z+\hat{\tau}_j;\hat{\tau})=e^{2\pi i(-z_j-\hat{\tau}_{jj}/2)}\Theta(z;\hat{\tau}),
	\end{equation}
	Furthermore, define
	$$
	\hat\gamma(z)=\left(\frac{(z+\eta_1^2)(z+\eta_3^2)}{(z+\eta_2^2)(z+\eta_4^2)}\right)^{\frac{1}{4}}.
	$$
	Notice that the solution of model RH problem (\ref{jump-conditions-S-z}) on the $\lambda$-plane has at most fourth root singularities at {$\pm \eta_j,~j=1,\cdots,4$}, and this condition transforms into
	the solution of RH problem on the $z$-plane with singularities at $-\eta^2_{1}$ and $-\eta^2_{3}$ less than $1/4$. Furthermore, the zeros of $\hat{\gamma}(z)$ is $-\eta_1^2$ and $-\eta_3^2$ and the solution of the RH problem has the form
	\begin{equation}\label{Omee}
		\frac{\Theta(\pm \hat{J}(z)-\frac{\Omega}{2\pi i}+\hat{d};\hat{\tau})}{\Theta(\pm \hat{J}(z)+\hat{d};\hat{\tau})},\quad \text{with}\quad \Omega=(x\Omega_1+\Delta_1,x\Omega_0+\Delta_0)^T.
	\end{equation}
	It suffices to choose the parameter $\hat{d}$, which makes the zeros of denominator at $-\eta_1^2$ and $-\eta_3^2$, to nail down the singularities. On the other hand, we have
	$$
	\begin{aligned}
		&\hat{J}(\infty)=0,\ \hat{J}(-\eta_4^2)=-\frac{e_1+e_2}{2},\
		\hat{J}(-\eta_3^2)=\frac{\hat{\tau}_1}{2}-\frac{e_1+e_2}{2},\\
		&\hat{J}(-\eta_2^2)=\frac{\hat{\tau}_1}{2}-\frac{e_2}{2},\
		\hat{J}(-\eta_1^2)=\frac{\hat{\tau}_2}{2}-\frac{e_2}{2},\
		\hat{J}(0)=\frac{\hat{\tau}_1}{2}.
	\end{aligned}
	$$
	The Riemann constant $\hat{\mathcal{K}}=\frac{\hat{\tau}_1+\hat{\tau}_2}{2}-\frac{e_1}{2}$, and further $\hat{J}(-\eta_1^2)+\hat{J}(-\eta_3^2)=\hat{\mathcal{K}}$, which implies $\hat{d}=0$.
	Consequently the solution of the model RH problem (\ref{jump-conditions-S-z}) on the $z$-plane is
	\begin{equation}
		\hat{S}^{\infty}(z)=\hat\gamma(z)\frac{\Theta(0;\hat\tau)}{\Theta(\frac{\Omega}{2\pi i};\hat\tau)}\left(\begin{array}{cc}
			\frac{\Theta(\hat{J}(z)-\frac{\Omega}{2\pi i};\hat\tau)}{\Theta(\hat{J}(z);\hat\tau)} & \frac{\Theta(-\hat{J}(z)-\frac{\Omega}{2\pi i};\hat\tau)}{\Theta(-\hat{J}(z);\hat\tau)}
		\end{array}
		\right).
	\end{equation}
	
	In order to get the expansion of the Abel map $\hat{J}(z)$ as $z\to\infty$, we also need to transform the scalar RH problem for $g(\lambda)$ on the $\lambda$-plane into the case on the $z$-plane, that is
	$$
	\begin{aligned}
		&g_+(z)-g_-(z)=2i\sqrt{z}, &&z\in (-\eta_4^2,-\eta_3^2)\cup (-\eta_2^2,-\eta_1^2),\\
		&g_+(z)+g_-(z)=0, &&z\in(-\infty,-\eta_4^2),\\
		&g_+(z)+g_-(z)=\Omega_0, &&z\in(-\eta_1^2,0),\\
		&g_+(z)+g_-(z)=\Omega_1, &&z\in(-\eta_3^2,-\eta_2^2),\\
		&g(z)={\mathcal{O}\left(\frac{1}{\sqrt{z}}\right),} &&z\to\infty.
	\end{aligned}
	$$
	Consequently, the formula of $g(z)$ is
	$$
	g(z)=\hat{R}(z)\left(\int_{-\eta_3^2}^{-\eta_2^2}\frac{\Omega_1}{\hat R_+(\mu)(\mu-z)}d\mu+\int_{-\eta_1^2}^{0}\frac{\Omega_0}{\hat R_+(\mu)(\mu-z)}d\mu
	+\left(\int_{-\eta_4^2}^{-\eta_3^2}+\int_{-\eta_2^2}^{-\eta_1^2}\right)\frac{2i\sqrt{\mu}}{\hat R_+(\mu)(\mu-z)}d\mu\right).
	$$
	In order to keep consistence with the asymptotic condition, it follows that
	$$
	\int_{\hat a_1}{\Omega_1}{\hat\omega_j}+\int_{\hat a_2}{\Omega_0}{\hat\omega_j}+4i\left(\int_{-\eta_4^2}^{-\eta_3^2}+\int_{-\eta_2^2}^{-\eta_1^2}\right)
	\sqrt{z}{\hat\omega_j}=0,\quad j=1,2,
	$$
	and
	$4i\left(\int_{-\eta_4^2}^{-\eta_3^2}+\int_{-\eta_2^2}^{-\eta_1^2}\right)\sqrt{z}{\hat\omega_j}=2\pi  \mathrm{res}_{\infty}\sqrt{z}{\hat\omega_j}$ for $j=1,2$, which indicates that $\mathrm{res}_{\infty}\sqrt{z}{\hat{\omega}}=-\frac{1}{2\pi }\begin{pmatrix}
		\Omega_1 &\Omega_0
	\end{pmatrix}^T=-i\frac{\Omega_x}{2\pi i}$.
	As $z\to\infty$, expand $\hat{S}^{\infty}_1(z)$ as
	$$
	\hat{S}^{\infty}_1(z)=1+\frac{\hat{S}^{\infty}_{11}}{z^{1/2}}+\mathcal{O}\left(\frac{1}{z}\right),
	$$
	with
	\begin{equation}\label{S11}
		\hat{S}^{\infty}_{11}=-i\left[\nabla\log\left(\Theta\left(\frac{\Omega}{2\pi i}\right);\hat\tau\right)-\nabla\log(\Theta(0;\hat\tau))\right]\cdot \frac{ \Omega_x}{2\pi i}.	\end{equation}
	Recall that $z=-\lambda^2$, hence the WKB expansion of solution $\hat S_1^{\infty}(z)$ on the parameter $\lambda$ is
	\begin{equation}\label{WKB of k}
		\hat{S}^{\infty}_1(z)=1-\frac{1}{\lambda}\left[\nabla\log\left(\Theta\left(\frac{\Omega}{2\pi i}\right);\hat\tau\right)-\nabla\log(\Theta(0);\hat\tau)\right]\cdot \frac{ \Omega_x}{2\pi i}+\mathcal{O}\left(\frac{1}{\lambda^2}\right).
	\end{equation}
	Here we choose $i\lambda$, since the map $\lambda\to z=-\lambda^2$ from the upper $\lambda$-plane to $\C\setminus{(-\infty,0)}$.
	\par
	So far, we obtain the expression of $\hat{S}^{\infty}_1(z)$ and its expansion for {$z\to\infty$}. To reconstruct the potential $u(x)$, one needs to concentrate on the error RH problem of $Y(\lambda)$ which in fact contributes the error term $\mathcal{O}(x^{-1})$ in the asymptotic behavior of $u(x)$ for $x\to +\infty$. The techniques for error estimation are illustrated in \cite{Girotti CMP}, although they differ slightly from those in this paper. {Now, to analyze the error RH problem for $Y(\lambda)$, we need to transform the vector-form solution into a matrix form (see Remark \ref{errord}). To ensure logical coherence, we will reconstruct $u(x)$ at the end of subsection \ref{sec3.2} (Remark \ref{rmk3.9}) using equation (\ref{WKB of k}), leveraging the connection between the $z$-plane and the $\lambda$-plane. This allows us to complete the proof of Theorem \ref{thm3.8} from the perspective of the $z$-plane.}

	\begin{rmk}
		In fact, one can also construct the solutions to the model RH problem (\ref{Sinf}) in the $\lambda$-plane, but through the $z$-plane, it is seen that the solution corresponding to the model problem is a Riemann surface $\hat{\mathcal{S}}$ of genus two rather than the Riemann surface $\mathcal{S}$ of genus three in the $\lambda$-plane. For the general Jacobi map on a Riemann surface of genus three, denoted by $J(\lambda)$, they are typically represented as a three-dimensional vector, with the corresponding Riemann-Theta function $\Theta(J(\lambda)-d)$ owning three zeros on the Riemann surface, while the function $\check{\gamma}(\lambda)=\left(\frac{(\lambda^2-\eta_1^2)(\lambda^2-\eta_3^2)}{(\lambda^2-\eta_2^2)
			(\lambda^2-\eta_4^2)}\right)^{\frac{1}{4}}$ has four zeros. Therefore, it is impossible to construct a vector model solution that satisfies the singularity requirements at the branch points $\pm \eta_j,~j=1,2,3,4$.
		\par
		Regarding the matrix solution of the model problem in Ref. \cite{Teschl 2022}, the authors there discussed a similar matrix model RH problem of by constructing the Riemann surface of genus one for the KdV equation, proving that there is no entire matrix solution for the aforementioned matrix model problem.
	\end{rmk}

	\subsection{The solution of the model RH problem on the $\lambda$-plane}\label{sec3.2}
	
	We have derived the solution of model RH problem $\hat S^{\infty}(z)$ on the $z$-plane; nevertheless, we also need to get the solution on $\lambda$-plane. Moreover, notice that the  transformation $z=-\lambda^2$ can be considered as a holomorphic map between the Riemann surfaces $\mathcal{S}$ and $\hat{\mathcal{S}}$ in Figure \ref{Contour-2-Cavitation} and Figure \ref{hat mathcal S}. In general, suppose that $\mathcal{S}$ is a Riemann surface of genus $2g-1$ for $g\in \mathbb{Z}_{+}$, i.e., $\mathcal{S}=\{(\lambda,y)|y^2=\Pi_{j=1}^{2g}(\lambda^2-\eta_j^2)\},$ and Riemann surface $\hat{\mathcal{S}}$ on the $z$-plane, i.e., $\hat{\mathcal{S}}=\{(z,y)|y^2=z\Pi_{j=1}^{2g}(z+\eta_j^2)\}$.
	Define the holomorphic map $\varphi:\mathcal{S}\to\hat{\mathcal{S}}$ with $-\lambda^2\to z$, and according to the Riemann-Hurwitz formula, it follows that the genus of Riemann surface $\hat{\mathcal{S}}$ is $g$. Indeed, we just transform the solution on the $z$-plane into $\lambda$-plane, especially, the normalized holomorphic differentials, basic cycles and period matrix. In the following, we will directly illustrate the model solution of $S^{\infty}(\lambda)$ and then develop the equivalence between the two solutions.
	\par	
	Similarly, define the normalized holomorphic differentials $\omega_j~(j=1,2,3)$ associated with Riemann surface $\mathcal{S}$ by $\omega=(\omega_1, \omega_2, \omega_3)^T=A^{-1}\tilde{\omega}$, where $\tilde{\omega}=(\tilde{\omega}_1, \tilde{\omega}_2, \tilde{\omega}_3)^{T}$ with $\tilde\omega_j=\frac{\zeta^{j-1}d\zeta}{R(\zeta)}~(j=1,2,3)$
	and $A=(a_{ij})_{3\times3}$ with $a_{ij}=\oint_{a_j}\tilde \omega_i$. Define the period matrix $\tau=(\tau_{ij})_{3\times3}$ with $\tau_{ij}=\oint_{b_j}\omega_i$ and recall that the cycles are defined in Fig.\ref{Contour-2-Cavitation}. Indeed, it follows from the parity of $\tilde \omega_j$ and the symmetries of $a$-cycles that $a_{11}=a_{13},a_{21}=-a_{23}$ and $a_{31}=a_{33}$. Moreover, it is straightforward to check that
	\begin{equation}\label{expression of  omega}
		\omega_1=A_{11} \tilde \omega_1+A_{12}\tilde \omega_2+A_{13}\tilde \omega_3,\ \omega_2=A_{21} \tilde \omega_1+A_{23}\tilde \omega_3,\ \omega_3=A_{11} \tilde \omega_1-A_{12}\tilde \omega_2+A_{13}\tilde \omega_3,
	\end{equation}
	where $A_{i,j}$ is the $(i,j)$ element of the matrix $A^{-1}$. It follows that $\omega_1+\omega_3$ and $2\omega_2$ have some nice symmetries on the Riemann surface $\mathcal{S}$ and the period matrix $\tau$ has the following properties:
	\begin{equation}\label{properties of tau}
		\tau_{11}=\tau_{33},\tau_{12}=\tau_{23}.
	\end{equation}
	Now, define the Jacobi map
	\begin{equation}\label{biao}
		\check{J}(\lambda)=\int_{\eta_4}^{\lambda} \check\omega:=\int_{\eta_4}^{\lambda} \begin{pmatrix}
			\omega_1+\omega_3\\
			2\omega_2
		\end{pmatrix},
	\end{equation}
	and the corresponding period matrix is
	\begin{equation}\label{period matrix of hat tau}
		\check\tau=	\begin{pmatrix}
			\tau_{11}+\tau_{31}& \tau_{12}+\tau_{32}\\
			2\tau_{21}& 2\tau_{22}
		\end{pmatrix}.
	\end{equation}
	If one wants to use the above Jacobi map to construct the solution of model RH problem (\ref{Sinf}), it suffices to show that the imaginary part of $\check\tau$ is positive definite. In fact, it follows from the Riemann bilinear identity that $\tau_{ij}=\tau_{ji}$ and the properties of $\tau$ in (\ref{properties of tau}) that $\check\tau$ is a principal minor of $\tau$ up to congruent transformations, which indicates that the imaginary part of  $\check\tau$ is also positive definite. Consequently, the quotient space we choose is not $\C^3\setminus\{\tilde{e}_j,\tau_j\}$ for $j=1,2,3$ but $\C^2\setminus\{e_j,\check\tau_j\}$ for $j=1,2$, where $\tilde{e}_j$ is the $j$-th column of the $3\times 3$ identity matrix and $\check\tau_j$ is the $j$-th column of the matrix $\check{\tau}$. Indeed, we have
	\begin{equation}
		\begin{aligned}
			&\oint_{b_1}\check\omega=\check\tau_1,\ \oint_{b_2}\check\omega=\check\tau_2,\ \oint_{b_3} \check\omega= \begin{pmatrix}
				\tau_{13}+\tau_{33}\\
				2\tau_{23}
			\end{pmatrix}=\check\tau_1,\\
			&\oint_{a_1}\check\omega=e_1,\ \oint_{a_2}\check\omega=2e_2,\ \oint_{b_3} \check\omega=e_1.
		\end{aligned}
	\end{equation}
	Furthermore, it follows that the Jacobi map $\check{J}(z)$ satisfies
	\begin{equation}\label{Jacobi J jumps}
		\begin{aligned}
			&\check{J}_{+}(z)+\check{J}_{-}(z)=0, && z \in \Sigma_3, \\ &\check{J}_{+}(z)+\check{J}_{-}(z)=-e_1, && z \in \Sigma_1, \\ &\check{J}_{+}(z)+\check{J}_{-}(z)=-e_1-2 e_2, && z \in \Sigma_2, \\ &\check{J}_{+}(z)+\check{J}_{-}(z)=-2 e_1-2 e_2, && z \in \Sigma_4, \\ &\check{J}_{+}(z)-\check{J}_{-}(z)=-\check\tau_1, && z \in\left(\eta_2, \eta_3\right), \\
			&\check{J}_{+}(z)-\check{J}_{-}(z)=-\check\tau_2, && z \in\left(-\eta_1, \eta_1\right), \\
			&\check{J}_{+}(z)-\check{J}_{-}(z)=-\check\tau_1, && z \in\left(-\eta_3,-\eta_2\right),
		\end{aligned}
	\end{equation}
	and the Jacobi map on the branch points are half periods, i.e.,
	\begin{equation}\label{Jacobi J half periods}
		\begin{aligned}
			&\check{J}(\eta_4)=0,\ \check{J}(\eta_3)=-\frac{\check\tau_1}{2},\ \check{J}(\eta_2)=-\frac{\check\tau_1}{2}-\frac{e_1}{2},\ \check{J}(\eta_1)=-\frac{\check\tau_2}{2}-\frac{e_1}{2},\\
			&\check{J}(-\eta_4)=0,\
			\check{J}(-\eta_3)=-\frac{\check\tau_1}{2},\
			\check{J}(-\eta_2)=-\frac{\check\tau_1}{2}-\frac{e_1}{2},\
			\check{J}(-\eta_1)=-\frac{\check\tau_2}{2}-\frac{e_1}{2}.
		\end{aligned}
	\end{equation}
	On the other hand, based on the formula (\ref{biao}), for $\lambda\in\C\setminus\R$, one has
	\begin{equation}\label{symmetry of Jacobi}
		\check{J}(-\lambda)=\check{J}(\lambda)+e_1+e_2,
	\end{equation}
	which indicates that $\check{J}(\infty_+)=\frac{e_1+e_2}{2}$. Now, the solution of the model RH problem for $S^{\infty}(\lambda)$ on the $\lambda$-plane can be written down. Initially, suppose that  $\check{\gamma}(\lambda)=\left(\frac{(\lambda^2-\eta_1^2)(\lambda^2-\eta_3^2)}{(\lambda^2-\eta_2^2)
		(\lambda^2-\eta_4^2)}\right)^{\frac{1}{4}}$ and similarly assume that
	$$
	S^{\infty}(\lambda)=\check{c}\check{\gamma}(\lambda)\begin{pmatrix}
		\frac{\Theta(\check{J}(\lambda)-\check{d}+\frac{\Omega}{2\pi i};\check\tau )}{\Theta(\check{J}(\lambda)-\check{d};\check\tau )}&
		\frac{\Theta(-\check{J}(\lambda)-\check{d}+\frac{\Omega}{2\pi i};\check\tau )}{\Theta(-\check{J}(\lambda)-\check{d};\check\tau )}
	\end{pmatrix},
	$$
	where we demote the period matrix $\check{\tau}$ on the $\lambda$-plane comparing to $\hat\tau$ on the $z$-plane. In fact, it will be proven in the Lemma \ref{lem37} that $\check\tau=\hat\tau$. Moreover, the parameters $\check{d}$ and $\check{c}$ should be determined. Note that we require the solution of RH problem on the $\lambda$-plane to have at most fourth rootsingularities at branch points $\pm\eta_j$ for $j=1,\cdots,4$, which implies that the zeros of $\Theta(\check{J}(\lambda)-\check{d};\check\tau )$ and $\Theta(-\check{J}(\lambda)-\check{d};\check\tau )$ both only lies at $\pm\eta_{1}$ and $\pm\eta_{3}$. Thus, it follows from the zeros of the Riemann-Theta function are odd half periods that $\check{d}=\frac{e_1+e_2}{2}$, and combining with $\check{J}(\infty_+)=\frac{e_1+e_2}{2}$ shows that $\check{c}=\frac{\Theta(0;\check\tau)}{\Theta(\frac{\Omega}{2\pi i};\check\tau)}$. Recall that $\Omega=(x\Omega_1+\Delta_1,x\Omega_0+\Delta_0)^T$, thus $S^{\infty}(\lambda)$ exactly satisfies the jump conditions in (\ref{Sinf}).
	Now, we claim that the function $\Theta(\check{J}(\lambda))$ has precise four simple zeros on the Riemann surface $\mathcal{S}$, see \cite{Bertola Riemmansurface}.
	
	\begin{lem}\label{Lemma-4p}
		For arbitrary fixed $d_0\in\C^2$, define the function $\vartheta(\lambda):\mathcal{S}\to\C$ with $\lambda \mapsto\Theta(\check{J}(\lambda)-d_0;\check{\tau})$. We have $\deg(\vartheta)=4$, provided that $\vartheta$ does not vanish identically. Let $\mathcal{D}=(\vartheta)$, then $\check{J}(\mathcal D)=d_0-\check{\mathcal{K}}$, with $\check{\mathcal{K}}_k=\frac{\check\tau_{kk}}{2}-\left(\oint_{a_1}\check{J}(\lambda)(\omega_1+\omega_3)+\oint_{a_2}2\check{J}(\lambda)(\omega_2)+\oint_{a_3}\check{J}(\lambda)(\omega_1+\omega_3)\right)_k,k=1,2$, where the subscript ``$k$'' denotes the $k$-th element of the column vector.
	\end{lem}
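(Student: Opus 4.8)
The plan is to apply the classical theory of the Riemann theta function on the compact Riemann surface $\mathcal{S}$, but with the non-standard "doubled" lattice vectors $e_1,e_2$ and $\check\tau_1,\check\tau_2$ coming from the Jacobi map $\check J$ built out of the symmetric combinations $\omega_1+\omega_3$ and $2\omega_2$. First I would recall that the function $\vartheta(\lambda)=\Theta(\check J(\lambda)-d_0;\check\tau)$, although not single-valued on $\mathcal{S}$ itself, is single-valued \emph{up to nowhere-vanishing exponential factors} when $\lambda$ is transported around the $a$- and $b$-cycles: crossing an $a_j$-cycle leaves $\check J$ shifted by a lattice vector in $\mathbb{Z}^2$ and hence leaves $\vartheta$ invariant by the first quasi-periodicity relation in the display after (\ref{rsf}), while crossing a $b_j$-cycle shifts $\check J$ by a column of $\check\tau$ and multiplies $\vartheta$ by an exponential of the form $e^{2\pi i(\cdots)}$, which never vanishes. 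Therefore the \emph{divisor} $\mathcal{D}=(\vartheta)$ is well defined on $\mathcal{S}$, even though $\vartheta$ is not.

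Next I would compute $\deg(\vartheta)$ by the argument principle: integrate $d\log\vartheta$ over the boundary of the fundamental polygon of $\mathcal{S}$ (cut along the canonical homology basis $\{a_j,b_j\}_{j=1,2,3}$). The contributions from paired edges do not cancel in the usual way because $\check J$ is not the full Abel map; instead, on the edge paired to $b_j$ the integrand picks up the derivative of the linear exponent $2\pi i$ times a linear form in $\check J$, whose integral around the $a_j$-cycle is $2\pi i$ times the $a_j$-period of $\check\omega=(\omega_1+\omega_3,\,2\omega_2)^T$. From the period computations already recorded in the excerpt, $\oint_{a_1}\check\omega=e_1$, $\oint_{a_2}\check\omega=2e_2$, $\oint_{a_3}\check\omega=e_1$, so summing the three contributions gives $\tfrac{1}{2\pi i}\oint_{\partial}d\log\vartheta = 1+2+1 = 4$; hence $\deg\mathcal{D}=4$, provided $\vartheta\not\equiv 0$.

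Then I would identify $\check J(\mathcal{D})$ by the same contour integral, now weighting $d\log\vartheta$ by $\check J(\lambda)$ and using Abel's theorem machinery. As usual, $\check J(\mathcal{D})=\tfrac{1}{2\pi i}\oint_{\partial}\check J(\lambda)\,d\log\vartheta(\lambda)$; expanding the boundary into the three pairs of edges and using the quasi-periodicity of both $\check J$ and $\vartheta$, the edge-pair associated with $a_j$ contributes a term $-\check\tau_j$ (coming from the $b$-period jump of $\check J$ combined with the $a$-period of $d\log\vartheta$) together with a quadratic "diagonal" term giving $\tfrac{\check\tau_{kk}}{2}$ in the $k$-th slot, and the edge-pair associated with $b_j$ contributes the integral $-\oint_{a_j}\check J(\lambda)\check\omega$. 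Collecting, one obtains $\check J(\mathcal{D})=d_0-\check{\mathcal{K}}$ with $\check{\mathcal{K}}_k=\tfrac{\check\tau_{kk}}{2}-\bigl(\oint_{a_1}\check J(\omega_1+\omega_3)+\oint_{a_2}2\check J\,\omega_2+\oint_{a_3}\check J(\omega_1+\omega_3)\bigr)_k$, which is exactly the asserted Riemann constant for this doubled setting. The main obstacle I anticipate is bookkeeping: because $\check J$ is assembled from the Abel map of $\mathcal{S}$ but with coefficients adapted to the genus-two quotient, the quasi-periodicity factors are not the textbook ones and each edge-pairing must be recomputed by hand; care is needed to keep track of which combination of periods ($e_1$, $2e_2$, $\check\tau_1$, $\check\tau_2$) appears on each arc, using the jump relations (\ref{Jacobi J jumps}) and the symmetry (\ref{symmetry of Jacobi}). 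Once the edge contributions are correctly catalogued, the degree count and the Riemann-constant formula both drop out of the same contour integral.
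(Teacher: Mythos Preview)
Your proposal is correct and follows essentially the same approach as the paper's own proof: both compute $\deg(\vartheta)$ by integrating $d\log\vartheta$ over the boundary of the canonical dissection of $\mathcal{S}$, using the quasi-periodicity of $\Theta$ together with the $a$-periods $\oint_{a_1}\check\omega=e_1$, $\oint_{a_2}\check\omega=2e_2$, $\oint_{a_3}\check\omega=e_1$ to obtain $1+2+1=4$, and then compute $\check J(\mathcal{D})$ by the same boundary integral weighted by $\check J(\lambda)$. The paper's proof is somewhat more terse on the edge-pairing bookkeeping you flag as the main obstacle, but the underlying computation is identical.
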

	\begin{proof}
		Integrate $d\ln \vartheta$ along the boundary of the Riemann surface $\mathcal{S}$ denoted by $\delta \mathcal{S}$ as follows
		$$
		\begin{aligned}
			\frac{1}{2\pi i}\oint_{\delta\mathcal S}\frac{d\vartheta(\lambda)}{\vartheta(\lambda)}&=\frac{1}{2\pi i} \sum_{j=1}^{3}\left(\int_{\eta_4}^{\eta_4+a_j}+\int_{\eta_4+a_j}^{\eta_4+a_j+b_j}+\int_{\eta_4+a_j+b_j}^{\eta_4+b_j}+\int_{\eta_4+b_j}^{\eta_4}\right)d\ln \vartheta(\lambda)\\
			&=\frac{1}{2\pi i} \sum_{j=1}^{3}\left(\int_{\eta_4}^{\eta_4+a_j}-\int_{\eta_4+b_j}^{\eta_4+a_j+b_j}+\int_{\eta_4+b_j}^{\eta_4}-\int_{\eta_4+a_j+b_j}^{\eta_4+a_j}\right)d\ln \vartheta(\lambda)\\
			&=\int_{\eta_4}^{\eta_4+a_1}(\omega_1+\omega_3)+\int_{\eta_4}^{\eta_4+a_2}2\omega_2+\int_{\eta_4}^{\eta_4+a_3}(\omega_1+\omega_3)=4,
		\end{aligned}
		$$
		where we have used the fact that $d\ln \vartheta(\lambda+b_{1})=d\ln \vartheta(\lambda+b_{3})=-2\pi i d(\check{J}(\lambda))_1+d\ln \vartheta(\lambda)$, $d\ln  \vartheta(\lambda+b_2)=-2\pi i d(\check{J}(\lambda))_2+d\ln \vartheta(\lambda)$ and $d(\check{J}(\lambda))_1=\omega_1+\omega_3,d(\check{J}(\lambda))_2=2\omega_2$. Similar to the above computation, integrate $\check{J}(\lambda)d\ln \vartheta(\lambda)$ along the boundary of Riemann surface $\delta \mathcal{S}$ in the following:
		$$
		\begin{aligned}
			\frac{1}{2\pi i}\oint_{\delta\mathcal S}(\check{J}(\lambda))_kd\ln \vartheta(\lambda)&=\frac{1}{2\pi i} \sum_{j=1}^{3}\left(\int_{\eta_4}^{\eta_4+a_j}+\int_{\eta_4+a_j}^{\eta_4+a_j+b_j}+\int_{\eta_4+a_j+b_j}^{\eta_4+b_j}+\int_{\eta_4+b_j}^{\eta_4}\right)(\check{J}(\lambda))_kd\ln \vartheta(\lambda)\\
			&=\frac{1}{2\pi i} \sum_{j=1}^{3}\left(\int_{\eta_4}^{\eta_4+a_j}-\int_{\eta_4+b_j}^{\eta_4+a_j+b_j}+\int_{\eta_4+b_j}^{\eta_4}-\int_{\eta_4+a_j+b_j}^{\eta_4+a_j}\right)(\check{J}(\lambda))_kd\ln \vartheta(\lambda)\\
			&=\frac{1}{2\pi i} \sum_{j=1}^{3}\int_{\eta_4}^{\eta_4+a_j}\left((\check{J}(\lambda))_kd\ln \vartheta(\lambda)-((\check{J}(\lambda))_k+\hat\tau_{kj})(d\ln \vartheta(\lambda)-2\pi i d(\check{J}(\lambda))_j)\right)\\
			&\quad +\frac{1}{2\pi i} \sum_{j=1}^{3}\int_{\eta_4}^{\eta_4+a_j}\left((\check{J}(\lambda))_kd\ln \vartheta(\lambda)-((\check{J}(\lambda))_k+2\pi i \delta_{jk})d\ln \vartheta(\lambda)\right)\\
			&=\oint_{a_1}(\check{J}(\lambda))_k(\omega_1+\omega_3)+\oint_{a_2}2(\check{J}(\lambda))_k\omega_2+\oint_{a_3}
			(\check{J}(\lambda))_k(\omega_1+\omega_3)-\frac{\hat\tau_{kk}}{2}+(d_0)_k.
		\end{aligned}
		$$
	\end{proof}
	Finally, according to Lemma \ref{Lemma-4p}, it is immediate that both $\Theta(\check{J}(\lambda)-\check{d})$ and $\Theta(- \check{J}(\lambda)-\check{d})$ have four simple zeros, i.e., $\pm\eta_{1}$ and $\pm\eta_{3}$. So the solution to the model RH problem (\ref{Sinf}) is given by the following theorem.
	
	\begin{thm}
		Define $\check{J}(\lambda)=\int_{\eta_4}^{\lambda}\begin{pmatrix}
			\omega_1+\omega_3\\
			2\omega_2
		\end{pmatrix}$, where $\omega_j~(j=1,2,3)$ as defined previously are normalized holomorphic differentials on Riemann surface $\mathcal{S}$, then the corresponding period matrix $\check\tau$ is defined in (\ref{period matrix of hat tau}), and let $\check{d}=\frac{e_1+e_2}{2},\Omega=\begin{pmatrix}
			x\Omega_1+\Delta_1,x\Omega_0+\Delta_0
		\end{pmatrix}^T$, then the vector valued funtion
		\begin{equation}\label{solution of lambda}
			S^{\infty}(\lambda)=\check{\gamma}(\lambda)\frac{\Theta(0;\check\tau)}{\Theta(\frac{\Omega}{2\pi i};\check\tau)}
			\begin{pmatrix}
				\frac{\Theta(\check{J}(\lambda)-\check{d}+\frac{\Omega}{2\pi i};\check\tau)}{\Theta(\check{J}(\lambda)-\check{d};\check\tau)}&
				\frac{\Theta(-\check{J}(\lambda)-\check{d}+\frac{\Omega}{2\pi i};\check\tau)}{\Theta(-\check{J}(\lambda)-\check{d};\check\tau)}
			\end{pmatrix},	
		\end{equation}
		solves the RH problem (\ref{Sinf}).
	\end{thm}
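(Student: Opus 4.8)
The plan is to take the explicit candidate on the right of (\ref{solution of lambda}) and check, one property at a time, that it meets every requirement of the model problem (\ref{Sinf})---analyticity off $[-\eta_4,\eta_4]$, the prescribed jumps on $\Sigma_{1,2,3,4}$ and on the three gaps, the normalisation $(1,1)$ at $\lambda=\infty$, the reflection symmetry, and the demand that the only singularities, located at $\pm\eta_1,\pm\eta_3$, be of order at most $1/4$. Since (\ref{Sinf}) has at most one solution (uniqueness for model problems of this type being standard, as the admissible singularities at $\pm\eta_j$ are of order strictly less than $1/2$), this verification proves the theorem.

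I would start with the analytic structure. Off $[-\eta_4,\eta_4]$ the Abel map $\check J(\lambda)$ is single valued and holomorphic, so the only point to settle is where the theta denominators vanish. By Lemma \ref{Lemma-4p} the function $\lambda\mapsto\Theta(\check J(\lambda)-\check d;\check\tau)$ has exactly four simple zeros; and using the half--period values (\ref{Jacobi J half periods}) one checks that $\check J(\pm\eta_1)-\check d$ and $\check J(\pm\eta_3)-\check d$ are odd half--periods, so $\Theta$ vanishes there---hence $\pm\eta_1,\pm\eta_3$ are precisely those four zeros (and likewise for $\Theta(-\check J(\lambda)-\check d;\check\tau)$, since $\check J(-\eta_j)=\check J(\eta_j)$). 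These points are branch points of $\mathcal S$, so each theta factor vanishes like the local coordinate $(\lambda\mp\eta_j)^{1/2}$, while $\check\gamma(\lambda)$ vanishes like $(\lambda\mp\eta_j)^{1/4}$ at $\pm\eta_1,\pm\eta_3$ and blows up like $(\lambda\mp\eta_j)^{-1/4}$ at $\pm\eta_2,\pm\eta_4$; hence the two entries of (\ref{solution of lambda}) are analytic on $\mathbb{C}\setminus[-\eta_4,\eta_4]$ and have a singularity of order at most $1/4$ at each branch point, as required.

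Next I would verify the jumps band by band. On the gaps $[-\eta_1,\eta_1]$, $[\eta_2,\eta_3]$, $[-\eta_3,-\eta_2]$ the radicand of $\check\gamma$ is positive, so $\check\gamma$ is continuous and the jump is carried entirely by the theta quotients: inserting $\check J_+-\check J_-=-\check\tau_2$, resp.\ $-\check\tau_1$, from (\ref{Jacobi J jumps}) into the quasi--periodicity law $\Theta(w+\check\tau_j;\check\tau)=e^{-2\pi i(w_j+\check\tau_{jj}/2)}\Theta(w;\check\tau)$, the shift of the numerator by $\tfrac{\Omega}{2\pi i}$ (with $\Omega=(x\Omega_1+\Delta_1,\,x\Omega_0+\Delta_0)^T$) leaves exactly the diagonal matrix $\diag(e^{x\Omega_0+\Delta_0},e^{-x\Omega_0-\Delta_0})$ on $[-\eta_1,\eta_1]$ and $\diag(e^{x\Omega_1+\Delta_1},e^{-x\Omega_1-\Delta_1})$ on the other two gaps, matching (\ref{Sinf}). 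On the bands $\Sigma_{1,3}$ and $\Sigma_{2,4}$ the relations $\check J_++\check J_-\in\mathbb{Z}e_1+\mathbb{Z}e_2$ from (\ref{Jacobi J jumps}) together with the $\mathbb{Z}^2$--periodicity of $\Theta$ give $\Theta(\pm\check J_+(\lambda)-\check d+w;\check\tau)=\Theta(\mp\check J_-(\lambda)-\check d+w;\check\tau)$, so across each band the two theta--quotient entries are interchanged; combining this with the ratio $\check\gamma_+/\check\gamma_-=\mp i$ caused by the negative radicand on the bands produces the off--diagonal jumps $\begin{pmatrix}0&-i\\-i&0\end{pmatrix}$ on $\Sigma_{1,3}$ and $\begin{pmatrix}0&i\\i&0\end{pmatrix}$ on $\Sigma_{2,4}$. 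The normalisation is then immediate: as $\lambda\to\infty$ one has $\check\gamma(\lambda)\to1$ and $\check J(\lambda)\to\check J(\infty_+)=\check d$, so each entry tends to $\check c\,\Theta(\tfrac{\Omega}{2\pi i};\check\tau)/\Theta(0;\check\tau)=1$ by the choice of $\check c$; and the reflection symmetry $S^\infty(-\lambda)=S^\infty(\lambda)\begin{pmatrix}0&1\\1&0\end{pmatrix}$ follows by combining $\check\gamma(-\lambda)=\check\gamma(\lambda)$ with the transformation law (\ref{symmetry of Jacobi}) for $\check J$, the choice $\check d=\tfrac{e_1+e_2}{2}$, and the parity $\Theta(-v;\check\tau)=\Theta(v;\check\tau)$, which between them swap the two entries.

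The routine but genuinely delicate part---the step I expect to cost the most care---is the sign bookkeeping on $\Sigma_{1,2,3,4}$: one must fix an orientation of each band, decide which of the four factors in $\check\gamma$ change sign on crossing it (hence whether $\check\gamma_+/\check\gamma_-$ equals $+i$ or $-i$), and mesh this consistently with the sign convention for $\check\tau_j$ in the theta shift on the gaps, so that every exponent and every factor of $i$ comes out exactly as in (\ref{Sinf}); keeping the half--period values (\ref{Jacobi J half periods}) and (\ref{symmetry of Jacobi}) straight is what makes this go through. The identification $\check\tau=\hat\tau$, which is what matches this $\lambda$--plane formula to the $z$--plane solution $\hat S^\infty(z)$, is a separate matter, established in Lemma \ref{lem37}.
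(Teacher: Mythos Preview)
Your proposal is correct and follows essentially the same route as the paper: the paper's argument is not packaged in a separate proof environment but is spread across the paragraphs preceding the theorem---the jump relations (\ref{Jacobi J jumps}) for $\check J$, the half-period values (\ref{Jacobi J half periods}), the symmetry (\ref{symmetry of Jacobi}) giving $\check J(\infty_+)=\tfrac{e_1+e_2}{2}$, the choice of $\check d$, and Lemma~\ref{Lemma-4p} pinning down the four zeros at $\pm\eta_1,\pm\eta_3$---and you have correctly reassembled these ingredients into an explicit band-by-band verification of (\ref{Sinf}). Your write-up is in fact more detailed than the paper's on the jump checks and the singularity orders, but the underlying ideas and the reliance on Lemma~\ref{Lemma-4p} are the same.
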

	\begin{rmk}
		On the basis of calculation of $\Omega_j$, $j=0,1,2$ in (\ref{sysofOmega}), it implies that $\frac{\Omega}{2\pi i}$ is a real vector-valued function. Therefore, $S^{\infty}(\lambda)$ has no other singularities except for $\pm\eta_1$ and $\pm\eta_3$, since the zeros of the Riemann-Theta function is positioned at odd half periods.
	\end{rmk}
	\begin{rmk}\label{errord}
		To finish the proof of Theorem \ref{Spatial behavior}, one should consider the error estimation of the potential $u(x)$ for $x\to+\infty$. However, to keep the length of the paper manageable, we omit this step since the detailed discussions are made in \cite{Girotti CMP}. Still, it is necessary to introduce some notations to make our results complete.
		\par
		Consider the $1$-form $dp=\frac{\lambda^4+\alpha\lambda^2+\beta}{R(\lambda)}d\lambda$ and the Abelian integral $p(\lambda)=\int_{\eta_4}^{\lambda}dp$, which satisfied
		\begin{equation}
			\begin{aligned}
				&p_+(\lambda)+p_-(\lambda)=0,  		   && \lambda\in\Sigma_{1,2,3,4},   \\
				&p_+(\lambda)-p_-(\lambda)=\Omega_0,   && \lambda\in[-\eta_1,\eta_1],	\\
				&p_+(\lambda)-p_-(\lambda)=\Omega_1,   && \lambda\in[-\eta_3,-\eta_2]\cup[\eta_2,\eta_3],
			\end{aligned}
		\end{equation}
		and
		\begin{equation}
			p(-\lambda)=-p(\lambda), \quad \lambda\in\mathbb{C}\setminus\mathbb{R}.
		\end{equation}
		Further define
		\begin{equation}
			P^{\infty}(\lambda):=\frac{1}{2}
			\begin{pmatrix}
				(1+\frac{p(\lambda)}{\lambda})S_1^{\infty}+\frac{1}{\lambda}S_{1x}^{\infty}  &  (1-\frac{p(\lambda)}{\lambda})S_2^{\infty}+\frac{1}{\lambda}S_{2x}^{\infty}  \\
				(1-\frac{p(\lambda)}{\lambda})S_1^{\infty}-\frac{1}{\lambda}S_{1x}^{\infty}  &  (1+\frac{p(\lambda)}{\lambda})S_2^{\infty}-\frac{1}{\lambda}S_{2x}^{\infty}
			\end{pmatrix},
		\end{equation}
		which satisfies the following matrix-valued RH problem with same jump matrices as $S^{\infty}(\lambda)$ and
		\begin{equation}
			\begin{aligned}
				&P^{\infty}(\lambda)=\begin{pmatrix}
					1 & 0 \\
					0 & 1
				\end{pmatrix}  + \mathcal{O}\left(\frac{1}{\lambda}\right),\quad \lambda\to\infty,
				\\
				&P^{\infty}(\lambda) {\rm ~is~analytic~for~} \lambda\in\mathbb{C}\setminus[-\eta_4,\eta_4] {\rm ~with~a~singularity~at~} \lambda=0.
			\end{aligned}
		\end{equation}
		One can proof that $\det P^{\infty}(\lambda)\equiv 1$ for $\lambda\in\mathbb{C}$.
		\par
		Now, let the solution of the error vector-valued RH problem defined by
		\begin{equation}
			\mathcal{E}(\lambda)=S(\lambda)(P(\lambda))^{-1},
		\end{equation}
		where the global parametrix $P(\lambda)$ is given by
		\begin{equation}
			P(\lambda)=\begin{cases}
				P^{\infty}(\lambda),   &  \lambda\in\mathbb{C}\setminus\cup_{j=1,2,3,4}B_{\rho}^{\pm\eta_j}, \\
				P^{\diamond}(\lambda), &   \lambda\in B_{\rho}^{\diamond},
			\end{cases}
		\end{equation}
		where $\diamond$ traverses all $8$ branch points, i.e., $\pm\eta_j$, $j=1,2,3,4$ and $B_{\rho}^{\diamond}$ denotes the open disc of radius $\rho$ centered at $\diamond$. We claim that the local parametrix $P^{\pm\eta_j}$ near $\lambda=\pm\eta_j$, $j=1,2,3,4$ can be described by the modified Bessel function which will finally contribute the term $\mathcal{O}(x^{-1})$ in the asymptotic result of $u(x)$ for $x\to +\infty$.
		\par
		Therefore, one can see that
		\begin{equation}\label{YE1}
			Y(\lambda)=\mathcal{E}(\lambda)P(\lambda)e^{-xg(\lambda)\sigma_3}f(\lambda)^{-\sigma_3}
			=\left(\begin{pmatrix}1 & 1\end{pmatrix} + \frac{\mathcal{E}_1(x)}{x\lambda}
			+\mathcal{O}\left(\frac{1}{\lambda^2}\right)\right)P(\lambda)e^{-xg(\lambda)\sigma_3}f(\lambda)^{-\sigma_3}.
		\end{equation}
	\end{rmk}
	Recall that the potential $u(x)$ can be reconstructed from the solution of $Y(\lambda)$ as following
	$$
	u(x)=2 \frac{\mathrm{d}}{\mathrm{d} x}\left[\lim _{\lambda \rightarrow \infty} {\lambda}\left(Y_1(\lambda ; x)-1\right)\right].
	$$
    
	{\begin{thm}\label{thm3.8}
		As $x\to+\infty$, the potential function $u(x)$ is subject to the following asymptotic expression
		\begin{equation}
			u(x)=-\left(2\alpha+{\sum_{j=1}^4\eta_j^2}+2\partial_x^2\log\left(\Theta\left(\frac{\Omega}{2\pi i};\check\tau\right)\right)\right)+\mathcal{O}\left(\frac{1}{x}\right),
		\end{equation}
		where $\alpha$ is associated to a second kind Abel differential and can be calculated in Remark \ref{alpha}.
	\end{thm}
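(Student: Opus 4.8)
The plan is to reduce the statement to the computation already carried out in Section~\ref{potential behavior}, now performed intrinsically on the $\lambda$-plane with the model solution (\ref{solution of lambda}) in place of its $z$-plane counterpart. First I would reassemble the representation (\ref{YE1}) for $Y(\lambda)$: undoing the chain of transformations $Y\mapsto T\mapsto S$, writing $S(\lambda)=\mathcal{E}(\lambda)P(\lambda)$ with the global parametrix $P(\lambda)$ built from the matrix solution $P^{\infty}(\lambda)$ away from the branch points and from the modified-Bessel local parametrices $P^{\pm\eta_j}$ near them, and invoking the small-norm analysis of the error problem (imported from \cite{Girotti CMP}, see Remark~\ref{errord}) to obtain $\mathcal{E}(\lambda)=\Id+\mathcal{E}_1(x)/(x\lambda)+\mathcal{O}(\lambda^{-2})$ with $\mathcal{E}_1(x)$ bounded as $x\to+\infty$. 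Since $S^{\infty}(\lambda)$ is now taken to be (\ref{solution of lambda}), the period matrix entering the leading term will be $\check\tau$.

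Next I would expand the three factors in (\ref{YE1}) as $\lambda\to\infty$. The exponential $e^{-xg(\lambda)}$ expands as in (\ref{gfunction infty}), contributing $-\frac{x}{\lambda}\bigl(\alpha+\frac12\sum_{j=1}^4\eta_j^2\bigr)$ at order $\lambda^{-1}$; the scalar $f(\lambda)^{-1}$ expands as in (\ref{f infty}) with an $x$-independent coefficient $f_1$; and $S^{\infty}_1(\lambda)$ from (\ref{solution of lambda}) expands using $\check\gamma(\lambda)=1+\mathcal{O}(\lambda^{-2})$ together with the local behaviour of the Jacobi map $\check J(\lambda)$ near $\infty_+$. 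The $\lambda^{-1}$ coefficient of $\check J(\lambda)$ is governed by the residue at $\infty_+$ of the relevant second-kind differential; transporting the $z$-plane identity $\mathrm{res}_{\infty}\sqrt{z}\,\hat\omega=-\frac{1}{2\pi}(\Omega_1,\Omega_0)^{T}$ through $z=-\lambda^{2}$ and using $\check\tau=\hat\tau$ (Lemma~\ref{lem37}) should reproduce exactly the $\nabla\log\Theta$ term of (\ref{WKB of k}) with $\check\tau$ in place of $\hat\tau$. Collecting orders then gives (\ref{Y1 infty}), i.e. the $\lambda^{-1}$ coefficient of $Y_1(\lambda)$ is $f_1+x\bigl(\alpha+\frac12\sum_j\eta_j^2\bigr)+\bigl(\nabla\log\Theta(\frac{\Omega}{2\pi i};\check\tau)-\nabla\log\Theta(0;\check\tau)\bigr)\cdot\frac{\Omega_x}{2\pi i}+(\mathcal{E}_1(x))_1/x$, with $\Omega_x=(\Omega_1,\Omega_0)^{T}$.

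Finally I would apply the reconstruction formula $u(x)=2\frac{\mathrm d}{\mathrm dx}\bigl[\lim_{\lambda\to\infty}\lambda\,(Y_1(\lambda;x)-1)\bigr]$. The constant $f_1$ and the $x$-independent quantity $\nabla\log\Theta(0;\check\tau)\cdot\frac{\Omega_x}{2\pi i}$ drop under the $x$-derivative; the term linear in $x$ produces $-(2\alpha+\sum_j\eta_j^2)$; writing $\Omega=x\,\Omega_x+\Delta$ so that $\nabla\log\Theta(\frac{\Omega}{2\pi i};\check\tau)\cdot\frac{\Omega_x}{2\pi i}=\partial_x\log\Theta(\frac{\Omega}{2\pi i};\check\tau)$, the theta contribution becomes $-2\partial_x^2\log\Theta(\frac{\Omega}{2\pi i};\check\tau)$; and $(\mathcal{E}_1(x))_1/x$ yields $\mathcal{O}(1/x)$ after differentiation, since the same small-norm estimates control $\partial_x\mathcal{E}_1(x)$. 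Assembling these four contributions gives the asserted expansion, with $\alpha$ fixed by the two vanishing $a$-period conditions of Remark~\ref{alpha}.

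\textbf{Main obstacle.} The delicate point is the second step — confirming that the period matrix in the leading term is genuinely $\check\tau$. This hinges on Lemma~\ref{lem37} ($\check\tau=\hat\tau$), namely on matching the genus-two data pulled back through the double cover $z=-\lambda^{2}$ — the combinations $\check\omega=(\omega_1+\omega_3,\,2\omega_2)^{T}$, the Jacobi map $\check J$, and the matrix $\check\tau$ of (\ref{period matrix of hat tau}) — with the intrinsic data $(\hat\omega,\hat J,\hat\tau)$ on $\hat{\mathcal S}$, and on the verification via Lemma~\ref{Lemma-4p} that $\Theta(\pm\check J(\lambda)-\check d;\check\tau)$ vanishes only at $\pm\eta_1$ and $\pm\eta_3$, so that (\ref{solution of lambda}) carries precisely the admissible quarter-integer singularities at the branch points. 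The remaining technical input — uniform control of $\mathcal{E}_1(x)$ and $\partial_x\mathcal{E}_1(x)$ via the Bessel parametrices at the eight points $\pm\eta_j$ — I would take over from \cite{Girotti CMP} rather than reprove, as flagged in Remark~\ref{errord}.
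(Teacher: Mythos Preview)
Your proposal is correct and follows essentially the same architecture as the paper's proof: expand $Y_1(\lambda)=(S^{\infty}_1+\text{error})\,e^{-xg}/f$ at $\lambda=\infty$, read off the $\lambda^{-1}$ coefficient, and apply the reconstruction formula. The only cosmetic difference is that the paper obtains the $\lambda^{-1}$ coefficient of $\check J(\lambda)$ intrinsically on the $\lambda$-plane --- applying the Reciprocity theorem directly to the system (\ref{sysofOmega}) to get $2\pi i\sum\mathrm{res}_{\infty_\pm}\omega_j/\lambda=\Omega_j$ and then invoking the symmetry (\ref{symmetry of Jacobi}) --- so that Lemma~\ref{lem37} is not actually needed for the proof of the theorem itself (it appears afterward as a consistency check), whereas you route through the $z$-plane residue identity and invoke $\check\tau=\hat\tau$; both paths yield the same expansion.
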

	\begin{proof}
		After a series of deformations of the original RH problem, we have
		$$
		Y_1(\lambda)=\left(S_1^{\infty}(x;\lambda)+\frac{(\mathcal{E}_1(x))_1}{x\lambda}
		+\mathcal{O}\left(\frac{1}{\lambda^2}\right)\right)\frac{e^{-xg(\lambda)}}{f(\lambda)},
		$$
		where $(\mathcal{E}_1(x))_1$ is the first entry of the vector $\mathcal{E}_1(x)$ given by equation (\ref{YE1}). Moreover, reminding the expression of $g(\lambda)$ in (\ref{gfunction}), it follows that
\begin{equation}\label{gfunction infty}
		e^{-xg(\lambda)}=1-\frac{x}{\lambda}\left(\alpha+\frac{\eta_1^2+\eta_2^2+\eta_3^2+\eta_4^2}{2}\right)+\mathcal{O}\left(\frac{1}{\lambda^2}\right).
\end{equation}
		Similarly, from the formula of $f(\lambda)$, one has
\begin{equation}\label{f infty}
	f(\lambda)=1+\frac{f_1}{\lambda}+\mathcal{O}\left(\frac{1}{\lambda^2}\right).
\end{equation}
		In addition, recall the linear equations in (\ref{sysofOmega}), and change $\tilde\omega_j$ into $\omega_j~(j=1,2,3)$, it follows from the Riemann Bilinear relations \cite{Bertola Riemmansurface} that
		$$
		2\pi i\ \sum \mathrm{res}_{\infty_{\pm}}\frac{\omega_1}{z}=\Omega_1,\ 2\pi i\ \sum \mathrm{res}_{\infty_{\pm}}\frac{\omega_2}{z}=\Omega_2,\ 2\pi i\ \sum \mathrm{res}_{\infty_{\pm}}\frac{\omega_3}{z}=\Omega_1.
		$$
		Incorporate with the symmetry in (\ref{symmetry of Jacobi}), which implies that as $\lambda\to\infty$
		$$
		J(\lambda)=\frac{e_1+e_2}{2}-\frac{\Omega_x}{\lambda}+\mathcal{O}\left(\frac{1}{\lambda^2}\right).
		$$
		Moreover, the expanding of $S_1^{\infty}(\lambda)$ as $\lambda\to\infty$ is
		$$
		\begin{aligned}		   {S}^{\infty}_1(\lambda)&=1-\frac{1}{\lambda}\left[\nabla\log\left(\Theta\left(\frac{\Omega}{2\pi i};\check\tau\right)\right)-\nabla\log(\Theta(0;\check\tau))\right]\cdot \frac{ \Omega_x}{2\pi i}+\mathcal{O}\left(\frac{1}{\lambda^2}\right),\\
			&=1-\frac{1}{\lambda}\partial_x\log\left(\Theta\left(\frac{\Omega}{2\pi i};\check\tau\right)\right)+\mathcal{O}\left(\frac{1}{\lambda^2}\right).
		\end{aligned}
		$$
		Thus, it is obtained that
		$$
		Y_1(\lambda)=1-\frac{1}{\lambda}\left[f_1+x\left(\alpha+\frac{1}{2}\sum_{j=1}^4\eta_j^2\right)+\partial_x\log\left(\Theta\left(\frac{\Omega}{2\pi i};\check\tau\right)\right)+\frac{(\mathcal{E}_1(x))_1}{x}\right]+\mathcal{O}\left(\frac{1}{\lambda^2}\right).
		$$
		Consequently, based on the relationship between $u(x)$ and $Y(\lambda)$, we have
		$$
		u(x)=-\left(2\alpha+{\sum_{j=1}^4\eta_j^2}+2\partial_x^2\log\left(\Theta\left(\frac{\Omega}{2\pi i};\check\tau\right)\right)\right)
		+\mathcal{O}\left(\frac{1}{x}\right).
		$$
	\end{proof}
\begin{rmk}\label{rmk3.9}
	We now complete the proof of Theorem 3.8 on the $z$-plane as follows. Similar to the proof in Theorem \ref{thm3.8}, we have 
	\begin{equation}
		Y_1(\lambda)= \left(\hat S_1^{\infty}(x;-\lambda^2)+\frac{(\mathcal{E}_1(x))_1}{x\lambda}+\mathcal{O}\left(\frac{1}{\lambda^2}\right)\right)\frac{e^{-xg(\lambda)}}{f(\lambda)},
	\end{equation}
	where $S_1^{\infty}(x;-\lambda^2)$ is given by (\ref{WKB of k}). Consequently, combining (\ref{WKB of k}), (\ref{gfunction infty}) with (\ref{f infty}), it gives that
    \begin{equation}\label{Y1 infty}
    \begin{aligned}
		Y_1(\lambda)
        &=1-\frac{1}{\lambda}\left(f_1+x\left(\alpha+\frac{\sum_{j=1}^4\eta_j^2}{2}\right)\right.\\
        &\left.+\left(\nabla\log\left(\Theta\left(\frac{\Omega}{2\pi i}\right);\hat\tau\right)-\nabla\log(\Theta(0;\hat\tau))\right)\cdot \frac{ \Omega_x}{2\pi i}
		+\frac{(\mathcal{E}_1(x))_1}{x}\right)+\mathcal{O}\left(\frac{1}{\lambda^2}\right),
        \end{aligned}
	\end{equation}
	which indicates that for $x\to +\infty$ the genus two KdV soliton gas potential behaves
	\begin{equation}\label{u potential}
		u(x)=-\left(2\alpha+{\sum_{j=1}^4\eta_j^2}+2\partial_x^2\log\left(\Theta\left(\frac{\Omega}{2\pi i};\hat\tau\right)\right)\right)+\mathcal{O}\left(\frac{1}{x}\right).
	\end{equation}
\end{rmk}}
	So far, we almost complete the construction of the soliton gas potential in the regime $x\to+\infty$. However, there is a minor flaw, which lies in proving the equivalence of the period matrices $\check\tau$ and $\hat\tau$.
	\begin{lem}\label{lem37}
		Suppose that $\omega_j$, $j=1,2,3$ defined previously are the normalized holomorphic differentials on Riemann surface $\mathcal{S}$, and then $\omega_1+\omega_3,2\omega_2$ are the normalized holomorphic differentials on the transformed Riemann surface $\hat{\mathcal{S}}$. Moreover, the period matrix $\check\tau$ is precisely $\hat\tau$.
	\end{lem}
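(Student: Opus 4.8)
The plan is to realise the genus-three surface $\mathcal{S}$ as an unramified double cover of the genus-two surface $\hat{\mathcal{S}}$ via $z=-\lambda^2$ and to transport the normalized differentials along that cover. First I would make the map precise: $\varphi\colon\mathcal{S}\to\hat{\mathcal{S}}$ sends $(\lambda,y)$ to $(z,w)=(-\lambda^2,\,i\lambda y)$, which lands on $\hat{\mathcal{S}}$ because $z+\eta_j^2=-(\lambda^2-\eta_j^2)$ forces $w^2=z\prod_{j=1}^4(z+\eta_j^2)=-\lambda^2 R(\lambda)^2$. This $\varphi$ is $2$-to-$1$ with deck involution $\sigma(\lambda,y)=(-\lambda,-y)$, and $\sigma$ is fixed-point free: at $\lambda=0$ one has $y^2=\prod\eta_j^2\neq0$ so the two points are swapped, and $\sigma$ interchanges $\infty_\pm$; hence, as already recorded in the excerpt via Riemann–Hurwitz, $\varphi$ is unramified (consistent with the genus dropping $3\to 2$). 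Next I would classify the basic differentials by $\sigma$-parity: $\sigma^*\tilde\omega_1=\tilde\omega_1$, $\sigma^*\tilde\omega_3=\tilde\omega_3$, $\sigma^*\tilde\omega_2=-\tilde\omega_2$, and moreover, substituting $dz=-2\lambda\,d\lambda$ and $\hat R(z)=i\lambda R(\lambda)$ (up to a branch sign), one gets $\varphi^*\big(\tfrac{dz}{\hat R(z)}\big)=2i\,\tilde\omega_1$ and $\varphi^*\big(\tfrac{z\,dz}{\hat R(z)}\big)=-2i\,\tilde\omega_3$, which exhibits $\tilde\omega_1,\tilde\omega_3$ as pullbacks of the two standard holomorphic differentials of $\hat{\mathcal{S}}$, while $\tilde\omega_2$ is the anti-invariant (Prym) direction and is not a pullback.

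By the explicit formulas (\ref{expression of  omega}), $\omega_1+\omega_3=2A_{11}\tilde\omega_1+2A_{13}\tilde\omega_3$ and $\omega_2=A_{21}\tilde\omega_1+A_{23}\tilde\omega_3$ both lie in $\mathrm{span}\{\tilde\omega_1,\tilde\omega_3\}$, hence are $\sigma$-invariant; therefore there exist holomorphic differentials $\mu_1,\mu_2$ on $\hat{\mathcal{S}}$ with $\varphi^*\mu_1=\omega_1+\omega_3$ and $\varphi^*\mu_2=2\omega_2$, using the standard fact that for an unramified cover pullback is an isomorphism onto the space of $\sigma$-invariant differentials (that space is $2$-dimensional, equal to the genus of $\hat{\mathcal{S}}$). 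The pair $\{\mu_1,\mu_2\}$ is linearly independent, for otherwise $\mathrm{span}\{\omega_1,\omega_2,\omega_3\}=\mathrm{span}\{\omega_1-\omega_3,\,\omega_1+\omega_3,\,\omega_2\}$ would have dimension at most $2$, contradicting that $\{\omega_1,\omega_2,\omega_3\}$ is a basis. Thus $\{\mu_1,\mu_2\}$ is a basis of the holomorphic differentials on $\hat{\mathcal{S}}$ and $\check\omega=(\omega_1+\omega_3,\,2\omega_2)^T=\varphi^*(\mu_1,\mu_2)^T$.

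To finish I would match the homology bases. Reading off Figure \ref{Contour-2-Cavitation} and Figure \ref{hat mathcal S}, $\sigma$ interchanges $a_1\leftrightarrow a_3$ and $b_1\leftrightarrow b_3$ and preserves $a_2$ (as a set, $2$-to-$1$ onto its image) and $b_2$; hence $\varphi_*a_1=\varphi_*a_3=\hat a_1$, $\varphi_*a_2=2\hat a_2$, $\varphi_*b_1=\varphi_*b_3=\hat b_1$, $\varphi_*b_2=\hat b_2$ for the canonical basis $\{\hat a_j,\hat b_j\}$ of $\hat{\mathcal{S}}$. Combining $\oint_\gamma\varphi^*\mu=\oint_{\varphi_*\gamma}\mu$ with the $a$-periods already computed in the excerpt, $\oint_{a_1}\check\omega=\oint_{a_3}\check\omega=e_1$ and $\oint_{a_2}\check\omega=2e_2$, yields $\oint_{\hat a_1}\mu=e_1$ and $\oint_{\hat a_2}\mu=e_2$ — the factor $2$ in $\check\omega_2=2\omega_2$ precisely cancels the $2$-to-$1$ behaviour of $a_2\to\hat a_2$ — i.e. $\oint_{\hat a_i}\mu_j=\delta_{ij}$, so by uniqueness of the normalized basis $\mu=\hat\omega$; this proves the first assertion. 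Then $\hat\tau_{i1}=\oint_{\hat b_1}\hat\omega_i=\oint_{b_1}\check\omega_i=(\check\tau_1)_i$ and $\hat\tau_{i2}=\oint_{\hat b_2}\hat\omega_i=\oint_{b_2}\check\omega_i=(\check\tau_2)_i$, that is $\hat\tau=\check\tau$.

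The one genuinely delicate step is the cycle dictionary in the last paragraph: one has to verify from the cut structure of the two surfaces that $\sigma$ identifies $a_1$ with $a_3$ and $b_1$ with $b_3$, acts as the deck transformation fixing $a_2$ while moving $b_2$, and — crucially — that all these identifications are orientation-compatible with the basis drawn in Figure \ref{hat mathcal S}, so that the $a$-normalization and the signs of the $b$-periods transfer without stray factors of $2$ or signs. As a consistency check I would verify the relation $\hat J(-\lambda^2)\equiv\check J(\lambda)+\hat J(-\eta_4^2)$ against the branch-point values of $\hat J$ and $\check J$ tabulated in the excerpt (around (\ref{rsf}) and (\ref{Jacobi J half periods})), which must hold modulo the period lattice once $\mu=\hat\omega$ is known. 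Everything else — the parity computation, the descent, and the period transfer — is routine linear algebra together with the standard étale-pullback isomorphism.
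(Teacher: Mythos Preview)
Your proposal is correct and follows essentially the same approach as the paper: show that $\omega_1+\omega_3$ and $2\omega_2$ lie in the span of $\tilde\omega_1,\tilde\omega_3$ (hence descend through $z=-\lambda^2$), then verify that their $\hat a$-periods form the identity matrix, which forces them to coincide with the normalized basis $\hat\omega$ and hence $\check\tau=\hat\tau$. The paper carries this out by the bare substitution $\lambda=i\sqrt z$ and a direct period computation, whereas you wrap the same computation in the language of the unramified double cover, the deck involution $\sigma$, and the homology pushforward; your extra scaffolding (Prym decomposition, explicit $\varphi_*$ on cycles) makes the factors of $2$ and the orientation issues more transparent, and your acknowledged ``delicate step'' --- the cycle dictionary $\varphi_* a_2=2\hat a_2$, $\varphi_* b_2=\hat b_2$ --- is exactly what the paper sweeps into the one-line identities $\oint_{\hat a_2}2\omega_2|_{\hat{\mathcal S}}=\oint_{a_2}\omega_2|_{\mathcal S}$ without further comment.
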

	\begin{proof}
		Recalling the definition of $\tilde \omega_j$ and changing the variable $\lambda$ into $i\sqrt{z}$, we have the following local expressions
		$$
		\tilde \omega_1=\frac{d\lambda}{R(\lambda)}=\frac{i}{2}\frac{dz}{\hat R(z)},\ \tilde \omega_2=\frac{\lambda d\lambda}{R(\lambda)}=-\frac{1}{2}\frac{\sqrt{z}dz}{\hat R(z)},\
		\tilde \omega_3=\frac{\lambda^2 d\lambda}{R(\lambda)}=-\frac{i}{2}\frac{z dz}{\hat R(z)}.
		$$
		It follows that after the holomorphic transformation $z=-\lambda^2$, the original holomorphic differential $\tilde \omega_{2}$ is not a holomorphic differential in $\hat{\mathcal{S}}$. Fortunately, by the expressions of $\omega_j$ in terms of $\tilde{\omega}_j$ in (\ref{expression of  omega}), $\omega_1+\omega_2$ and $2\omega_{2}$ are irrelated to $\tilde{\omega}_2$. In particular,
		$$
		\begin{aligned}
			&\oint_{\hat a_1}(\omega_1+\omega_3)\rvert_{\hat{\mathcal{S}}}=\oint_{ a_1}(\omega_1+\omega_3)\rvert_{\mathcal{S}}=\oint_{ a_3}(\omega_1+\omega_3)\rvert_{\mathcal{S}}=1,
			\quad
			\oint_{\hat a_2}2\omega_2\rvert_{\hat{\mathcal{S}}}=\oint_{ a_2}\omega_2\rvert_{\mathcal{S}}=1,\\
			&\oint_{\hat a_1}2\omega_2\rvert_{\hat{\mathcal{S}}}=\oint_{ a_1}2\omega_2\rvert_{\mathcal{S}}=\oint_{ a_3}2\omega_2\rvert_{\mathcal{S}}=0,
			\quad
			\oint_{\hat a_2}(\omega_1+\omega_3)\rvert_{\hat{\mathcal{S}}}=\frac{1}{2}\oint_{ a_1}(\omega_1+\omega_3)\rvert_{\mathcal{S}}=0.
		\end{aligned}
		$$
		By the Riemann Bilinear relationship, it indicates that $(\omega_1+\omega_3)\rvert_{\hat{\mathcal{S}}}$ and $2\omega_2\rvert_{\hat{\mathcal{S}}}$ are the normalized holomorphic differentials on $\hat {\mathcal{S}}$ and the corresponding period matrix are equivalence.	
	\end{proof}
	\subsection{Behavior of the genus two KdV soliton gas potential for $x\to-\infty$}
	The behavior of the genus two KdV soliton gas potential is quite simple since the jump matrices of the RH problem $Y(\lambda)$ are all converge to identity matrix exponentially when $x\to -\infty$. To be precise, there exist a fixed constant $c\in\mathbb{R}_+$ such that for $x\to-\infty$, the boundary behavior of the potential $u(x)$ is given by
	\begin{equation}
		u(x)=\mathcal{O}(e^{-c|x|}).
	\end{equation}
	
	\section{Evolution of the genus two KdV soliton gas potential for $t\to+\infty$}\label{long time section}
	
	This section examines the large-time asymptotic behaviors of the genus two KdV soliton gas potential constructed in Section \ref{Soliton gas RH problem} and presents a detailed proof of the Theorem \ref{long time asymp}.
	\par
	If the genus two KdV soliton gas potential $u(x,0)$ evolves in time according to the KdV equation, the reflection coefficient $r(t;\lambda)=r(\lambda)e^{-8\lambda^3t}$. It follows that the RH problem of $Y(\lambda)=Y(\lambda; x,t)$ for the evolution of soliton gas is defined by
	\begin{equation}\label{RHP with t}
		\begin{aligned}
			&Y_{+}( \lambda)=Y_{-}( \lambda) \begin{cases}{\left(\begin{array}{cc}
						1 &  -i r( \lambda) e^{2 \lambda x-8\lambda^3t} \\
						0 & 1
					\end{array}\right)}, & \lambda \in \Sigma_{1,3}, \\
				{\left(\begin{array}{cc}
						1 & 0 \\
						i r( \lambda) e^{-2 \lambda x+8\lambda^3t} & 1
					\end{array}\right)}, & \lambda \in \Sigma_{2,4},\end{cases}\\
		\end{aligned}
	\end{equation}
	with the same boundary condition and symmetry in accordance with the case of the initial soliton gas potential, i.e.,
	$Y(\lambda)=(1\quad 1)+\mathcal{O}\left(\frac{1}{\lambda}\right),$ and $
	Y(-\lambda)=Y(\lambda)\left(\begin{matrix}
		0&1\\
		1&0
	\end{matrix}\right)$, respectively. In order to analyze the asymptotic behavior of $Y(x,t;\lambda)$ in the long time sense, rewrite the phase function $2\lambda x-8\lambda^3 t = 8\lambda t( \xi -\lambda^2)$ with $\xi=\frac{x}{4t}\in\R$.
	\par
	It is immediate that in the case $\xi<\eta_1^2$, the phase functions in the jump matrices are exponentially decreasing as $t\to+\infty$. Consequently, straightforward calculation shows that
	$$
	Y(\lambda)=(1\quad 1)+\mathcal{O}\left(e^{8\eta_1t(\xi-\eta_1^2)}\right),
	$$
	as $t\to+\infty$ with $\xi<\eta_1^2$, which indicates that the potential $u(x,t)$ vanishes rapidly in this region.
	
	\subsection{Modulated one-phase wave region} \label{Modulated-4-1}
	Introduce the critical value $\xi_{crit}^{(1)}$ that is defined by (\ref{xi critical}) below, and consider the constraint
	$$
	\eta_1^2<\xi<\xi_{crit}^{(1)}.
	$$
	\par
	{Refine the contours \(\Sigma_{1}\) and \(\Sigma_2\) as illustrated in Figure \ref{Sigma-alpha1}, by splitting them as follows:  
\begin{equation}  
    \Sigma_{1_{\alpha_1}} = (\eta_1, \alpha_1), \quad \text{and} \quad \Sigma_{2_{\alpha_1}} = (-\alpha_1, -\eta_1),  
\end{equation}  
where \(\alpha_1\) is a function of \(\xi\) to be determined in (\ref{alpha1 formular}) below.} 
\begin{figure}[h!]
    \centering
    \begin{tikzpicture}
    [>=latex]

        \draw[lightgray,very thick,dashed] (-7.5,0) to (-6.5,0);
        
        \filldraw[black] (-6.5,0) node[black,below=1mm]{$-\eta_{4}$} circle (1.5pt);
        \draw[very thick,dashed] (-6.5,0) to (-4.5,0);
        \filldraw[black] (-4.5,0) node[black,below=1mm]{$-\eta_{3}$} circle (1.5pt);
        
        \draw[-,very thick,dashed,lightgray] (-4.5,0) to (-3.5,0);
        \filldraw[black] (-3.5,0) node[black,below=1mm]{$-\eta_2$} circle (1.5pt);
        
        \draw[-,very thick,dashed] (-3.5,0) to (-2,0);
        \draw[-,very thick] (-2,0) to (-0.5,0);
        \draw[->,very thick] (-2,0) to (-1,0) node[black,above=0.5mm]{\small $\Sigma_{2_{\alpha_1}}$};
        \filldraw[black] (-0.5,0) node[black,below=1mm]{$-\eta_1$} circle (1.5pt);
        \filldraw[black] (-2,0) node[black,below=1mm]{$-\alpha_1$} circle (1.5pt);
        
        \draw [lightgray,dashed,very thick] (-0.5,0) to (0.5,0);
        
        \filldraw[black] (0.5,0) node[black,below=1mm]{$\eta_1$} circle (1.5pt);
        \draw[-,very thick,black] (0.5,0) to (2,0);
        \draw[->,very thick,black] (0.5,0) to (1.5,0)node[black,above=0.5mm]{\small $\Sigma_{1_{\alpha_1}}$};
         \filldraw[black] (2,0) node[black,below=1mm]{$\alpha_1$} circle (1.5pt);
        \draw[-,very thick,black,dashed] (2,0) to (3.5,0);
        \filldraw[black] (3.5,0) node[black,below=1mm]{$\eta_{2}$} circle (1.5pt);
        
        \draw[-,dashed,very thick,lightgray] (3.5,0) to (4.5,0);
        \filldraw[black] (4.5,0) node[black,below=1mm]{$\eta_{3}$} circle (1.5pt);
        
        \draw[very thick,dashed] (4.5,0) to (6.5,0);
        \filldraw[black] (6.5,0) node[black,below=1mm]{$\eta_{4}$} circle (1.5pt);
        
        \draw[lightgray,very thick,dashed] (6.5,0) to (7.5,0);
    \end{tikzpicture}
    \caption{{The solid lines represent $\Sigma_{1_{\alpha_1}}$ and $\Sigma_{2_{\alpha_1}}$, where $\eta_1 < \alpha_1 < \eta_2$.}}
    \label{Sigma-alpha1}
\end{figure}

    Similarly, introduce the $g$ function, denoted as $g_{\alpha_1}$, which satisfies the following scalar RH problem:
	\begin{equation}\label{g1 jump}
		\begin{aligned}
			&g_{\alpha_{1,+}}(\lambda)+g_{\alpha_{1,-}}(\lambda)+8\lambda^3-8\xi\lambda=0, &&\lambda\in \Sigma_{1_{\alpha_{1}}}\cup \Sigma_{2_{\alpha_{1}}},\\
			&g_{\alpha_{1,+}}(\lambda)-g_{\alpha_{1,-}}(\lambda)={\Omega_{\alpha_{1}}}, &&\lambda\in [-\eta_1,\eta_1],\\
			&g_{\alpha_{1}}(\lambda)=\mathcal{O}\left(\frac{1}{\lambda}\right), && \lambda\to \infty.
		\end{aligned}
	\end{equation}
	To further deform the RH problem (\ref{RHP with t}), it is required that the $g_{\alpha_1}$ function satisfies the following properties:
	\begin{equation}\label{condition for g1}
		\begin{aligned}
			&g_{\alpha_1}(\lambda)+4\lambda^3-4\xi\lambda=(\lambda\pm\alpha_{1})^{\frac{3}{2}}, &&\lambda\to\pm\alpha_{1},\\
			&\re\left(g_{\alpha_{1}}(\lambda)+4\lambda^3-4\xi\lambda\right)>0, &&\lambda \in (\alpha_{1},\eta_2)\cup (\eta_3,\eta_4),\\
			&\re\left(g_{\alpha_{1}}(\lambda)+4\lambda^3-4\xi\lambda\right)<0, &&\lambda \in (-\eta_2,-\alpha_{1})\cup (-\eta_4,-\eta_3),\\
			&-i(g_{\alpha_1,+}(\lambda)-g_{\alpha_1,-}(\lambda)) \text{ is real-valued and monotonically increasing},&& \lambda\in\Sigma_{1_{\alpha_{1}}},\\
			&-i(g_{\alpha_1,+}(\lambda)-g_{\alpha_1,-}(\lambda)) \text{ is real-valued and monotonically decreasing}, && \lambda\in\Sigma_{2_{\alpha_{1}}}.\\
		\end{aligned}
	\end{equation}
	The \(g_{\alpha_1}\) can be derived from its derivative \( g'_{\alpha_1} \) according to the uniqueness of \( g_{\alpha_1} \). Here the \( g'_{\alpha_1}d\lambda \) can be viewed as the second kind Abel differential on Riemann surface of genus one.
	\par	
	Define
	\begin{equation}\label{g1 derivative}		g'_{\alpha_1}(\lambda)=-12\lambda^2+4\xi+12\frac{Q_{\alpha_1,2}(\lambda)}{R_{\alpha_1}(\lambda)}-4\xi\frac{Q_{\alpha_1,1}(\lambda)}{R_{\alpha_1}(\lambda)},
	\end{equation}
	with
	$$
	R_{\alpha_1}(\lambda)=\sqrt{(\lambda^2-\eta_1^2)(\lambda^2-\alpha_{1}^2)},
	$$
	which is analytic for $\C\setminus\left(\Sigma_{1_{\alpha_1}}\cup\Sigma_{2_{\alpha_1}}\right)$, and takes positive real value for $\lambda>\alpha_{1}$.
	From the definition of $g_{\alpha_1}'(\lambda)$ in (\ref{g1 derivative}), we can derive the expression of $g_{\alpha_1}(\lambda)$:
	\begin{equation}\label{g1 formular}
		g_{\alpha_{1}}(\lambda)=-4\lambda^3+4\xi\lambda+12\int_{\alpha_{1}}^{\lambda}\frac{Q_{\alpha_1,2}(\zeta)}{R_{\alpha_1}(\zeta)}d\zeta-4\xi\int_{\alpha_{1}}^{\lambda}\frac{Q_{\alpha_1,1}(\zeta)}{R_{\alpha_1}(\zeta)}d\zeta.
	\end{equation}
	On the other hand, suppose that
	\begin{equation}\label{Q alpha 1}
		Q_{\alpha_1,1}(\lambda)=\lambda^2+c_{\alpha_1,1},\quad Q_{\alpha_1,2}(\lambda)=\lambda^4-\frac{1}{2}(\alpha_1^2+\eta_1^2)\lambda^2+c_{\alpha_1,2},
	\end{equation}
	where
	\begin{equation}\label{c alpha 1}
		c_{\alpha_1,1}=-\alpha_1^2+\alpha_1^2\frac{E(m_{\alpha_1})}{K(m_{\alpha_1})},\quad c_{\alpha_1,2}=\frac{1}{3} \alpha_1^2\eta_1^2+\frac{1}{6}(\alpha_1^2+\eta_1^2)c_{\alpha_1,1},\quad m_{\alpha_1}=\frac{\eta_1}{\alpha_1}.
	\end{equation}
	Here $K(m_{\alpha_1})$ and $E(m_{\alpha_1})$ are the first and the second kind complete elliptic integral, respectively, i.e., $K(m)=\int_{0}^{\frac{\pi}{2}} \frac{d\vartheta}{\sqrt{1-m^2\sin \vartheta^2}}$ and $E(m)=\int_{0}^{\frac{\pi}{2}} {\sqrt{1-m^2\sin \vartheta^2}}{d\vartheta}$. The $Q_{\alpha_1,1}$ and $Q_{\alpha_1,2}$ are determined by the conditions in (\ref{g1 jump}), and the first property in (\ref{condition for g1}) about the behavior near $\pm\alpha_{1}$ implies that the parameter $\alpha_1$ is determined by
	\begin{equation}\label{alpha1 formular}
		\xi=3\frac{Q_{\alpha_1,2}(\pm \alpha_{1})}{Q_{\alpha_1,1}(\pm \alpha_{1})}=\frac{1}{2}(\alpha_{1}^2+\eta_1^2)+(\alpha_{1}^2-\eta_1^2)\frac{K(m_{\alpha_1})}{E(m_{\alpha_1})},
	\end{equation}
	which states that the $\alpha_1$ is modulated by $\xi$. Indeed, rewrite (\ref{alpha1 formular}) as Whitham velocity
	\begin{equation}\label{Whitham-velocity} \xi=\frac{x}{4t}=\frac{\eta_1^2}{2}\left[1+\frac{1}{m_{\alpha_1}^2}+2\left(\frac{1}{m_{\alpha_1}^2}-1\right)\frac{K(m_{\alpha_1})}{E(m_{\alpha_1})}\right]:=\frac{\eta_1^2}{2}W(m_{\alpha_1}).
	\end{equation}
	It follows that $\partial_{\alpha_1}W(m_{\alpha_1})>0$ for $\eta_1<\alpha_{1}<+\infty$, since the Whitham equation associated with the Whitham velocity (\ref{Whitham-velocity}) is strictly hyperbolic {\cite{Lev88}}. {Alternatively, one obtains the inequality $\partial_{m_{\alpha_{1}}}W(m_{\alpha_{1}})<0$ by direct calculation: 
	$$
    \begin{aligned}
        \partial_{m_{\alpha_1}} W(m_{\alpha_1})
        = & -\frac{2}{m_{\alpha_1}^3} - \frac{4K(m_{\alpha_1})}{m_{\alpha_1}^3 E(m_{\alpha_1})} \\
          & + 2\frac{1 - m_{\alpha_1}^2}{m_{\alpha_1}^2}
            \frac{
                \left( \frac{E(m_{\alpha_1})}{m_{\alpha_1}(1 - m_{\alpha_1}^2)} - \frac{K(m_{\alpha_1})}{m_{\alpha_1}} \right) E(m_{\alpha_1})
                - \frac{E(m_{\alpha_1}) - K(m_{\alpha_1})}{m_{\alpha_1}} K(m_{\alpha_1})
            }{E(m_{\alpha_1})^2} \\
        = & \frac{2K(m_{\alpha_1}) \left[ (1 - m_{\alpha_1}^2) K(m_{\alpha_1}) - 2(2 - m_{\alpha_1}^2) E(m_{\alpha_1}) \right]}{E(m_{\alpha_1})^2 m_{\alpha_1}^3} 
        < 0,
    \end{aligned}
$$
	where the first equality follows from
	\begin{equation}
		K'(m)=\frac{E(m)}{m(1-m^2)}-\frac{K(m)}{m},\quad E'(m)=\frac{E(m)-K(m)}{m},
		\end{equation}
	and the last inequality is given by $K(m)<\frac{E(m)}{\sqrt{1-m^2}}$. Therefore the chain rule gives
	\begin{equation}
		\partial_{\alpha_1}W(m_{\alpha_1})=\frac{\eta_1^2}{2}\partial_{m_{\alpha_{1}}}W(m_{\alpha_{1}})\partial_{\alpha_1}m_{\alpha_1}>0,\quad \eta_1<\alpha_1<+\infty.
		\end{equation}}
     According to the expansions of elliptic functions, it is immediate that
	$$
	\begin{aligned}
		\lim_{\alpha_1\to\eta_1}\xi=\eta_1^2, \ m_{\alpha_1}\to1,\ \text{and}\ \lim_{\alpha_1\to+\infty}\xi=+\infty, \ m_{\alpha_1}\to0.
	\end{aligned}
	$$
	Define
	\begin{equation}\label{xi critical}
		\xi_{crit}^{(1)}:=3\frac{Q_{\alpha_1,2}(\eta_2)}{Q_{\alpha_1,1}(\eta_2)}=\frac{1}{2}(\eta_1^2+\eta_2^2)+(\eta_2^2-\eta_1^2)\frac{K(m_{\eta_{2}})}{E(m_{\eta_{2}})},
	\end{equation}
	where $m_{\eta_{2}}=\frac{\eta_1}{\eta_2}$. Consequently, (\ref{alpha1 formular}) defines $\alpha_{1}$ as a monotone increasing function of $\xi$ in $[\eta_1^2, \xi_{crit}^{(1)}]$ by the implicit function theorem. 
{\begin{rmk}
Whitham modulation theory serves as a powerful tool for analyzing time evolution, particularly in describing the temporal development of rarefaction waves and dispersive shock waves. The theory inherently operates on the spectral plane, which aligns closely with the framework of Riemann-Hilbert problems. Here, we provide an example where Whitham modulation theory is applied to analyze time evolution. For solutions requiring a description via Riemann surfaces, two branch points ($\pm\eta_1$) remain fixed while the other two ($\pm\alpha_1$) evolve dynamically over time — a scenario naturally suited to Whitham modulation theory for characterizing the implicit dynamical dependence between $\alpha_1$ and $\xi$. In Section \ref{modulated 2-genus case}, Whitham modulation theory also plays a pivotal role in characterizing the dynamics of $\pm\alpha_2$ on a genus-two Riemann surface (with branch points $\pm\eta_1$, $\pm\eta_2$ and $\pm\eta_3$ fixed).
\end{rmk}}
\par  
    If the relationship (\ref{alpha1 formular}) is established, we can verify the first condition in (\ref{condition for g1}). Indeed, together with (\ref{c alpha 1}), rewrite the function $g'_{\alpha_{1}}(\lambda)$ as
	$$
	\begin{aligned}
		g'_{\alpha_1}(\lambda)&=-12\lambda^2+4\xi+12\frac{Q_{\alpha_1,2}(\lambda)-Q_{\alpha_1,2}(\alpha_1)}{R_{\alpha_1}(\lambda)}-4\xi\frac{Q_{\alpha_1,1}(\lambda)-Q_{\alpha_1,1}(\alpha_1)}{R_{\alpha_1}(\lambda)}\\
		&=-12\lambda^2+4\xi+12\frac{(\lambda^2-\alpha_1^2)}{R_{\alpha_{1}}(\lambda)}\left[\lambda^2-\left(\frac{\eta_1^2-\alpha_{1}^2}{2}+\frac{\xi}{3}\right)\right].
	\end{aligned}
	$$
	It follows that $g'_{\alpha_1}(\lambda)$ behaves like $(\lambda\pm\alpha_{1})^{\frac{1}{2}}$ as $\lambda\to\pm \alpha_{1}$, and $(\lambda\pm\eta_{1})^{-\frac{1}{2}}$ as $\lambda\to\pm \eta_{1}$, which is coincided with (\ref{condition for g1}).
	Before verifying the other conditions in (\ref{condition for g1}), we will determine ${\Omega}_{\alpha_{1}}$ in the jump condition (\ref{g1 jump}) and introduce the following Riemann surface of genus one related to $R_{\alpha_1}(\lambda)$:
	$$
	\mathcal{S}_{\alpha_1}:=\{(\lambda,\eta)|\eta^2=(\lambda^2-\eta_1^2)(\lambda^2-\alpha_1^2)\},
	$$
	with the $A$, $B$-cycles defined in Figure \ref{1genus}. Define
	$$
	\omega_{\alpha_1}=-\frac{ \alpha_1 d\zeta}{4K(m_{\alpha_1})R_{\alpha_1}(\zeta)},
	$$
	as the normalized holomorphic differential on $\mathcal{S}_{\alpha_1}$ with $\oint_{A}\omega_{\alpha_1}=1$ and $\oint_{B}\omega_{\alpha_1}=\tau_{\alpha_1}$. Reminding the definition of $g_{\alpha_{1}}(\lambda)$ and jump conditions in (\ref{g1 jump}), it implies that
	\begin{equation}\label{tilde Omega equations}
		\begin{aligned}
			&24\int_{\alpha_{1}}^{\eta_1}\frac{Q_{\alpha_1,2}(\zeta)}{R_{\alpha_1}(\zeta)}d\zeta-8\xi\int_{\alpha_{1}}^{
				\eta_1}\frac{Q_{\alpha_1,1}(\zeta)}{R_{\alpha_1}(\zeta)}d\zeta={\Omega}_{\alpha_{1}},\\
			&\int_{-\eta_1}^{\eta_1}\frac{Q_{\alpha_1,2}(\zeta)}{R_{\alpha_1}(\zeta)}d\zeta=\int_{-\eta_1}^{\eta_1}\frac{Q_{\alpha_1,1}(\zeta)}{R_{\alpha_1}(\zeta)}d\zeta=0.\\
		\end{aligned}
	\end{equation}
	In fact, $\frac{Q_{\alpha_1,2}(\lambda)}{R_{\alpha_1}(\lambda)}d\lambda$ and $\frac{Q_{\alpha_1,1}(\lambda)}{R_{\alpha_1}(\lambda)}d\lambda$ are the normalized Abel differentials of the second kind and by using the {Riemann Bilinear relations} \cite{Bertola Riemmansurface}, it follows from (\ref{tilde Omega equations}) that
	\begin{equation}\label{tilde Omega}
		\begin{aligned}
			{\Omega}_{\alpha_{1}}&=-12\int_{B}\frac{Q_{\alpha_1,2}(\zeta)}{R_{\alpha_1}(\zeta)}d\zeta+4\xi\int_{B}\frac{Q_{\alpha_1,1}(\zeta)}{R_{\alpha_1}(\zeta)}d\zeta\\
			&=2\pi i (4\xi \mathrm{res}_{\lambda=\pm \infty}\lambda^{-1} \omega_{\alpha_1}-4 \mathrm{res}_{\lambda=\pm \infty}\lambda^{-3} \omega_{\alpha_1})
			=2\pi i \alpha_1 \frac{\alpha_1^2+\eta_{1}^2-2\xi}{K(m_{\alpha_{1}})}\in i\R.
		\end{aligned}
	\end{equation}
	\begin{figure}[!h]
		\centering
		\begin{overpic}[width=0.7\textwidth]{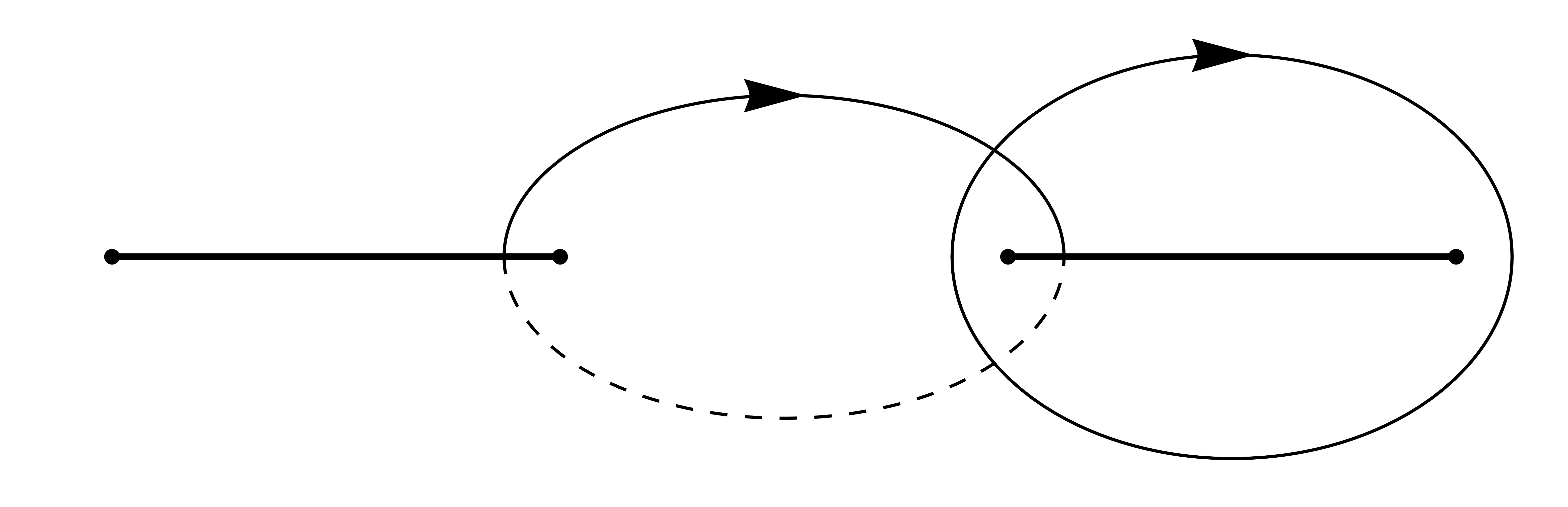}
			\put(43.3,27){ A}
			\put(72.3,29.3){ B}
			\put(32.3,13.3){ $-\eta_1$}
			\put(62.3,13.3){ $\eta_1$}
			\put(3,13.3){ $-\alpha_1$}
			\put(91.3,13.3){ $\alpha_1$}
		\end{overpic}
		\caption{The Riemann surface $\mathcal{S}_{\alpha_1}$ of genus one and its basis of cycles}
		\label{1genus}
	\end{figure}
	\par
	In order to recover the potential function $u(x,t)$, it is necessary to formulate $\partial_x(tg'_{\alpha_{1}}(\lambda))$ and $\partial_x(t{\Omega}_{\alpha_{1}})$. Thus we have the following lemma.
	
	\begin{lem}
		According the representation of $g'_{\alpha_{1}}(\lambda)$ in (\ref{g1 derivative}) and ${\Omega}_{\alpha_{1}}$ in (\ref{tilde Omega equations}), the following two identities hold
		\begin{equation}\label{x derivate of g1}
			\begin{aligned}
				&\partial_x(tg'_{\alpha_{1}}(\lambda))=1-\frac{Q_{\alpha_1,1}(\lambda)}{R_{\alpha_{1}}(\lambda)},\\
				&\partial_x(t{\Omega}_{\alpha_{1}})=-\frac{\pi i\alpha_{1}}{K(m_{\alpha_1})}.
			\end{aligned}
		\end{equation}
	\end{lem}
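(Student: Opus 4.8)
The plan is to exploit that the only dependence on $(x,t)$ sits in the ratio $\xi=\frac{x}{4t}$, hence in $\alpha_1=\alpha_1(\xi)$, and to turn $\partial_x$ into $\frac{1}{4t}\frac{d}{d\xi}$ everywhere. First I would use $4\xi t=x$ to rewrite the representation (\ref{g1 derivative}) as
\[
tg'_{\alpha_1}(\lambda)=-12\lambda^2 t+x+\frac{12tQ_{\alpha_1,2}(\lambda)-xQ_{\alpha_1,1}(\lambda)}{R_{\alpha_1}(\lambda)},
\]
and recall that the soft-edge condition (first line of (\ref{condition for g1})) forces both the modulation equation (\ref{alpha1 formular}), i.e. $3Q_{\alpha_1,2}(\alpha_1)=\xi Q_{\alpha_1,1}(\alpha_1)$, and the factorization $12Q_{\alpha_1,2}(\lambda)-4\xi Q_{\alpha_1,1}(\lambda)=12(\lambda^2-\alpha_1^2)(\lambda^2-\beta)$ with $\beta=\frac{\eta_1^2-\alpha_1^2}{2}+\frac{\xi}{3}$. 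Since $12tQ_{\alpha_1,2}-xQ_{\alpha_1,1}=t\bigl(12Q_{\alpha_1,2}-4\xi Q_{\alpha_1,1}\bigr)$, the claimed identity $\partial_x(tg'_{\alpha_1})=1-\frac{Q_{\alpha_1,1}}{R_{\alpha_1}}$ is equivalent, after differentiating the display in $x$ with $t,\lambda$ fixed, to
\[
\partial_x\!\left(\frac{12t(\lambda^2-\alpha_1^2)(\lambda^2-\beta)}{R_{\alpha_1}(\lambda)}\right)=-\frac{Q_{\alpha_1,1}(\lambda)}{R_{\alpha_1}(\lambda)}.
\]

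Next I would carry out this differentiation using the two elementary identities $\frac{\lambda^2-\alpha_1^2}{R_{\alpha_1}}=\frac{R_{\alpha_1}}{\lambda^2-\eta_1^2}$ and $\partial_x R_{\alpha_1}=-\frac{\alpha_1\alpha_1'(\lambda^2-\eta_1^2)}{R_{\alpha_1}}$, where $\alpha_1':=\partial_x\alpha_1=\frac{1}{4t}\frac{d\alpha_1}{d\xi}$. The left-hand side then collapses to $-\frac{12t[\beta'(\lambda^2-\alpha_1^2)+\alpha_1\alpha_1'(\lambda^2-\beta)]}{R_{\alpha_1}}$, and the whole identity reduces to the polynomial identity in $\lambda^2$
\[
12t\bigl[\beta'(\lambda^2-\alpha_1^2)+\alpha_1\alpha_1'(\lambda^2-\beta)\bigr]=\lambda^2+c_{\alpha_1,1}.
\]
The coefficient of $\lambda^2$ is automatic, since $\beta=\frac{\eta_1^2-\alpha_1^2}{2}+\frac{x}{12t}$ gives $\beta'=\frac{1}{12t}-\alpha_1\alpha_1'$, hence $12t(\beta'+\alpha_1\alpha_1')=1$. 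The constant term is the substantive step: it amounts to $12t\,\alpha_1'(\alpha_1^2-\beta)=\alpha_1\frac{E(m_{\alpha_1})}{K(m_{\alpha_1})}$, i.e. to the definition (\ref{c alpha 1}) of $c_{\alpha_1,1}$. I would verify it by computing $\frac{d\xi}{d\alpha_1}$ in closed form from (\ref{alpha1 formular}) using $\frac{dm_{\alpha_1}}{d\alpha_1}=-\frac{m_{\alpha_1}}{\alpha_1}$ and the standard derivative formulas $\frac{dK}{dm}=\frac{E-(1-m^2)K}{m(1-m^2)}$, $\frac{dE}{dm}=\frac{E-K}{m}$; both sides of the constant-term identity then reduce to $\frac{\alpha_1 K\bigl[2E(2-m_{\alpha_1}^2)-(1-m_{\alpha_1}^2)K\bigr]}{E^2}$. (One could alternatively phrase the first identity as a uniqueness statement for a scalar additive RH problem on $\mathcal{S}_{\alpha_1}$ — same additive jumps on $\Sigma_{1_{\alpha_1}}\cup\Sigma_{2_{\alpha_1}}$, no jump on $[-\eta_1,\eta_1]$, $\mathcal{O}(\lambda^{-2})$ at infinity, at most inverse–square–root growth at $\pm\eta_1,\pm\alpha_1$ — but matching the jumps leads back to the same polynomial identity, so nothing is gained.)

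For the second identity I would start from the closed form of $\Omega_{\alpha_1}$ in (\ref{tilde Omega}), that is $t\Omega_{\alpha_1}=2\pi i\,\frac{\alpha_1}{K(m_{\alpha_1})}\bigl(t(\alpha_1^2+\eta_1^2)-\frac{x}{2}\bigr)$, and differentiate in $x$. The explicit $-\frac{x}{2}$ contributes exactly $-\frac{\pi i\alpha_1}{K(m_{\alpha_1})}$, so it remains to see that the rest cancels; this cancellation reads $\bigl((\alpha_1^2+\eta_1^2)-2\xi\bigr)\frac{d}{d\xi}\!\bigl(\frac{\alpha_1}{K}\bigr)+\frac{2\alpha_1^2}{K}\frac{d\alpha_1}{d\xi}=0$, which follows at once from $(\alpha_1^2+\eta_1^2)-2\xi=-2(\alpha_1^2-\eta_1^2)\frac{K}{E}$ (by (\ref{alpha1 formular})) and $\frac{d}{d\alpha_1}\!\bigl(\frac{\alpha_1}{K}\bigr)=\frac{E}{(1-m_{\alpha_1}^2)K^2}$ (the same elliptic–integral derivative formula), together with $\alpha_1^2(1-m_{\alpha_1}^2)=\alpha_1^2-\eta_1^2$. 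As a cross-check one can instead antidifferentiate the first identity, $\partial_x(tg_{\alpha_1})(\lambda)=\lambda-\int_{\alpha_1}^{\lambda}\frac{Q_{\alpha_1,1}}{R_{\alpha_1}}d\zeta+c(x)$, and evaluate $\partial_x(t\Omega_{\alpha_1})$ as the resulting period $-\oint\frac{Q_{\alpha_1,1}}{R_{\alpha_1}}d\zeta$ around the cut $(\eta_1,\alpha_1)$, computed by the Reciprocity theorem against $\omega_{\alpha_1}$ exactly as in (\ref{tilde Omega}). The point requiring care throughout — and the main technical obstacle — is the legitimacy of differentiating through the moving soft endpoints $\pm\alpha_1$: the combination $\frac{12Q_{\alpha_1,2}-4\xi Q_{\alpha_1,1}}{R_{\alpha_1}}$ vanishes like $(\lambda\mp\alpha_1)^{1/2}$ there precisely because of the soft-edge condition, so its $x$-derivative is no worse than $(\lambda\mp\alpha_1)^{-1/2}$, and the boundary terms that would otherwise appear in $\partial_x$ of the $\alpha_1$-dependent integrals are controlled; it is exactly for this that the conditions in (\ref{condition for g1}) were imposed.
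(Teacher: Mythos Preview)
Your argument is correct, but it is considerably more laborious than the paper's. For the first identity the paper writes
\[
\partial_x\bigl(tg'_{\alpha_1}(\lambda)\bigr)=1-\frac{Q_{\alpha_1,1}(\lambda)}{R_{\alpha_1}(\lambda)}
+\partial_{\alpha_1}\!\left(\frac{12tQ_{\alpha_1,2}(\lambda)-xQ_{\alpha_1,1}(\lambda)}{R_{\alpha_1}(\lambda)}\right)\partial_x\alpha_1,
\]
and then observes that the bracketed $\partial_{\alpha_1}$-term, viewed as a differential on $\mathcal S_{\alpha_1}$, is holomorphic (the polar parts at $\infty_\pm$ are independent of $\alpha_1$, and the soft-edge factorization kills any singularity at $\pm\alpha_1$) with vanishing $A$-period (the $A$-cycle does not move with $\alpha_1$, and $\oint_A Q_{\alpha_1,j}/R_{\alpha_1}=0$ by construction). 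On a genus-one surface this forces it to vanish identically; no elliptic-integral derivative formulas are needed. For the second identity the paper simply applies the first: $\partial_x(t\Omega_{\alpha_1})=-\oint_B\partial_x\!\bigl[(12tQ_{\alpha_1,2}-xQ_{\alpha_1,1})/R_{\alpha_1}\bigr]=\oint_B Q_{\alpha_1,1}/R_{\alpha_1}$, and this last $B$-period is read off from the Reciprocity computation already done in~(\ref{tilde Omega}).

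What you do differently is replace this structural step by a direct verification: you expand $\partial_x$ of the factored form $12t(\lambda^2-\alpha_1^2)(\lambda^2-\beta)/R_{\alpha_1}$ and match coefficients, which forces you to compute $d\xi/d\alpha_1$ via the formulas for $dK/dm,\ dE/dm$; likewise for $\Omega_{\alpha_1}$ you differentiate the closed form~(\ref{tilde Omega}) and check a cancellation with the same ingredients. This is perfectly valid in genus one, where everything is explicit, but it obscures the mechanism and does not generalize: the paper reuses exactly the same ``holomorphic differential with zero $a$-periods vanishes'' argument verbatim in the genus-two region (Lemma~\ref{Lemma-Property-2}), where your explicit route would be unavailable. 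Incidentally, the ``uniqueness statement for a scalar additive RH problem'' you mention and set aside \emph{is} the paper's proof, and it does not reduce to your polynomial identity --- it bypasses it.
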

	\begin{proof}
		The equation (\ref{g1 derivative}) shows that
		$$
		\partial_x(tg'_{\alpha_{1}}(\lambda))=1-\frac{Q_{\alpha_{1},1}
			(\lambda)}{R_{\alpha_{1}}(\lambda)}+\partial_{\alpha_1}
		\left(12t\frac{Q_{\alpha_1,2}(\lambda)}{R_{\alpha_{1}}(\lambda)}
		-x\frac{Q_{\alpha_1,1}(\lambda)}{R_{\alpha_{1}}(\lambda)}\right)\partial_{x}{\alpha_{1}}.
		$$
		It suffices to show that $\partial_{\alpha_1}\left(12t\frac{Q_{\alpha_1,2}(\lambda)}{R_{\alpha_{1}}(\lambda)}d\lambda
		-x\frac{Q_{\alpha_1,1}(\lambda)}{R_{\alpha_{1}}(\lambda)}d\lambda\right)\equiv0$. Indeed, the term $12t\frac{Q_{\alpha_1,2}(\lambda)}{R_{\alpha_{1}}(\lambda)}d\lambda
		-x\frac{Q_{\alpha_1,1}(\lambda)}{R_{\alpha_{1}}(\lambda)}d\lambda$ is a holomorphic differential and its integral along the $A$-cycle is zero. Thus by the Riemann bilinear relation, it is identically zero. On the other hand, from the equation (\ref{tilde Omega equations}), it can be derived that
		$$		\partial_x(t{\Omega}_{\alpha_{1}})=-\partial_x\left(\oint_Bt(g'_{\alpha_{1}}(\lambda)+12\lambda^2-4\xi)\right)=\oint_B\frac{Q_{\alpha_1,1}(\lambda)}{R_{\alpha_{1}}(\lambda)}=-\frac{\pi i\alpha_{1}}{K(m_{\alpha_1})}.
		$$
	\end{proof}
	\par
	By the same way in discussing the asymptotic behavior of $u(x,0)$ as $x\to+\infty$ in Section \ref{potential behavior}, introduce
	$$	T_{\alpha_{1}}(\lambda)=Y(\lambda){e^{tg_{\alpha_{1}}(\lambda)\sigma_3}}f_{\alpha_{1}}(\lambda)^{\sigma_3},
	$$
	where $f_{\alpha_{1}}(\lambda)$ satisfies the following scalar RH problem:
	\begin{equation}\label{RHP for falpha1}
		\begin{aligned}			&f_{\alpha_{1},+}(\lambda)f_{\alpha_{1},-}(\lambda)={r(\lambda)},&&\lambda\in(\eta_{1},\alpha_{1}),\\
			&f_{\alpha_{1},+}(\lambda)f_{\alpha_{1},-}(\lambda)=\frac{1}{r(\lambda)},&&\lambda\in(-\alpha_{1},-\eta_{1}),\\
			&\frac{f_{\alpha_{1},+}(\lambda)}{f_{\alpha_{1},-}(\lambda)}=e^{{\Delta}_{\alpha_{1}}},&&\lambda\in [-\eta_1,\eta_1],\\
			&f_{\alpha_{1}}(\lambda)=1+\mathcal{O}\left(\frac{1}{\lambda}\right),&&\lambda\to\infty.
		\end{aligned}
	\end{equation}
	It is immediate that the function $f_{\alpha_{1}}(\lambda)$ is formulated as
	\begin{equation}
		\begin{aligned}
			f_{\alpha_{1}}(\lambda)=&\exp\left(\frac{R_{\alpha_{1}}(\lambda)}{2\pi i}\left[
			\int_{\eta_1}^{\alpha_1}\frac{\log{r(\zeta)}}{R_{\alpha_{1},+}(\zeta)(\zeta-\lambda)}d\zeta
			+\int_{-\alpha_1}^{-\eta_1}\frac{\log\frac{1}{r(\zeta)}}{R_{\alpha_{1},+}(\zeta)(\zeta-\lambda)}d\zeta
			+\int_{-\eta_1}^{\eta_1}\frac{{\Delta}_{\alpha_{1}}}{R_{\alpha_{1}}(\zeta)(\zeta-\lambda)}d\zeta\right]\right)
		\end{aligned},
	\end{equation}
	in which the normalization condition indicates that
	$ {\Delta}_{\alpha_{1}}=\frac{\alpha_{1}}{K(m_{\alpha_{1}})}\int_{\eta_1}^{\alpha_1}\frac{\log{r(\zeta)}}{R_{\alpha_{1},+}(\zeta)}d\zeta
	$. Thus the row vector $T_{\alpha_{1}}(\lambda)$ obeys the following RH problem:
	\begin{equation}\label{T alpha1 RHP}
		\begin{aligned}
			T_{\alpha_{1},+}(\lambda)=T_{\alpha_{1},-}(\lambda)
			\begin{cases}
				\begin{aligned}
					&\begin{pmatrix}
						e^{t(g_{\alpha_{1},+}(\lambda)-g_{\alpha_{1},-}(\lambda))} \frac{f_{\alpha_{1},+}(\lambda)}{f_{\alpha_{1},-}(\lambda)} & -i \\
						0 & e^{-t(g_{\alpha_{1},+}(\lambda)-g_{\alpha_{1},-}(\lambda))} \frac{f_{\alpha_{1},-}(\lambda)}{f_{\alpha_{1},+}(\lambda)}
					\end{pmatrix}
					, & \lambda \in \Sigma_{1_{\alpha_{1}}},\\
					& \begin{pmatrix}
						e^{t(g_{\alpha_{1},+}(\lambda)-g_{\alpha_{1},-}(\lambda))} \frac{f_{\alpha_{1},+}(\lambda)}{f_{\alpha_{1},-}(\lambda)} & 0 \\
						i  & e^{-t(g_{\alpha_{1},+}(\lambda)-g_{\alpha_{1},-}(\lambda))} \frac{f_{\alpha_{1},-}(\lambda)}{f_{\alpha_{1},+}(\lambda)}
					\end{pmatrix}, & \lambda \in \Sigma_{2_{\alpha_{1}}},   \\
					& \begin{pmatrix}
						e^{t {\Omega}_{\alpha_{1}}+{\Delta}_{\alpha_{1}}}  & 0 \\
						0 & e^{-t {\Omega}_{\alpha_{1}}-{\Delta}_{\alpha_{1}}}
					\end{pmatrix}, & \lambda \in\left[-\eta_1, \eta_1\right], \\
					& \begin{pmatrix}
						1 & \frac{-ir(\lambda)}{ f^2(\lambda)}e^{-2t(g(\lambda)+4\lambda^3-4\xi\lambda)} \\
						0 & 1
					\end{pmatrix}, & \lambda \in[\alpha,\eta_2]\cup\Sigma_{3}, \\
					& \begin{pmatrix}
						1 & 0 \\
						{ir(\lambda)f^2(\lambda)}e^{2t(g(\lambda)+4\lambda^3-4\xi\lambda)} & 1
					\end{pmatrix}, & \lambda \in[-\eta_2,\alpha]\cup\Sigma_{4},\\
					& \begin{pmatrix}
						1 & 0 \\
						0 & 1
					\end{pmatrix}, & \lambda \in[\eta_2,\eta_3]\cup[-\eta_3,-\eta_2],
				\end{aligned}
			\end{cases} \\
		\end{aligned}
	\end{equation}
	and
	$$
	T_{\alpha_{1}}(\lambda)= \begin{pmatrix}
		1&1
	\end{pmatrix}+\mathcal{O}\left(\frac{1}{\lambda}\right),\quad \lambda\to\infty.
	$$
	\par
	Similarly, factorize the RH problem (\ref{T alpha1 RHP}) by introducing
	\begin{equation}
		S_{\alpha_{1}}(\lambda)=
		\begin{cases}
			T_{\alpha_{1}}(\lambda)\begin{pmatrix}
				1 & 0 \\
				\frac{if_{\alpha_{1}}^2(\lambda)}{\hat r(\lambda)}e^{2t(g_{\alpha_{1}}(\lambda)+4\lambda^3-4\xi\lambda)} & 1
			\end{pmatrix}, & {\rm inside~the~contour}~\mathcal{C}_{1,\alpha_{1}},
			\\
			T_{\alpha_{2}}(\lambda)\begin{pmatrix}
				1 & \frac{-i}{\hat r(\lambda)f_{\alpha_{1}}^2(\lambda)}e^{-2t(g_{\alpha_{1}}(\lambda)+4\lambda^3-4\xi\lambda)} \\
				0 & 1
			\end{pmatrix}, & {\rm inside~the~contour}~\mathcal{C}_{2,\alpha_{1}},
			\\
			T_{\alpha_{2}}(\lambda), & {\rm elsewhere},
		\end{cases}
	\end{equation}
	where $\mathcal{C}_{j,\alpha_{1}}$ are around $\Sigma_{j_{\alpha_{1}}}$ for $j=1$ and $j=2$, respectively. Moreover, the contours and jump matrices for $S_{\alpha_{1}}$ are depicted in Figure \ref{S alpha1}.
	\begin{figure}[h]
		\centering
		\begin{tikzpicture}[>=latex]
			\draw[lightgray,very thick,dashed] (-8.5,0) to (-7.5,0);
			\draw[lightgray,very thick] (-7.5,0) to (-5.5,0) node[black,above=10mm]{\small $\begin{pmatrix}
					1 & 0 \\
					\textcolor{gray}{ir(\lambda)f^2(\lambda)e^{2t(g(\lambda)+4\lambda^3-4\xi\lambda)}} & 1
				\end{pmatrix}$};
			\draw[->,dashed,very thick] (-5.8,1.2) to (-6.5,0.2);
			\draw[->,dashed,very thick] (-4.8,1.2) to (-4.0,0.2);
			\draw[->,very thick,lightgray] (-7.5,0) to (-6.5,0);
			\filldraw[lightgray] (-7.5,0) node[black,below=1mm]{$-\eta_{4}$} circle (1.5pt);
			\filldraw[lightgray] (-5.5,0) node[black,below=1mm]{$-\eta_{3}$} circle (1.5pt);
			\draw[->,very thick,lightgray] (-4.5,0) to (-4.0,0);
			\draw[-,very thick,lightgray] (-4.5,0) to (-3.5,0);
			\filldraw[lightgray] (-4.5,0) node[black,below=1mm]{$-\eta_{2}$} circle (1.5pt);
			\filldraw[lightgray] (-3.5,0) node[black,below=1mm]{$-\alpha_1$} circle (1.5pt);
			\draw[-,very thick,black] (-3.5,0) to (-0.5,0);
			\draw[->,very thick,black] (-3.5,0) to (-2.5,0) node[black,right=-1mm]{\small $\begin{pmatrix}
					0 & i \\
					i & 0
				\end{pmatrix}$};
			\filldraw[black] (-0.5,0) node[black,below=1mm]{$-\eta_1$} circle (1.5pt);
			\draw[-,very thick,lightgray] (-3.5,0) .. controls (-2.5,1.15) and (-1.5,1.15).. (-0.5,0);
			\draw[->,very thick,lightgray] (-2.1,0.85) to (-2.0,0.85) node[black,above]{\small $\mathcal{C}_{2,\alpha_{1}}$};
			\draw[-,very thick,lightgray,rotate around x=180] (-3.5,0) .. controls (-2.5,1.15) and (-1.5,1.15).. (-0.5,0) ;
			\draw[<-,very thick,lightgray,rotate around x=180] (-2.3,0.85) to (-2.2,0.85)node[black,below]{\small $\begin{pmatrix}
					1 & \textcolor{gray}{\frac{-i}{\hat r(\lambda)f_{\alpha_{1}}^2(\lambda)}e^{-2t(g_{\alpha_{1}}(\lambda)+4\lambda^3-4\xi\lambda)}} \\
					0 & 1
				\end{pmatrix}$} ;
			\draw [very thick] (-0.5,0) to (0.5,0);
			\draw [->,very thick] (-0.5,0) to (0,0)node[black,above=12mm]{\small $\begin{pmatrix}
					e^{t{\Omega}_{\alpha_{1}}+{\Delta}_{\alpha_{1}}} & 0\\
					0 & e^{-t{\Omega}_{\alpha_{1}}-{\Delta}_{\alpha_{1}}}
				\end{pmatrix}$};
			\draw[->,dashed,very thick] (0,1.2) to (0,0.2);
			\draw[lightgray,very thick,dashed] (7.5,0) to (8.5,0);
			\draw[lightgray,very thick] (5.5,0) to (7.5,0) ;
			\draw[->,very thick,lightgray] (5.5,0) to (6.5,0)node[black,above=10mm]{\small $\begin{pmatrix}
					1 & \textcolor{gray}{\frac{-ir(\lambda)}{ f^2(\lambda)}e^{-2t(g(\lambda)+4\lambda^3-4\xi\lambda)}} \\
					0 & 1
				\end{pmatrix}$};
			\draw[->,dashed,very thick] (5.8,1.2) to (6.5,0.2);
			\draw[->,dashed,very thick] (4.8,1.2) to (4.0,0.2);
			\filldraw[lightgray] (7.5,0) node[black,below=1mm]{$\eta_{4}$} circle (1.5pt);
			\filldraw[lightgray] (5.5,0) node[black,below=1mm]{$\eta_{3}$} circle (1.5pt);
			\draw[->,very thick,lightgray] (3.5,0) to (4.0,0);
			\draw[-,very thick,lightgray] (3.5,0) to (4.5,0);
			\filldraw[lightgray] (4.5,0) node[black,below=1mm]{$\eta_{2}$} circle (1.5pt);
			\filldraw[lightgray] (3.5,0) node[black,below=1mm]{$\alpha_1$} circle (1.5pt);
			\draw[-,very thick,black] (0.5,0) to (3.5,0);
			\draw[->,very thick,black] (0.5,0) to (1.5,0) node[black,right=-1mm]{\small $\begin{pmatrix}
					0 & -i \\
					-i & 0
				\end{pmatrix}$};
			\filldraw[black] (0.5,0) node[black,below=1mm]{$\eta_1$} circle (1.5pt);
			\draw[-,very thick,lightgray] (0.5,0) .. controls (1.5,1.15) and (2.5,1.15).. (3.5,0);
			\draw[->,very thick,lightgray] (2,0.855) to (2.1,0.855)node[black,above]{\small $\mathcal{C}_{1,\alpha_{1}}$} ;
			\draw[-,very thick,lightgray,rotate around x=180] (0.5,0) .. controls (1.5,1.15) and (2.5,1.15).. (3.5,0);
			\draw[<-,very thick,lightgray,rotate around x=180] (2.2,0.85) to (2.1,0.85)node[black,anchor=south west]at(1,2.1){\small $\begin{pmatrix}
					1 & 0 \\
					\textcolor{gray}{\frac{if_{\alpha_{1}}^2(\lambda)}{\hat r(\lambda)}e^{2t(g_{\alpha_{1}}(\lambda)+4\lambda^3-4\xi\lambda)}} & 1
				\end{pmatrix}$} ;
		\end{tikzpicture}
		\caption{{\protect\small
				The contours and the jump matrices for \( S_{\alpha_{1}}(\lambda) \): the gray entries in the matrices vanish exponentially as \( t \to +\infty \), and the gray contours also vanish as \( t \to +\infty \).}}
		\label{S alpha1}
	\end{figure}
	
	In order to transform the RH problem for $S_{\alpha_{1}}(\lambda)$ into a model problem, the other properties of $g_{\alpha_{1}}(\lambda)$ in (\ref{condition for g1}) should be illustrated.
	
	\begin{lem}
		The following inequalities are established:
		\begin{equation}\label{g1 signature}
			\begin{aligned}
				&\re\left(g_{\alpha_{1}}(\lambda)+4\lambda^3-4\xi\lambda\right)>0, &&\lambda\in\Sigma_{3}\cup(\alpha_1,\eta_2)\cup\mathcal{C}_{2,\alpha_{1}}\setminus\{-\eta_1,-\alpha_1\},\\
				&\re\left(g_{\alpha_{1}}(\lambda)+4\lambda^3-4\xi\lambda\right)<0, &&\lambda\in\Sigma_{4}\cup(-\eta_2,-\alpha_1)\cup\mathcal{C}_{1,\alpha_{1}}\setminus\{\eta_1,\alpha_1\}.\\
			\end{aligned}
		\end{equation}
	\end{lem}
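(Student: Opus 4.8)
The plan is to work with $h(\lambda):=g_{\alpha_1}(\lambda)+4\lambda^3-4\xi\lambda$, i.e. the exponent in the jump factors $e^{\pm 2t h(\lambda)}$ appearing in the $S_{\alpha_1}$-problem, and to show directly that $\re h$ has the asserted signs. From (\ref{g1 formular}) one reads off $h(\alpha_1)=0$, that $h$ is analytic on $\C\setminus[-\alpha_1,\alpha_1]$ with $h_++h_-=0$ on $\Sigma_{1_{\alpha_1}}\cup\Sigma_{2_{\alpha_1}}$ and $h_+-h_-=\Omega_{\alpha_1}\in i\R$ on $[-\eta_1,\eta_1]$, and — combining $h'=g'_{\alpha_1}+12\lambda^2-4\xi$ with the factorization of $g'_{\alpha_1}$ already displayed in the text — that
\[
h'(\lambda)=\frac{12(\lambda^2-\alpha_1^2)(\lambda^2-c)}{R_{\alpha_1}(\lambda)},\qquad c:=\tfrac12(\eta_1^2-\alpha_1^2)+\tfrac{\xi}{3}.
\]
Since $g'_{\alpha_1}$ is an even function and $g_{\alpha_1}=\mathcal{O}(1/\lambda)$ at infinity, $g_{\alpha_1}$, hence $h$, is odd. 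Because $\Sigma_4=-\Sigma_3$, $(-\eta_2,-\alpha_1)=-(\alpha_1,\eta_2)$ and $\mathcal{C}_{1,\alpha_1}=-\mathcal{C}_{2,\alpha_1}$, the second line of (\ref{g1 signature}) follows from the first by $\re h(-\lambda)=-\re h(\lambda)$; so it suffices to prove $h>0$ on $(\alpha_1,\eta_2)\cup\Sigma_3$ and $\re h<0$ on $\mathcal{C}_{1,\alpha_1}\setminus\{\eta_1,\alpha_1\}$.

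The first step is to pin down the sign $c<\eta_1^2$. Substituting $\xi=\tfrac12(\alpha_1^2+\eta_1^2)+(\alpha_1^2-\eta_1^2)K(m_{\alpha_1})/E(m_{\alpha_1})$ from (\ref{alpha1 formular}) gives $c=\tfrac13(2\eta_1^2-\alpha_1^2)+\tfrac13(\alpha_1^2-\eta_1^2)K(m_{\alpha_1})/E(m_{\alpha_1})$, so that $c<\eta_1^2$ is equivalent to $K(m_{\alpha_1})/E(m_{\alpha_1})<(1+m_{\alpha_1}^2)/(1-m_{\alpha_1}^2)$. I would settle this by bounding the integrands: $E(m)\ge\tfrac{\pi}{2}\sqrt{1-m^2}$ and $K(m)\le\tfrac{\pi}{2}(1-m^2)^{-1/2}$, whence $K(m)/E(m)\le(1-m^2)^{-1}<(1+m^2)(1-m^2)^{-1}$ for $0<m<1$. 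Thus $c<\eta_1^2<\alpha_1^2$.

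For the real bands, observe that for $\lambda>\alpha_1$ one has $R_{\alpha_1}(\lambda)>0$ and $\lambda^2-c>0$, hence $h'(\lambda)=12\sqrt{(\lambda^2-\alpha_1^2)/(\lambda^2-\eta_1^2)}\,(\lambda^2-c)>0$; so $h$ is real and strictly increasing on $(\alpha_1,+\infty)$ with $h(\alpha_1)=0$, and therefore $h>0$ on all of $(\alpha_1,+\infty)$, which contains $(\alpha_1,\eta_2)$ and $\Sigma_3=(\eta_3,\eta_4)$ because $\alpha_1<\eta_2<\eta_3$ in the present range of $\xi$. For the lens $\mathcal{C}_{1,\alpha_1}$, I would first use that $h$ is real on $(\alpha_1,\infty)$ to get $h(\bar\lambda)=\overline{h(\lambda)}$ on $\C\setminus[-\alpha_1,\alpha_1]$ by Schwarz reflection; combined with $h_++h_-=0$ on $\Sigma_{1_{\alpha_1}}$ this forces $\re h\equiv0$ and $\im h_-=-\im h_+$ there. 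Along $\Sigma_{1_{\alpha_1}}=(\eta_1,\alpha_1)$ one computes, with $R_{\alpha_1,+}=i\sqrt{(\lambda^2-\eta_1^2)(\alpha_1^2-\lambda^2)}$, that $\tfrac{d}{d\lambda}\im h_+(\lambda)=\im h'_+(\lambda)=12\sqrt{(\alpha_1^2-\lambda^2)/(\lambda^2-\eta_1^2)}\,(\lambda^2-c)>0$ using $c<\eta_1^2$ — which is precisely the monotonicity demanded in (\ref{condition for g1}). The Cauchy--Riemann relation $\partial_y\re h=-\partial_x\im h$ then shows that the derivative of $\re h$ in the outward normal direction off the arc equals $-(\im h_+)'<0$ on the upper side and $(\im h_-)'=-(\im h_+)'<0$ on the lower side, so $\re h<0$ in a punctured tubular neighborhood of every compact subarc of $(\eta_1,\alpha_1)$.

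It remains to carry this down to the endpoints, and that is the step I expect to be the main obstacle: near $\pm\eta_1$ and $\pm\alpha_1$ one must combine the tubular estimate with the local models $h\sim\text{const}\cdot(\lambda-\alpha_1)^{3/2}$ (first line of (\ref{condition for g1})) near $\alpha_1$ and $h=\text{const}+\text{const}\cdot(\lambda-\eta_1)^{1/2}+\mathcal{O}(\lambda-\eta_1)$ near $\eta_1$ to identify sub-sectors on which $\re h<0$, and then fix $\mathcal{C}_{1,\alpha_1}$ so that it runs inside both the tubular neighborhood and these sectors, giving $\re h<0$ on $\mathcal{C}_{1,\alpha_1}\setminus\{\eta_1,\alpha_1\}$; oddness then produces the companion inequality on $\mathcal{C}_{2,\alpha_1}$ and completes (\ref{g1 signature}). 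Matching the tubular estimate with the correct local sectors so that the deformed lens stays inside $\{\re h<0\}$ right up to — but not including — the endpoints (where the estimate necessarily degenerates and the modified-Bessel local parametrices take over) is the only genuinely delicate point; everything else reduces to the sign of $h'$, the one-line elliptic-integral inequality, and a Cauchy--Riemann computation.
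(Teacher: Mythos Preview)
Your argument is correct and follows essentially the same route as the paper: factor $h'=12(\lambda^2-\alpha_1^2)(\lambda^2-c)/R_{\alpha_1}$, pin down the sign of the quadratic factor, integrate on the real axis, and use Cauchy--Riemann to push the sign off the cut onto the lens. The one noteworthy difference is how you establish $c<\eta_1^2$. The paper does this by invoking the normalization condition (\ref{tilde Omega equations}): since $\int_{-\eta_1}^{\eta_1}(\zeta^2-\alpha_1^2)(\zeta^2-c)\,d\zeta/R_{\alpha_1}(\zeta)=0$ and $(\zeta^2-\alpha_1^2)/R_{\alpha_1}$ has constant sign on $(-\eta_1,\eta_1)$, the factor $\zeta^2-c$ must vanish somewhere in $[0,\eta_1]$. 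Your direct estimate via $K(m)/E(m)\le (1-m^2)^{-1}<(1+m^2)/(1-m^2)$ is a clean alternative that avoids the integral argument, at the cost of plugging in the explicit formula (\ref{alpha1 formular}) for $\xi$; the paper's route has the advantage that it generalizes verbatim to the higher-genus case (Lemma~\ref{Properties-1}), where no such elementary closed-form bound is available. Your endpoint discussion is more cautious than the paper's (which simply asserts the conclusion after the Cauchy--Riemann step), but the local models you write down are exactly what is needed and the paper tacitly relies on the same mechanism when it constructs the Airy and Bessel parametrices.
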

	\begin{proof}
		Recall that
		$$		g'_{\alpha_{1}}(\lambda)+12\lambda^2-4\xi=12\frac{(\lambda^2-\alpha_1^2)}{R_{\alpha_{1}}(\lambda)}\left[\lambda^2-\left(\frac{\eta_1^2-\alpha_{1}^2}{2}+\frac{\xi}{3}\right)\right].
		$$
		For $\lambda\in(\eta_1,\alpha_{1})$, one has
		\begin{equation}\label{quadratic1} g'_{\alpha_{1},+}(\lambda)+12\lambda^2-4\xi=12\frac{i\sqrt{\alpha_1^2-\lambda^2}}{\sqrt{\lambda^2-\eta_1^2}}\left[\lambda^2-\left(\frac{\eta_1^2-\alpha_{1}^2}{2}+\frac{\xi}{3}\right)\right],
		\end{equation}
		which indicates that $g_{\alpha_{1},+}$ is purely imaginary and the normalization condition in (\ref{tilde Omega equations}) implies the right hand side of the equation (\ref{quadratic1}) has a nonnegative root in $[0,\eta_1]$. Thus $-i(g'_{\alpha_{1},+}(\lambda)+12\lambda^2-4\xi)>0$ and together with Cauchy-Riemann equation, it follows that
		$$
		\re(g_{\alpha_{1},+}(\lambda)+4\lambda^3-4\xi\lambda)<0,
		$$ for $\lambda\in \mathbb{C}^{+}\cap\mathcal{C}_{1,\alpha_{1}}\setminus\{\eta_1,\alpha_1\}$. Similarly, the signature of $\re\left(g_{\alpha_{1}}(\lambda)+4\lambda^3-4\xi\lambda\right)$ on $\lambda\in\mathcal{C}_{2,\alpha_{1}}\setminus\{-\alpha_1,-\eta_1\}$ can also be proved.
		\par		
		For $\lambda\in(\alpha_{1},\eta_2)\cup\Sigma_{3}$, one has
		\begin{equation}\label{quadratic2} g'_{\alpha_{1}}(\lambda)+12\lambda^2-4\xi=12\frac{\sqrt{\lambda^2-\alpha_1^2}}{\sqrt{\lambda^2-\eta_1^2}}\left[\lambda^2-\left(\frac{\eta_1^2-\alpha_{1}^2}{2}+\frac{\xi}{3}\right)\right].
		\end{equation}
		Since the nonnegative root of the right hand side of the equation (\ref{quadratic1}) lives in $[0,\eta_1]$, it is immediate that $(	g'_{\alpha_{1}}(\lambda)+12\lambda^2-4\xi)>0$. Consequently, by the definition of $g_{\alpha_{1}}$ in (\ref{g1 formular}), it follows $\re\left(g_{\alpha_{1}}(\lambda)+4\lambda^3-4\xi\lambda\right)>0$. In a similar way, the inequality on $(-\eta_2,-\alpha_{1})\cup\Sigma_{4}$ can be obtained.
	\end{proof}
	As a result, as $t\to+\infty$, the jump matrices of the RH problem for $S_{\alpha_{1}}(\lambda)$ restricted on the gray contours in Figure \ref{S alpha1} converge to identity matrix exponentially outside the points $\pm\alpha_{1}$ and $\pm \eta_1$. So we arrive at the model RH problem for $S^{\infty}_{\alpha_{1}}(\lambda)$ as follows:
	\begin{equation}\label{S infty alpha1 RHP}
		{S}_{\alpha_{1},+}^{\infty}(\lambda)={S}_{\alpha_{1},-}^{\infty}(\lambda) \begin{cases}{\begin{pmatrix}
					e^{t {\Omega_{\alpha_{1}}}+{\Delta}_{\alpha_{1}}} & 0 \\
					0 & e^{-t {\Omega_{\alpha_{1}}}-{\Delta}_{\alpha_{1}}}
			\end{pmatrix}}, & \lambda \in[-\eta_1, \eta_1], \\
			{\begin{pmatrix}
					0 & -i \\
					-i & 0
			\end{pmatrix}}, & \lambda \in \Sigma_{1_{\alpha_1}}=(\eta_1,\alpha_{1}), \\
			{\begin{pmatrix}
					0 & i \\
					i & 0
			\end{pmatrix}}, & \lambda \in \Sigma_{ 2_{\alpha_1}}=(-\alpha_{1},-\eta_{1}),\end{cases}
	\end{equation}
	and
	$$
	S^{\infty}_{\alpha_{1}}(\lambda)\to\begin{pmatrix}
		1&1
	\end{pmatrix}+\mathcal{O}\left(\frac{1}{\lambda}\right),\quad \lambda\to\infty,
	$$
	whose solution can be expressed explicitly by
	\begin{equation}\label{S infty alpha1 solution}		S^{\infty}_{\alpha_{1}}(\lambda)=\gamma_{\alpha_{1}}(\lambda)\frac{\vartheta_3(0;2\tau_{\alpha_{1}})}{\vartheta_3(\frac{t{\Omega_{\alpha_{1}}}+{\Delta_{\alpha_{1}}}}{2\pi i};2\tau_{\alpha_{1}})}\begin{pmatrix}
			\frac{\vartheta_3\left(2J_{\alpha_{1}}(\lambda)+\frac{t\Omega_{\alpha_{1}}+{\Delta_{\alpha_{1}}}}{2\pi i}-\frac{1}{2};2\tau_{\alpha_{1}}\right)}{\vartheta_3\left(2J_{\alpha_{1}}(\lambda)-\frac{1}{2};2\tau_{\alpha_{1}}\right)} & \frac{\vartheta_3\left(-2J_{\alpha_{1}}(\lambda)+\frac{t\Omega_{\alpha_{1}}+{\Delta_{\alpha_{1}}}}{2\pi i}-\frac{1}{2};2\tau_{\alpha_{1}}\right)}{\vartheta_3\left(-2J_{\alpha_{1}}(\lambda)-\frac{1}{2};2\tau_{\alpha_{1}}\right)}
		\end{pmatrix},
	\end{equation}
	where $\gamma_{\alpha_{1}}(\lambda)=\left(\frac{\lambda^2-\eta_{1}^2}{\lambda^2-\alpha_{1}^2}\right)^{\frac{1}{4}}$, $J_{\alpha_{1}}(\lambda)=\int_{\alpha_{1}}^\lambda\omega_{\alpha_1}$ and $\tau_{\alpha_{1}}=\oint_{B}\omega_{\alpha_1}$. More precisely, $J_{\alpha_{1},+}-J_{\alpha_{1},-}=-\tau_{\alpha_{1}}$ for $\lambda\in[-\eta_{1},\eta_{1}]$, $J_{\alpha_{1},+}+J_{\alpha_{1},-}=0~\mod(\mathbb{Z})$ for $\lambda\in\Sigma_{ 2_{\alpha_1}}\cup\Sigma_{ 1_{\alpha_1}}$ and $J_{\alpha_{1}}(\infty)=-\frac{1}{4}$. Thus it is immediate that (\ref{S infty alpha1 solution}) solves the RH problem (\ref{S infty alpha1 RHP}).
	\begin{thm}
		For $\xi=\frac{x}{4t}$, in the region $\eta_1^2<\xi<\xi_{crit}^{(1)}$, the large-time asymptotic behavior of the solution to the KdV equation with genus two soliton gas potential is described by
		\begin{equation}\label{1 modulated solution}			u(x,t)=\alpha_{1}^2-\eta_{1}^2-2\alpha_{1}^2\frac{E(m_{\alpha_{1}})}{K(m_{\alpha_{1}})}-2{\partial_x^2}\log\vartheta_3\left(\frac{\alpha_{1}}{2K(m_{\alpha_{1}})}(x-2(\alpha_{1}^2+\eta_{1}^2)t+\phi_{\alpha_{1}});2\tau_{\alpha_{1}}\right)+\mathcal{O}\left(\frac{1}{t}\right),
		\end{equation}
		where
		$$
		\phi_{\alpha_{1}}=\int_{\alpha_{1}}^{\eta_{1}}\frac{\log r(\zeta)}{R_{\alpha_{1},+}(\zeta)}\frac{d\zeta}{\pi i},
		$$
		and the parameter $\alpha_{1}$ is determined by (\ref{alpha1 formular}). Alternatively, the expression (\ref{1 modulated solution}) can also be rewritten as
		\begin{equation}\label{1 modulated solution dn}
			u(x,t)=\alpha_{1}^2-\eta_{1}^2-2\alpha_{1}^2 \mathrm{dn}^2\left(\alpha_{1}(x-2(\alpha_{1}^2+\eta_{1}^2)t+\phi_{\alpha_{1}})
			+K(m_{\alpha_{1}}); m_{\alpha_{1}}\right)+\mathcal{O}\left(\frac{1}{t}\right),
		\end{equation}
		where $\mathrm{dn}(z; m)$ is the Jacobi elliptic function with modulus $m_{\alpha_{1}}=\frac{\eta_{1}}{\alpha_{1}}$.
	\end{thm}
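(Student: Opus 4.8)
The plan is to finish the Deift--Zhou steepest descent analysis already initiated for $S_{\alpha_1}(\lambda)$. By the sign inequalities in (\ref{g1 signature}), the jump matrices of $S_{\alpha_1}$ on all the gray contours of Figure \ref{S alpha1} tend to the identity exponentially fast in $t$, uniformly away from fixed neighborhoods of the four points $\pm\eta_1$ and $\pm\alpha_1$; the only jumps that survive are those of the model problem (\ref{S infty alpha1 RHP}), whose explicit solution $S^{\infty}_{\alpha_1}(\lambda)$ is given in (\ref{S infty alpha1 solution}). First I would build local parametrices $P^{\diamond}(\lambda)$ on small discs $B_\rho^{\diamond}$ centered at each $\diamond\in\{\pm\eta_1,\pm\alpha_1\}$: near each endpoint the first line of (\ref{condition for g1}) forces $g_{\alpha_1}(\lambda)+4\lambda^3-4\xi\lambda$ to vanish like a $3/2$-power, while $f_{\alpha_1}$ and $\hat r$ contribute quarter powers, so after the appropriate conformal change of variable the standard modified-Bessel-function parametrix of \cite{Girotti CMP} applies and matches $S^{\infty}_{\alpha_1}$ on $\partial B_\rho^{\diamond}$ up to $O(t^{-1})$.

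Next I would form the error vector $\mathcal{E}_{\alpha_1}(\lambda)=S_{\alpha_1}(\lambda)P_{\alpha_1}(\lambda)^{-1}$, where $P_{\alpha_1}$ equals $S^{\infty}_{\alpha_1}$ outside the discs and $P^{\diamond}$ inside $B_\rho^{\diamond}$. Its jump matrix is $I+O(t^{-1})$ on the disc boundaries and exponentially close to $I$ on the remaining contours, so by the small-norm theory $\mathcal{E}_{\alpha_1}(\lambda)=(1\ 1)+\mathcal{E}_{\alpha_1,1}/\lambda+O(\lambda^{-2})$ with $\mathcal{E}_{\alpha_1,1}=O(t^{-1})$ uniformly in $\lambda$ on compacts away from the contour; this is the place where I would invoke the estimates of \cite{Girotti CMP} rather than reproduce them. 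Unwinding the transformations $Y=\mathcal{E}_{\alpha_1}P_{\alpha_1}e^{-xg_{\alpha_1}\sigma_3}f_{\alpha_1}^{-\sigma_3}$ and inserting them into the reconstruction formula $u(x,t)=2\frac{d}{dx}\lim_{\lambda\to\infty}\lambda(Y_1(\lambda)-1)$ then reduces everything to the large-$\lambda$ expansion of the first component of $S^{\infty}_{\alpha_1}(\lambda)e^{-xg_{\alpha_1}(\lambda)}f_{\alpha_1}(\lambda)^{-1}$, with the $O(t^{-1})$ error arising from the term $\mathcal{E}_{\alpha_1,1}/x$ for fixed $\xi=x/(4t)$.

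For this expansion I would combine three ingredients: the expansion of $e^{-xg_{\alpha_1}(\lambda)}$ read off from (\ref{g1 formular}), whose $\lambda^{-1}$ coefficient is governed by the residues $\mathrm{res}_{\infty}\lambda^{-1}\omega_{\alpha_1}$ and $\mathrm{res}_{\infty}\lambda^{-3}\omega_{\alpha_1}$ that already appear in (\ref{tilde Omega}); the expansion $f_{\alpha_1}(\lambda)=1+f_{\alpha_1,1}/\lambda+O(\lambda^{-2})$ with $f_{\alpha_1,1}$ independent of $x$; and the large-$\lambda$ behaviour of $J_{\alpha_1}(\lambda)=\int_{\alpha_1}^{\lambda}\omega_{\alpha_1}$, which feeds the theta quotient in (\ref{S infty alpha1 solution}) and, after using $\partial_x(t\Omega_{\alpha_1})=-\pi i\alpha_1/K(m_{\alpha_1})$ from (\ref{x derivate of g1}), produces the term $-\partial_x\log\vartheta_3\big(\tfrac{t\Omega_{\alpha_1}+\Delta_{\alpha_1}}{2\pi i};2\tau_{\alpha_1}\big)$. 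Applying $2\frac{d}{dx}$ annihilates the $x$-independent constant $f_{\alpha_1,1}$, doubles the $g_{\alpha_1}$ contribution into $\alpha_1^2-\eta_1^2-2\alpha_1^2E(m_{\alpha_1})/K(m_{\alpha_1})$ (using (\ref{c alpha 1}) and (\ref{alpha1 formular})), and turns the theta term into $-2\partial_x^2\log\vartheta_3$, which is exactly (\ref{1 modulated solution}); the phase shift $\phi_{\alpha_1}=\int_{\alpha_1}^{\eta_1}\frac{\log r(\zeta)}{R_{\alpha_1,+}(\zeta)}\frac{d\zeta}{\pi i}$ is precisely the contribution of $\Delta_{\alpha_1}$. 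Finally, the rewriting in the $\mathrm{dn}$ form (\ref{1 modulated solution dn}) is the classical identification of this genus-one Its--Matveev formula with the Jacobi elliptic travelling wave of KdV of the type (\ref{periodic-solution}), obtained from $\vartheta_3$ addition formulas. I expect the main obstacle to be neither of these algebraic reductions but the uniformity of the Bessel parametrix matching near $\pm\eta_1$ and $\pm\alpha_1$ as $\xi$ ranges over a fixed subinterval of $(\eta_1^2,\xi_{\mathrm{crit}}^{(1)})$; this is exactly where the monotonicity statements in the last two lines of (\ref{condition for g1}), verified through (\ref{quadratic1})--(\ref{quadratic2}), are needed.
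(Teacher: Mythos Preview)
Your overall strategy matches the paper's proof closely, and the asymptotic expansion you sketch is correct. There is, however, a genuine slip in the local parametrix construction. You assert that at each of the four points $\pm\eta_1,\pm\alpha_1$ the phase $g_{\alpha_1}(\lambda)+4\lambda^3-4\xi\lambda$ vanishes like a $3/2$-power and that the modified-Bessel parametrix of \cite{Girotti CMP} therefore applies. Neither claim is right. The first line of (\ref{condition for g1}) gives the $3/2$-power vanishing only at the \emph{moving} endpoints $\pm\alpha_1$; the paper records just above (\ref{condition for g1}) that $g_{\alpha_1}'(\lambda)+12\lambda^2-4\xi$ behaves like $(\lambda\pm\eta_1)^{-1/2}$ near the fixed endpoints $\pm\eta_1$, so the phase there vanishes like a $1/2$-power. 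Consequently two different local models are needed: the $3/2$-vanishing at $\pm\alpha_1$ calls for the \emph{Airy} parametrix (soft edge), while the $1/2$-vanishing at $\pm\eta_1$ calls for the modified-Bessel parametrix (hard edge). The paper states this explicitly in the proof: ``the local parametrix near $\pm\alpha_1$ and $\pm\eta_1$ can be described by Airy function and modified Bessel function respectively.''

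This does not affect the final answer, since both the Airy and Bessel parametrices match the outer solution to order $O(t^{-1})$ on the disc boundaries and therefore both feed an $O(t^{-1})$ contribution into $\mathcal E_{\alpha_1,1}$. But if you attempted to build a Bessel parametrix at $\pm\alpha_1$ under your stated $3/2$-power hypothesis, the construction would fail: the Bessel model RH problem is tailored to a square-root phase, not a $3/2$ one. So fix the identification of the two edge types, and the rest of your argument goes through exactly as in the paper.
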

	\begin{proof}
		By the same way as the analysis of the genus two KdV soliton gas potential $u(x,0)$ for $x\to+\infty$, take
		$$	Y_1(\lambda)=\left(S_{\alpha_{1},1}^{\infty}(\lambda)
		+\frac{(\mathcal{E}_{\alpha_1,1}(x,t))_1}{t\lambda}+\mathcal{O}\left(\frac{1}{\lambda^2}\right)\right)
		e^{-tg_{\alpha_{1}}(\lambda)}f_{\alpha_{1}}(\lambda)^{-1},
		$$
		in which the $f_{\alpha_{1}}(\lambda)$ has the asymptotics
		$$		f_{\alpha_{1}}(\lambda)=1+\frac{f_{\alpha_{1}}^{(1)}(\alpha_{1},\eta_{1})}{\lambda}+\mathcal{O}\left(\frac{1}{\lambda^2}\right),
		$$
		with
		$$
		f_{\alpha_{1}}^{(1)}(\alpha_{1},\eta_{1})=\int_{\alpha_{1}}^{\eta_{1}}\frac{\zeta^2\log r(\zeta)}{R_{\alpha_{1}}(\zeta)}\frac{d\zeta}{\pi i}-{\Delta_{\alpha_{1}}}\int_{-\eta_{1}}^{\eta_{1}}\frac{\zeta^2}{R_{\alpha_{1}}(\zeta)}\frac{d\zeta}{2\pi i},
		$$
		and the term $\frac{(\mathcal{E}_{\alpha_1,1}(x,t))_1}{t\lambda}$ is the first entry of the error vector corresponding to the modulated one-phase wave region. It is noted that the local paramertix near $\pm\alpha_1$ and $\pm\eta_1$ can be described by Airy function and modified Bessel function respectively and both of them contribute the error term $\mathcal{O}(t^{-1})$ in the asymptotic behavior of potential $u(x,t)$.
		\par		
		The derivative of term $e^{-tg_{\alpha_{1}}(\lambda)}$ on $x$ is
		$$ \partial_xe^{-tg_{\alpha_{1}}(\lambda)}=-\frac{1}{\lambda}\left[\frac{\alpha_{1}^2+\eta_{1}^2}{2}+\alpha_{1}^2\left(\frac{E(m_{\alpha_{1}})}{K(m_{\alpha_{1}})}-1\right)\right]+\mathcal{O}\left(\frac{1}{\lambda^2}\right).
		$$
		\par
		So the function $S_{\alpha_{1},1}^{\infty}(\lambda)$ behaves		
		$$		S_{\alpha_{1},1}^{\infty}(\lambda)=1+\frac{1}{\lambda}\left[\nabla\log\left(\vartheta_3\left(\frac{t\Omega_{\alpha_{1}}+{\Delta_{\alpha_{1}}}}{2\pi i};2\tau_{\alpha_{1}}\right)\right)-\nabla\log(\vartheta_3(0;2\tau_{\alpha_{1}}))\right] \frac{ \alpha_{1}}{2K(m_{\alpha_{1}})}
		+\mathcal{O}\left(\frac{1}{\lambda^2}\right),
		$$
		where $\nabla$ stands for the derivative of $\vartheta_3$. In particular, one has
		$$		\partial_xS_{\alpha_{1},1}^{\infty}(\lambda)=\frac{1}{\lambda}\nabla^2\log\left(\vartheta_3\left(\frac{t\Omega_{\alpha_{1}}+{\Delta_{\alpha_{1}}}}{2\pi i};2\tau_{\alpha_{1}}\right)\right)\frac{ \alpha_{1}}{2K(m_{\alpha_{1}})}\left(\frac{\partial_x(t\Omega_{\alpha_{1}}+{\Delta_{\alpha_{1}}})}{2\pi i}\right)+\mathcal{O}\left(\frac{1}{\lambda^2}\right).
		$$
		Together with (\ref{x derivate of g1}) and $\partial_x{\Delta_{\alpha_{1}}}=\frac{\partial_{\alpha_{1}}{\Delta_{\alpha_{1}}}\partial_{\xi}\alpha_{1}}{4t}$, it follows that
		$$		\partial_xS_{\alpha_{1},1}^{\infty}(\lambda)=-\frac{1}{\lambda}\left[\partial^2_x\log\left(\vartheta_3\left(\frac{t\Omega_{\alpha_{1}}+{\Delta_{\alpha_{1}}}}{2\pi i};2\tau_{\alpha_{1}}\right)\right)+\mathcal{O}\left(\frac{1}{t}\right)\right]
		+\mathcal{O}\left(\frac{1}{\lambda^2}\right).
		$$
		Combining all the above expansions with the fact that $\partial_x\alpha_{1}\sim
		\frac{1}{t}$ for $t\to+\infty$, it is sufficient to show the large-time asymptotics of $u(x,t)$ in (\ref{1 modulated solution}) for $\eta_{1}^2<\xi<\xi_{crit}^{(1)}$.
	\end{proof}

	\subsection{Unmodulated one-phase wave region}\label{Unmodulated 1-genus case}
	
	The equality (\ref{alpha1 formular}) shows that $\alpha_{1}(\xi_{crit}^{(1)})=\eta_2$.
	Thus when $\xi_{crit}^{(1)}<\xi<\xi_{crit}^{(2)}$, where the parameter $\xi_{crit}^{(2)}$ is determined by (\ref{xi crit}), the jump matrices on $\Sigma_{3,4}$ are still exponentially decreasing to identity matrix for $t \to +\infty$. The large-time behavior of $u(x,t)$ in this region is expressed by an unmodulated one-phase Jacobi elliptic wave. For convenience, we just need to modify the subscripts in Subsection \ref{Modulated-4-1}, such as \( g_{\eta_{2}} \), $R_{\eta_{2}}(\lambda)$,  \( \mathcal{S}_{\eta_{2}} \), $\Omega_{\eta_{2}}$ and $\Delta_{\eta_{2}}$, which are defined by replacing \( \alpha_{1} \) with \( \eta_{2} \). In particular, the equation (\ref{g1 formular}) becomes
	$$	g_{\eta_{2}}(\lambda)=-4\lambda^3+4\xi\lambda+12\int_{\eta_{2}}^{\lambda}\frac{Q_{\eta_{2},2}
		(\zeta)}{R_{\eta_{2}}(\zeta)}d\zeta-4\xi\int_{\eta_{2}}^{\lambda}\frac{Q_{\eta_{2},1}(\zeta)}
	{R_{\eta_{2}}(\zeta)}d\zeta,
	$$
	{with $R_{\eta_2}(\lambda):=\sqrt{(\lambda^2-\eta_1^2)(\lambda^2-\eta_2^2)}$,} which satisfies the following RH problem:
	$$
	\begin{aligned}
		&g_{\eta_{2,+}}(\lambda)+g_{\eta_{2,-}}(\lambda)+8\lambda^3-8\xi\lambda=0, &&\lambda\in \Sigma_{1}\cup \Sigma_{2},\\
		&g_{\eta_{2,+}}(\lambda)-g_{\eta_{2,-}}(\lambda)={\Omega_{\eta_{2}}}, &&\lambda\in [-\eta_1,\eta_1],\\
		&g_{\eta_{2}}(\lambda)=\mathcal{O}\left(\frac{1}{\lambda}\right), && \lambda\to \infty.
	\end{aligned}
	$$
	Moreover, the function $g_{\eta_{2}}(\lambda)$ obeys the similar conditions
	in (\ref{condition for g1}) as follows
	$$
	\begin{aligned}
		&g_{\eta_2}(\lambda)+4\lambda^3-4\xi\lambda=(\lambda\pm\eta_2)^{\frac{3}{2}}, &&\lambda\to\pm \eta_2,\\
		&\re\left(g_{\eta_2}(\lambda)+4\lambda^3-4\xi\lambda\right)>0, &&\lambda \in  \Sigma_3,\\
		&\re\left(g_{\eta_2}(\lambda)+4\lambda^3-4\xi\lambda\right)<0, &&\lambda \in  \Sigma_4,\\
		&-i(g_{\eta_2,+}(\lambda)-g_{\eta_2,-}(\lambda)) \text{ is real-valued and monotonically increasing},&& \lambda\in\Sigma_{1},\\
		&-i(g_{\eta_2,+}(\lambda)-g_{\eta_2,-}(\lambda)) \text{ is real-valued and monotonically decreasing}, && \lambda\in\Sigma_{2}.\\
	\end{aligned}
	$$
	\par
	Consequently, the RH problem for $Y(\lambda)$ can be transformed into a model problem for $S^{\infty}_{\eta_{2}}(\lambda)$ whose solution is
	$$ S^{\infty}_{\eta_2}(\lambda)=\gamma_{\eta_2}(\lambda)\frac{\vartheta_3(0;2\tau_{\eta_2})}{\vartheta_3(\frac{t{\Omega_{\eta_2}}+{\Delta_{\eta_2}}}{2\pi i};2\tau_{\eta_2})}\begin{pmatrix}
		\frac{\vartheta_3\left(2J_{\eta_2}(\lambda)+\frac{t\Omega_{\eta_2}+{\Delta_{\eta_2}}}{2\pi i}-\frac{1}{2};2\tau_{\eta_2}\right)}{\vartheta_3\left(2J_{\eta_2}(\lambda)-\frac{1}{2};2\tau_{\eta_2}\right)} & \frac{\vartheta_3\left(-2J_{\eta_2}(\lambda)+\frac{t\Omega_{\eta_2}+{\Delta_{\eta_2}}}{2\pi i}-\frac{1}{2};2\tau_{\eta_2}\right)}{\vartheta_3
			\left(-2J_{\eta_2}(\lambda)-\frac{1}{2};2\tau_{\eta_2}\right)}
	\end{pmatrix}.
	$$
    
        {In particular, the model problem satisfies the jump conditions as illustrated in Figure \ref{jumpfor-Sinfty-eta2}.}
        \begin{figure}[h!]
    \centering
    \begin{tikzpicture}
    [>=latex]

        \draw[-,,dashed,lightgray,very thick] (-6.5,0) to (-4.5,0);
        
        \draw[-,very thick] (-4.5,0) to (-1.5,0);
        \filldraw[black] (-4.5,0) node[black,below=1mm]{$-\eta_2$} circle (1.5pt);
        \draw[->,very thick] (-4.5,0) to (-3.0,0) node[black,above=0.5mm]{\small $\begin{pmatrix}
					0 & i \\
					i & 0
			\end{pmatrix}$};
        \filldraw[black] (-1.5,0) node[black,below=1mm]{$-\eta_1$} circle (1.5pt);
        
        \draw [-,very thick,red] (-1.5,0) to (1.5,0);
        \draw [->,very thick,red] (-1.5,0) to (0,0)node[red,above=2mm]{ $e^{(t {\Omega_{\eta_{2}}}+{\Delta}_{\eta_{2}})\sigma_3}$};
        \filldraw[black] (1.5,0) node[black,below=1mm]{$\eta_1$} circle (1.5pt);
        \draw[-,very thick,black] (1.5,0) to (4.5,0);
        \draw[->,very thick,black] (1.5,0) to (3,0)node[black,above=0.5mm]{\small $\begin{pmatrix}
					0 & -i \\
					-i & 0
			\end{pmatrix}$};

        \filldraw[black] (4.5,0) node[black,below=1mm]{$\eta_{2}$} circle (1.5pt);
        
       \draw[-,,dashed,lightgray,very thick] (4.5,0) to (6.5,0);
    \end{tikzpicture}
    \caption{{The jump contour for \( S^{\infty}_{\eta_2}(\lambda) \) and the associated jump matrices.}}
    \label{jumpfor-Sinfty-eta2}
\end{figure}
	\par
	Thus for $\xi_{crit}^{(1)}<\xi<\xi_{crit}^{(2)}$, the following theorem holds.
	\begin{thm}
		For $\xi=\frac{x}{4t}$, in the region $\xi_{crit}^{(1)}<\xi<\xi_{crit}^{(2)}$, the large-time asymptotic behavior of the solution to the KdV equation with genus two soliton gas potential is described by
		\begin{equation}\label{unmodulated 1 genus solution}
			u(x,t)=\eta_{2}^2-\eta_{1}^2-2\eta_{2}^2 \dn^2\left(\eta_{2}(x-2(\eta_{2}^2+\eta_{1}^2)t+\phi_{\eta_{2}})+K(m_{\eta_{2}}); m_{\eta_{2}}\right)+\mathcal{O}\left(\frac{1}{t}\right),
		\end{equation}
		where the modulus of the Jacobi elliptic function is $m_{\eta_{2}}=\frac{\eta_{1}}{\eta_{2}}$ and
		$$
		\phi_{\eta_{2}}=\int_{\eta_{2}}^{\eta_{1}}\frac{\log r(\zeta)}{R_{\eta_{2},+}(\zeta)}\frac{d\zeta}{\pi i}.
		$$
	\end{thm}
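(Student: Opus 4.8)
The plan is to re-run the one-phase Deift--Zhou analysis of Subsection~\ref{Modulated-4-1} essentially verbatim, with the modulation parameter $\alpha_1$ frozen at the spectral edge $\eta_2$. The starting observation is that by~(\ref{alpha1 formular}) one has $\alpha_1(\xi_{crit}^{(1)})=\eta_2$ and $\alpha_1$ is strictly increasing, so for $\xi>\xi_{crit}^{(1)}$ the active band has filled all of $\Sigma_1=(\eta_1,\eta_2)$; we therefore use as $g$-function the quantity $g_{\eta_2}$ built on the genus-one surface $\mathcal S_{\eta_2}=\{(\lambda,\eta)|\eta^2=(\lambda^2-\eta_1^2)(\lambda^2-\eta_2^2)\}$, that is, the $\alpha_1\mapsto\eta_2$ specialization of~(\ref{g1 formular}), together with the corresponding $R_{\eta_2}$, $\Omega_{\eta_2}$, $\Delta_{\eta_2}$ and holomorphic differential $\omega_{\eta_2}$ with period $\tau_{\eta_2}$; the polynomials $Q_{\eta_2,1},Q_{\eta_2,2}$ and constants $c_{\eta_2,1},c_{\eta_2,2}$ are still fixed by $\int_{-\eta_1}^{\eta_1}Q_{\eta_2,j}/R_{\eta_2}=0$ and the $\mathcal O(1/\lambda)$ normalization at $\infty$, exactly as in~(\ref{Q alpha 1})--(\ref{c alpha 1}). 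The first and only nontrivial task is to establish the sign estimates that make this $g$-function admissible.

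Concretely, one needs the analogue of~(\ref{g1 signature}): for $\xi_{crit}^{(1)}<\xi<\xi_{crit}^{(2)}$ the function $\re\big(g_{\eta_2}(\lambda)+4\lambda^3-4\xi\lambda\big)$ is positive on $\Sigma_3$ together with the lens around it and negative on $\Sigma_4$ together with its lens, while $g_{\eta_2}$ extends holomorphically across the gaps $[\eta_2,\eta_3]\cup[-\eta_3,-\eta_2]$. The argument imitates the computation in Subsection~\ref{Modulated-4-1}, with one structural difference: since the tuning~(\ref{alpha1 formular}) no longer holds, the factorization of $g'_{\eta_2}$ keeps an extra term,
\[
g'_{\eta_2}(\lambda)+12\lambda^2-4\xi=\frac{12\big(Q_{\eta_2,2}(\lambda)-Q_{\eta_2,2}(\eta_2)\big)-4\xi\big(Q_{\eta_2,1}(\lambda)-Q_{\eta_2,1}(\eta_2)\big)}{R_{\eta_2}(\lambda)}+\frac{4Q_{\eta_2,1}(\eta_2)\big(\xi_{crit}^{(1)}-\xi\big)}{R_{\eta_2}(\lambda)},
\]
so that $\eta_2$ is now a genuine hard edge (the phase $g_{\eta_2}(\lambda)+4\lambda^3-4\xi\lambda$ vanishes like $(\lambda\mp\eta_2)^{1/2}$ there). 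One reads off the sign on $\Sigma_3\cup\Sigma_4$ by tracking the real roots of the numerator as $\xi$ varies, and checks that admissibility persists precisely on $(\xi_{crit}^{(1)},\xi_{crit}^{(2)})$, with $\xi_{crit}^{(2)}$ (as defined by~(\ref{xi crit})) being the value at which the inequality first degenerates on $\Sigma_3\cup\Sigma_4$, i.e.\ where $\Sigma_3$ is about to become active.

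With the sign data in hand I would run the same chain $Y\mapsto T_{\eta_2}\mapsto S_{\eta_2}$: conjugate by $e^{xg_{\eta_2}(\lambda)\sigma_3}f_{\eta_2}(\lambda)^{\sigma_3}$, where $f_{\eta_2}$ solves the $\alpha_1\mapsto\eta_2$ version of~(\ref{RHP for falpha1}) with $\Delta_{\eta_2}=\frac{\eta_2}{K(m_{\eta_2})}\int_{\Sigma_1}\frac{\log r(\zeta)}{R_{\eta_2,+}(\zeta)}\,d\zeta$, and then open lenses around $\Sigma_1$ and $\Sigma_2$. By the sign lemma every jump matrix off $[-\eta_1,\eta_1]\cup\Sigma_1\cup\Sigma_2$ converges to the identity exponentially as $t\to+\infty$, so the outer model problem is the $\alpha_1\mapsto\eta_2$ version of~(\ref{S infty alpha1 RHP}), solved by the theta formula $S^\infty_{\eta_2}(\lambda)$ recorded just before the statement. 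The error RH problem $\mathcal E_{\eta_2}=S_{\eta_2}P_{\eta_2}^{-1}$ is then controlled by small-norm estimates; the hard-edge behaviour of $g_{\eta_2}+4\lambda^3-4\xi\lambda$ at $\pm\eta_1,\pm\eta_2$, together with the trivial jumps across the adjacent gaps, calls for local parametrices built from modified Bessel functions as in~\cite{Girotti CMP}, which contribute only $\mathcal O(t^{-1})$.

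Finally, reconstruct $u(x,t)$ from $u=2\frac{d}{dx}\lim_{\lambda\to\infty}\lambda(Y_1-1)$ using
\[
Y_1(\lambda)=\Big(S^\infty_{\eta_2,1}(\lambda)+\tfrac{(\mathcal E_{\eta_2,1})_1}{t\lambda}+\mathcal O(\lambda^{-2})\Big)\,e^{-tg_{\eta_2}(\lambda)}\,f_{\eta_2}(\lambda)^{-1}
\]
and the large-$\lambda$ expansions of the three factors, exactly as in the proof of~(\ref{1 modulated solution}). The place where this case is genuinely easier is that $\eta_2$ is $x$-independent: $\partial_x\eta_2=\partial_x\Delta_{\eta_2}=0$, so none of the spurious $\mathcal O(t^{-1})$ corrections from differentiating the modulus appear, and the only $x$-dependence inside $\vartheta_3$ enters affinely through $t\Omega_{\eta_2}$ (since $\Omega_{\eta_2}=2\pi i\,\eta_2(\eta_2^2+\eta_1^2-2\xi)/K(m_{\eta_2})$ is affine in $\xi=x/4t$), yielding the traveling phase $\tfrac{\eta_2}{2K(m_{\eta_2})}\big(x-2(\eta_2^2+\eta_1^2)t+\phi_{\eta_2}\big)$ with $\phi_{\eta_2}=\int_{\eta_2}^{\eta_1}\tfrac{\log r(\zeta)}{R_{\eta_2,+}(\zeta)}\tfrac{d\zeta}{\pi i}$. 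Collecting the $\lambda^{-1}$ coefficients gives
\[
u(x,t)=\eta_2^2-\eta_1^2-2\eta_2^2\frac{E(m_{\eta_2})}{K(m_{\eta_2})}-2\partial_x^2\log\vartheta_3\!\Big(\tfrac{\eta_2}{2K(m_{\eta_2})}\big(x-2(\eta_2^2+\eta_1^2)t+\phi_{\eta_2}\big);2\tau_{\eta_2}\Big)+\mathcal O\!\big(\tfrac1t\big),
\]
and~(\ref{unmodulated 1 genus solution}) follows from the classical identity that expresses $\tfrac{E(m)}{K(m)}+\partial_u^2\log\vartheta_3\big(\tfrac{u}{2K(m)};2\tau\big)$ as $\dn^2\big(u+K(m);m\big)$. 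I expect the sign lemma of the second paragraph to be the main obstacle: one has to confirm that the numerator in the displayed factorization keeps a fixed sign along all of $\Sigma_3$ and $\Sigma_4$ and along the deformed lens contours for every $\xi$ in the window, and that the first failure of this occurs exactly at $\xi_{crit}^{(2)}$; the rest is a direct transcription of the $\alpha_1$-analysis already carried out in Subsection~\ref{Modulated-4-1}.
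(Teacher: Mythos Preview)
Your proposal is correct and follows essentially the same approach as the paper: freeze $\alpha_1$ at $\eta_2$, transcribe the chain $Y\mapsto T_{\eta_2}\mapsto S_{\eta_2}$, verify the sign conditions on $\Sigma_{3,4}$ hold throughout $(\xi_{crit}^{(1)},\xi_{crit}^{(2)})$, solve the same genus-one model problem, and patch with Bessel parametrices at all four branch points. In fact your write-up is more careful than the paper's brief sketch in one respect: you correctly note that once the tuning~(\ref{alpha1 formular}) is dropped, $\pm\eta_2$ become genuine hard edges with $(\lambda\mp\eta_2)^{1/2}$ vanishing of the phase (hence Bessel, not Airy, parametrices), whereas the paper's displayed list of conditions for $g_{\eta_2}$ inadvertently copies the $(\lambda\pm\eta_2)^{3/2}$ line from the modulated case---though its concluding remark about Bessel parametrices at $\pm\eta_1,\pm\eta_2$ is consistent with your reading.
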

	\begin{rmk}
		The error estimation is omitted and the local parametrix near $\pm\eta_j$ ($j=1,2$) can be described by modified Bessel function which contribute the $\mathcal{O}(t^{-1})$ term in the expression (\ref{unmodulated 1 genus solution}).	
	\end{rmk}
	
	\subsection{Modulated two-phase wave region}\label{modulated 2-genus case}
	
	This section considers the case that the parameter $\xi$ obeys
	$$
	\xi_{crit}^{(2)}<\xi<\xi_{crit}^{(3)},
	$$
	where the parameters $\xi_{crit}^{(2)}$ and $\xi_{crit}^{(3)}$ are determined by {(\ref{xi2-final})} and (\ref{xi crit})  below. Define
	\begin{equation}
		\Sigma_{3_{\alpha_2}}=(\eta_3,\alpha_2),\ \text{and}\ \Sigma_{4_{\alpha_2}}=(-\alpha_2,-\eta_3),
	\end{equation}
	{see Figure \ref{Sigma-alpha2}}, where $\alpha_{2}$ is defined in (\ref{alpha2 formular}) below.
    \begin{figure}[h!]
    \centering
    \begin{tikzpicture}
    [>=latex]

        \draw[lightgray,very thick,dashed] (-7.5,0) to (-6.5,0);
        
        \filldraw[black] (-6.5,0) node[black,below=1mm]{$-\eta_{4}$} circle (1.5pt);
        \draw[very thick,dashed] (-6.5,0) to (-5.5,0);
        \draw[-,very thick] (-5.5,0) to (-4.5,0);
        \filldraw[black] (-4.5,0) node[black,below=1mm]{$-\eta_{3}$} circle (1.5pt);
        \filldraw[black] (-5.5,0) node[black,below=1mm]{$-\alpha_2$} circle (1.5pt);
         \draw[->,very thick] (-5.5,0) to (-4.8,0) node[black,above=0.5mm]{\small $\Sigma_{4_{\alpha_2}}$};
        
        \draw[-,very thick,dashed,lightgray] (-4.5,0) to (-3.5,0);
        \filldraw[black] (-3.5,0) node[black,below=1mm]{$-\eta_2$} circle (1.5pt);

        \draw[-,dashed,very thick] (-3.5,0) to (-2,0)node[black,above=0.5mm]{\small $\Sigma_{2}$};
        \draw[-,dashed,very thick] (-2,0) to (-0.5,0);
        
        \filldraw[black] (-0.5,0) node[black,below=1mm]{$-\eta_1$} circle (1.5pt);

        \draw [lightgray,dashed,very thick] (-0.5,0) to (0.5,0);
        
        \filldraw[black] (0.5,0) node[black,below=1mm]{$\eta_1$} circle (1.5pt);
        \draw[-,very thick,black,dashed] (0.5,0) to (2,0)node[black,above=0.5mm]{\small $\Sigma_{1}$};

        \draw[-,very thick,black,dashed] (2,0) to (3.5,0);
        \filldraw[black] (3.5,0) node[black,below=1mm]{$\eta_{2}$} circle (1.5pt);
        
        \draw[-,dashed,very thick,lightgray] (3.5,0) to (4.5,0);
        \filldraw[black] (4.5,0) node[black,below=1mm]{$\eta_{3}$} circle (1.5pt);
        
        \draw[very thick] (4.5,0) to (5.5,0);
        
        \draw[very thick,dashed] (5.5,0) to (6.5,0);
        \filldraw[black] (6.5,0) node[black,below=1mm]{$\eta_{4}$} circle (1.5pt);
        \filldraw[black] (5.5,0) node[black,below=1mm]{$\alpha_2$} circle (1.5pt);
        \draw[->,very thick,black] (4.5,0) to (5.2,0)node[black,above=0.5mm]{\small $\Sigma_{3_{\alpha_2}}$};
        \draw[lightgray,very thick,dashed] (6.5,0) to (7.5,0);
    \end{tikzpicture}
    \caption{{The solid lines represent $\Sigma_{3_{\alpha_2}}$ and $\Sigma_{4_{\alpha_2}}$, where $\eta_3 < \alpha_2 < \eta_4$.}}
    \label{Sigma-alpha2}
\end{figure}
    
    In addition, we introduce $g_{\alpha_{2}}(\lambda)$ which is subject to the following scalar RH problem:
	\begin{equation}\label{g2 jumps}
		\begin{aligned}
			&g_{\alpha_{2},+}(\lambda)+g_{\alpha_{2},-}(\lambda)+8 \lambda^3-8 \xi \lambda=0, && \lambda \in \Sigma_{1} \cup \Sigma_2 \cup \Sigma_{3_{\alpha_{2}}} \cup \Sigma_{4_{\alpha_{2}}}, \\
			&g_{\alpha_{2},+}(\lambda)-g_{\alpha_{2},-}(\lambda)={\Omega}_{\alpha_{2},0}, && \lambda \in\left[-\eta_1, \eta_1\right], \\
			&g_{\alpha_{2},+}(\lambda)-g_{\alpha_{2},-}(\lambda)={\Omega}_{\alpha_{2},1}, && \lambda \in\left[\eta_2, \eta_3\right]\cup\left[-\eta_3,-\eta_2\right], \\
			&g_{\alpha_{2}}(\lambda)=\mathcal{O}\left(\frac{1}{\lambda}\right), && \lambda \rightarrow \infty.
		\end{aligned}
	\end{equation}
	It remains to show that the function $g_{\alpha_{2}}(\lambda)$ satisfies the following properties:
	\begin{equation}\label{condition for g2}
		\begin{aligned}
			&g_{\alpha_{2}}(\lambda)+4\lambda^3-4\xi\lambda=(\lambda\pm\alpha_{2})^{\frac{3}{2}}, &&\lambda\to\pm\alpha_{2},\\
			&Re\left(g_{\alpha_{2}}(\lambda)+4\lambda^3-4\xi\lambda\right)>0, &&\lambda \in (\alpha_{2},\eta_4),\\
			&Re\left(g_{\alpha_{2}}(\lambda)+4\lambda^3-4\xi\lambda\right)<0, &&\lambda \in (-\eta_4,-\alpha_{2}),\\
			&-i(g_{\alpha_{2},+}(\lambda)-g_{\alpha_{2},-}(\lambda)) \text{ is real-valued and monotonically increasing},&& \lambda\in\Sigma_{3_{\alpha_{2}}}\cup\Sigma_{1},\\
			&-i(g_{\alpha_{2},+}(\lambda)-g_{\alpha_{2},-}(\lambda)) \text{ is real-valued and monotonically decreasing}, && \lambda\in\Sigma_{4_{\alpha_{2}}}\cup\Sigma_{2}.\\
		\end{aligned}
	\end{equation}
	Further, introduce
	\begin{equation}\label{g2 derivative}		g'_{\alpha_{2}}(\lambda)=-12\lambda^2+4\xi+12\frac{Q_{\alpha_{2},2}(\lambda)}{R_{\alpha_{2}}(\lambda)}-4\xi\frac{Q_{\alpha_{2},1}(\lambda)}{R_{\alpha_{2}}(\lambda)},
	\end{equation}
	with
	$$	R_{\alpha_{2}}(\lambda)=\sqrt{(\lambda^2-\eta_1^2)(\lambda^2-\eta_2^2)(\lambda^2-\eta_3^2)(\lambda^2-\alpha_{2}^2)},
	$$
	where $R_{\alpha_{2}}$ is analytic for $\C\setminus\Sigma_{\{1,2, 3_{\alpha_2},4_{\alpha_2}\}}$ with positive real value for $\lambda>\alpha_{2}$. More precisely, we have
	\begin{equation}		Q_{\alpha_{2},1}=\lambda^4+b_{\alpha_{2},1} \lambda^2+c_{\alpha_{2},1},\
		Q_{\alpha_{2},2}=\lambda^6-\frac{1}{2}(\eta_{1}^2+\eta_{2}^2+\eta_{3}^2+\alpha_{2}^2)\lambda^4+b_{\alpha_{2},2} \lambda^2+c_{\alpha_{2},2},\
	\end{equation}
	where the constants $b_{\alpha_{2},1},b_{\alpha_{2},2}$ and $c_{\alpha_{2},1},c_{\alpha_{2},2}$ can be determined by
	\begin{equation}\label{Q alpha2 condition}
		\int_{-\eta_{1}}^{\eta_{1}}\frac{Q_{\alpha_{2},j}(\zeta)}{R_{\alpha_{2}}(\zeta)}d\zeta=0,\quad \int_{\eta_{2}}^{\eta_{3}}\frac{Q_{\alpha_{2},j}(\zeta)}{R_{\alpha_{2}}(\zeta)}d\zeta=0,\quad j=1,2.
	\end{equation}
    {\begin{lem} \label{Lemma-Property-2}
		The following identities are established
		\begin{equation}\label{x derivate of g2}			\partial_x(tg'_{\alpha_{2}}(\lambda))=1-\frac{Q_{\alpha_2,1}(\lambda)}{R_{\alpha_{2}}(\lambda)},\
			\partial_{x}\left(t\Omega_{\alpha_{2},1}\right)=\oint_{b_1}\frac{Q_{\alpha_{2},1}(\zeta)}{R_{\alpha_{2}}(\zeta)}d\zeta,\
			\partial_{x}\left(t\Omega_{\alpha_{2},0}\right)=\oint_{b_2}\frac{Q_{\alpha_{2},1}(\zeta)}{R_{\alpha_{2}}(\zeta)}d\zeta.
		\end{equation}
	\end{lem}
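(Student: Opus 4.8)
\emph{Proof strategy.} The plan is to reduce all three identities to the single fact that a holomorphic differential on $\mathcal{S}_{\alpha_{2}}$ vanishes identically. Writing $\xi=x/(4t)$ and using (\ref{g2 derivative}), $tg'_{\alpha_{2}}(\lambda)=-12t\lambda^{2}+x+G(\lambda)$, where $G(\lambda):=12t\,Q_{\alpha_{2},2}(\lambda)/R_{\alpha_{2}}(\lambda)-x\,Q_{\alpha_{2},1}(\lambda)/R_{\alpha_{2}}(\lambda)$. At fixed $t$, the only dependence of $G$ on $x$ beyond the displayed factor $-x\,Q_{\alpha_{2},1}/R_{\alpha_{2}}$ enters through $\alpha_{2}=\alpha_{2}(\xi)$ (the coefficients $b_{\alpha_{2},j},c_{\alpha_{2},j}$ and the branch $R_{\alpha_{2}}$ all depend on $\alpha_{2}$), so the chain rule gives $\partial_{x}G(\lambda)=-Q_{\alpha_{2},1}(\lambda)/R_{\alpha_{2}}(\lambda)+\big(\partial_{\alpha_{2}}G(\lambda)\big)\,\partial_{x}\alpha_{2}$. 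I will show that $\varpi:=\big(\partial_{\alpha_{2}}G(\lambda)\big)\,d\lambda\equiv0$ on $\mathcal{S}_{\alpha_{2}}$; granting this, $\partial_{x}G(\lambda)=-Q_{\alpha_{2},1}(\lambda)/R_{\alpha_{2}}(\lambda)$, and therefore $\partial_{x}\!\big(tg'_{\alpha_{2}}(\lambda)\big)=1+\partial_{x}G(\lambda)=1-Q_{\alpha_{2},1}(\lambda)/R_{\alpha_{2}}(\lambda)$, the first identity.

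I would then check that $\varpi$ is a holomorphic differential on the genus-three surface $\mathcal{S}_{\alpha_{2}}$. Since $-12t\lambda^{2}+x$ is $\alpha_{2}$-independent, $\varpi=t\,\partial_{\alpha_{2}}g'_{\alpha_{2}}(\lambda)\,d\lambda$. Near $\infty_{\pm}$: by (\ref{g2 jumps}) $g'_{\alpha_{2}}(\lambda)=O(\lambda^{-2})$, and the non-decaying ($\lambda^{2}$ and $\lambda^{0}$) coefficients in the rational expression (\ref{g2 derivative}) for $g'_{\alpha_{2}}$ vanish for every admissible $\alpha_{2}$, so $\partial_{\alpha_{2}}g'_{\alpha_{2}}(\lambda)=O(\lambda^{-2})$ and $\varpi$ is holomorphic there. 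Near $\pm\alpha_{2}$: once (\ref{alpha2 formular}) holds, one has as in the text following (\ref{alpha2 formular}) $12tQ_{\alpha_{2},2}(\lambda)-xQ_{\alpha_{2},1}(\lambda)=12t(\lambda^{2}-\alpha_{2}^{2})(\lambda^{2}-\lambda_{1}^{2})(\lambda^{2}-\lambda_{2}^{2})$, hence
\[
G(\lambda)=\frac{12t\,(\lambda^{2}-\lambda_{1}^{2})(\lambda^{2}-\lambda_{2}^{2})}{\sqrt{(\lambda^{2}-\eta_{1}^{2})(\lambda^{2}-\eta_{2}^{2})(\lambda^{2}-\eta_{3}^{2})}}\,\sqrt{\lambda^{2}-\alpha_{2}^{2}},
\]
with $\lambda_{1},\lambda_{2}$ real-analytic in $\alpha_{2}$ and the prefactor regular and nonvanishing near $\pm\alpha_{2}$. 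Differentiating in $\alpha_{2}$, the only term carrying a negative power of $\sqrt{\lambda^{2}-\alpha_{2}^{2}}$ is $-\alpha_{2}\cdot(\text{regular})/\sqrt{\lambda^{2}-\alpha_{2}^{2}}$; in the local coordinate $\zeta$ with $\zeta^{2}=\lambda\mp\alpha_{2}$ this is $O(\zeta^{-1})$, and since $d\lambda$ is $O(\zeta)\,d\zeta$ there, the corresponding part of $\varpi$ is $O(1)\,d\zeta$, holomorphic; the remaining terms vanish to order $\zeta$. Near $\pm\eta_{1},\pm\eta_{2},\pm\eta_{3}$, $\varpi$ has the form $(\text{regular in }\lambda)\cdot d\lambda/\sqrt{(\lambda^{2}-\eta_{1}^{2})(\lambda^{2}-\eta_{2}^{2})(\lambda^{2}-\eta_{3}^{2})}$, hence is holomorphic. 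So $\varpi$ extends to a holomorphic differential on $\mathcal{S}_{\alpha_{2}}$.

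To conclude $\varpi\equiv0$ it suffices that its periods over a full basis of $a$-cycles vanish. Choosing the cycles $\{a_{\alpha_{2},j},b_{\alpha_{2},j}\}_{j=1}^{3}$ of Figure \ref{mathcal S alpha} to vary continuously with $\alpha_{2}$ on a short interval, with each $a_{\alpha_{2},j}$ kept off the moving cuts, one may differentiate under the integral: $\oint_{a_{\alpha_{2},j}}\varpi=t\,\partial_{\alpha_{2}}\oint_{a_{\alpha_{2},j}}g'_{\alpha_{2}}(\lambda)\,d\lambda$. The normalization conditions (\ref{Q alpha2 condition}) are designed precisely so that $g_{\alpha_{2}}$, built from them via (\ref{g2}), is single-valued along the $a$-cycles of $\mathcal{S}_{\alpha_{2}}$; equivalently, the second-kind differential $g'_{\alpha_{2}}(\lambda)\,d\lambda$ has vanishing $a$-periods. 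Hence $\oint_{a_{\alpha_{2},j}}g'_{\alpha_{2}}(\lambda)\,d\lambda=0$ for $j=1,2,3$ and every admissible $\alpha_{2}$, so $\oint_{a_{\alpha_{2},j}}\varpi=0$. A holomorphic differential on a compact Riemann surface with vanishing $a$-periods is identically zero (Riemann bilinear relation, cf.\ \cite{Bertola Riemmansurface}); therefore $\varpi\equiv0$, which proves $\partial_{x}\!\big(tg'_{\alpha_{2}}(\lambda)\big)=1-Q_{\alpha_{2},1}(\lambda)/R_{\alpha_{2}}(\lambda)$.

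For the other two identities I would integrate the first one over the $b$-cycles dual to the gaps. From (\ref{Omega alpha2}) and (\ref{g2 derivative}), $\Omega_{\alpha_{2},1}=2\int_{\alpha_{2}}^{\eta_{3}}\big(g'_{\alpha_{2}}(\zeta)+12\zeta^{2}-4\xi\big)\,d\zeta=-\oint_{b_{1}}\big(g'_{\alpha_{2}}(\lambda)+12\lambda^{2}-4\xi\big)\,d\lambda$, the orientation of $b_{1}$ fixed as in the treatment of $g_{\alpha_{1}}$, and likewise $\Omega_{\alpha_{2},0}=-\oint_{b_{2}}\big(g'_{\alpha_{2}}(\lambda)+12\lambda^{2}-4\xi\big)\,d\lambda$. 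Multiplying by $t$, using $\oint_{b_{k}}(12t\lambda^{2}-x)\,d\lambda=\oint_{b_{k}}d\big(4t\lambda^{3}-x\lambda\big)=0$ and $\oint_{b_{k}}d\lambda=0$, and differentiating in $x$ (legitimate after fixing $b_{1},b_{2}$ off the moving cuts) with the first identity in hand, one gets $\partial_{x}\!\big(t\Omega_{\alpha_{2},1}\big)=\oint_{b_{1}}Q_{\alpha_{2},1}(\zeta)/R_{\alpha_{2}}(\zeta)\,d\zeta$ and $\partial_{x}\!\big(t\Omega_{\alpha_{2},0}\big)=\oint_{b_{2}}Q_{\alpha_{2},1}(\zeta)/R_{\alpha_{2}}(\zeta)\,d\zeta$. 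The main obstacle is the holomorphy check at the moving branch points $\pm\alpha_{2}$: $\partial_{\alpha_{2}}R_{\alpha_{2}}^{-1}$ a priori produces a $(\lambda\mp\alpha_{2})^{-3/2}$ singularity in $\partial_{\alpha_{2}}(Q_{\alpha_{2},j}/R_{\alpha_{2}})$, and it is exactly the soft-edge condition (\ref{alpha2 formular}) --- which forces $12tQ_{\alpha_{2},2}-xQ_{\alpha_{2},1}$ to vanish at $\pm\alpha_{2}$ --- that cancels this down to a holomorphic differential; carrying out that cancellation carefully, together with the routine justification for differentiating the period integrals in $\alpha_{2}$, is the technical heart of the argument.
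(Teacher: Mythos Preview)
Your proposal is correct and follows essentially the same route as the paper: both reduce the first identity to showing that $\partial_{\alpha_{2}}\big(12tQ_{\alpha_{2},2}/R_{\alpha_{2}}-xQ_{\alpha_{2},1}/R_{\alpha_{2}}\big)\,d\lambda$ is a holomorphic differential on $\mathcal{S}_{\alpha_{2}}$ with vanishing $a$-periods and hence vanishes by the Riemann bilinear relation, and then obtain the $\Omega$-identities by writing $\Omega_{\alpha_{2},k}$ as a $b$-cycle integral of $-(g'_{\alpha_{2}}+12\lambda^{2}-4\xi)\,d\lambda$ and differentiating. Your treatment is in fact more explicit than the paper's at the delicate point---you spell out, via the factorization $12tQ_{\alpha_{2},2}-xQ_{\alpha_{2},1}=12t(\lambda^{2}-\alpha_{2}^{2})(\lambda^{2}-\lambda_{1}^{2})(\lambda^{2}-\lambda_{2}^{2})$ forced by (\ref{alpha2 formular}), exactly why the a priori $(\lambda\mp\alpha_{2})^{-3/2}$ singularity in $\partial_{\alpha_{2}}G$ cancels down to a holomorphic term---whereas the paper simply asserts that the differential ``doesn't have singularities at $\pm\alpha_{2}$ and $\infty$''.
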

	\begin{proof}
		From the definition of the function $g'_{\alpha_{2}}(\lambda)$ in (\ref{g2 derivative}), it is seen that
		$$	\partial_x(tg'_{\alpha_{2}}(\lambda))=1-\frac{Q_{\alpha_{2},1}
			(\lambda)}{R_{\alpha_{2}}(\lambda)}+\partial_{\alpha_{2}}\left(12t\frac{Q_{\alpha_{2},2}
			(\lambda)}{R_{\alpha_{2}}(\lambda)}-x\frac{Q_{\alpha_{2},1}(\lambda)}{R_{\alpha_{2}}(\lambda)}\right)
		\partial_x{\alpha_{2}},
		$$
		and since $\partial_{\alpha_{2}}\left(12t\frac{Q_{\alpha_{2},2}(\lambda)}{R_{\alpha_{2}}(\lambda)}
		d\lambda-x\frac{Q_{\alpha_{2},1}(\lambda)}{R_{\alpha_{2}}(\lambda)}d\lambda\right)$ doesn't have singularities at $\pm\alpha_{2}$ and $\infty$, the term $12t\frac{Q_{\alpha_{2},2}(\lambda)}{R_{\alpha_{2}}(\lambda)}
		d\lambda-x\frac{Q_{\alpha_{2},1}(\lambda)}{R_{\alpha_{2}}(\lambda)}d\lambda$ is a holomorphic differential 1-form. Simultaneously, it is normalized to zero on the $a_{\alpha_2,j}~(j=1,2,3)$-cycles, thus by Riemann bilinear relation one has
		$$
		\partial_{\alpha_{2}}\left(12t\frac{Q_{\alpha_{2},2}(\lambda)}{R_{\alpha_{2}}(\lambda)}
		-x\frac{Q_{\alpha_{2},1}(\lambda)}{R_{\alpha_{2}}(\lambda)}\right)=0.
		$$
		On the other hand, it is obvious that
		$$
		\Omega_{\alpha_{2},1}=-\oint_{b_1}(g'(\zeta)+12\zeta^2-4\xi)d\zeta,\ \Omega_{\alpha_{2},0}=-\oint_{b_2}(g'(\zeta)+12\zeta^2-4\xi)d\zeta.
		$$
		So it follows that
		$$
		\begin{aligned}
			&\partial_{x}\left(t\Omega_{\alpha_{2},1}\right)=-\partial_x\left(t\oint_{b_1}(g'(\zeta)+12\zeta^2-4\xi)d\zeta\right)=\oint_{b_1}\frac{Q_{\alpha_{2},1}(\zeta)}{R_{\alpha_{2}}(\zeta)}d\zeta,\\
			&\partial_{x}\left(t\Omega_{\alpha_{2},0}\right)=-\partial_x\left(t\oint_{b_2}(g'(\zeta)+12\zeta^2-4\xi)d\zeta\right)=\oint_{b_2}\frac{Q_{\alpha_{2},1}(\zeta)}{R_{\alpha_{2}}(\zeta)}d\zeta.\\
		\end{aligned}
		$$
	\end{proof}}
	In particular, the function $g_{\alpha_{2}}(\lambda)+4\lambda^3-4\xi\lambda$ is endowed with the behavior $\mathcal{O}((\lambda\pm\alpha_{2})^{\frac{3}{2}})$ as $\lambda\to\pm \alpha_{2}$ if and only if $\xi$ and $\alpha_{2}$ satisfy the following relationship:
	\begin{equation}\label{alpha2 formular}
		\xi=3\frac{Q_{\alpha_2,2}(\pm \alpha_{2})}{Q_{\alpha_2,1}(\pm \alpha_{2})}.
	\end{equation}
	Also, $g_{\alpha_{2}}(\lambda)+4\lambda^3-4\xi\lambda=\mathcal{O}((\lambda\pm\alpha_{2})^{\frac{3}{2}})$ as $\lambda\to\pm \alpha_{2}$ implies that $g'_{\alpha_{2}}(\alpha_{2})+12\alpha_{2}^2-4\xi=0$, thus the equation (\ref{alpha2 formular}) is established. Conversely, if the equation (\ref{alpha2 formular}) holds, the equation (\ref{g2 derivative}) can be rewritten as
	\begin{equation}\label{eq:dg}
	\begin{aligned}
		g'_{\alpha_{2}}(\lambda)+12\lambda^2-4\xi&=12\frac{Q_{\alpha_{2},2}(\lambda)-Q_{\alpha_{2},2}(\alpha_{2})}{R_{\alpha_{2}}(\lambda)}-4\xi\frac{Q_{\alpha_{2},1}(\lambda)-Q_{\alpha_{2},1}(\alpha_{2})}{R_{\alpha_{2}}(\lambda)}\\
		&=12\frac{(\lambda^2-\alpha_{2}^2)(\lambda^2-\lambda_{1}^2)(\lambda^2-\lambda_{2}^2)}{R_{\alpha_{2}}(\lambda)},
	\end{aligned}
	\end{equation}
	where $\lambda_{1}\in(0,\eta_{1})$ and $\lambda_{2}\in(\eta_{2},\eta_{3})$ due to the normalization condition (\ref{Q alpha2 condition}). Consequently, it follows $g'_{\alpha_{2}}(\lambda)+12\lambda^2-4\xi=\mathcal{O}(\sqrt{\lambda-\alpha_{2}})$ as $\lambda\to \alpha_{2}$. In addition, the expression of $g_{\alpha_{2}}(\lambda)$ is
	\begin{equation}\label{g2}		g_{\alpha_{2}}(\lambda)=-4\lambda^3+4\xi\lambda+12\int_{\alpha_{2}}^{\lambda}\frac{Q_{\alpha_{2},2}(\zeta)}{R_{\alpha_{2}}(\zeta)}d\zeta-4\xi\int_{\alpha_{2}}^{\lambda}\frac{Q_{\alpha_{2},1}(\zeta)}{R_{\alpha_{2}}(\zeta)}d\zeta.
	\end{equation}
	It is immediate that $g_{\alpha_{2}}(\lambda)$ satisfies the first condition in (\ref{condition for g2}).
    
{
As in the case of the modulated one-phase wave region, the Whitham equation~\eqref{alpha2 formular} is strictly hyperbolic; see~\cite{Lev88}. By following the approach developed in \cite{GravaTian2002}, we conclude that $\alpha_2$ is a monotonically increasing function of $\xi$.
More precisely, we introduce the following Abelian differentials of the second kind, referred to as the quasi-momentum and quasi-energy differentials:
\[
dp=\frac{Q_{\alpha_2,1}(\lambda)}{R_{\alpha_2}(\lambda)}\,d\lambda,
\qquad
dq=\frac{Q_{\alpha_2,2}(\lambda)}{R_{\alpha_2}(\lambda)}\,d\lambda,
\]
and define
\[
d\varphi=\bigl(g'_{\alpha_2}(\lambda)+12\lambda^2-4\xi\bigr)\,d\lambda
=12\,dq-4\xi\,dp.
\]
\begin{lem}
    Suppose the parameter $\alpha_2$ satisfies the equation \eqref{alpha2 formular}, then for $\alpha_2\in(\eta_3,\eta_4)$, $\alpha_2$ is a monotonically increasing function of $\xi$.
\end{lem}
\begin{proof}
By equation~\eqref{eq:dg}, we obtain
\begin{equation*}
\partial_{\xi}(d\varphi)
=-\left(
\frac{2\lambda_1\,\partial_{\xi}\lambda_1}{\lambda^2-\lambda_1^2}
+\frac{2\lambda_2\,\partial_{\xi}\lambda_2}{\lambda^2-\lambda_2^2}
+\frac{\alpha_2\,\partial_{\xi}\alpha_2}{\lambda^2-\alpha_2^2}
\right)d\varphi.
\end{equation*}
On the other hand, by Lemma~\ref{Lemma-Property-2}, we have
\(
\partial_{\xi}(d\varphi)=-4\,dp
\).
Comparing the two expressions yields the identity
\begin{equation}\label{eq:alpha-xi}
4\frac{dp}{d\varphi}
=
\frac{2\lambda_1\,\partial_{\xi}\lambda_1}{\lambda^2-\lambda_1^2}
+\frac{2\lambda_2\,\partial_{\xi}\lambda_2}{\lambda^2-\lambda_2^2}
+\frac{\alpha_2\,\partial_{\xi}\alpha_2}{\lambda^2-\alpha_2^2}
=
\frac{Q_{\alpha_2,1}(\lambda)}{3(\lambda^2-\alpha_2^2)(\lambda^2-\lambda_1^2)(\lambda^2-\lambda_2^2)}.
\end{equation}
Since $Q_{\alpha_2,1}(\lambda)$ is even, has zeros in $(0,\eta_1)$ and $(\eta_2,\eta_3)$ for $\lambda>0$, and satisfies $Q_{\alpha_2,1}(\lambda)\sim \lambda^4$ as $\lambda\to\infty$, we conclude that $Q_{\alpha_2,1}(\alpha_2)>0$.
Taking the residue at $\lambda=\alpha_2$ in~\eqref{eq:alpha-xi}, it follows that
\begin{equation}\label{genuine-nonlinear}
\partial_{\xi}\alpha_2
=
\frac{Q_{\alpha_2,1}(\alpha_2)}{6\alpha_2(\alpha_2^2-\lambda_1^2)(\alpha_2^2-\lambda_2^2)}>0.
\end{equation}
\end{proof}
Since $dp$ is the normalized Abelian differential of the second kind, we have $Q_{\alpha_2,1}(\alpha_2)>0$. As a result, \( \alpha \) is a monotone increasing function of \( \xi \) with \( \alpha_{2} \in (\eta_{3},\eta_{4}) \). Thus, we define  
}


\par 

\begin{equation}\label{xi crit}
    \xi_{crit}^{(2)}=3\frac{Q_{\alpha_2,2}(\eta_{3})}{Q_{\alpha_2,1}(\eta_{3})},\quad
    \xi_{crit}^{(3)}=3\frac{Q_{\alpha_2,2}(\eta_{4})}{Q_{\alpha_2,1}(\eta_{4})}.
\end{equation}
More precisely, note that (\ref{alpha2 formular}) also depends on the Riemann surface \( R_{\alpha_2}(\lambda) \), which is given by
\[
    \xi=3\frac{Q_{\alpha_2,2}(\pm\alpha_2;\eta_1,\eta_2,\eta_3,\alpha_2)}{Q_{\alpha_2,1}(\pm{\alpha_2};\eta_1,\eta_2,\eta_3,\alpha_2)}
\]
and the critical values \( \xi_{crit}^{(2)} \) and \( \xi_{crit}^{(3)} \) can be rewritten as
\begin{equation}\label{xi2-epsilon}
    \xi_{crit}^{(2)}=\lim_{\epsilon\to0}3\frac{Q_{\alpha_2,2}(\pm(\eta_3+\epsilon);\eta_1,\eta_2,\eta_3,\eta_3+\epsilon)}{Q_{\alpha_2,1}(\pm(\eta_3+\epsilon);\eta_1,\eta_2,\eta_3,\eta_3+\epsilon)},
\end{equation}
\[
    \xi_{crit}^{(3)}=3\frac{Q_{\alpha_2,2}(\pm\eta_4;\eta_1,\eta_2,\eta_3,\eta_4)}{Q_{\alpha_2,1}(\pm{\eta_4};\eta_1,\eta_2,\eta_3,\eta_4)}.
\]
By the genuine nonlinearity condition in (\ref{genuine-nonlinear}), for any \( \epsilon>0 \), it follows that 
\[
    3\frac{Q_{\alpha_2,2}(\pm(\eta_3+\epsilon);\eta_1,\eta_2,\eta_3,\eta_3+\epsilon)}{Q_{\alpha_2,1}(\pm(\eta_3+\epsilon);\eta_1,\eta_2,\eta_3,\eta_3+\epsilon)}<\xi_{crit}^{(3)}.
\]
If the limit exists as \( \epsilon\to0 \), then the inequality \( \xi_{crit}^{(3)}<\xi_{crit}^{(2)} \) is established. In fact, consider the Riemann surface 
\[
    \mathcal{S}_{\alpha_2}^{(\epsilon)}:=\{(\lambda,\eta) \mid \eta^2=(\lambda^2-\eta_1^2)(\lambda^2-\eta_2^2)(\lambda^2-\eta_3^2)(\lambda^2-(\eta_3+\epsilon)^2)\}.
\]
As \( \epsilon \to 0 \), the algebraic curve associated with \( \mathcal{S}_{\alpha_2}^{(\epsilon)} \) develops two nodes, causing the genus of the limiting Riemann surface to degenerate to \( 1 \). By the expansion given in \cite{Tamara, Fay}, the limit of (\ref{xi2-epsilon}) exists, yielding
{
\begin{equation}\label{xi2}
   \begin{aligned}
\xi_{crit}^{(2)} &= 3\lim_{\epsilon\to0}\frac{Q_{\alpha_2,2}(\pm(\eta_3+\epsilon);\eta_1,\eta_2,\eta_3,\eta_3+\epsilon)}{Q_{\alpha_2,1}(\pm(\eta_3+\epsilon);\eta_1,\eta_2,\eta_3,\eta_3+\epsilon)}\\
&= 3\frac{\int_{-R_{\eta_2}(\eta_3)}^{R_{\eta_2}(\eta_3)}\frac{Q_{\eta_2,2}(\zeta)}{R_{\eta_2}(\zeta)}d\zeta}{\int_{-R_{\eta_2}(\eta_3)}^{R_{\eta_2}(\eta_3)}\frac{Q_{\eta_2,1}(\zeta)}{R_{\eta_2}(\zeta)}d\zeta},
   \end{aligned}
\end{equation}
where \( R_{\eta_2}(\lambda) = \sqrt{(\lambda^2-\eta_1^2)(\lambda^2-\eta_2^2)} \) and
\[
Q_{\eta_2,1}(\lambda) = \lambda^2 + c_{\eta_2,1}, \quad Q_{\eta_2,2}(\lambda) = \lambda^4 - \frac{1}{2}(\eta_2^2+\eta_1^2)\lambda^2 + c_{\eta_2,2},
\]
with 
\[
c_{\eta_2,1} = -\eta_2^2 + \eta_2^2\frac{E(m_{\eta_2})}{K(m_{\eta_2})}, \quad
c_{\eta_2,2} = \frac{1}{3} \eta_2^2\eta_1^2 + \frac{1}{6}(\eta_2^2+\eta_1^2)c_{\eta_2,1}, \quad
m_{\eta_2} = \frac{\eta_1}{\eta_2}.
\]}
{
Moreover, using the formula in \cite{Tamara}, we obtain that
\[
\int_{-R_{\eta_2}(\eta_3)}^{R_{\eta_2}(\eta_3)}\frac{Q_{\eta_2,1}(\zeta)}{R_{\eta_2}(\zeta)}d\zeta = -4\left(\oint_A\frac{d\zeta}{R_{\eta_2}(\zeta)}\right)^{-1} \oint_A \frac{d\zeta}{R_{\eta_2}(\zeta)} \frac{R_2(\eta_3)}{\zeta^2-\eta_3^2},
\]
and
\[
\int_{-R_{\eta_2}(\eta_3)}^{R_{\eta_2}(\eta_3)}\frac{Q_{\eta_2,2}(\zeta)}{R_{\eta_2}(\zeta)}d\zeta = -\frac{4}{3} \left(-R_{\eta_2}(\eta_3) + \frac{\eta_1^2+\eta_2^2}{2} \left(\oint_A\frac{d\zeta}{R_{\eta_2}(\zeta)}\right)^{-1} \oint_A \frac{d\zeta}{R_{\eta_2}(\zeta)} \frac{R_2(\eta_3)}{\zeta^2-\eta_3^2} \right),
\]
where \( A \) is the contour shown in Figure \ref{1genus}. Consequently, the expression for \( \xi^{(2)}_{crit} \) simplifies to
\begin{equation}\label{xi2-final}
\xi^{(2)}_{crit} = \frac{\eta_1^2+\eta_2^2}{2} - \frac{\int_{0}^{\eta_1}\frac{d\zeta}{R_{\eta_2}(\zeta)}}{\int^{\eta_1}_{0}\frac{d\zeta}{R_{\eta_2}(\zeta)(\zeta^2-\eta_3^2)}} = \frac{\eta_1^2+\eta_2^2}{2} + \eta_3^2{\frac{K(m_{\eta_2})}{\Pi(\kappa,m_{\eta_2})}},
\end{equation}
where \( \Pi(\kappa,m_{\eta_2}) := \int_{0}^{\frac{\pi}{2}} \frac{d\theta}{(1-\kappa\sin^2\theta)\sqrt{1-m_{\eta_2}^2\sin^2\theta}} \)
is the complete elliptic integral of the third kind, with \( \kappa = \frac{\eta_1^2}{\eta_3^2} \) and \( m_{\eta_2} = \frac{\eta_1}{\eta_2} \).}

Notably, it is found that
\[
\xi_{crit}^{(2)} - \xi_{crit}^{(1)} = \eta_3^2{\frac{K(m_{\eta_2})}{\Pi(\kappa,m_{\eta_2})}} - (\eta_2^2-\eta_1^2)\frac{K(m_{\eta_2})}{E(m_{\eta_2})}.
\]
Since \( {\Pi(\kappa,m)\to \frac{E(m)}{1-m^2}} \), {as $\kappa\to m^2$,}  { and let
$$
\Delta_{\xi}(\alpha)=K(m_{\eta_2})\left(\frac{\alpha^2}{\Pi\left(\frac{\eta_1^2}{\alpha^2};m_{\eta_2}\right)}-\frac{\eta_2^2-\eta_1^2}{E(m_{\eta_2})}\right).
$$
Notice that $\Delta_{\xi}(\eta_2)=0$, which means that $\xi_{crit}^{(2)} = \xi_{crit}^{(1)}$ as $\eta_3=\eta_2$, and by \( \partial_{\kappa} \Pi(\kappa,m_{\eta_2}) > 0 \) for \( 0 < \kappa < 1 \), we conclude that $\partial_{\alpha}\Delta_{\xi}(\alpha)>0$ and \( \xi^{(2)}_{crit} > \xi^{(1)}_{crit} \) for $\eta_3>\eta_2$. }

{}
	\par	
	On the other hand, based on the expression of the function $g_{\alpha_{2}}(\lambda)$ in (\ref{g2}), the inequalities like (\ref{g1 signature}) are obtained in the following lemma.
	
	\begin{lem}\label{Properties-1}
		For $\xi_{crit}^{(2)}<\xi<\xi_{crit}^{(3)}$, the inequalities below hold
		\begin{equation}\label{g2 signature}
			\begin{aligned}
				&\re\left(g_{\alpha_{2}}(\lambda)+4\lambda^3-4\xi\lambda\right)>0, &&\lambda\in(\alpha_2,\eta_4)\cup\mathcal{C}_{2,\alpha_{2}}\setminus\{-\eta_1,-\eta_2,-\eta_3,-\alpha_2\},\\
				&\re\left(g_{\alpha_{2}}(\lambda)+4\lambda^3-4\xi\lambda\right)<0, &&\lambda\in(-\eta_4,-\alpha_2)\cup\mathcal{C}_{1,\alpha_{2}}\setminus\{\eta_1,\eta_2,\eta_3,\alpha_2\},\\
			\end{aligned}
		\end{equation}
		where the contours $\mathcal{C}_{j,\alpha_{2}}$ for $j=1,2$ are depicted in Figure \ref{S alpha2}.
	\end{lem}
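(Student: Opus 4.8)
The plan is to reproduce, on the genus‑one‑higher surface $\mathcal{S}_{\alpha_2}$, the exact argument used for the one–phase estimate (the lemma giving (\ref{g1 signature})). Set $h(\lambda):=g_{\alpha_2}(\lambda)+4\lambda^3-4\xi\lambda$, so that $h'(\lambda)=g'_{\alpha_2}(\lambda)+12\lambda^2-4\xi$. Once the relation (\ref{alpha2 formular}) is imposed, the factorization already derived above reads
\[
h'(\lambda)=12\,\frac{(\lambda^2-\alpha_2^2)(\lambda^2-\lambda_1^2)(\lambda^2-\lambda_2^2)}{R_{\alpha_2}(\lambda)},
\]
with $\lambda_1\in(0,\eta_1)$ and $\lambda_2\in(\eta_2,\eta_3)$ forced by the vanishing–period conditions (\ref{Q alpha2 condition}). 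Two elementary facts will be used throughout: $h$ is odd, because $-g_{\alpha_2}(-\lambda)$ solves the same scalar RH problem (\ref{g2 jumps}) as $g_{\alpha_2}$ and hence coincides with it by uniqueness; and $h(\alpha_2)=0$, directly from (\ref{g2}) evaluated at $\lambda=\alpha_2$ (consistent with the first line of (\ref{condition for g2})).

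First I would dispatch the real segments. For $\lambda\in(\alpha_2,\eta_4)$ one has $R_{\alpha_2}(\lambda)>0$ and each of $\lambda^2-\alpha_2^2,\ \lambda^2-\lambda_1^2,\ \lambda^2-\lambda_2^2$ is positive, so $h'(\lambda)>0$; integrating from $\alpha_2$ and using $h(\alpha_2)=0$ gives $h(\lambda)>0$, i.e.\ $\re h>0$ on $(\alpha_2,\eta_4)$. The odd symmetry then yields $\re h<0$ on $(-\eta_4,-\alpha_2)$. This is the real–axis part of (\ref{g2 signature}).

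Next, the lens contours. On each of the bands $\Sigma_1,\Sigma_2,\Sigma_{3_{\alpha_2}},\Sigma_{4_{\alpha_2}}$ the jump condition $g_{\alpha_2,+}+g_{\alpha_2,-}+8\lambda^3-8\xi\lambda=0$ says $h_++h_-=0$; combined with the Schwarz reflection $h_-(\lambda)=\overline{h_+(\lambda)}$ (valid since $h$ is real–analytic with real boundary values on the parts of $\mathbb{R}$ adjacent to the band, e.g.\ on $(\alpha_2,\eta_4)$ as just shown), this forces $\re h_+\equiv0$ along the band. Because the roots $\lambda_1,\lambda_2$ lie strictly below $\eta_3$ and between $\eta_2,\eta_3$ respectively, the boundary value $-i\,h'_+(\lambda)$ has a fixed sign on each band — precisely the monotonicity clauses in the last two lines of (\ref{condition for g2}) — and the Cauchy–Riemann equations ($\partial_n\re h=\mp\partial_s\im h$) then propagate $\re h$ off the band with a definite sign: $\re h<0$ in the lens $\mathcal{C}_{1,\alpha_2}$ enclosing $\Sigma_1\cup\Sigma_{3_{\alpha_2}}$, and, after using oddness, $\re h>0$ in $\mathcal{C}_{2,\alpha_2}$ enclosing $\Sigma_2\cup\Sigma_{4_{\alpha_2}}$. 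This establishes (\ref{g2 signature}). As the reasoning is word‑for‑word the one used for $g_{\alpha_1}$, only the genuinely new input needs to be checked carefully.

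The main obstacle is establishing, uniformly for $\xi\in(\xi_{crit}^{(2)},\xi_{crit}^{(3)})$, the interlacing $0<\lambda_1<\eta_1$ and $\eta_2<\lambda_2<\eta_3$ of the two extra roots with the branch points; it is exactly this interlacing that makes $-i\,h'_+$ constant‑signed on each band and hence makes the Cauchy–Riemann propagation deliver the stated signs (and it also underwrites the monotonicity clauses in (\ref{condition for g2})). This follows from the strict hyperbolicity of the genus‑two Whitham system — so that $\alpha_2$ is a genuine monotone function of $\xi$ on $(\eta_3,\eta_4)$, as recorded above with reference to \cite{Lev88} — together with the sign structure of $Q_{\alpha_2,2}/R_{\alpha_2}-(\xi/3)Q_{\alpha_2,1}/R_{\alpha_2}$ imposed by (\ref{Q alpha2 condition}). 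Concretely, I would check the interlacing at the degenerate endpoints $\xi=\xi_{crit}^{(2)}$ and $\xi=\xi_{crit}^{(3)}$ (where $\alpha_2=\eta_3$ or $\alpha_2=\eta_4$ and the picture collapses to a known one‑phase configuration) and then propagate it by continuity in $\xi$, a root never being able to reach a branch point without violating one of the four period conditions in (\ref{Q alpha2 condition}).
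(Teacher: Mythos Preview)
Your proposal is correct and follows essentially the same route as the paper: write $h=g_{\alpha_2}+4\lambda^3-4\xi\lambda$, use the factorization $h'(\lambda)=12(\lambda^2-\alpha_2^2)(\lambda^2-\lambda_1^2)(\lambda^2-\lambda_2^2)/R_{\alpha_2}(\lambda)$, read off the sign of $h'$ on $(\alpha_2,\eta_4)$ and the sign of $\im h'_+$ on the bands, then invoke the Cauchy--Riemann equations (plus the odd symmetry) to push the sign of $\re h$ into the lenses.

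The one place where you diverge from the paper is the interlacing $\lambda_1\in(0,\eta_1)$, $\lambda_2\in(\eta_2,\eta_3)$, which you flag as the ``main obstacle'' and propose to handle by checking the degenerate endpoints $\alpha_2=\eta_3,\eta_4$ and propagating by continuity in $\xi$, invoking Whitham hyperbolicity along the way. The paper does this in one line, directly from the normalization conditions (\ref{Q alpha2 condition}): since $\int_{-\eta_1}^{\eta_1}(12Q_{\alpha_2,2}-4\xi Q_{\alpha_2,1})/R_{\alpha_2}\,d\zeta=0$ and $R_{\alpha_2}$ has a fixed sign on the gap $(-\eta_1,\eta_1)$, the even degree-six polynomial $12Q_{\alpha_2,2}-4\xi Q_{\alpha_2,1}$ must change sign there, giving $\pm\lambda_1$; the same argument on $(\eta_2,\eta_3)$ gives $\pm\lambda_2$; and (\ref{alpha2 formular}) provides the remaining pair $\pm\alpha_2$. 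This sign-change argument is simpler and self-contained --- no continuity, no endpoint analysis, no Whitham input --- and it is the argument you should use.
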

	\begin{proof}
		Similarly, recall that
		$$
		g'_{\alpha_{2}}(\lambda)+12\lambda^2-4\xi=12\frac{(\lambda^2-\alpha_{2}^2)(\lambda^2-\lambda_{1}^2)(\lambda^2-\lambda_{2}^2)}{R_{\alpha_{2}}(\lambda)},
		$$
		with $\lambda_{1}\in[0,\eta_{1}]$ and $\lambda_{2}\in[\eta_{2},\eta_{3}]$. As a result, it can be 
        derived that
		$$
		g'_{\alpha_{2},+}(\lambda)+12\lambda^2-4\xi=
		\begin{cases}
			\begin{aligned}
				&12\frac{\sqrt{\lambda^2-\alpha_{2}^2}(\lambda^2-\lambda_{1}^2)(\lambda^2-\lambda_{2}^2)}
				{\sqrt{(\lambda^2-\eta^2_{1})(\lambda^2-\eta^2_{2})(\lambda^2-\eta^2_{3})}}, &&\lambda\in(\alpha_{2},\eta_{4}),\\
				&12\frac{i\sqrt{\alpha_{2}^2-\lambda^2}(\lambda^2-\lambda_{1}^2)(\lambda^2-\lambda_{2}^2)}
				{\sqrt{(\lambda^2-\eta^2_{1})(\lambda^2-\eta^2_{2})(\lambda^2-\eta^2_{3})}},
				&&\lambda\in(\eta_{3},\alpha_{2}),\\
				&12\frac{i\sqrt{\alpha_{2}^2-\lambda^2}(\lambda^2-\lambda_{1}^2)(\lambda_{2}^2-\lambda^2)}
				{\sqrt{(\lambda^2-\eta^2_{1})(\eta^2_{2}-\lambda^2)(\eta^2_{3}-\lambda^2)}},
				&&\lambda\in(\eta_{1},\eta_{2}),
			\end{aligned}
		\end{cases}
		$$
		which implies that $\re\left(g_{\alpha_{2}}(\lambda)+4\lambda^3-4\xi\lambda\right)>0$ for $\lambda\in(\alpha_{2},\eta_{4})$ and $\Im\left(g'_{\alpha_{2},+}(\lambda)+12\lambda^2-4\xi\right)>0$ for $\lambda\in\Sigma_{1}\cup\Sigma_{3_{\alpha_{2}}}$. Together with the Cauchy-Riemann equation, it follows that $\re\left(g_{\alpha_{2}}(\lambda)+4\lambda^3-4\xi\lambda\right)<0$ for $\lambda\in\mathcal{C}_{1,\alpha_{2}}\setminus\{\eta_1,\eta_2,\eta_3,\alpha_2\}$. The other cases in (\ref{g2 signature}) can also be proved by the similar way.
	\end{proof}
	\par
	To deform the RH problem for $Y(x,t;\lambda)$ into the model problem, introduce
	$$
	T_{\alpha_{2}}(\lambda)=Y(\lambda){e^{tg_{\alpha_{2}}(\lambda)\sigma_3}}f_{\alpha_{2}}(\lambda)^{\sigma_3},
	$$
	where $f_{\alpha_{2}}(\lambda)$ is subject to the following scalar RH problem:
	\begin{equation}\label{RHP for falpha2}
		\begin{aligned}
			&f_{\alpha_{2},+}(\lambda)f_{\alpha_{2},-}(\lambda)={r(\lambda)},&&\lambda\in\Sigma_{\{1,3_{\alpha_{2}}\}},\\
			&f_{\alpha_{2},+}(\lambda)f_{\alpha_{2},-}(\lambda)=\frac{1}{r(\lambda)},&&\lambda\in\Sigma_{\{2,4_{\alpha_{2}}\}},\\
			&\frac{f_{\alpha_{2},+}(\lambda)}{f_{\alpha_{2},-}(\lambda)}=e^{\Delta_0,\alpha_{2}},&&\lambda\in [-\eta_1,\eta_1],\\
			&\frac{f_{\alpha_{2},+}(\lambda)}{f_{\alpha_{2},-}(\lambda)}=e^{\Delta_1,\alpha_{2}},&&\lambda\in [\eta_2,\eta_3]\cup[-\eta_3,-\eta_2],\\
			&f(\lambda)=1+\mathcal{O}\left(\frac{1}{\lambda}\right),&&\lambda\to\infty.
		\end{aligned}
	\end{equation}
	In a similar manner, the function \( f_{\alpha_{2}}(\lambda) \) can be derived akin to (\ref{f formular}). The normalization condition then indicates the values of \(\Delta_{j,\alpha_{2}},~j=0,1\), which correspond to (\ref{f infty}). All the calculations here are omitted for simplicity.
	\begin{figure}[h]
		\centering
		\begin{tikzpicture}[>=latex]
			\draw[lightgray,very thick,dashed] (-8.5,0) to (-7.5,0);
			\draw[lightgray,very thick] (-7.5,0) to (-6.5,0) ;
			\draw[->,dashed,very thick] (-7.0,-1.0) to (-7.0,-0.1);
			\draw[->,very thick,lightgray] (-7.5,0) to (-7,0)node[black,anchor=south east]at(-6,-2.1){\small $\begin{pmatrix}
					1 & 0 \\
					\textcolor{gray}{(V_T)_{21}} & 1
				\end{pmatrix}$};
			\filldraw[lightgray] (-7.5,0) node[black,below=1mm]{$-\eta_{4}$} circle (1.5pt);
			\filldraw[lightgray] (-6.5,0) node[black,below=1mm]{$-\alpha_2$} circle (1.5pt);
			\draw[-,very thick] (-6.5,0) to (-4.5,0);
			\draw[->,very thick] (-6.5,0) to (-6.0,0)node[black,right=-1mm]{\small $\begin{pmatrix}
					0 & i \\
					i & 0
				\end{pmatrix}$};
			\draw[->,very thick] (-4.5,0) to (-4.0,0);
			\draw[-,very thick] (-4.5,0) to (-3.8,0)node[black,above=12mm]{ $e^{(t{\Omega}_{1,\alpha_{2}}+{\Delta}_{1,\alpha_{2}})\sigma_3}$};
			\draw[-,very thick] (-4.5,0) to (-3.3,0)node[black,below=8mm]{\small $\begin{pmatrix}
					1 & \textcolor{gray}{\frac{-i}{\hat r(\lambda)f_{\alpha_{1}}^2(\lambda)}e^{-2t(g_{\alpha_{1}}(\lambda)+4\lambda^3-4\xi\lambda)}} \\
					0 & 1
				\end{pmatrix}$};
			\draw[->,dashed,very thick] (-4.0,1.2) to (-4.0,0.2);
			\draw[-,very thick] (-4.5,0) to (-3.5,0);
			\filldraw[black] (-4.5,0) node[black,below=1mm]{$-\eta_{3}$} circle (1.5pt);
			\filldraw[black] (-3.5,0) node[black,below=1mm]{$-\eta_2$} circle (1.5pt);
			\draw[-,very thick,lightgray] (-6.5,0) .. controls (-6,1.0) and (-5,1.0).. (-4.5,0);
			\draw[-,very thick,lightgray,rotate around x=180] (-6.5,0) .. controls (-6,1.0) and (-5,1.0).. (-4.5,0);
			\draw[->,very thick,lightgray] (-5.6,0.75) to (-5.4,0.75) node [above,black] {\small $\mathcal{C}_{2,\alpha_{2}}$};
			\draw[<-,very thick,lightgray,rotate around x=180] (-5.6,0.75) to (-5.4,0.75);
			\draw[-,very thick,black] (-3.5,0) to (-0.5,0);
			\draw[->,very thick,black] (-3.5,0) to (-2.5,0) node[black,right=-1mm]{\small $\begin{pmatrix}
					0 & i \\
					i & 0
				\end{pmatrix}$};
			\filldraw[black] (-0.5,0) node[black,below=1mm]{$-\eta_1$} circle (1.5pt);
			\draw[-,very thick,lightgray] (-3.5,0) .. controls (-2.5,1.15) and (-1.5,1.15).. (-0.5,0);
			\draw[->,very thick,lightgray] (-2.1,0.85) to (-2.0,0.85) node[black,above]{\small $\mathcal{C}_{2,\alpha_{2}}$};
			\draw[-,very thick,lightgray,rotate around x=180] (-3.5,0) .. controls (-2.5,1.15) and (-1.5,1.15).. (-0.5,0) ;
			\draw[<-,very thick,lightgray,rotate around x=180] (-2.3,0.85) to (-2.2,0.85) ;
			\draw [very thick] (-0.5,0) to (0.5,0);
			\draw [->,very thick] (-0.5,0) to (0,0)node[black,above=12mm]{ $e^{(t{\Omega}_{0,\alpha_{2}}+{\Delta}_{0,\alpha_{2}})\sigma_3}$};
			\draw[->,dashed,very thick] (0,1.2) to (0,0.2);
			\draw[lightgray,very thick,dashed] (7.5,0) to (8.5,0);
			\draw[lightgray,very thick] (6.5,0) to (7.5,0) ;
			\draw[->,very thick,lightgray] (6.5,0) to (7.2,0)node[black,below=8mm]{\small $\begin{pmatrix}
					1 & \textcolor{gray}{(V_T)_{12}} \\
					0 & 1
				\end{pmatrix}$};
			\draw[->,dashed,very thick] (7.0,-0.8) to (7.0,-0.1);
			\draw[-,very thick](3.5,0) to (4.0,0) node [black,above=12mm]{ $e^{(t{\Omega}_{1,\alpha_{2}}+{\Delta}_{1,\alpha_{2}})\sigma_3}$};
			\draw[->,dashed,very thick] (4.0,1.2) to (4.0,0.2);
			\filldraw[lightgray] (7.5,0) node[black,below=1mm]{$\eta_{4}$} circle (1.5pt);
			\filldraw[black] (6.5,0) node[black,below=1mm]{$\alpha_{2}$} circle (1.5pt);
			\draw[->,very thick] (3.5,0) to (4.0,0);
			\draw[-,very thick] (3.5,0) to (4.5,0);
			\filldraw[black] (4.5,0) node[black,below=1mm]{$\eta_{3}$} circle (1.5pt);
			\filldraw[black] (3.5,0) node[black,below=1mm]{$\eta_{2}$} circle (1.5pt);
			\draw[-,very thick,black] (0.5,0) to (3.5,0);
			\draw[->,very thick,black] (0.5,0) to (1.5,0) node[black,right=-1mm]{\small $\begin{pmatrix}
					0 & -i \\
					-i & 0
				\end{pmatrix}$};
			\filldraw[black] (0.5,0) node[black,below=1mm]{$\eta_1$} circle (1.5pt);
			\draw[-,very thick,black] (4.5,0) to (6.5,0);
			\draw[->,very thick,black] (4.5,0) to (4.9,0)node[black,right=-2.5mm]{\small $\begin{pmatrix}
					0 & -i \\
					-i & 0
				\end{pmatrix}$};
			\draw[-,very thick,lightgray] (4.5,0) .. controls (5,1) and (6,1).. (6.5,0);
			\draw[-,very thick,lightgray,rotate around x=180] (4.5,0) .. controls (5,1) and (6,1).. (6.5,0);
			\draw[->,very thick,lightgray] (5.5,0.75) to (5.6,0.75) node [above,black] {\small $\mathcal{C}_{1,\alpha_{2}}$};
			\draw[<-,very thick,lightgray,rotate around x=180] (5.4,0.75) to (5.5,0.75);
			\draw[-,very thick,lightgray] (0.5,0) .. controls (1.5,1.15) and (2.5,1.15).. (3.5,0);
			\draw[->,very thick,lightgray] (2,0.855) to (2.1,0.855)node[black,above]{\small $\mathcal{C}_{1,\alpha_{2}}$} ;
			\draw[-,very thick,lightgray,rotate around x=180] (0.5,0) .. controls (1.5,1.15) and (2.5,1.15).. (3.5,0);
			\draw[->,very thick,lightgray,rotate around x=180] (2.1,0.85) to (2.0,0.85)node[black,anchor=south west]at(1,2.1){\small $\begin{pmatrix}
					1 & 0 \\
					\textcolor{gray}{\frac{if_{\alpha_{1}}^2(\lambda)}{\hat r(\lambda)}e^{2t(g_{\alpha_{1}}(\lambda)+4\lambda^3-4\xi\lambda)}} & 1
				\end{pmatrix}$} ;
		\end{tikzpicture}
		\caption{{\protect\small
				The jump contours for \( S_{\alpha_{2}}(\lambda) \) and the associated jump matrices: the gray terms in the matrices vanish exponentially as \( t \to +\infty \), and the gray contours also vanish as \( t \to +\infty \). Here $(V_T)_{12}=-\frac{ir(\lambda)}{ f^2(\lambda)}e^{-2t(g(\lambda)+4\lambda^3-4\xi\lambda)}$ and $(V_T)_{21}=ir(\lambda)f^2(\lambda)e^{2t(g(\lambda)+4\lambda^3-4\xi\lambda)}$. }}
		\label{S alpha2}
	\end{figure}
	
	In the same way, open lenses of the RH problem for $T_{\alpha_{2}}$ by the transformation
	\begin{equation}
		S_{\alpha_{2}}(\lambda)=
		\begin{cases}
			T_{\alpha_{2}}(\lambda)\begin{pmatrix}
				1 & 0 \\
				\frac{if_{\alpha_{1}}^2(\lambda)}{\hat r(\lambda)}e^{2t(g_{\alpha_{1}}(\lambda)+4\lambda^3-4\xi\lambda)} & 1
			\end{pmatrix}, & {\rm inside~the~contour}~\mathcal{C}_{1,\alpha_{2}},
			\\
			T_{\alpha_{2}}(\lambda)\begin{pmatrix}
				1 & \frac{-i}{\hat r(\lambda)f_{\alpha_{1}}^2(\lambda)}e^{-2t(g_{\alpha_{1}}(\lambda)+4\lambda^3-4\xi\lambda)} \\
				0 & 1
			\end{pmatrix}, & {\rm inside~the~contour}~\mathcal{C}_{2,\alpha_{2}},
			\\
			T_{\alpha_{2}}(\lambda), & {\rm elsewhere}.
		\end{cases}
	\end{equation}
	The jump matrices and contours are illustrated in Figure \ref{S alpha2}. As $t\to+\infty$, the gray contours in Figure \ref{S alpha2} vanish exponentially according to Lemma \ref{Properties-1}. Once again
	for $t\to+\infty$, we arrive at the model problem for $S_{\alpha_{2}}^{\infty}(\lambda)$ below
	\begin{equation}\label{S infty alpha2 RHP}
		{S}_{\alpha_{2},+}^{\infty}(\lambda)={S}_{\alpha_{2},-}^{\infty}(\lambda) \begin{cases}{\begin{pmatrix}
					e^{t {\Omega_{0,\alpha_{2}}}+{\Delta}_{0,\alpha_{2}}} & 0 \\
					0 & e^{-t {\Omega_{0,\alpha_{2}}}-{\Delta}_{0,\alpha_{2}}}
			\end{pmatrix}}, & \lambda \in[-\eta_1, \eta_1], \\
			\begin{pmatrix}
				e^{t {\Omega_{1,\alpha_{2}}}+{\Delta}_{1,\alpha_{2}}} & 0 \\
				0 & e^{-t {\Omega_{1,\alpha_{2}}}-{\Delta}_{1,\alpha_{2}}}
			\end{pmatrix}, & \lambda \in[\eta_2, \eta_3]\cup[-\eta_3, -\eta_2], \\
			{\begin{pmatrix}
					0 & -i \\
					-i & 0
			\end{pmatrix}}, & \lambda \in \Sigma_{1}\cup\Sigma_{3_{\alpha_1}}, \\
			{\begin{pmatrix}
					0 & i \\
					i & 0
			\end{pmatrix}}, & \lambda \in \Sigma_{2}\cup\Sigma_{ 4_{\alpha_1}},\end{cases}
	\end{equation}
	and
	$$
	{S}_{\alpha_{2}}^{\infty}(\lambda)=\begin{pmatrix}
		1&1
	\end{pmatrix}+\mathcal{O}\left(\frac{1}{\lambda}\right),\quad \lambda\to\infty.
	$$
	\par
	Furthermore, as the solution of the model problem for $S^{\infty}(\lambda)$ in (\ref{solution of lambda}), the solution of $S_{\alpha_{2}}^{\infty}(\lambda)$ can be derived directly by
	\begin{equation}	S_{\alpha_{2}}^{\infty}(\lambda)=\gamma_{\alpha_{2}}(\lambda)\frac{\Theta(0;\hat{\tau}_{\alpha_{2}})}{\Theta\left(\frac{\Omega_{\alpha_{2}}}{2\pi i};\hat{\tau}_{\alpha_{2}}\right)}
		\begin{pmatrix}
			\frac{\Theta\left(J_{\alpha_{2}}(\lambda)-d_{\alpha_{2}}+\frac{\Omega_{\alpha_{2}}}{2\pi i};\hat{\tau}_{\alpha_{2}}\right)}{\Theta\left(J_{\alpha_{2}}(\lambda)-d_{\alpha_{2}};\hat{\tau}_{\alpha_{2}}\right)}&
			\frac{\Theta\left(-J_{\alpha_{2}}(\lambda)-d_{\alpha_{2}}+\frac{\Omega_{\alpha_{2}}}{2\pi i};\hat{\tau}_{\alpha_{2}}\right)}{\Theta\left(-J_{\alpha_{2}}(\lambda)-d_{\alpha_{2}};\hat{\tau}_{\alpha_{2}}\right)}
		\end{pmatrix},
	\end{equation}
	where
	$\gamma_{\alpha_{2}}(\lambda)=\left(\frac{(\lambda^2-\eta_1^2)(\lambda^2-\eta_3^2)}{(\lambda^2-\eta_2^2)(\lambda^2-\alpha_{2}^2)}\right)^{\frac{1}{4}}
	$ and $\Omega_{\alpha_{2}}=\begin{pmatrix}
		{t\Omega_{\alpha_{2},1}+{\Delta_{\alpha_{2},1}}}&{t\Omega_{\alpha_{2},0}+{\Delta_{\alpha_{2},0}}}
	\end{pmatrix}^T$. Before computing the expression of large-time asymototics of $u(x,t)$ in this region, introduce the corresponding Riemann surface of genus three with ${a}_{\alpha_{2},j} ,{b}_{\alpha_{2},j}$-cycles for $j=1,2,3$, which is depicted in Figure \ref{mathcal S alpha} and
	$$ \mathcal{S}_{\alpha_{2}}:=\{(\lambda,\eta)|\eta^2=(\lambda^2-\eta_1^2)(\lambda^2-\eta_2^2)
	(\lambda^2-\eta_3^2)(\lambda^2-\alpha_{2}^2)\}.
	$$
	\begin{figure}
		\centering
		\includegraphics[height=5cm]{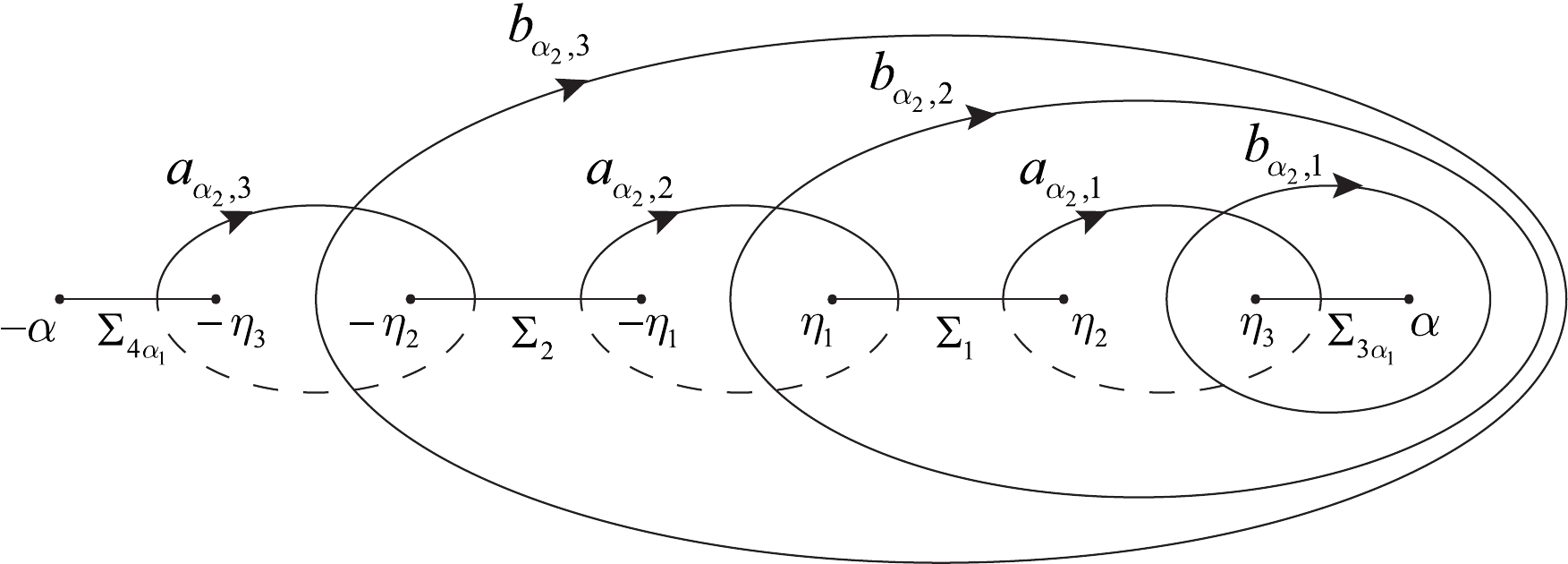}
		\caption{{\protect\small The Riemann surface $\mathcal{S}_{\alpha_{2}}$ of genus three and its basis $\{{a}_{\alpha_{2},j} ,{b}_{\alpha_{2},j}\}~(j=1,2,3)$ of circles.}}
		\label{mathcal S alpha}
	\end{figure}
	\par
	The normalized holomorphic differentials associated with the Riemann surface $\mathcal{S}_{\alpha_{2}}$ are denoted as $\omega_{\alpha_{2},j}~(j=1,2,3)$. Suppose the period matrix of $\omega_{\alpha_{2},j}$ is $\tau_{\alpha_{2}}:=(\tau_{\alpha_{2},ij})_{3\times3}$ with the symmetry $\tau_{\alpha_{2},11}=\tau_{\alpha_{2},33},\tau_{\alpha_{2},12}=\tau_{\alpha_{2},23}$ like the case in (\ref{properties of tau}). Similarly, define the Jacobi map $J_{\alpha_{2}}(\lambda)$ as
	\begin{equation}\label{Jacobi alpha2}
		J_{\alpha_{2}}(\lambda)=\int_{\alpha_{2}}^{\lambda} \hat\omega_{\alpha_{2}}:=\int_{\alpha_{2}}^{\lambda} \begin{pmatrix}
			\omega_{\alpha_{2},1}+\omega_{\alpha_{2},3}\\
			2\omega_{\alpha_{2},2}
		\end{pmatrix},
	\end{equation}
	and the corresponding period matrix is
	\begin{equation}\label{period matrix of hat tau alpha2}
		\hat\tau_{\alpha_{2}}=	\begin{pmatrix}
			\tau_{\alpha_{2},11}+\tau_{\alpha_{2},31}& \tau_{\alpha_{2},12}+\tau_{\alpha_{2},32}\\
			2\tau_{\alpha_{2},21}& 2\tau_{\alpha_{2},22}
		\end{pmatrix}.
	\end{equation}
	In fact, the Jacobi map $J_{\alpha_{2}}(\lambda)$ satisfies the similar properties in (\ref{Jacobi J jumps}) and (\ref{Jacobi J half periods}) just by replacing $\eta_{4}$ with $\alpha_{2}$. In addition, by the symmetry of $J_{\alpha_{2}}(\lambda)$ like that in (\ref{symmetry of Jacobi}), it is derived that $d_{\alpha_{2}}=d=\frac{e_2+e_1}{2}$. According to the second and third jump conditions in (\ref{g2 jumps}), it follows
	\begin{equation}\label{Omega alpha2}
		\begin{aligned}
			& {\Omega}_{\alpha_{2},1}=24 \int_{\alpha_{2}}^{\eta_3} \frac{Q_{\alpha_{2},2}(\zeta)}{R_{\alpha_{2}}(\zeta)} d \zeta-8 \xi \int_{\alpha_{2}}^{\eta_3} \frac{Q_{\alpha_{2},1}(\zeta)}{R_{\alpha_{2}}(\zeta)} d \zeta, \\
			& {\Omega}_{\alpha_{2},0}={\Omega}_{\alpha_{2},1}+24 \int_{\eta_{2}}^{\eta_1} \frac{Q_{\alpha_{2},2}(\zeta)}{R_{\alpha_{2}}(\zeta)} d \zeta-8 \xi \int_{\eta_{2}}^{\eta_1} \frac{Q_{\alpha_{2},1}(\zeta)}{R_{\alpha_{2}}(\zeta)} d \zeta.
		\end{aligned}
	\end{equation}
	\par
	Moreover, since the reconstruction formula involves the derivative with respect to $x$, the following lemma is necessary.

	Thus for $\xi_{crit}^{(2)}<\xi<\xi_{crit}^{(3)}$, the following theorem holds.
	\begin{thm}
		
		For $\xi=\frac{x}{4t}$, in the region $\xi_{crit}^{(2)}<\xi<\xi_{crit}^{(3)}$, the large-time asymptotic behavior of the solution to the KdV equation with genus two soliton gas potential is described by
		\begin{equation}\label{ modulated 2 genus solution}		u(x,t)=-\left(2b_{\alpha_{2},1}+{\sum_{j=1}^3\eta_j^2+\alpha_{2}^2}+2\partial_x^2\log\left(\Theta\left(\frac{\Omega_{\alpha_{2}}}{2\pi i};\hat \tau_{\alpha_{2}}\right)\right)\right)+\mathcal{O}\left(\frac{1}{t}\right),
		\end{equation}
		where the parameter $\alpha_{2}$ is determined by (\ref{alpha2 formular}).
	\end{thm}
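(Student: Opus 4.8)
The strategy is to push the Deift--Zhou analysis of Section~\ref{modulated 2-genus case} to its conclusion and then read off $u(x,t)$ from the reconstruction formula $u(x,t)=2\partial_x\lim_{\lambda\to\infty}\lambda\bigl(Y_1(\lambda;x,t)-1\bigr)$, exactly as in the computation producing (\ref{u potential}) and in the modulated one-phase case. After the chain $Y\mapsto T_{\alpha_2}\mapsto S_{\alpha_2}$ and the reduction to the model problem (\ref{S infty alpha2 RHP}), whose solution $S^\infty_{\alpha_2}$ is the explicit $\Theta$-quotient built from the genus-two data $J_{\alpha_2}$ in (\ref{Jacobi alpha2}) and $\hat\tau_{\alpha_2}$ in (\ref{period matrix of hat tau alpha2}) (the construction being word-for-word that of Lemma~\ref{Lemma-4p} and (\ref{solution of lambda})), the remaining work splits into an error estimate and a large-$\lambda$ expansion.

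For the error estimate I would assemble a global parametrix $P_{\alpha_2}$ equal to $S^\infty_{\alpha_2}$ outside fixed discs about the branch points, to a modified Bessel parametrix in each disc about $\pm\eta_1,\pm\eta_2,\pm\eta_3$, and to an Airy parametrix in each disc about $\pm\alpha_2$ --- the latter being the correct model since $g_{\alpha_2}(\lambda)+4\lambda^3-4\xi\lambda=\mathcal{O}\bigl((\lambda\mp\alpha_2)^{3/2}\bigr)$ by the first line of (\ref{condition for g2}) --- and set $\mathcal{E}_{\alpha_2}=S_{\alpha_2}P_{\alpha_2}^{-1}$. By Lemma~\ref{Properties-1} the jump of $\mathcal{E}_{\alpha_2}$ is $I+\mathcal{O}(e^{-ct})$ on the lens boundaries $\mathcal{C}_{1,\alpha_2},\mathcal{C}_{2,\alpha_2}$ and on the frozen parts of $\Sigma_{1,2,3,4}$, and $I+\mathcal{O}(t^{-1})$ on the disc boundaries, so the standard small-norm argument gives $\mathcal{E}_{\alpha_2}(\lambda)=\begin{pmatrix}1&1\end{pmatrix}+\tfrac{\mathcal{E}_{\alpha_2,1}(x,t)}{t\lambda}+\mathcal{O}(\lambda^{-2})$ with $\mathcal{E}_{\alpha_2,1}$ and $\partial_x\mathcal{E}_{\alpha_2,1}$ bounded uniformly for large $t$; unwinding the substitutions then yields
\[
Y_1(\lambda)=\Bigl(S^\infty_{\alpha_2,1}(\lambda)+\tfrac{(\mathcal{E}_{\alpha_2,1}(x,t))_1}{t\lambda}+\mathcal{O}(\lambda^{-2})\Bigr)e^{-tg_{\alpha_2}(\lambda)}f_{\alpha_2}(\lambda)^{-1}.
\]

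Next I would expand the three factors as $\lambda\to\infty$. The residue-at-infinity computation for $J_{\alpha_2}$ --- the Reciprocity-theorem calculation that gave $J(\lambda)=\tfrac{e_1+e_2}{2}-\tfrac{\Omega_x}{\lambda}+\mathcal{O}(\lambda^{-2})$ in the $x\to+\infty$ case, now fed by (\ref{Omega alpha2}) and the last two identities of Lemma~\ref{Lemma-Property-2} for $\partial_x(t\Omega_{\alpha_2,j})$ --- together with $\nabla\log\Theta(0;\hat\tau_{\alpha_2})=0$ gives $S^\infty_{\alpha_2,1}(\lambda)=1-\tfrac1\lambda\partial_x\log\Theta\bigl(\tfrac{\Omega_{\alpha_2}}{2\pi i};\hat\tau_{\alpha_2}\bigr)+\mathcal{O}(\lambda^{-2})$. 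For the $g$-factor it is cleanest to differentiate before expanding: $tg_{\alpha_2}(\infty)=0$ and the first identity of Lemma~\ref{Lemma-Property-2} give $\partial_x\bigl(tg_{\alpha_2}(\lambda)\bigr)=\int_\infty^\lambda\bigl(1-\tfrac{Q_{\alpha_2,1}(\zeta)}{R_{\alpha_2}(\zeta)}\bigr)d\zeta$, and since $Q_{\alpha_2,1}(\lambda)=\lambda^4+b_{\alpha_2,1}\lambda^2+c_{\alpha_2,1}$ while $R_{\alpha_2}(\lambda)=\lambda^4-\tfrac12\bigl(\sum_{j=1}^3\eta_j^2+\alpha_2^2\bigr)\lambda^2+\mathcal{O}(1)$, one obtains $1-\tfrac{Q_{\alpha_2,1}(\zeta)}{R_{\alpha_2}(\zeta)}=-\bigl(b_{\alpha_2,1}+\tfrac12\sum_{j=1}^3\eta_j^2+\tfrac12\alpha_2^2\bigr)\zeta^{-2}+\mathcal{O}(\zeta^{-4})$, so the $1/\lambda$-coefficient of $e^{-tg_{\alpha_2}(\lambda)}$ has $x$-derivative $-\bigl(b_{\alpha_2,1}+\tfrac12\sum_{j=1}^3\eta_j^2+\tfrac12\alpha_2^2\bigr)$; finally $f_{\alpha_2}(\lambda)^{-1}=1+\mathcal{O}(\lambda^{-1})$ with a $1/\lambda$-coefficient depending on $x$ only through $\alpha_2=\alpha_2(\xi)$.

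Adding the three $1/\lambda$-coefficients and applying $2\partial_x$ then reproduces the leading term $-\bigl(2b_{\alpha_2,1}+\sum_{j=1}^3\eta_j^2+\alpha_2^2+2\partial_x^2\log\Theta(\tfrac{\Omega_{\alpha_2}}{2\pi i};\hat\tau_{\alpha_2})\bigr)$, and everything left over is $\mathcal{O}(t^{-1})$: the error contributes $\partial_x\bigl(\tfrac{(\mathcal{E}_{\alpha_2,1})_1}{t}\bigr)=\mathcal{O}(t^{-1})$, while each undisplayed $x$-dependence --- in $\Delta_{\alpha_2,0},\Delta_{\alpha_2,1}$, in $\hat\tau_{\alpha_2}$, in the subleading coefficient of $f_{\alpha_2}$, and in $\mathcal{E}_{\alpha_2,1}$ --- enters only through $\alpha_2(\xi)$ with $\xi=\tfrac{x}{4t}$, and $\partial_x\alpha_2=\tfrac1{4t}\partial_\xi\alpha_2=\mathcal{O}(t^{-1})$ because $\alpha_2$ is a smooth monotone function of $\xi$ on $[\xi_{crit}^{(2)},\xi_{crit}^{(3)}]$ by strict hyperbolicity of the associated Whitham system and the implicit function theorem applied to (\ref{alpha2 formular}) (cf.\ (\ref{xi crit})). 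The main obstacle is the error estimate itself: one must verify that the Airy parametrices at $\pm\alpha_2$ and the Bessel parametrices at $\pm\eta_1,\pm\eta_2,\pm\eta_3$ match $S^\infty_{\alpha_2}$ on the disc boundaries to relative order $t^{-1}$, and that $\mathcal{E}_{\alpha_2,1}(x,t)$ stays bounded in $x$ together with its $x$-derivative, uniformly for large $t$; this proceeds exactly as in \cite{Girotti CMP} and is therefore only indicated.
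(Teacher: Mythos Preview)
Your proposal is correct and follows essentially the same route as the paper: the chain $Y\mapsto T_{\alpha_2}\mapsto S_{\alpha_2}\mapsto S^\infty_{\alpha_2}$, the Airy/Bessel local parametrices at $\pm\alpha_2$ and $\pm\eta_1,\pm\eta_2,\pm\eta_3$ respectively, and the large-$\lambda$ expansion of each factor via Lemma~\ref{Lemma-Property-2} and the Reciprocity theorem are exactly what the paper does. Your bookkeeping of the $\mathcal{O}(t^{-1})$ contributions through $\partial_x\alpha_2=\mathcal{O}(t^{-1})$ is also the paper's mechanism, stated a bit more explicitly.
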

	\begin{proof}
		Recall that
		$$		Y_1(\lambda)=\left(S_{\alpha_{2},1}^{\infty}(\lambda)
		+\frac{(\mathcal{E}_{\alpha_2,1}(x,t))_1}{t\lambda}+\mathcal{O}\left(\frac{1}{\lambda^2}\right)\right)
		e^{-tg_{\alpha_{2}}(\lambda)}f_{\alpha_{2}}(\lambda)^{-1},
		$$
		in which the function $f_{\alpha_{2}}(\lambda)$ behaves
		$$	f_{\alpha_{2}}(\lambda)=1+\frac{f_{\alpha_{2}}^{(1)}(\alpha_{2})}{\lambda}+\mathcal{O}\left(\frac{1}{\lambda^2}\right),
		$$
		with
		$$	f_{\alpha_{2}}^{(1)}(\alpha_{2})=\left(\int_{\eta_1}^{\eta_{2}}+\int_{\eta_3}^{\alpha_{2}}\right)\frac{\zeta^4\log{r(\zeta)}}{R_{\alpha_{2}}(\zeta)} \frac{d\zeta}{\pi i}+\Delta_{\alpha_{2},0}\int_{-\eta_1}^{\eta_1}\frac{\zeta^4}{R_{\alpha_{2}}(\zeta)} \frac{d\zeta}{2\pi i}+\Delta_{\alpha_{2},1}\int_{\eta_2}^{\eta_3}\frac{\zeta^4}{R(\zeta)} \frac{d\zeta}{\pi i},
		$$
		and the term $\frac{(\mathcal{E}_{\alpha_2,1}(x,t))_1}{t\lambda}$ is first entry of the error vector subject to the modulated two-phase wave region. Similar to the modulated one-phase case, one can conclude that the local parametrix near $\pm\alpha_2$ and $\pm\eta_j$ ($j=1,2,3$) can be described by Airy function and modified Bessel function respectively and both of them contribute the error term $\mathcal{O}(t^{-1})$ in the asymptotic behavior of potential $u(x,t)$ for $t\to+\infty$.
		\par
		The derivative of the term $e^{-tg_{\alpha_{2}}(\lambda)}$ has the asymptotics
		$$	\partial_xe^{-tg_{\alpha_{1}}(\lambda)}=-\frac{1}{\lambda}\left[\frac{\eta_{1}^2+\eta_{2}^2+\eta_{3}^2+\alpha_{2}^2}{2}+b_{\alpha_{2},1}\right]+\mathcal{O}\left(\frac{1}{\lambda^2}\right).
		$$
		In addition, from Lemma \ref{Lemma-Property-2}, it follows from
		{Riemann Bilinear relations} \cite{Bertola Riemmansurface} that
		$$
		\sum \res_{\infty_{\pm}}\frac{\omega_{\alpha_{2},1}}{\lambda}=\frac{\partial_x\left(t\Omega_{\alpha_{2},1}\right)}{2\pi i},\  \sum \res_{\infty_{\pm}}\frac{\omega_{\alpha_{2},2}}{\lambda}=\frac{\partial_x(t\Omega_{\alpha_{2},0})}{2\pi i},\  \sum \res_{\infty_{\pm}}\frac{\omega_{\alpha_{2},3}}{\lambda}=\frac{\partial_x(t\Omega_{\alpha_{2},1})}{2\pi i}.
		$$
		Recalling $J_{\alpha_{2}}(\infty)=(e_1+e_2)/2$, the Jacobi map $J_{\alpha_{2}}(\lambda)$ has the asymptotics
		$$
		J_{\alpha_{2}}(\lambda)=\frac{e_1+e_2}{2}-\frac{\begin{pmatrix}
				\partial_x(t\Omega_{\alpha_{2},1})&\partial_x(t\Omega_{\alpha_{2},0})
			\end{pmatrix}^T}{\lambda}+\mathcal{O}\left(\frac{1}{\lambda^2}\right),\quad \lambda\to\infty.
		$$
		Since $\partial_x{\Delta_{\alpha_{2}}}=\mathcal{O}\left(\frac{1}{t}\right)$ as $t\to+\infty$, the Jacobi map $J_{\alpha_{2}}(\lambda)$ can be rewritten as
		$$ J_{\alpha_{2}}(\lambda)=\frac{e_1+e_2}{2}-\frac{\partial_x(\Omega_{\alpha_{2}})}{2\pi i \lambda}+\mathcal{O}\left(\frac{1}{\lambda^2}\right)+\mathcal{O}\left(\frac{1}{t}\right).
		$$
		Thus the expansion of the function $S_{\alpha_{2},1}^{\infty}(\lambda)$ as $\lambda\to\infty$ is expressed by
		$$
		\begin{aligned}		   {S}^{\infty}_{\alpha_{2},1}(\lambda)&=1-\frac{1}{\lambda}\left[\nabla\log\left(\Theta\left(\frac{\Omega_{\alpha_{2}}}{2\pi i};\hat\tau_{\alpha_{2}}\right)\right)-\nabla\log(\Theta(0;\hat\tau_{\alpha_{2}}))\right]\cdot \frac{ \partial_x(\Omega_{\alpha_{2}})}{2\pi i}+\mathcal{O}\left(\frac{1}{t}\right)+\mathcal{O}\left(\frac{1}{\lambda^2}\right),\\
			&=1-\frac{1}{\lambda}\partial_x\log\left(\Theta\left(\frac{\Omega_{\alpha_{2}}}{2\pi i};\hat\tau_{\alpha_{2}}\right)\right)+\mathcal{O}\left(\frac{1}{t}\right)+\mathcal{O}\left(\frac{1}{\lambda^2}\right).
		\end{aligned}
		$$
		\par
		Therefore, the asymptotics $\partial_x f_{\alpha_{2}}^{(1)}=\mathcal{O}\left(\frac{1}{t}\right)$ for $t\to +\infty$ and all the formulae above result in the large-time asymptotic behavior of $u(x,t)$ as
		$$
		u(x,t)=-\left(2b_{\alpha_{2},1}+{\sum_{j=1}^3\eta_j^2+\alpha_{2}^2}+2\partial_x^2\log\left(\Theta\left(\frac{\Omega_{\alpha_{2}}}{2\pi i};\hat \tau_{\alpha_{2}}\right)\right)\right)+\mathcal{O}\left(\frac{1}{t}\right),\quad t\to +\infty.
		$$
	\end{proof}
	
	\subsection{Unmodulated two-phase wave region}
	
	For \(\xi > \xi_{crit}^{(3)}\), the large-time behavior of \(u(x,t)\) is described by an unmodulated two-phase Riemann-Theta function. Similar to the case in Section \ref{unmodulated 1 genus solution}, we only need to modify relevant notations, such as \( g_{\alpha_{2}} \), \( R_{\alpha_{2}}(\lambda) \), \( \mathcal{S}_{\alpha_{2}} \), \( \Omega_{\alpha_{2}} \), and \( \Delta_{\alpha_{2}} \) in Section \ref{modulated 2-genus case} by replacing \(\alpha_{2}\) with \(\eta_{4}\), { for example $R_{\eta_4}:=\sqrt{(\lambda^2-\eta_4^2)(\lambda^2-\eta_2^2)(\lambda^2-\eta_3^2)(\lambda^2-\eta_4^2)}$}. Similarly, one can verify that \( g_{\eta_{4}} \) and \( f_{\eta_{4}} \) can still deform the RH problem for \( Y(x,t;\lambda) \) into a model problem \( S_{\eta_4}^{\infty}(\lambda) \). Indeed, the model problem \( S_{\eta_4}^{\infty}(\lambda) \) is also similar to \( S^{\infty}(\lambda) \) in (\ref{Sinf}) with the same jump contours, but the diagonal matrices are replaced by \( e^{(t{\Omega}_{j,\eta_{4}} + {\Delta}_{j,\eta_{4}})\sigma_3} \) for $j=0,1$. We omit all the details here for brevity. Thus for \(\xi > \xi_{crit}^{(3)}\), the following theorem holds.
	\begin{thm}
		For $\xi=\frac{x}{4t}$, in the region $\xi_{crit}^{(3)}<\xi$, the large-time asymptotic behavior of the solution to the KdV equation with genus two soliton gas potential is described by
		\begin{equation}\label{unmodulated 2 genus solution}		u(x,t)=-\left(2b_{\eta_{4},1}+{\sum_{j=1}^4\eta_j^2}+2\partial_x^2\log\left(\Theta\left(\frac{\Omega_{\eta_{4}}}{2\pi i};\hat \tau_{\eta_{4}}\right)\right)\right)+\mathcal{O}\left(\frac{1}{t}\right),
		\end{equation}
		where $\hat \tau_{\eta_{4}}=\hat{\tau}$ in (\ref{period matrix of hat tau}) and $b_{\eta_{4}}, \Omega_{\eta_{4}}=\begin{pmatrix}
			{t\Omega_{\eta_{4},1}+{\Delta_{\eta_{4},1}}}&{t\Omega_{\eta_{4},0}+{\Delta_{\eta_{4},0}}}
		\end{pmatrix}^T$ are defined by (\ref{Q alpha2 condition}) and (\ref{Omega alpha2}), respectively.
	\end{thm}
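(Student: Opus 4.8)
The idea is that this region stands to the modulated two-phase region of Subsection \ref{modulated 2-genus case} exactly as the unmodulated one-phase region stands to the modulated one-phase region. Since (\ref{alpha2 formular}) makes $\xi$ a strictly increasing function of $\alpha_2$ on $(\eta_3,\eta_4)$ with $\alpha_2(\xi_{crit}^{(3)})=\eta_4$, once $\xi>\xi_{crit}^{(3)}$ the band $\Sigma_4$ is fully opened and $\alpha_2$ is frozen at $\eta_4$. So the first step is to replace $\alpha_2$ by $\eta_4$ throughout Subsection \ref{modulated 2-genus case}: put $R_{\eta_4}(\lambda)=\sqrt{(\lambda^2-\eta_1^2)(\lambda^2-\eta_2^2)(\lambda^2-\eta_3^2)(\lambda^2-\eta_4^2)}=R(\lambda)$, take $Q_{\eta_4,1}=\lambda^4+b_{\eta_4,1}\lambda^2+c_{\eta_4,1}$ and $Q_{\eta_4,2}=\lambda^6-\tfrac{1}{2}(\eta_1^2+\eta_2^2+\eta_3^2+\eta_4^2)\lambda^4+b_{\eta_4,2}\lambda^2+c_{\eta_4,2}$ with the four constants fixed by (\ref{Q alpha2 condition}) (read with $\alpha_2\to\eta_4$), and define $g_{\eta_4}(\lambda)$ by the analogue of (\ref{g2}). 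One checks, just as for $g_{\alpha_2}$, that $g_{\eta_4}$ solves the scalar RH problem obtained from (\ref{g2 jumps}) with $\Sigma_{3_{\alpha_2}},\Sigma_{4_{\alpha_2}}$ now the full bands $\Sigma_3,\Sigma_4$. The one structural change relative to the modulated case is that $\eta_4$ is now a genuine branch point of $R_{\eta_4}$ and, since the soft-edge relation (\ref{alpha2 formular}) with $\alpha_2=\eta_4$ holds only at the single value $\xi=\xi_{crit}^{(3)}$, for $\xi>\xi_{crit}^{(3)}$ the function $g_{\eta_4}(\lambda)+4\lambda^3-4\xi\lambda$ has the hard-edge behaviour $(\lambda\mp\eta_4)^{1/2}$ at $\pm\eta_4$, exactly as at $\pm\eta_1,\pm\eta_2,\pm\eta_3$.

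Next I would record the sign and monotonicity properties of $g_{\eta_4}$: $-i(g_{\eta_4,+}-g_{\eta_4,-})$ is real and monotone on each band $\Sigma_1,\dots,\Sigma_4$, while $\re\bigl(g_{\eta_4}(\lambda)+4\lambda^3-4\xi\lambda\bigr)<0$ on the lens around $\Sigma_{1,3}$ and $>0$ on the lens around $\Sigma_{2,4}$. These follow verbatim from the proof of Lemma \ref{Properties-1}, via the factorization $g'_{\eta_4}(\lambda)+12\lambda^2-4\xi=12(\lambda^2-\mu_1^2)(\lambda^2-\mu_2^2)(\lambda^2-\mu_3^2)/R_{\eta_4}(\lambda)$ with $\mu_1\in(0,\eta_1)$ and $\mu_2\in(\eta_2,\eta_3)$ forced by (\ref{Q alpha2 condition}), together with the Cauchy--Riemann equations. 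With these in hand, I would run the deformation of Subsection \ref{modulated 2-genus case}: set $T_{\eta_4}(\lambda)=Y(\lambda)e^{xg_{\eta_4}(\lambda)\sigma_3}f_{\eta_4}(\lambda)^{\sigma_3}$, where $f_{\eta_4}$ solves the scalar problem obtained from (\ref{RHP for falpha2}) by $\alpha_2\to\eta_4$ (with $\Delta_{0,\eta_4},\Delta_{1,\eta_4}$ fixed by normalization as in (\ref{f formular})--(\ref{f infty})), then open the corresponding lenses to get $S_{\eta_4}(\lambda)$. By the inequalities just recorded, as $t\to+\infty$ every jump matrix off the four bands and off the two diagonal plateaux decays to the identity exponentially, leaving the model problem $S_{\eta_4}^{\infty}(\lambda)$, which carries the same off-diagonal jumps on $\Sigma_{1,3},\Sigma_{2,4}$ as $S^{\infty}(\lambda)$ in (\ref{Sinf}) and the diagonal jumps $e^{(t\Omega_{j,\eta_4}+\Delta_{j,\eta_4})\sigma_3}$ on $[-\eta_1,\eta_1]$ and on $[\eta_2,\eta_3]\cup[-\eta_3,-\eta_2]$ for $j=0,1$, with $\Omega_{j,\eta_4}$ computed as in (\ref{Omega alpha2}).

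Because $R_{\eta_4}=R$, the Riemann surface attached to $S_{\eta_4}^{\infty}$ is exactly $\mathcal{S}$ of Figure \ref{Contour-2-Cavitation}, so the associated genus-two surface is $\hat{\mathcal{S}}$ and, by Lemma \ref{lem37}, $\hat\tau_{\eta_4}=\hat\tau$; hence $S_{\eta_4}^{\infty}$ is given by the formula (\ref{solution of lambda}) with $\Omega$ replaced by $\Omega_{\eta_4}=(t\Omega_{\eta_4,1}+\Delta_{\eta_4,1},\,t\Omega_{\eta_4,0}+\Delta_{\eta_4,0})^T$. The final step is to reconstruct $u(x,t)$ as in the modulated two-phase proof: expand $e^{-tg_{\eta_4}(\lambda)}$ and $f_{\eta_4}(\lambda)$ for large $\lambda$, use Lemma \ref{Lemma-Property-2} with $\alpha_2\to\eta_4$ and the Reciprocity theorem to obtain $\check J(\lambda)=\tfrac{e_1+e_2}{2}-\partial_x(\Omega_{\eta_4})/\lambda+\mathcal{O}(\lambda^{-2})$ and hence $S^{\infty}_{\eta_4,1}(\lambda)=1-\tfrac{1}{\lambda}\,\partial_x\log\Theta\bigl(\tfrac{\Omega_{\eta_4}}{2\pi i};\hat\tau\bigr)+\mathcal{O}(\lambda^{-2})$, then collect the $1/\lambda$ coefficient in $Y_1(\lambda)=\bigl(S^{\infty}_{\eta_4,1}(\lambda)+(\mathcal{E}_{\eta_4,1}(x,t))_1/(t\lambda)+\mathcal{O}(\lambda^{-2})\bigr)e^{-tg_{\eta_4}(\lambda)}f_{\eta_4}(\lambda)^{-1}$ and feed it into the reconstruction formula to arrive at (\ref{unmodulated 2 genus solution}). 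The $\mathcal{O}(t^{-1})$ remainder is controlled by the small-norm error problem for $\mathcal{E}_{\eta_4}(\lambda)=S_{\eta_4}(\lambda)P_{\eta_4}(\lambda)^{-1}$ exactly as in \cite{Girotti CMP}; the two simplifications relative to the modulated case are that all eight branch points $\pm\eta_j$ are now hard edges, so every local parametrix is of modified-Bessel type (no Airy parametrix is needed at $\pm\eta_4$), and that, $\eta_4$ being a constant, the terms $\partial_x\alpha_2\sim t^{-1}$ present in the modulated analysis simply drop out. The step I expect to demand the most care is, as usual, the verification of these sign and monotonicity inequalities for $g_{\eta_4}$ on the lens contours and disc boundaries, together with the identification $\hat\tau_{\eta_4}=\hat\tau$ through the holomorphic map $z=-\lambda^2$.
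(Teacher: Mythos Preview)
Your proposal is correct and follows essentially the same route as the paper: the paper's own argument for this theorem consists precisely of the instruction to replace $\alpha_2$ by $\eta_4$ throughout Subsection~\ref{modulated 2-genus case}, noting that the resulting model problem $S_{\eta_4}^{\infty}$ coincides with $S^{\infty}$ from (\ref{Sinf}) up to the phase constants, and that all eight local parametrices are of modified-Bessel type. Your write-up is in fact more detailed than the paper's, correctly flagging the hard-edge behaviour at $\pm\eta_4$, the disappearance of the Airy parametrix, and the vanishing of the $\partial_x\alpha_2$ contributions; the only point to tidy is the placement of the third root $\mu_3$ in your factorization of $g'_{\eta_4}(\lambda)+12\lambda^2-4\xi$, which is not pinned down by (\ref{Q alpha2 condition}) alone and whose location (it lies beyond $\eta_4$ for $\xi>\xi_{crit}^{(3)}$) is what actually delivers the required sign on $\Sigma_{3,4}$.
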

	\begin{rmk}
		The error estimation is quite similar with the discussion in Section \ref{potential behavior} and the parametrix near $\pm\eta_j$ ($j=1,2,3,4$) can be described by the modified Bessel functions \cite{Grava CPAM}, which contribute the $\mathcal{O}(t^{-1})$ term in (\ref{unmodulated 2 genus solution}).
	\end{rmk}
	
	\section{The genus $\mathcal{N}$ KdV soliton gas}\label{Ngenus sector}

	\begin{figure}
		\centering
		\begin{tikzpicture}[>=latex]
			\draw[->,black,very thick] (-5,0) to (5,0) node[black,below=1mm]  {\small $x$};
			\draw[->,dashed,black,very thick] (-3,0) to (-3,5) node[black,right=1mm]  {\small $t$};
			\draw[-,black,very thick] (-3,0) to (5.0,3.2) node[black,above=0.5mm]  {\small $\xi=\xi_{crit}^{(2\mathcal{N}-2)}$};
			\draw[-,black,very thick] (-3,0) to (4.8,1.5) node[black,right=0.8mm]  {\small $\xi=\xi_{crit}^{(2\mathcal{N}-1)}$};
			\draw[-,black,very thick] (-3,0) to (-2,4.8) node[black,above=0.5mm]  {\small $\xi=\eta_{1}^2$};
			\draw[-,black,very thick] (-3,0) to (0.5,4.8) node[black,above=0.5mm]  {\small $\xi=\xi_{crit}^{(1)}$};
			\draw[-,black,very thick] (-3,0) to (3.1,4.5) node[black,above=0.1mm]  {\small $\xi=\xi_{crit}^{(2)}$} ;
			\node at (3.3,2.0) {\small Modulated};
			\node at (3.3,1.6) {\small $\mathcal{N}$ phase };
			\node at (3.5,0.9) {\small Unmodulated};
			\node at (3.5,0.5) {\small $\mathcal{N}$ phase };
			\filldraw[black] (3.0,3.6)  circle (1.2pt);
			\filldraw[black] (3.3,3.4)  circle (1.2pt);
			\filldraw[black] (3.6,3.2)  circle (1.2pt);
			\node at (1,4) {\small Unmodulated };
			\node at (1,3.5) {\small 1 phase };
			\node at (-1.2,4) {\small Modulated };
			\node at (-1.2,3.5) {\small 1 phase };
			\node at (-3.1,3.5) {\small {Quiescent} };
			\node at (-3.1,3) {\small {Region} };
			\node at (-3,-0.2) {\small 0 };
		\end{tikzpicture}
		\caption{{\protect\small
				The long-time asymptotic regions of the genus $\mathcal{N}$ KdV soliton gas potential in the $x$-$t$ half plane.}}
		\label{Ngenus pic}
	\end{figure}

	In general, when constructing the RH problem for the KdV equation, one can consider the discrete spectral points gathering in $2\mathcal{N}$ symmetric bands, where the integer $\mathcal{N} >2$. These bands are defined as {$\Sigma_+:=\cup_{j=1}^{\mathcal{N}}(\eta_{2j-1},\eta_{2j})$ and $\Sigma_-:=\cup_{j=1}^{\mathcal{N}}(-\eta_{2j},-\eta_{2j-1})$}. Consequently, the RH problem for the genus- $\mathcal{N}$ KdV soliton gas is given by:
	
	\begin{equation}\label{N genus RHP}
		X_+^{(\mathcal{N})}(\lambda)=X_-^{(\mathcal{N})}(\lambda)
		\begin{cases}
			\begin{aligned}
				&\begin{pmatrix}
					1 & -2ir_2(\lambda)e^{-2i\lambda x-8i\lambda^3t}\\
					0 & 1
				\end{pmatrix}, && \lambda\in i{\Sigma_{+}},\\
				&\begin{pmatrix}
					1 & 0\\
					2ir_2(\lambda)e^{2i\lambda x+8i\lambda^3t} & 1
				\end{pmatrix}, && \lambda\in i{\Sigma_{-}},			
			\end{aligned}
		\end{cases}
	\end{equation}
	
	\begin{equation}
		X^{(\mathcal{N})}(\lambda)\to \begin{pmatrix}
			1 & 1
		\end{pmatrix}, \quad \lambda\to\infty,
	\end{equation}
	
	\begin{equation}
		X^{(\mathcal{N})}(-\lambda)=X^{(\mathcal{N})}(\lambda)\begin{pmatrix}
			0 & 1\\
			1 & 0
		\end{pmatrix}.
	\end{equation}
	
	Similarly, the genus $\mathcal{N}$ KdV soliton gas potential can be constructed by
	
	\begin{equation}\label{Ngenus-potential-KdV}
		u(x,t)=2 \frac{\mathrm{d}}{\mathrm{d} x}\left(\lim_{\lambda \rightarrow \infty} \frac{\lambda}{i}\left(X_1^{(\mathcal{N})}(\lambda)-1\right)\right),
	\end{equation}
	where $X_1^{(\mathcal{N})}(\lambda)$ is the first component of 	$X^{(\mathcal{N})}(\lambda)$. {Especially, for a constant $r_2(\lambda)$ with $r_1r_2\neq0$, Nabelek \cite{Nabelek 2020 PhysD} showed that all algebro-geometric finite gap solutions to the KdV equation can be realized as limits of $\mathcal{N}$-soliton solutions, and explores the computation of such solutions using the primitive solution framework.}
	\par
	Following the similar procedure outlined in Section \ref{potential behavior}, the large $x$ behavior of the genus $\mathcal{N}$ KdV soliton gas potential is described by:
	
	\begin{equation}\label{Ngenus potential}
		u_{\mathcal{N}}(x)=
		\begin{cases}
			\begin{aligned}
				&-\left(2\alpha_{\mathcal{N}}+\sum_{j=1}^{2\mathcal{N}}\eta_j^2+2\partial_x^2\log
				\left(\tilde{\Theta}\left(\frac{\Omega_{\mathcal{N}}}{2\pi i};\tau_{\mathcal{N}}\right)\right)\right)+\mathcal{O}\left(\frac{1}{x}\right), && x\to+\infty,\\
				&\mathcal{O}(e^{-c|x|}), && x\to-\infty,
			\end{aligned}
		\end{cases}
	\end{equation}
	where $\alpha_{\mathcal{N}}$ is a parameter and $\tilde{\Theta}(\bullet;\tau_{\mathcal{N}})$ is the $\mathcal{N}$-phase Riemann-Theta function with
	period matrix \(\tau_{\mathcal{N}}\) and $\mathcal{N}$-dimensional column vector $\Omega_{\mathcal{N}}$.
	\par
\begin{figure}
		\centering
		\includegraphics[width=14cm]{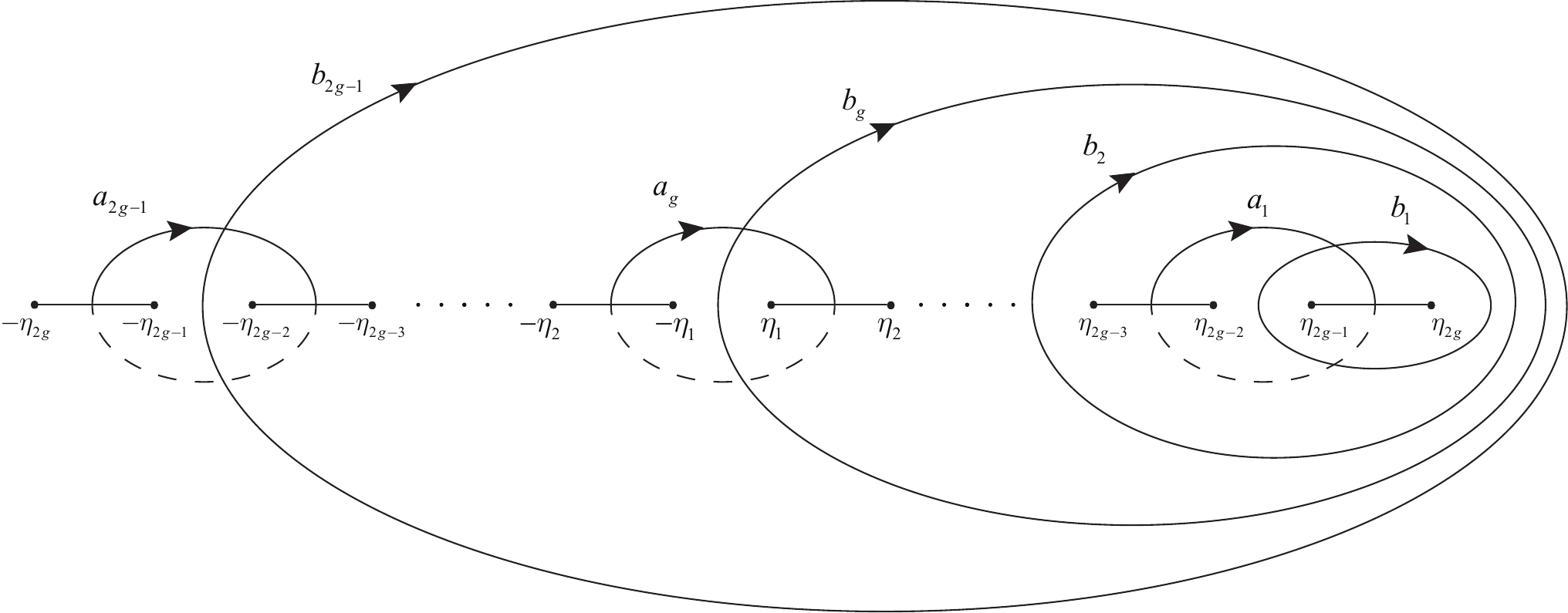}
		\caption{{\protect\small The Riemann surface $\mathcal{S}_{2g-1}$ of genus $2g-1$ and its basis $\{{a}_{j} ,{b}_{j}\}~(j=1,2,\cdots,2g-1)$ of circles.}}
		\label{circleN}
	\end{figure}	
    
{
\begin{conj}\label{conj}
As the genus $\mathcal{N}$ soliton gas potential \eqref{Ngenus-potential-KdV} evolves according to the KdV equation \eqref{KdV} for $t\to\infty$, we conjecture that the long-time asymptotic regions can be classified into $2\mathcal{N}+1$ regions in the $x$-$t$ plane. From left to right, these regions are the quiescent region, modulated one-phase wave, unmodulated one-phase wave, $\cdots$, modulated $\mathcal{N}$-phase wave, and unmodulated $\mathcal{N}$-phase wave, respectively (see Fig. \ref{Ngenus pic} for details).
\end{conj}
In particular, before discussing the general inequalities $\xi_{crit}^{(j-1)}<\xi_{crit}^{(j)},~j=1,\cdots,2\mathcal{N}-1$, we introduce the following notations. For $1\le g\le\mathcal{N}$, the  genus $2g-1$ Riemann surface $\mathcal{S}_{g}$ is defined by
\begin{equation}\label{def:Sg}
   \mathcal{S}_{g}(\eta_1,\cdots,\eta_{2g}):=\{(\lambda,\eta)|\eta^2=(\lambda^2-\eta_1^2)(\lambda^2-\eta_2^2)\cdots(\lambda^2-\eta_{2g}^2)\} 
\end{equation}}
and 
\begin{equation}\label{def:Rg}
    R_{g}(\lambda;\eta_1,\cdots,\eta_{2g}):=\sqrt{(\lambda^2-\eta_1^2)(\lambda^2-\eta_2^2)\cdots(\lambda^2-\eta_{2g}^2)}.
\end{equation}
Define $\{a_j,b_j\},~j=1,\cdots,2g-1$ as the basis of the homology $H_1(\mathcal{S}_{g})$ depicted in Figure \ref{circleN}.
Introduce the polynomials $Q_2^{(g)}(\lambda;\eta_1,\cdots,\eta_{2g}),Q_1^{(g)}(\lambda;\eta_1,\cdots,\eta_{2g})$ as 
\begin{equation}\label{def:Q1Q2}
    \begin{aligned}
        &Q_{1}^{(g)}(\lambda;\eta_1,\cdots,\eta_{2g})=\lambda^{2g}+c^{(g)}_{1}\lambda^{2g-2}+\cdots+c^{(g)}_{g},\\
&Q_{2}^{(g)}(\lambda;\eta_1,\cdots,\eta_{2g})=\lambda^{2g+2}+\beta^{(g)}\lambda^{2g}+\gamma^{(g)}_{1}\lambda^{2g-2}+\cdots+\gamma^{(g)}_{g},
    \end{aligned}
\end{equation}
where $\beta^{(g)}:=-\frac{\eta_1^2+\cdots+\eta_{2g}^2}{2}$. 
{In what follows, we write $R_{g}(\lambda)$, $Q_1^{(g)}(\lambda)$, and $Q_2^{(g)}(\lambda)$ without explicitly indicating their dependence on the parameters $\eta_1, \ldots, \eta_{2g}$ when it is clear from the context.  
We then define the quasi-momentum and quasi-energy differentials $dp^{(g)}$ and $dq^{(g)}$ as  
\begin{equation}\label{def:dpdq}
   dp^{(g)} := \frac{Q_1^{(g)}(\zeta)\, d\zeta}{R_{g}(\zeta)}, 
\qquad
dq^{(g)} := \frac{Q_2^{(g)}(\zeta)\, d\lambda}{R_{g}(\zeta)}, 
\end{equation}
Thus, the parameters $c_{k}^{(g)}$ and $\gamma_{k}^{(g)}~(k=1,\cdots,g)$ are defined by the normalized conditions
\begin{equation}\label{dpdq:normalization}
 \oint_{a_j}dp^{(g)}=\oint_{a_j}dq^{(g)}=0,~j=1,\cdots,2g-1.   
\end{equation}
Similarly, introduce the parameter $\alpha_g$ within the $g$-th band
\[
\Sigma_g := (\eta_{2g-1},\,\eta_{2g}), \qquad g = 1, \dots, \mathcal{N},
\]
which varies with $\dfrac{x}{t}$. Moreover, define
\[
S(\alpha_g) := 3\,\frac{Q_2^{(g)}(\alpha_g; \eta_1, \dots, \alpha_g)}
{Q_1^{(g)}(\alpha_g; \eta_1, \dots, \alpha_g)}, \qquad 1 \le g \le \mathcal{N}.
\]
Note that $S(\alpha_g)$ is well defined on $\Sigma_g = (\eta_{2g-1}, \eta_{2g})$ for $g = 1, \dots, \mathcal{N}$ (see Appendix~\ref{appendix}).  
Fix $g$. By the genuine nonlinearity established in~\cite{Lev88} or from the analysis in Appendix~\ref{appendix}, it follows that the function $S(\alpha_g)$ is monotone with respect to $\alpha_g$ on the interval $\eta_{2g-1} < \alpha_g < \eta_{2g}$. Indeed, similar to~(\ref{alpha2 formular}), one can establish the relationship between 
$\xi := \dfrac{x}{4t}$ and $S(\alpha_g)$ as
\begin{equation}\label{eq:xi-alpha}
    \xi = \frac{x}{4t}
   = 3\,\frac{Q_2^{(g)}(\alpha_g; \eta_1, \dots, \alpha_g)}%
           {Q_1^{(g)}(\alpha_g; \eta_1, \dots, \alpha_g)}
   = S(\alpha_g).
\end{equation}
By the genuine nonlinearity and the inverse function theorem, 
$\alpha_g$ can be regarded as a modulated parameter depending on $\dfrac{x}{t}$, see Figure \ref{Ngenus}.}
\begin{figure}[H]
		\centering
		\begin{tikzpicture}
			\node[anchor=south west, inner sep=0] (image) at (0,0) {\includegraphics[width=11cm]{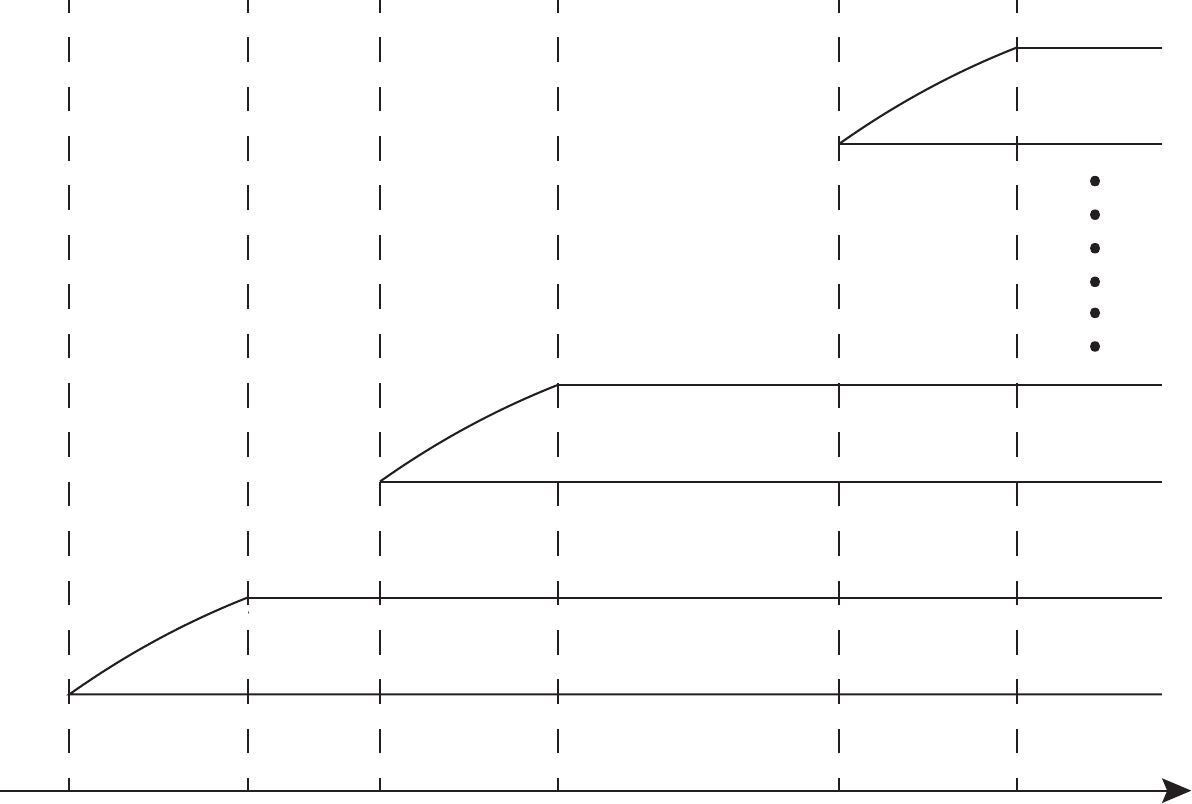}};
			\newcommand{\imgwidth}{11}
			\newcommand{\imgheight}{7}
            \node[black, font=\small, align=center, text width=4cm] at (1.01*\imgwidth, -0.03*\imgheight) {$\frac{x}{4t}$};
            \node[black, font=\small, align=center, text width=4cm] at (1.01*\imgwidth, 0.99*\imgheight) {$\eta_{2\mathcal{N}}$};
			\node[black, font=\small, align=center, text width=4cm] at (1.03*\imgwidth, 0.87*\imgheight) {$\eta_{2\mathcal{N}-1}$};
			\node[black, font=\small, align=center, text width=4cm] at (1.005*\imgwidth, 0.54*\imgheight) {$\eta_{4}$};
			\node[black, font=\small, align=center, text width=4cm] at (1.005*\imgwidth, 0.42*\imgheight) {$\eta_{3}$};
            \node[black, font=\small, align=center, text width=4cm] at (1.005*\imgwidth, 0.27*\imgheight) {$\eta_{2}$};
            \node[black, font=\small, align=center, text width=4cm] at (1.005*\imgwidth, 0.14*\imgheight) {$\eta_{1}$};
            \node[black, font=\small, align=center, text width=4cm] at (0.05*\imgwidth, -0.03*\imgheight) {$\eta_{1}^2$};
            \node[black, font=\small, align=center, text width=4cm] at (0.13*\imgwidth, 0.25*\imgheight) {$\alpha_1$};
            \node[black, font=\small, align=center, text width=4cm] at (0.2*\imgwidth, -0.03*\imgheight) {$\xi_{crit}^{(1)}$};
            \node[black, font=\small, align=center, text width=4cm] at (0.32*\imgwidth, -0.03*\imgheight) {$\xi_{crit}^{(2)}$};
             \node[black, font=\small, align=center, text width=4cm] at (0.38*\imgwidth, 0.53*\imgheight) {$\alpha_2$};
            \node[black, font=\small, align=center, text width=4cm] at (0.47*\imgwidth, -0.03*\imgheight) {$\xi_{crit}^{(3)}$};
            \node[black, font=\small, align=center, text width=4cm] at (0.7*\imgwidth, -0.03*\imgheight) {$\xi_{crit}^{(2\mathcal{N}-2)}$};
            \node[black, font=\small, align=center, text width=4cm] at (0.78*\imgwidth, 0.98*\imgheight) {$\alpha_{\mathcal{N}}$};
            \node[black, font=\small, align=center, text width=4cm] at (0.88*\imgwidth, -0.03*\imgheight) {$\xi_{crit}^{(2\mathcal{N}-1)}$};
		\end{tikzpicture}
		\caption{\small The figure schematically illustrates the evolution of the Riemann invariants $\alpha_g$, $1 \le g \le \mathcal{N}$, for a genus-$\mathcal{N}$ KdV soliton gas. In particular, $\alpha_g\left(\tfrac{x}{t}\right)$ is modulated within the interval $(\xi_{\mathrm{crit}}^{(2g-2)}, \xi_{\mathrm{crit}}^{(2g-1)})$, varying from $\eta_{2g-1}$ to $\eta_{2g}$. Moreover, the figure also depicts the transition of the wave (or “gas”) from the zero state to the genus-$\mathcal{N}$ configuration.}
		\label{Ngenus}
	\end{figure}
{Now, for $1\le g\le \mathcal{N}$, as $\alpha_g$ approaches the critical values $\eta_{2g-1}$ and $\eta_{2g}$, 
we define the corresponding critical quantities 
$\xi_{\mathrm{crit}}^{(2g-2)}$ and $\xi_{\mathrm{crit}}^{(2g-1)}$ by
\[
\xi_{\mathrm{crit}}^{(2g-2)}
 :=3\lim_{\epsilon \to 0^+} S(\eta_{2g-1}+\epsilon)=3\lim_{\epsilon \to 0^+}
   \frac{Q_2^{(g)}(\eta_{2g-1} + \epsilon; \eta_1, \dots, \eta_{2j-1}, \eta_{2g-1} + \epsilon)}
        {Q_1^{(g)}(\eta_{2g-1} + \epsilon; \eta_1, \dots, \eta_{2j-1}, \eta_{2g-1} + \epsilon)},
\]
and similarly
\[
\xi_{\mathrm{crit}}^{(2g-1)}
 :=3\lim_{\epsilon \to 0^+} S(\eta_{2g}-\epsilon)= 3\lim_{\epsilon \to 0^+}
   \frac{Q_2^{(g)}(\eta_{2g} - \epsilon; \eta_1, \dots, \eta_{2g-1}, \eta_{2g} - \epsilon)}
        {Q_1^{(g)}(\eta_{2g} - \epsilon; \eta_1, \dots, \eta_{2g-1}, \eta_{2g} - \epsilon)}.
\]
The existence of the limit $S(\eta_{2g-1}+\epsilon)$, as $\epsilon\to0^+$, see Appendix \ref{appendix}. Regarding to the the limit $S(\eta_{2g}-\epsilon)$, the associated algebraic curve $\mathcal{S}(\eta_1,\cdots,\eta_{2g-1},\eta_{2g-1}+\epsilon)$ develops two nodes, and then the genus of the associated Riemann surface decreases by 2. By the formula in \cite{Fay}, for $2\le j\le \mathcal{N}$ we have
$$
\xi_{crit}^{(2g-2)}=3\frac{\int_{-R_{g-1}(\eta_{g+1})}^{R_{g-1}(\eta_{g+1})}dq^{(g-1)}}{\int_{-R_{g-1}(\eta_{g+1})}^{R_{g-1}(\eta_{g+1})}dp^{(g-1)}}.
$$
Furthermore, by the genuine nonlinearity, it follows that $\xi_{crit}^{(2g-1)}>\xi_{crit}^{(2g-2)}$. However, it is a challenge to show that $\xi_{crit}^{(2g-2)}>\xi_{crit}^{(2g-3)}$ for $2\le g\le\mathcal{N}$. Especially, for $g=2$, the analysis has been established previously, and we conjecture that for the general $\mathcal{N}$ case, the inequalities still hold. 
}

{
Consequently, according to the relationship in~(\ref{eq:xi-alpha}), as $t \to \infty$, 
the critical values 
$\xi_{\mathrm{crit}}^{(j)}$, $0 \le j \le 2\mathcal{N}-1$ 
(with $\xi_{\mathrm{crit}}^{(0)} = \eta_1^2$), 
divide the $(x,t)$-plane into $2\mathcal{N}+1$ distinct regions; 
see Figure~\ref{Ngenus pic}.
}



\appendix
\section{Appendix}\label{appendix}
{Recall the definition of the genus-$2g-1$ Riemann surface $\mathcal{S}_{g}(\eta_1,\ldots,\eta_{2g})$ in~(\ref{def:Sg}) 
and the corresponding function $R_g(\lambda;\eta_1,\ldots,\eta_{2g})$ in~(\ref{def:Rg}), 
whose branch cuts (or bands) lie in $\pm(\eta_{2k-1},\eta_{2k})$ for $k=1,\ldots,g$. 
In addition, recall the quasi-momentum $dp^{(g)}$ and quasi-energy $dq^{(g)}$ defined in~(\ref{def:dpdq}), 
together with the corresponding polynomials $Q^{(g)}_1$ and $Q^{(g)}_2$ in~(\ref{def:Q1Q2}). 
Note that $dp^{(g)}$ and $dq^{(g)}$ are normalized Abelian differentials on the Riemann surface $\mathcal{S}_{g}(\eta_1,\ldots,\eta_{2g})$. 
}

{
As the solution evolves in time, a parameter $\alpha_g$ is modulated by the ratio $\frac{x}{t}$ and takes values in the interval $(\eta_{2g-1},\eta_{2g})$. Recalling the identity \eqref{eq:xi-alpha}, it follows that
\begin{equation}\label{eq:xi-general}
    \xi
    = 3\,\frac{Q_2^{(g)}(\lambda;\eta_1,\ldots,\alpha_g)}
    {Q_1^{(g)}(\lambda;\eta_1,\ldots,\alpha_g)},
\end{equation}
where the parameters $\eta_j$, $j=1,\ldots,2g-1$, are fixed and $\eta_{2g}$ is replaced by $\alpha_g$. Proceeding in the same manner as in the genus--$2$ case, we obtain the following result.
}
{\begin{lem}
Suppose that $0<\eta_1<\cdots<\eta_{2g}$, and let $dp^{(g)}$ and $dq^{(g)}$ be defined as in \eqref{def:dpdq}. For $\alpha_g\in(\eta_{2g-1},\eta_{2g})$ related to $\xi=\frac{x}{4t}$ by \eqref{eq:xi-general}, the parameter $\alpha_g$ is a monotonically increasing function of $\xi$.
\end{lem}
}
\begin{proof}
{
Introduce the linear combination of $dp^{(g)}$ and $dq^{(g)}$ defined by
\[
    d\varphi := 12\,dq^{(g)} - 4\xi\,dp^{(g)},
\]
where
\[
    dp^{(g)} := \frac{Q_1^{(g)}(\lambda;\eta_1,\ldots,\eta_{2g-1},\alpha_g)\,d\lambda}
    {R_g(\lambda;\eta_1,\ldots,\eta_{2g-1},\alpha_g)}, 
    \qquad
    dq^{(g)} := \frac{Q_2^{(g)}(\lambda;\eta_1,\ldots,\eta_{2g-1},\alpha_g)\,d\lambda}
    {R_g(\lambda;\eta_1,\ldots,\eta_{2g-1},\alpha_g)}.
\]
Since $dp^{(g)}$ and $dq^{(g)}$ are normalized Abelian differentials of the second kind and $\alpha$ is a soft edge which indicates that $\alpha$ is zero of $d\varphi$, and by the definition of $Q_1^{(g)}$ and $Q_2^{(g)}$ in \eqref{def:Q1Q2}, we obtain
\begin{align*}
    d\varphi
    &= dq^{(g)} - 4\xi\,dp^{(g)} 
    = 12\,
    \frac{\prod_{j=1}^{g}(\lambda^2-\lambda_j^2)(\lambda^2-\alpha_g^2)}
    {R(\lambda;\eta_1^2,\ldots,\eta_{2g-1}^2,\alpha_g^2)}\,d\lambda 
    = 12\,
    \frac{\prod_{j=1}^{g}(\lambda^2-\lambda_j^2)\sqrt{\lambda^2-\alpha_g^2}}
    {\sqrt{\prod_{k=1}^{2g-1}(\lambda^2-\eta_k^2)}}\,d\lambda,
\end{align*}
where $\lambda_j\in(\eta_{2j-2},\eta_{2j-1})$ for $j=1,\ldots,g$, and we set $\eta_0=0$.
}
{
Taking the derivative of $d\varphi$ with respect to $\xi$, we have
\[
    \partial_{\xi}d\varphi
    = -4\,dp^{(g)} 
    + \partial_{\alpha_g}\!\left(dq^{(g)}-4\xi\,dp^{(g)}\right)\partial_{\xi}\alpha_g.
\]
A direct computation shows that
\[
    \partial_{\alpha_g}\!\left(dq^{(g)}-4\xi\,dp^{(g)}\right)
    = -12\,\frac{\alpha_g\prod_{j=1}^{g}(\lambda^2-\lambda_j^2)}
    {R(\lambda;\eta_1^2,\ldots,\eta_{2g-1}^2,\alpha_g^2)}\,d\lambda,
\]
which has no singularities at $\lambda=\pm\alpha_g$ or at $\lambda=\pm\infty$. By the Riemann bilinear relations, it follows that
\(
    \partial_{\alpha_g}\!\left(dq^{(g)}-4\xi\,dp^{(g)}\right)\equiv 0.
\)
}

{
On the other hand, differentiating the explicit expression of $d\varphi$ with respect to $\xi$ yields
\[
    \partial_{\xi}d\varphi
    = -\left(
    \sum_{j=1}^{g}\frac{2\lambda_j\,\partial_{\xi}\lambda_j}{\lambda^2-\lambda_j^2}
    + \frac{\alpha_g\,\partial_{\xi}\alpha_g}{\lambda^2-\alpha_g^2}
    \right)d\varphi.
\]
Comparing the two expressions for $\partial_{\xi}d\varphi$, we obtain
\[
    4\,\frac{dp^{(g)}}{d\varphi}
    = \frac{1}{3}\,
    \frac{Q_1(\lambda)}
    {\prod_{j=1}^{g}(\lambda^2-\lambda_j^2)(\lambda^2-\alpha_g^2)}
    = \sum_{j=1}^{g}\frac{2\lambda_j\,\partial_{\xi}\lambda_j}{\lambda^2-\lambda_j^2}
    + \frac{\alpha_g\,\partial_{\xi}\alpha_g}{\lambda^2-\alpha_g^2}.
\]
Since $Q_1(\lambda)\sim \lambda^{2g}$ as $\lambda\to\infty$ and all zeros of $Q_1(\lambda)$ lie in the intervals $(\eta_{2j-2},\eta_{2j-1})$, $j=1,\ldots,2g$, we have $Q_1(\alpha_g)>0$. By taking the residue at $\lambda=\alpha_g$, we conclude that
\[
    \partial_{\xi}\alpha_g
    = \frac{Q_1(\alpha_g)}
    {3\alpha_g\prod_{j=1}^{g}(\alpha_g^2-\lambda_j^2)} > 0,
\]
which proves that $\alpha_g$ is a monotonically increasing function of $\xi$.}
\end{proof}

\noindent{\bf Conflict of interest declaration.} We declare we have no competing interests.

\noindent{\bf Financial Support.}
Support is acknowledged from the National Natural Science Foundation of China, Grant No. 12371247 and No. 12431008 and Beijing Natural Science Foundation Grant No. 1262012. {The last author acknowledge the support of the scholarship provided by the China Scholarship Council (CSC) under Grant No. 202406040149 and the GNFM-INDAM group and the research project Mathematical Methods in NonLinear Physics (MMNLP), Gruppo 4-Fisica Teorica of INFN.} 

\section*{Acknowledgments}
The authors extend their heartfelt appreciation to Fudong Wang and Peng Zhao for their invaluable contributions
to this project through stimulating conversations and enlightening discussions. The authors thank Peng Yan for the
inspiration regarding the topic of this paper. The authors also wish to express sincere gratitude to Professor Tamara
Grava for her insightful feedback and thoughtful discussions. Finally, the insightful suggestions from the reviewers
are gratefully acknowledged.

	\bibliographystyle{amsplain}

\end{document}